\documentclass[12pt]{article}
\usepackage[utf8]{inputenc}
\usepackage{amsmath}
\usepackage{amsthm}
\usepackage{verbatim}
\usepackage{graphicx}
\usepackage{amssymb} 
\usepackage{commath}
\usepackage{authblk}
\usepackage{dsfont}
\usepackage{changepage}
\usepackage{tikz}
\usepackage{xr}
\usepackage{setspace} 
\usepackage{multirow}
\usepackage{natbib}
% Flexible bibliography support.
\setlength{\bibsep}{0pt plus 0.3ex}
% Reduce space in between bibliography entries.
% Sourced from: https://tex.stackexchange.com/questions/93859/condense-the-space-between-bibliographic-entries
\usepackage{booktabs}
\usepackage{setspace,float}
\usepackage{footmisc}
\usepackage{bbm}
\usepackage{threeparttable}
\usepackage{charter}
\usepackage{mathtools}
\usepackage{xfrac}
\definecolor{Myblue}{HTML}{3a99d9}
\definecolor{Myred}{HTML}{C0392B}
\definecolor{Mybluegreen}{HTML}{347986}
\definecolor{Mygreen}{HTML}{35837b}
\usepackage{xcolor}
\usepackage[colorlinks = true,
            linkcolor = Myblue,
            urlcolor  = Myblue,
            citecolor = Myblue,
            anchorcolor = Myblue]{hyperref}
\usepackage{authblk}
\usepackage{mathrsfs}
\usepackage[english]{babel}
\usepackage[ruled,vlined]{algorithm2e}
\usepackage{tikz}
\usepackage[title]{appendix}
\usepackage[font=small,labelfont=bf]{caption}
\usepackage{enumerate}
\usepackage{dirtytalk}
\usetikzlibrary{decorations.pathreplacing, arrows,positioning}
\renewcommand{\arraystretch}{1.5}
\newtheorem{theorem}{Theorem}[section]
\newtheorem{assumption}{Assumption}

\newtheorem{corollary}{Corollary}[section]
\newtheorem{lemma}{Lemma}[section]
\newtheorem{proposition}{Proposition}[section]
\theoremstyle{definition}
\newtheorem{definition}{Definition}[section]
\newtheorem{remark}{Remark}[section]

\DeclareMathOperator{\E}{\mathbb{E}}

\DefineFNsymbols{mySymbols}{{\ensuremath\dagger}{\ensuremath\ddagger}\S\P
   *{**}{\ensuremath{\dagger\dagger}}{\ensuremath{\ddagger\ddagger}}}
\setfnsymbol{mySymbols}

\DeclareMathOperator*{\argmax}{arg\,max}

\DeclareMathOperator*{\argsup}{arg\,sup}
\DeclareMathOperator*{\arginf}{arg\,inf}
\makeatletter
\newcommand*\rel@kern[1]{\kern#1\dimexpr\macc@kerna}
\newcommand*\widebar[1]{%
  \begingroup
  \def\mathaccent##1##2{%
    \rel@kern{0.8}%
    \overline{\rel@kern{-0.8}\macc@nucleus\rel@kern{0.2}}%
    \rel@kern{-0.2}%
  }%
  \macc@depth\@ne
  \let\math@bgroup\@empty \let\math@egroup\macc@set@skewchar
  \mathsurround\z@ \frozen@everymath{\mathgroup\macc@group\relax}%
  \macc@set@skewchar\relax
  \let\mathaccentV\macc@nested@a
  \macc@nested@a\relax111{#1}%
  \endgroup
}
\makeatother
\usepackage[a4paper]{geometry}
\geometry{verbose,tmargin=30mm,bmargin=32mm,lmargin=20mm,rmargin=20mm,bottom = 40mm}
%\raggedbottom
\renewcommand\labelenumi{(\roman{enumi})}
\renewcommand\theenumi\labelenumi
\def\mathbi#1{\textbf{\em #1}}
\usepackage{setspace}
\onehalfspacing
\usepackage{xr}
\makeatletter
\newcommand*{\addFileDependency}[1]{% argument=file name and extension
\typeout{(#1)}% latexmk will find this if $recorder=0
% however, in that case, it will ignore #1 if it is a .aux or 
% .pdf file etc and it exists! If it doesn't exist, it will appear 
% in the list of dependents regardless)
%
% Write the following if you want it to appear in \listfiles 
% --- although not really necessary and latexmk doesn't use this
%
\@addtofilelist{#1}
%
% latexmk will find this message if #1 doesn't exist (yet)
\IfFileExists{#1}{}{\typeout{No file #1.}}
}\makeatother

%------------End of helper code--------------
\usepackage{autonum}
% put all the external documents here!
%\myexternaldocument{Appendix}

\begin{document}
\title{Robust Network Targeting with Multiple Nash Equilibria\thanks{I am especially grateful to Tim Christensen, Aureo de Paula, Toru Kitagawa, and Andrei Zeleneev for their continuous support, and Abhishek Ananth, Dmitry Arkhangelsky, Karun Adusumilli, Miguel Ballester, Andrew Chesher, Pedro Carneiro, Ben Deaner, James Duffy, Hugo Freeman, Andrea Galeotti, Duarte Gonçalves, Joao Granja, Raffaella Giacomini, Wayne Gao, Chen-Wei Hsiang, Hu Hao, Keisuke Hirano, Sukjin Han, Stephen Hansen, Guido Imbens, Hidehiko Ichimura, Hua Jin, Ziyu Jiang, Dennis Kristensen, Ivana Komujer, Frank Kleibergen, Maximilian Kasy, Nathan Canen, Soonwoo Kwon, Daniel Lewis, Jessie Li, Michael Leung, Yikai Li, Anna Mikusheva, Chuck Manski, Konrad Menzel, Robert Miller, Sophocles Mavroeidis, Lars Nesheim, Whitney Newey, Guillaume Pouliot, Martin Pesendorfer, Adam Rosen, Imran Rasul, Jonathan Roth, Jeff Rowley, Liyang Sun, J\"{o}rg Stoye, Shuyang Sheng, Xiaoxia Shi, Yuya Sasaki, Davide Viviano, Kaspar Wuthrich, Yike Wang, Mengshan Xu, Tian Xie, Weisheng Zhang, Yanziyi Zhang, and Yuanqi Zhang for beneficial comments and discussions.}}
\author{Guanyi Wang\thanks{CeMMAP and Department of Economics, University College London. Email$\colon$ \href{mailto:guanyi.wang.17@ucl.ac.uk}{guanyi.wang.17@ucl.ac.uk}}}
%\date{}
\maketitle
\begin{center}
\href{https://guanyiwang.github.io/assets/JMP.pdf}{Latest version here}
\end{center} 
\begin{abstract}
Many policy problems involve designing individualized treatment allocation rules to maximize the equilibrium social welfare of interacting agents. Focusing on large-scale simultaneous decision games with strategic complementarities, we develop a method to estimate an optimal treatment allocation rule that is robust to the presence of multiple equilibria. Our approach remains agnostic about changes in the equilibrium selection mechanism under counterfactual policies, and we provide a closed-form expression for the boundary of the identified set of equilibrium outcomes. To address the incompleteness that arises when an equilibrium selection mechanism is not specified, we use a maximin welfare criterion to select a policy based on the ``least favourable" equilibrium outcome, and implement this policy using a greedy algorithm.
We establish performance guarantees for our method by deriving a welfare regret bound, which accounts for sampling uncertainty and the use of a greedy algorithm. We demonstrate our method with an application to the microfinance dataset of \citet{banerjee2013diffusion}.
\end{abstract}

\textbf{Keywords}: Treatment choice, simultaneous games, incomplete model, strategic complementarity, multiple Nash equilibria. 

\pagebreak
\section{Introduction}

Many policy problems involve allocating treatment among a network of interacting agents. Examples include technology diffusion \citep{parente1994barriers, alvarez2023strategic}, teenage smoking \citep{nakajima2007measuring}, consumer adoption decisions \citep{banerjee2013diffusion,keane2013comparing}, and education and migration \citep{hsiao2022educational}. 
Research in these fields highlights the role of spillover effects, particularly those arising from strategic interactions. 

Among other things, these strategic interactions lead to the presence of multiple Nash equilibria, which complicates the process of finding an optimal treatment allocation policy. To handle this multiplicity, counterfactual policy analysis ``\textit{has made simplifying assumptions which either change the outcome space or impose ad hoc selection mechanisms in regions of multiplicity}'' \citep{tamer2003incomplete}. Consequently, this approach ``\textit{potentially introduces misspecifications and nonrobustness in the analysis of substantive questions}'' \citep{de2013econometric}. To address this problem, can we develop a treatment allocation rule that remains optimal even under the least favorable equilibrium?

Focusing on a class of network models where units participate in a simultaneous decision game with strategic complementarity \citep{brock2001discrete, ballester2006s, molinari2008identification, jia2008happens, echenique2009testing, lazzati2015treatment,graham2023scenario}, this paper develops a method for constructing a maximin optimal treatment allocation rule that is robust to the presence of multiple Nash equilibria.
A planner allocates a binary treatment among a target population of $N$ units embedded within a network, where each unit's covariates and the network structure are observable. Each unit then simultaneously chooses a binary action to maximize its own utility, which depends on its own characteristics and treatment, as well as the characteristics, treatments, and expected choice of its neighbors\footnote{This incomplete information setting \citep{brock2001discrete, bajari2010estimating, de2012inference}, is our primary focus. Section \ref{sec:complete} extends our results to the complete information setting.}. Our goal is to learn a treatment allocation policy that maximizes social welfare for the target population.

To determine the optimal treatment allocation rule for the target population, we assume that there exists data for a social network of units who have been assigned treatment in the past. This sample may differ from the target population in terms of both the number of units and the network structure. Data is assumed to be available for each unit's covariates, decisions, and assignments, as well as those of their neighbors. After assessing how individual outcomes vary in response to different treatment allocations among the training sample, we analyze the optimal treatment allocation strategy, taking into account the covariates and network structure of the target population. Consider, for example, targeted information provision in villages with the aim of increasing microfinance adoption, as discussed in \citet{banerjee2013diffusion}. By analyzing heterogeneous choices among the units in villages selected by policymakers, we then estimate whom to better target in external villages.

There are both theoretical and practical challenges to studying optimal treatment allocation in the presence of strategic interactions. The primary theoretical challenge is incompleteness \citep{jovanovic1989observable} of the model when there are multiple Nash equilibria\footnote{See the detailed surveys by \citet{de2013econometric, molinari2020microeconometrics, kline2020econometric}.}. Without assuming an equilibrium selection mechanism, our model predicts a set of equilibrium outcomes under a counterfactual policy. From a theoretical perspective, one cannot judge which equilibrium outcome is more likely than the others. This paper allows for these multiple equilibria and imposes no assumptions on equilibrium selection. Instead, we provide set-identified equilibrium social welfare for any treatment allocation policy, along with a closed-form expression that characterizes the bounds of this set.

As the counterfactual equilibrium social welfare is only set-identified, we cannot directly target equilibrium welfare when designing a treatment allocation rule. To address this uncertainty, we refine the optimality of treatment allocation using the maximin welfare criterion. This criterion is employed in the robust decision theory literature (e.g., \citealp{chamberlain2000econometric}), and the robust mechanism design literature (e.g., \citealp{morris2024implementation}). Under the maximin welfare criterion, our objective is to design a treatment allocation policy that maximizes social welfare evaluated under the least favourable equilibrium selection rule. 

In terms of implementation, there are two challenges. We adopt a parametric utility function specification, and the first challenge is estimating the parameters of this utility function. We assume the existence of a one-period training data set that contains a finite number $n$ of units, along with their covariates and the network structure\footnote{We allow the training data to come from our target population, with caveats about the private information of each unit. A more detailed discussion is provided in Section \ref{sec:estimation}.}. An existing treatment allocation policy is assumed, and we observe each unit's choice under this policy. We estimate parameters using the two-step maximum likelihood estimator proposed by \citet{leung2015two}. However, in the context of a network game setting, the asymptotic behavior of this estimator cannot be characterized without assuming how the network structure changes as the number of units increases (i.e., whether the network is dense or sparse). Although non-asymptotic results could elucidate how the sampling uncertainty of this estimator is influenced by network structure, the current literature lacks such analysis. Addressing this gap is one of the primary focuses of our paper.

The second challenge to implementation is finding the maximin optimal treatment allocation, which requires optimizing an of an objective function dependent on a system of simultaneous equations. In the presence of strategic interactions, when a treatment is assigned to a unit, it not only influences their behavior but also that of their neighbors. This, in turn, affects the payoff of their neighbors’ neighbors, propagating feedback effects throughout the network and presenting a complex combinatorial optimization problem.  To tackle this complexity, we propose a greedy algorithm. This algorithm sequentially assigns treatment to the agent who yields the highest marginal welfare gain at each step. However, this class of algorithm generally lacks a performance guarantee. We address this by characterizing the performance guarantee through the features of our objective function.

We evaluate the performance of our proposed method based on its regret, which is defined as the difference between the largest achievable welfare and the welfare achieved by our proposed method, evaluated under the least favourable equilibrium selection rule. Regret arises from two sources of uncertainty: The first is due to the use of estimated structural parameters, and reflects sampling uncertainty. The second is due to the use of a greedy algorithm.

This paper makes three theoretical contributions: (\romannumeral 1) It provides a closed-form expression for the identified set corresponding to the equilibrium outcomes under any arbitrary policy intervention. The heavy computation costs due to the large number of equilibria have limited the range of empirical applications in the literature to static models with a small number of players and choice alternatives. Our approach avoids computing the set of equilibria and hence allows for a feasible characterization of the identified region for the equilibrium social welfare; (\romannumeral 2) It presents the first non-asymptotic result on regret with strategic interactions. It shows that, under regularity conditions, the regret introduced by sampling uncertainty shrinks at the rate $\log(n)/\sqrt{n}$; (\romannumeral 3) It offers a theoretical performance guarantee for the regret associated with using a greedy algorithm to solve optimization problems involving systems of simultaneous equations, a topic previously unexplored in the existing literature.

To demonstrate how our method can be implemented and quantitatively evaluate its performance, we apply it to the data of \citet{banerjee2013diffusion}. We design a policy to maximize the take-up rate of microfinance products among households across various villages. For each village in the sample, we estimate the utility function parameters. These estimates are then used to assess the presence of strategic complementarity in each village. We find that strategic complementarities are present in 16 out of the 43 villages. For these villages, we construct an individualized treatment allocation rule using our greedy algorithm. Empirical results revealed that the occurrence of multiple equilibria can vary depending on the allocation method used. We compare the welfare outcomes achieved by our algorithm with those obtained by the NGO Bharatha Swamukti Samsthe (BSS). Our results indicate that, for all 16 villages exhibiting strategic complementarity, our method achieves notably higher welfare levels, with improvements ranging from $20\%$ to $270\%$, and an average improvement of $116\%.$  Additionally, the lower bound of welfare under our method consistently exceeds the maximal welfare attained under the allocation rule used by BSS and a rule that assigns treatment at random. These substantial welfare gains highlight the benefits of individualized targeting in the presence of strategic interference, which demonstrates the efficacy of our approach in optimizing resource allocation and improving social welfare.

\subsection{Literature Review}
This paper is related to several literatures in economics and econometrics, including strategic interactions, statistical treatment rules, robust decision theory, robust mechanism design, and greedy algorithms.

Pioneering contributions to the econometric aspects of game-theoretic models include works by \citet{jovanovic1989observable} and \citet{bresnahan1991empirical}, which explore the empirical challenges associated with models that feature multiple equilibria. The recent literature on the econometrics of strategic interactions includes simultaneous decision games with complete information, such as \citet{tamer2003incomplete}, \citet{bajari2010estimating,bajari2010identification}, \citet{de2018identifying}, \citet{sheng2020structural}, and \citet{chesher2020structural}; simultaneous decision games with incomplete information, such as \citet{de2012inference,de2020testable}, \citet{menzel2016inference}, and \citet{ridder2020two}; and sequential decision games: \citet{aguirregabiria2007sequential,aguirregabiria2019identification}, \citet{mele2017structural}, \citet{leung2019inference}, and \citet{christakis2020empirical}. 

Focusing on a game with complete information, \citet{tamer2003incomplete}  
obtains bounds for structural parameters while remaining fully agnostic about the equilibrium selection mechanism. Motivated by this, \citet{sheng2020structural} uses a sub-network approach to provide bounds for structural parameters in a network formation setting. \citet{chesher2020structural} partially identifies structural parameters using the Generalized Instrumental Variable approach \citep{chesher2017generalized}. \citet{bajari2010estimating} point identifies the parameters for a game with incomplete information, along with providing a semi-parametric estimator. However, estimation methods typically require observing repeated samples of the game, which may not be feasible for social network games. Motivated by this, \citet{leung2015two} studies a two-step maximum likelihood estimator in a large network setting, while \citet{ridder2020two} studies a two-step GMM estimator in a similar setting.
The paper adopts an existing estimator for structural parameters and treats this as an intermediate step in estimating an optimal policy.
% since we want to obtain a better policy after knowing the parameters. 

One important task in obtaining an optimal policy is predicting the equilibrium outcome under counterfactual policies. In the strategic interaction literature, counterfactual analysis has been studied among others by \citet{jia2008happens}, \citet{aguirregabiria2010dynamic}, and \citet{canen2020decomposition} under various assumptions on the equilibrium selection mechanism. \citet{ciliberto2009market} does not restrict the equilibrium selection rule but considers only some candidate counterfactual policies. Additionally, \citet{lee2009multiple} investigates ATM network games by enumerating all Nash equilibria and analyzing how different learning algorithms select among them. None of these studies considers the aggregate equilibrium outcome, which aggregates the equilibrium outcomes of each unit, under a counterfactual policy. Here, we consider a social planner, and hence it is crucial to evaluate the aggregate social welfare. Remaining fully agnostic about the equilibrium selection mechanism, we provide a counterfactual analysis of the aggregate social welfare. 

Strategic interactions are closely related to social interaction models. These were introduced by \citet{manski1993identification}, which examines spillover effects through strategic interactions using a linear social interaction model with unique equilibrium. 
\citet{brock2001discrete} extends this model to a nonlinear setting and considers multiple equilibria. \citet{goldsmith2013social} considers the endogeneity of the network formation process. \citet{de2024identifying} recovers unknown network structure using a linear social interaction model.

This paper contributes to the growing literature on statistical treatment rules, which were introduced into econometrics by \citet{manski2004statistical} and \citet{dehejia2005program}. The recent literature includes \citet{stoye2009minimax,stoye2012minimax}, \citet{hirano2009asymptotics,hirano2020asymptotic}, \citet{chamberlain2011bayesian},   \citet{kitagawa2018should}, 
\citet{ananth2020optimal}, \citet{athey2021policy},
\citet{mbakop2021model},
\citet{kitagawa2021constrained},
\citet{sun2021empirical}, \citet{munro2021treatment}, \citet{christensen2022optimal},
\citet{adjaho2022externally}, \citet{kitagawa2022policy}, \citet{kitagawa2023individualized,KITAGAWA2023109}, \citet{viviano2024policy}, \citet{fernandez2024robust}, and \citet{munro2024treatment}. In contrast to the i.i.d. setting considered in most of these papers, we consider a setting where the spillover effects of treatment assignment are important. A small number of papers in the literature considers spillover effects. These include \citet{viviano2024policy}, \citet{ananth2020optimal}, \citet{munro2021treatment}, and \citet{kitagawa2023individualized,KITAGAWA2023109}. Apart from \citet{kitagawa2023individualized}, none of those papers consider spillover effects introduced by strategic interaction or the related complications, such as multiple equilibria.

\citet{viviano2024policy} and \citet{ananth2020optimal} focus on estimating direct and indirect treatment effects to derive optimal allocation policies based on data. We focus on the strategic interaction setting where each unit’s behavior is influenced by the behaviors of nearby individuals within a network. These interactions are naturally modeled using game theory \citep{jackson2015games}, which we adopt in our analysis.
In addition, using a game theoretical approach enables us to evaluate social welfare directly through the individual's utility and account for the general equilibrium effects. \citet{munro2021treatment} focuses on a competitive equilibrium where spillover effects are mediated through the equilibrium price. Their approach uses the mean-field limit to characterize the asymptotic behavior of treatment effects, focusing on settings where a unique mean-field equilibrium price exists. \citet{kitagawa2023individualized} focuses on a sequential decision game with a Markovian structure, which leads to a unique stationary joint distribution of units' decisions \citep{mele2017structural}. This stationary distribution allows each state to be revisited instead of converging to multiple distinct equilibria. As a consequence, their stationary welfare differs from the equilibrium welfare that we consider here and we do not require the existence of a potential function. \citet{KITAGAWA2023109} considers the spillover effects from vaccination. 

Although the source of uncertainty is different, our paper robustly addresses the incompleteness introduced by multiple equilibria in a manner inspired by the robust decision theory literature (see the recent survey by \citealp{chamberlain2020robust}). \citet{chamberlain2000econometric,chamberlain2000econometrics} consider decision-making when there is uncertainty due to a partially specified subjective distribution. Their robust decision rule maximizes the risk function evaluated at the least-favourable distribution.
\citet{hansen2001robust,hansen2008robustness} achieve robustness by working within a neighborhood of a reference model and maximizing the minimum of expected utility over that neighborhood. \citet{manski2003partial} faces a similar problem to us where some part of the model is missing from the data, and obtains a robust identification region by incorporating the maximum and minimum value of the unobserved component. \citet{giacomini2021} applies the robust Bayes approach of \citet{berger1994} to a set-identified model, and shows asymptotic equivalence between the identified set and the set of posterior means obtained from using a multiple priors. See also \citet{giacomini2021robust} and references therein. \citet{christensen2023counterfactual} relaxes parametric assumptions about the distribution of latent variables in a structural model. Their robust counterfactual set is obtained by maximizing (minimizing) the counterfactual through the distribution of latent variables over a neighborhood of the prespecified parametric distribution.

Finally, this paper is closely related to network games and mechanism design, as exemplified by \citet{morris2000contagion}, \citet{ballester2006s}, \citet{galeotti2010network}, and \citet{galeotti2020targeting} for network games, and \citet{mathevet2010supermodular}, \citet{gonccalves2020statistical}, \citet{fu2021full}, \citet{morris2024implementation}, and \citet{brooks2024structure} for mechanism design. Network games explore how network characteristics influence behavior. \citet{jackson2008social} and \citet{jackson2015games} provide comprehensive summaries. \citet{galeotti2020targeting} employs a principal component approach to analyze how interventions that change characteristics impact outcomes and develops strategies for optimal interventions within network games.  However, they assume a unique equilibrium, whereas this paper focuses on models with multiple equilibria. Following the same setting, \citet{sun2023structural} examines optimal interventions that alter network structure, while \citet{kor2022welfare} considers interventions that affect both characteristics and network structure. Nonetheless, these studies differ from ours in terms of utility specification, objective function, and the definition of the action space.

This paper can be viewed as a specific instance of mechanism design, where treatments are allocated to incentivize units’ equilibrium behavior towards achieving desired objectives. 
Closely related is \citet{morris2024implementation}, which characterizes the set of outcomes achievable from the smallest equilibrium, referred to as the smallest implementable outcome, in a supermodular game. Moreover, within a convex potential game, they show that the optimal outcome—realized by implementing information to maximize the smallest equilibrium—results in all players selecting the same action.
While our implementation approach differs, the lower bound of the set-identified social welfare in this paper is similar in concept to these smallest implementable outcomes. However, it is more complex to characterize this set as the number of players increases.
Additionally, \citet{morris2024implementation} leaves open the question of which implementation strategies are needed to achieve these outcomes, a gap this paper addresses.

\textit{Outline}.$\quad$ The rest of this paper proceeds as follows: Section \ref{sec:model} introduces the game setting and the solution concept. Section \ref{sec:counter} discusses counterfactual analysis. Section \ref{sec:treat} focuses on treatment allocation and implementation. Section \ref{sec:theory} presents theoretical results related to the implementation of our proposed method.  We apply our proposed method to the Indian micro-
finance data, which is studied by \citet{banerjee2013diffusion}, and demonstrate its performance in Section \ref{sec:empirical}. Section \ref{sec:complete} extends our analysis to the complete information setting. Section \ref{sec:conclude} concludes. All proofs and derivations are shown in
Appendix \ref{appendixA} to Appendix \ref{appsec:previous}.

%Statistical mechanism design \citet{gonccalves2020statistical}

\section{Model}\label{sec:model}
\subsection{Setup}
Let $\mathcal{N}=\{1,2,..., N\}$ be the target population. Each unit $i$ has a $K$-dimensional vector of characteristics $X_i$ observable to the researcher. $X_i$ is assumed to have bounded support, and we standardize the measurements of $X_i$ to be nonnegative, such that $X_i\in \mathcal{X}\in\mathbb{R}_{+}^{K}$. Let $X=[X_{1}^{\intercal},...,X_{N}^{\intercal}]\in\mathcal{X}^N$ be an $N \times K$ matrix whose $i$th row contains the characteristics of unit $i$, and let $\mathcal{X}^N$ represent the set of all such possible matrices $\mathcal{X}$. Let $D=\{D_1,...,D_N\}\in\mathcal{D}=\{0,1\}^N$ be a vector of binary treatment allocations. For $i \in \mathcal{N}$, $D_i=1$ if unit $i$ is treated and $D_i = 0$ if not.

The social network is represented by an $N \times N$ binary adjacency matrix, denoted by $G=\{G_{ij}\}_{i,j\in\mathcal{N}}\in\mathcal{G}=\{0,1\}^{N\times N}$. $G$ is assumed to be fixed and exogenous, irrelevant to treatment allocation. $G_{ij}=1$ indicates that units $i$ and $j$ are connected, while $G_{ij}=0$ indicates that they are not. Let $\mathcal{N}_i\coloneqq\{j:G_{ij}\neq 0\}$ denote the set of neighbors of unit $i$. $\widebar{N}$ denotes the maximum number of edges connected to any unit in the network (i.e., $\widebar{N}=\max_i\vert\mathcal{N}_i\vert$), while $\underline{N}$ denotes the minimum (i.e., $\underline{N}=\min_i\vert\mathcal{N}_i\vert$). We adopt the convention of no self-links (i.e., $G_{ii}=0$ for all $i \in \mathcal{N}$). This framework can accommodate both directed networks, where $G_{ij}$ and $G_{ji}$ can differ, and undirected networks, where $G_{ij}=G_{ji}$ for all $i,j\in\mathcal{N}$. Additionally, we allow the strength of spillover effects to depend not only on the adjacency matrix $G_{ij}$ but also on the covariates and treatment statuses of units $i$ and $j$. 

We consider a counterfactual equilibrium social welfare in the context of a large simultaneous decision game. 
We use the following notation for our simultaneous decision game. $Y_{i}\in\mathcal{Y}=\{0,1\}$ denotes unit $i$'s decision. The decision vector for all units is denoted by $Y= (Y_1,...,Y_N)\in\mathcal{Y}^N$, with $y\in\{0,1\}^N$ representing the realized decision outcomes. Additionally, we define a vector of idiosyncratic shocks $\varepsilon=\{\varepsilon_1,...,\varepsilon_N\}$, where $\varepsilon_i$ is the shock for unit $i\in\mathcal{N}$.
 
The game, denoted by $\Gamma$, comprises:

\vspace{0.3cm}
\noindent\textbf{Players}: \quad A set of individuals that we label $\mathcal{N}$, a social planner;

\vspace{0.3cm}
\noindent\textbf{Payoffs}: \quad The preferences (utilities) of units are denoted by $\{U_i(y, X, D, G; \theta)\}_{i=1}^N$. Following \citet{de2012inference} and \citet{galeotti2020targeting}, we endow units with a quadratic utility function
\begin{equation}
    U_i(y, X, D, G; \theta) = (\alpha_i-\varepsilon_i)y_i+\sum_{j\neq i}\beta_{ij}y_iy_j.
\end{equation}
where $\alpha_i\coloneqq\alpha_i(X,D,G)$ and $\beta_{ij}\coloneqq\beta_{ij}(X,D,G)$ are heterogeneous functions that capture unit $i$'s \textit{individual utility} and \textit{spillover utility}. The utility of $Y_i=0$ is normalised to $0$.

Given a network $G$, covariates $X=(X_1,...,X_N)$, and a treatment allocation $D=(D_1,...,D_N)$, the coefficient $\alpha_i$ on unit $i$'s choice depends upon their own covariates and treatment status as well as those of all of their neighbors; the coefficient $\beta_{ij}$ multiplying the quadratic term $y_iy_j$ depends upon their own covariates and treatment status as well as those of unit $j$. 
Since the choice variable is binary, if $\alpha_i$ and $\beta_{ij}$ are unconstrained, then this specification of the utility function is without loss of generality. We endow these utilities with certain properties, which are specified in Section \ref{sec:super} and Section \ref{sec:identi}.

\vspace{0.3cm}
\noindent\textbf{Information}:\quad The literature delineates two information environments: \textit{complete information} and \textit{incomplete information}. In a \textit{complete information} setting, players can observe all characteristics of other units. This setting is studied in \citet{tamer2003incomplete}, \citet{ciliberto2009market}, \citet{bajari2010identification} and \citet{chesher2020structural}. Since we consider a large network setting, it may not be plausible for players to have perfect information about all the other units \citep{ridder2020two}. Therefore, in our headline setting, we follow \citet{brock2001discrete}, \citet{aguirregabiria2007sequential}, \citet{bajari2010estimating}, and \citet{de2012inference}, and consider an \textit{incomplete information} setting. All units and the social planner are assumed to observe characteristics $X$ and the network structure $G$, but the vector of idiosyncratic shocks of units is assumed to be unobservable. The realization of $\varepsilon_i$ is unit $i$'s private information. All players are assumed to have a common belief about the distribution of $\varepsilon$. Formally, 
    \begin{assumption}\label{ass:epsilon}
        The set of idiosyncratic shocks $\varepsilon$ must satisfy the following conditions: 
        \begin{enumerate}%[(i)]
            \item The $\{\varepsilon_i\}_{i=1}^N$ is i.i.d. with a known distribution $F_{\varepsilon}$, which is common knowledge for all the players;
            \item The distribution of $\varepsilon_{i}$ has a density $f_{\varepsilon}$, which is bounded above by a constant $\tau$. In addition, $f_{\varepsilon}$ is continuously differentiable; 
            \item $\varepsilon_{i}\perp X,G,D$ for all $i\in\mathcal{N}$. 
        \end{enumerate}
    \end{assumption}
These assumptions are standard in the literature \citep{de2012inference,leung2015two,ridder2020two}. 
%In the extension section, we relax the first assumption by considering unobserved heterogeneity. 
The third assumption can be replaced by a conditional independence assumption if we assume that $F_{\varepsilon\vert X,D,G}(\cdot\vert X,D,G)$ is known.

\vspace{0.3cm}
\noindent\textbf{Actions}: \quad 
    At the beginning of the game, the social planner assigns treatment $D_i$
  to each unit $i\in\mathcal{N}$ to maximize the \textit{planner's welfare}:
    \begin{equation}\label{eq:plannerwelfare}
        W_{X,G}(D) = \frac{1}{N}\sum_{i=1}^N\mathbb{E}_{\varepsilon}\left[g_i(Y,X,D,G)\vert X,G\right],
    \end{equation}
    subject to the capacity constraint $\kappa$ (i.e., $\sum_{i=1}^N D_i\leq \kappa$). The expectation in Eq.\ref{eq:plannerwelfare} is taken with respect to choices $Y$ given the observed covariates $X$, network structure $G$, and the treatment allocation rule $D$\footnote{With multiple equilibria and no assumption imposed on the equilibrium selection mechanism, the expectation becomes a set in which each element is conditional on a specific equilibrium selection mechanism. This concept will be further formalized in Section \ref{sec:equili}.}. The function $g_i: \mathcal{Y}^N\times \mathcal{X}^N\times \mathcal{D}\times\mathcal{G}\rightarrow \mathbb{R}$ allows social welfare to deviate from the utilitarian welfare function, which corresponds to $g_i(\cdot)=U_i(\cdot)$. We explore two common types of social welfare functions: Utilitarian welfare, and Engagement welfare. Section \ref{sec:counter} discusses each in detail.
    
    After receiving their allocated treatment, units choose action $Y$ simultaneously to maximize their own payoff given the realization of $\varepsilon$. With complete information unit $i$'s decision rule would be:
    \begin{equation}
        Y_i=\mathds{1}\Big\{U_i(1,Y_{-i},X,D,G)\geq 0\Big\},\quad \forall i\in\mathcal{N}.
    \end{equation}
   However, since unit $i$ only has partial information about other units, the realization of $Y_{-i}$ is not observed. Therefore, units make decisions that are best responses given their belief about other units' decisions given the public information and their own type. Formally, in the incomplete information setting,
   \begin{equation}\label{eq:Y}
       Y_i=\mathds{1}\Big\{\mathbb{E}_{\varepsilon}\big[U_i(1,Y_{-i},X,D,G)\vert X,D,G,\varepsilon_i\big]\geq 0\Big\},\quad \forall i\in\mathcal{N}.
   \end{equation}
   with 
   \begin{equation}
       \mathbb{E}_{\varepsilon}\big[U_i(1,Y_{-i},X,D,G)\vert X,D,G,\varepsilon_i\big] = \alpha_i+\sum_{j\neq i}\beta_{ij}\mathbb{E}_{\varepsilon}\big[Y_j\vert X,D,G,\varepsilon_i\big]-\varepsilon_i.
   \end{equation}
   As $\varepsilon$ is i.i.d. by Assumption \ref{ass:epsilon}, this can be simplified to:
   \begin{equation}\label{eq:expu}
       \mathbb{E}_{\varepsilon}\big[U_i(1,Y_{-i},X,D,G)\vert X,D,G,\varepsilon_i\big] = \alpha_i+\sum_{j\neq i}\beta_{ij}\mathbb{E}_{\varepsilon}\big[Y_j\vert X,D,G\big]-\varepsilon_i.
   \end{equation}
   %Therefore, $Y_i$, the decision choice variable of unit $i$, is a function  
We have now established the game setting. To further elaborate, we introduce additional notation. Consider the action set $\mathcal{Y}$, defined as $\{0,1\}$. This set is a totally ordered set, endowed with the usual ordering relation $\leq$, characterized by reflexivity, antisymmetry, and transitivity\footnote{\textbf{Reflexive}: $\leq$ is reflexive if $y\leq y,$ for all $y\in\mathcal{Y}^N$. \textbf{Antisymmetric}: $\leq$ is antisymmetric if $y\leq y'$ and $y'\leq y$ implies $y=y'$. \textbf{Transitive}: if $y\leq y'$ and $y'\leq y''$ implies $y\leq y''$.}. The action profile space $\mathcal{Y}^N$, formed as a direct product of $\mathcal{Y}$, also constitutes a partially ordered set \citep[\S Example~2.2.1]{topkis1998supermodularity}. It is equipped with the \textit{product relation} $\leq$, where for any $y, y' \in \mathcal{Y}$, we have $y \leq y'$ if and only if $y_i \leq y_i'$ for all $i \in \mathcal{N}$. Given that $\mathcal{Y}^N$ is a partially ordered set, we can define a \textit{greatest} and \textit{least} element on it. A strategy profile $y$ is a greatest (least) element on $\mathcal{Y}^N$ if $y\geq y'$ ($y\leq y'$) for all $y'\in\mathcal{Y}^N$. In addition, The \textit{join} of any two elements $y, y'\in\mathcal{Y}^N$, written as $y\vee y'$, is defined as $\inf\{x\in\mathcal{Y}^N:x\geq y, x\geq y'\}$. The \textit{meet}, denoted as $y\wedge y'$, is symmetrically defined as: $\sup\{x\in\mathcal{Y}^N:x\leq y, x\leq y'\}$. In addition, a partially ordered set is called a \textit{lattice} if the join and meet of any pair of elements exist. A lattice is a \textit{complete lattice} if it contains the supremum and infimum of any subsets of it.

\subsection{Equilibrium}\label{sec:equili}

As the game introduced in the previous section features incomplete information, it is a Bayesian game \citep{harsanyi1967games}, and its Nash equilibria are \textbf{Bayesian Nash equilibria} (BNE). 
We use the pure strategy BNE solution concept. This is defined as:
\begin{definition}{(\textbf{Pure Strategy Bayesian Nash equilibrium})} Let $\mathbi{Y}$ be the set of all possible decision rules $\{y_i(\varepsilon_i)\}_{i=1}^N$, where $y_i(\varepsilon_i): \mathbb{R}\rightarrow\{0,1\}$ specifies unit $i$'s choice for each realization of their private information $\varepsilon_i$. A pure strategy BNE of game $\Gamma$ is a strategy profile $(y_1^*,...,y_N^*)$ such that, for every $i\in\mathcal{N}$,
\begin{equation}
\mathbb{E}_{\varepsilon}[U_i(y_i^*,y_{-i}^*)\vert X,D,G,\varepsilon_i]\geq \mathbb{E}_{\varepsilon}[U_i(y_i',y_{-i}^*)\vert X,D,G,\varepsilon_i],
\end{equation}
for all $y_i'\in\mathbi{Y}$, where $\mathbb{E}_{\varepsilon}[U_i(\cdot)\vert X,D,G,\varepsilon_i]$ is defined as in Eq.\ref{eq:expu}.  
\end{definition}
Following \citet{bajari2010estimating},  we represent the Bayesian Nash equilibrium in the conditional choice probability space. Denote the conditional choice probability (CCP) profile as $\sigma(X,D,G)= \{\sigma_i(X,D,G)\}_{i=1}^N$. An element of the CCP profile:
\begin{equation}\label{eq:ccp}
    \sigma_i(X,D,G) \coloneqq \mathbb{E}_{\varepsilon}[Y_i\vert X,D,G].
\end{equation}    
Combining the specification of $Y_i$ (Eq.\ref{eq:Y} and Eq.\ref{eq:expu}) with Eq.\ref{eq:ccp}, we have:
\begin{equation}\label{eq:fix}
    \sigma_i(X,D,G) = \int \mathds{1}\Big\{\alpha_i+\sum_{j\neq i}\beta_{ij}\sigma_j(X,D,G)\geq\varepsilon_i\Big\} dF_{\varepsilon}.
\end{equation}
Let $\Omega$ be a mapping from $[0,1]^N$ to $[0,1]^N$ that collects Eq.\ref{eq:fix} for all units. This is a non-linear simultaneous equation system. An equilibrium CCP profile $\sigma^*(X,D,G)$ is a fixed point of this simultaneous equation system: 
\begin{equation}\label{eq:omega}
    \sigma^*(X,D,G) = \Omega(\sigma^*(X,D,G)).
\end{equation}
This is one representation of the Bayesian Nash equilibrium. Alternatively, given an equilibrium CCP profile $\sigma^*$, a fixed $X,D,G$, and a realization of $\varepsilon$, we can define a Bayesian Nash equilibrium $\{y_{i}^*\}_{i=1}^N$ as: 
\begin{equation}\label{eq:equilisig}
    y_i^*=\mathds{1}\Big\{\alpha_i+\sum_{j\neq i}\beta_{ij}\sigma_j^*(X,D,G)\geq \varepsilon_i\Big\},\quad \forall i\in\mathcal{N}.
\end{equation}
As the right hand side of Eq.\ref{eq:fix} is equal to $F_{\varepsilon}(\alpha_i+\sum_{j\neq i}\beta_{ij}\sigma_{j})$, the existence of a fixed point is guaranteed by the \textit{Brouwer fixed-point theorem} \citep{brouwer1911abbildung}.  
%In addition, we show that this simultaneous equation system endows a unique fixed point when the spillover effect is minor. Formally,
%\begin{proposition}(\textbf{Unique Equilibrium})
    %Let $\widebar{f}\coloneqq \max_{c\in[\min_i\alpha_i,\max_i\alpha_i+\sum_{j\neq i}\beta_{ij}]} F_{\varepsilon}(c)$, and $\widebar{\beta}\coloneqq \max_{i,j}\beta_{ij}(X,D,G)$. If $\widebar{f}\cdot\widebar{\beta}\leq 1$, then $\Omega$ is a contraction mapping. By the \textit{Banach fixed point theorem} \citep{banach1922operations}, there exists a unique fixed point $\sigma^*$.
%\end{proposition}
As noted in \citet{echenique2009testing}, this type of simultaneous equation system can have multiple fixed points. In particular, games with strategic complementarity, as in our setting, tend to have a large number of equilibria \citep{takahashi2008number}.
Let $\Sigma\coloneqq\{\sigma:\sigma=\Omega(\sigma)\}$ denote the set of equilibria. Any equilibrium outcome in this set is a reasonable prediction. In other words, for a given $X, D, G$ and $\theta$, the model predicts a set of equilibrium outcomes $\sigma^*$. If we do not assume an equilibrium selection mechanism, this multiplicity introduces incompleteness \citep{jovanovic1989observable}. Incompleteness dramatically increases the difficulty of counterfactual analysis since the model can only identify a set of equilibrium CCP profiles $\Sigma$ with a newly implemented policy (i.e., a new treatment allocation rule $D$). 

With a newly implemented policy, the realized equilibrium depends on an equilibrium selection mechanism. Let $\xi:\Sigma\rightarrow[0,1]$ denote the probability distribution over equilibria, and let $\Delta(\Sigma)\coloneqq\{\xi:\sum_{\sigma^*\in\Sigma}\xi(\sigma^*)=1\}$ denote the set of all the probability distributions. The equilibrium selection mechanism is a mapping from the public information (i.e., $X, D, G$) to one particular element of $\Delta(\Sigma)$. Formally:
\begin{definition}{(\textbf{Equilibrium Selection Mechanism})}
    The equilibrium selection mechanism is denoted by $\lambda(\cdot\vert X,D,G)$ and the equilibrium selection mechanism space is defined as:
    \begin{equation}
   \Lambda \coloneqq \{\lambda:\mathcal{X}^N\times \mathcal{D}\times \mathcal{G}\rightarrow \Delta(\Sigma)\}.
\end{equation}
\end{definition}
If the equilibrium selection mechanism is observable, the conditional choice probability becomes complete by conditioning on $\lambda$. Since
\begin{equation}\label{eq:conlam}
    \Pr[Y_i=1\vert X,D,G,\lambda] = \sum_{\sigma^*\in\Sigma}\lambda(\sigma^*\vert X,D,G)\sigma_i^*,\quad\forall i\in\mathcal{N}.
\end{equation}
There are two main difficulties in characterizing the equilibrium outcome under a newly implemented policy. First, the equilibrium selection mechanism is not directly observable.  The identification of an equilibrium selection mechanism from data is studied in \citet{bajari2010identification} and \citet{aguirregabiria2019identification}, among others. This is useful in the identification of parameters, since parameter values are independent of $\lambda$. For counterfactual analysis, however, there is no guarantee that the equilibrium selection mechanism remains fixed when $X,D,G$ changes. The second difficulty is that the cardinality of $\Sigma$ increases dramatically with the number of units in the network. Hence, it is not feasible to evaluate the summation in Eq.\ref{eq:conlam}. To improve the tractability of counterfactual analysis, we focus on a game with strategic complementarity.

\subsection{Complementarity and Supermodular Games}\label{sec:super}

\textit{Strategic complementarity} in games implies that, given an ordering of strategies, a player's choice of a higher action incentivizes other players to similarly choose a higher action \citep{bulow1985multimarket}. In economics, complementarity is an important and empirically relevant concept \citep{molinari2008identification}. It has many policy applications, such as price setting \citep{alvarez2022price}, house prices \citep{guren2018house}, technology adoption \citep{alvarez2023strategic}, as well as the additional examples given in \citet{molinari2008identification}, \citet{lazzati2015treatment}, and \citet{graham2023scenario}. The theoretical literature has established that games with strategic complementarities have \textit{robust dynamic stability properties} \citep{milgrom1991adaptive,milgrom1994monotone}. This means they converge to the set of Nash equilibria even with simple learning dynamics (\citealp{fudenberg1998theory}; \citealp{chen2004does}). \citet{topkis1998supermodularity} shows that strategic complementarity and supermodularity are equivalent in finite strategy games. The mathematical property \textit{supermodularity} simplifies analysis. It captures the idea of increasing returns between the choice variables.
Therefore, to analyze the Bayesian Nash equilibrium of our game, we characterize it as a supermodular game. The definition of supermodular game is:
\begin{definition}\textbf{Supermodular Game \citep{milgrom1990rationalizability}}: 
A game $\Gamma$ is a supermodular game if, for each $i\in\mathcal{N}$:
\begin{enumerate}
    \item Strategy set $\mathcal{Y}$ is a complete lattice;
    \item Payoff $U_i:\mathcal{Y}^N\rightarrow\mathbb{R}$ is order upper semi-continuous in $y_i$ (for fixed $y_{-i}$) and order continuous in $y_{-i}$ (for fixed $y_{i}$), and has a finite upper bound;
    \item Payoff $U_i$ is supermodular in $y_{i}$ (for a fixed $y_{-i}$);
    \item Payoff $U_i$ has increasing differences in $y_{i}$ and $y_{-i}$.
\end{enumerate}
\end{definition}
The definitions of a \textit{supermodular function} and \textit{increasing differences} are:
\begin{definition}\textbf{Supermodular Function}:
    A function $U\colon \mathcal{Y}^N\rightarrow \mathbb{R}$ is \textit{supermodular} on $\mathcal{Y}^N$ if for all $y,y'\in\mathcal{Y}^N$:
    \begin{equation}
    U(y)+U(y')\leq U(y\wedge y')+U(y\vee  y').
    \end{equation}
\end{definition}

\begin{definition}\textbf{Increasing Differences}:
    A function $U\colon \mathcal{Y}^N\rightarrow \mathbb{R}$ has \textit{increasing differences} if for all $y_{-i} < y'_{-i}$ and $y_i< y_i'$:
    \begin{equation}
        U(y_i,y_{-i}')-U(y_i,y_{-i})\leq U(y_i',y_{-i}')-U(y_i',y_{-i}).
    \end{equation}
\end{definition}
\citet[\S Chapter~2.6.1]{topkis1998supermodularity} shows that, for a real valued utility function,  increasing differences is equivalent to complementarity between units' decisions. Given the definition of a supermodular game above, $U_i$ is a supermodular function on $\mathcal{Y}^N$ if and only if $U_i$ exhibits increasing differences on $\mathcal{Y}^N$ \citep[\S Theorem~2.6.1; \S Corollary~2.6.1]{topkis1998supermodularity}. Therefore, we have  equivalence between complementarity and supermodularity in our game. 
\textit{Topkis's characterization theorem} \citep[\S Section~3]{topkis1978minimizing} shows that
\begin{equation}
    \frac{\partial^2 U_i(y)}{\partial y_i\partial y_j}\geq 0, \quad \forall j\neq i
\end{equation}
is a necessary and sufficient condition to guarantee a utility function is a supermodular function on $\mathcal{Y}^N$. In our specification, this is equivalent to $\beta_{ij}\geq 0$ for $j\neq i$. 

Assuming that $\beta_{ij}\geq 0$ for $j\neq i$, our game is a supermodular game since $\{0,1\}^N$ is a complete lattice and our utility function is continuous. \textit{Tarski's fixed point theorem} \citep[\S Theorem~1]{tarski1955lattice} then guarantees the existence of pure strategy Bayesian Nash equilibrium $y^*$. In particular, there always exists a least BNE $\underline{y}^*$ and a greatest BNE $\widebar{y}^*$ \citep[\S Theorem~5]{milgrom1990rationalizability}. \textit{Tarski's fixed point theorem} can be applied to the conditional choice probability space instead of the strategy profile space to obtain an equivalent result. $[0,1]^N$ is also a complete lattice, and $\Omega:[0,1]^N\rightarrow[0,1]^N$ in Eq.\ref{eq:omega} is an increasing function given $\beta_{ij}\geq 0$. Therefore, we have a maximal equilibrium CCP profile $\widebar{\sigma}^*$ and a minimal equilibrium CCP profile $\underline{\sigma}^*$. In section \ref{sec:counter}, we show how strategic complementarity simplifies counterfactual analysis.

\section{Counterfactual Analysis for the Target Population}\label{sec:counter}
The goal of this paper is to obtain a treatment allocation that maximizes the equilibrium social welfare of the target population. To achieve this, we first need to characterize the counterfactual equilibrium social welfare if we implement a policy in the target population, which may have a different network structure to the training sample. As the equilibrium selection mechanism is unobservable, the literature typically obtains a point-identified prediction for welfare by assuming how counterfactual policies affect the equilibrium selection mechanism. For example, \citet{jia2008happens} assumes that a specific equilibrium is always played, \citet{aguirregabiria2010dynamic} assumes the equilibrium remains the same after intervention, and \citet{canen2020decomposition} assumes that the equilibrium selection mechanism is invariant to the intervention. However, it is impossible to test the appropriateness of these assumptions given the existing method. In contrast, following \citet{tamer2003incomplete}, we are fully agnostic about how policy changes the equilibrium selection mechanism. In other words, the question we focus on is:
\textit{If we are agnostic about the equilibrium selection mechanism, what counterfactual outcome does the model predict}? 

Recall that our social welfare function is:
    \begin{equation}\label{eq:sw}
        W_{X,G}(D) = \frac{1}{N}\sum_{i=1}^N\mathbb{E}_{\varepsilon}\left[g_i(Y,X,D,G)\vert X,G\right].
    \end{equation}
With multiple equilibria and no assumption imposed on the equilibrium selection mechanism, our model provides a set-valued equilibrium probability distribution fot $Y$ conditional on $X, D, G$. Therefore, the expectation in Eq.\ref{eq:sw} is also a set, with each element an expectation conditional on a particular $\lambda$. Formally,
\begin{equation}
    W_{X,G}(D) = \{W_{X,G,\lambda}(D):\lambda\in\Lambda\},
\end{equation}
where
\begin{equation}
    W_{X,G,\lambda}(D) = \frac{1}{N}\sum_{i=1}^N\mathbb{E}_{\varepsilon}\left[g_i(Y,X,D,G)\vert X,G,\lambda\right].
\end{equation}

This paper considers counterfactual analysis for two standard social welfare functions.
\begin{itemize}
    \item \textbf{Engagement Welfare}: In certain scenarios, a social planner may prioritize goals other than maximizing utilitarian welfare. For instance, in tax auditing, the planner might individualize the assignment of tax audits. Generally, units prefer not to pay taxes, so if maximizing utilitarian welfare were the sole objective, no one would be audited. In this case, a more appropriate target might be the average rate of tax compliance\footnote{This concept of welfare can be broadened to include situations where the policymaker aims to influence outcomes indirectly affected by individual decisions, such as total tax revenue, which depends on individuals' decisions to pay taxes.}. Engagement welfare is defined as 
    \begin{equation}
        W_{X,G,\lambda}(D)=\frac{1}{N}\sum_{i=1}^N \Pr(Y_i=1\vert X,D,G,\lambda).
    \end{equation}
    \item \textbf{Utilitarian Welfare at Equilibrium}: Utilitarian welfare at equilibrium is the average of the expected utilities of individuals when the system is in equilibrium. This measure is often targeted in policy interventions as it comprehensively reflects overall societal benefit (e.g., \citealp{brock2001discrete,galeotti2020targeting}). An example where the utilitarian welfare target is used is job training programs. Here policymakers allocate limited training resources to unemployed workers to assist them in finding new jobs \citep{bloom1997benefits}. In such scenarios, social welfare is defined as: 
    \begin{equation}\label{eq:uwe}
        \begin{split}
            W_{X,G,\lambda}(D)&=\frac{1}{N}\sum_{i=1}^N \mathbb{E}_{\varepsilon}\left[U_i(Y,X,D,G)+\varepsilon_iY_i\vert X,D,G,\lambda\right],
        \end{split}
    \end{equation}
    which only depends on the expectation of the deterministic component in the utility function\footnote{\citet[Section 4]{brock2001discrete} show that introducing a shock term in Eq.\ref{eq:uwe} would render the model analytically intractable.}. Plugging in our utility function specification, we have:
    \begin{equation}
        W_{X,G,\lambda}(D)=\frac{1}{N}\sum_{i=1}^N\alpha_i\Pr(Y_i=1\vert X,G,\lambda)+\frac{1}{N}\sum_{i=1}^N\sum_{j\neq i}\beta_{ij}\Pr(Y_iY_j=1\vert X,D,G,\lambda).
    \end{equation}
    \begin{comment}
    \item \textbf{Targeted Welfare}: Targeted social welfare means the decision choices that depend on the treatment have a mediation effect on the target that the social planners are interested in. We describe the planner's target as $g_i(Y_i)$, which is a monotonic increasing function with respect to $Y_i$. For example, if our target is to improve the firm's income by assigning the tax rebate policy to encourage the $R\&D$ decision of firms. The $R\&D$ decision has a mediation effect on the firm's own income. Social welfare in this case is defined as:
    \begin{equation}
        W_{X,G,\lambda}(D)=\frac{1}{N}\sum_{i=1}^N \mathbb{E}\left[g_i(y_i,X,D,G)\vert X,G,\lambda\right]
    \end{equation}
 \end{comment}
\end{itemize}
Consider the engagement welfare function. We define bounds for equilibrium welfare, given covariates $X$, network $G$ and an arbitrary treatment allocation rule $D$, as:
\begin{equation}
\begin{aligned}
    W_{X,G,\lambda}(D)\in \Big[\inf_{\lambda\in\Lambda}W_{X,G,\lambda}(D),\:
    \sup_{\lambda\in\Lambda}W_{X,G,\lambda}(D)\Big].
    \end{aligned}
\end{equation}
Accordingly, let $\underline{\lambda}$ be the least-favorable equilibrium selection mechanism and $\widebar{\lambda}$ the most-favorable equilibrium selection mechanism :
\begin{equation}
    \underline{\lambda}\coloneqq \arginf_{\lambda\in\Lambda} W_{X,G,\lambda}(D), \quad \widebar{\lambda}\coloneqq \argsup_{\lambda\in\Lambda} W_{X,G,\lambda}(D). 
\end{equation}

In general it is not possible to solve for these two extreme points. There are two obstacles. First, the number of equilibria increases rapidly with the number of units in the network. Evaluating the expectation with respect to the joint distribution of $Y$ thus becomes infeasible. Second, the space of the equilibrium selection mechanisms $\Lambda$ may be infinite. This complicates any search for the infimum and supremum $\lambda$ across $\Lambda$.

In the existing literature, counterfactual analysis  \citep{ciliberto2009market} often focuses instead on the conditional choice probability (CCP). With no assumptions on the equilibrium selection mechanism, the bounds of the counterfactual CCP are:
\begin{equation}\label{eq:ccpcounter}
    \Pr(Y_i=1\vert X,D,G,\lambda)\in\Big[\inf_{\lambda\in\Lambda}\Pr(Y_i=1\vert X,D,G,\lambda),\:\sup_{\lambda\in\Lambda}\Pr(Y_i=1\vert X,D,G,\lambda)\Big].
\end{equation}
These bounds can be computed using off-the-shelf methods (e.g., \citet{sheng2020structural} for complete information settings). However, exact bounds of social welfare cannot be directly obtained from the bounds of the CCP. This is because:
 \begin{equation}\label{eq:minneq}
    \inf_{\lambda\in\Lambda}\sum_{i=1}^N \Pr(Y_i=1\vert X,D,G,\lambda)\leq \sum_{i=1}^N \inf_{\lambda\in\Lambda} \Pr(Y_i=1\vert X,D,G,\lambda),
\end{equation}
and
\begin{equation}\label{eq:maxneq}
    \sup_{\lambda\in\Lambda}\sum_{i=1}^N \Pr(Y_i=1\vert X,D,G,\lambda)\geq \sum_{i=1}^N \sup_{\lambda\in\Lambda} \Pr(Y_i=1\vert X,D,G,\lambda).
\end{equation}
That is, the lower (and upper) bound of unit $i$'s conditional choice probability may be obtained under a different equilibrium selection mechanism to the bound for some unit $j \neq i$. Therefore, bounds for social welfare obtained by summing the bounds on the CCP will generally be loose. However, we show that Eq.\ref{eq:minneq} and Eq.\ref{eq:maxneq} hold with equality in a supermodular game. Formally,
\begin{theorem}\label{thm:equality}(\textbf{Engagement Welfare})
    For a supermodular game, the least favorable equilibrium selection rule $\underline{\lambda}$ and the most favorable equilibrium selection rule $\widebar{\lambda}$ are:
    \begin{equation}
        \underline{\lambda} \coloneqq \delta_{\underline{\sigma}^*}, \quad \widebar{\lambda} \coloneqq \delta_{\widebar{\sigma}^*},
    \end{equation}
    where $\delta_{\sigma}$ is the Dirac measure on the set of equilibria $\Sigma$. 
In addition, the following conditions are satisfied:
\begin{equation}
    \begin{split}
\inf_{\lambda\in\Lambda}\sum_{i=1}^N \Pr(Y_i=1\vert X,D,G,\lambda)
      =\sum_{i=1}^N\inf_{\lambda\in\Lambda} \Pr(Y_i=1\vert X,D,G,\lambda),
    \end{split}
\end{equation}
\begin{equation}
    \begin{split}
\sup_{\lambda\in\Lambda}\sum_{i=1}^N \Pr(Y_i=1\vert X,D,G,\lambda)
      =\sum_{i=1}^N\sup_{\lambda\in\Lambda} \Pr(Y_i=1\vert X,D,G,\lambda).
    \end{split}
\end{equation}
\end{theorem}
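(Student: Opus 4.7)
The key structural fact, already noted in Section \ref{sec:super} as a consequence of Tarski's fixed-point theorem applied to $\Omega:[0,1]^N\to[0,1]^N$, is that the set of equilibrium CCPs $\Sigma$ is a complete lattice with a least element $\underline{\sigma}^*$ and a greatest element $\widebar{\sigma}^*$ under the componentwise order. In particular, for every $\sigma^*\in\Sigma$ and every $i\in\mathcal{N}$,
\begin{equation}
\underline{\sigma}_i^* \;\leq\; \sigma_i^* \;\leq\; \widebar{\sigma}_i^*.
\end{equation}
This is the workhorse fact, and I would restate it explicitly at the start of the proof.

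The plan is then to show that a single Dirac equilibrium selection rule simultaneously minimizes every agent's CCP, which is what converts the inequality in Eq.\ref{eq:minneq} into equality. By Eq.\ref{eq:conlam}, for any $\lambda\in\Lambda$ we have
\begin{equation}
\Pr(Y_i=1\vert X,D,G,\lambda)=\sum_{\sigma^*\in\Sigma}\lambda(\sigma^*\vert X,D,G)\sigma_i^*,
\end{equation}
which is a convex combination of values in $[\underline{\sigma}_i^*,\widebar{\sigma}_i^*]$. Hence $\Pr(Y_i=1\vert X,D,G,\lambda)\in[\underline{\sigma}_i^*,\widebar{\sigma}_i^*]$, so $\inf_{\lambda\in\Lambda}\Pr(Y_i=1\vert X,D,G,\lambda)\geq \underline{\sigma}_i^*$ and the symmetric statement holds for the supremum. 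The lower bound is attained by taking $\lambda=\delta_{\underline{\sigma}^*}$, which is an admissible element of $\Lambda$ precisely because $\underline{\sigma}^*\in\Sigma$; this yields $\Pr(Y_i=1\vert X,D,G,\delta_{\underline{\sigma}^*})=\underline{\sigma}_i^*$ for every $i$ simultaneously. Therefore $\inf_{\lambda\in\Lambda}\Pr(Y_i=1\vert X,D,G,\lambda)=\underline{\sigma}_i^*$, and analogously $\sup_{\lambda\in\Lambda}\Pr(Y_i=1\vert X,D,G,\lambda)=\widebar{\sigma}_i^*$, with $\underline{\lambda}=\delta_{\underline{\sigma}^*}$ and $\widebar{\lambda}=\delta_{\widebar{\sigma}^*}$ as the extremizers.

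The final step is to interchange infimum and sum. For any $\lambda\in\Lambda$,
\begin{equation}
\sum_{i=1}^N \Pr(Y_i=1\vert X,D,G,\lambda)\;\geq\;\sum_{i=1}^N \underline{\sigma}_i^*\;=\;\sum_{i=1}^N \Pr(Y_i=1\vert X,D,G,\delta_{\underline{\sigma}^*}),
\end{equation}
so taking the infimum over $\lambda$ on the left gives exactly $\sum_i \underline{\sigma}_i^*=\sum_i \inf_{\lambda\in\Lambda}\Pr(Y_i=1\vert X,D,G,\lambda)$, which is the desired equality. The supremum case is identical with $\widebar{\sigma}^*$ in place of $\underline{\sigma}^*$.

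\textbf{Main obstacle.} The one nontrivial content is the simultaneous extremization: the generic inequality in Eq.\ref{eq:minneq} reflects the possibility that different units' CCPs are minimized by different selection rules. Everything hinges on showing that one common rule—the Dirac measure at the lattice-theoretic least equilibrium—minimizes every coordinate at once. This is exactly what the complete lattice structure of $\Sigma$ delivers via Tarski, so the argument reduces to invoking that structure and checking the convex-combination bound. No further regularity assumptions beyond $\beta_{ij}\geq 0$ (already assumed in Section \ref{sec:super}) are needed.
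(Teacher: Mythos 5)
Your proof is correct, but it takes a different (and in fact cleaner) route than the paper's own argument. You work entirely in the space of equilibrium CCP profiles: you invoke the componentwise bounds $\underline{\sigma}_i^*\leq\sigma_i^*\leq\widebar{\sigma}_i^*$ delivered by Tarski's theorem, observe via Eq.\ref{eq:conlam} that $\Pr(Y_i=1\vert X,D,G,\lambda)$ is a convex combination of the $\sigma_i^*$ and hence lies in $[\underline{\sigma}_i^*,\widebar{\sigma}_i^*]$, and note that the single Dirac rule $\delta_{\underline{\sigma}^*}$ attains the lower endpoint for every coordinate simultaneously, which immediately collapses the inequality in Eq.\ref{eq:minneq} to an equality. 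The paper instead argues one level down, at the realization level: it conditions on each draw of $\varepsilon$, uses the existence of a least and greatest pure-strategy BNE action profile $\underline{y}_\varepsilon\leq y_\varepsilon\leq\widebar{y}_\varepsilon$, and enumerates the three possible configurations of $(\underline{y}_\varepsilon^i,\widebar{y}_\varepsilon^i)$ to conclude that $\Pr(Y_i=1\vert X,D,G,\underline{\lambda},\varepsilon)$ is minimal pointwise in $\varepsilon$ before integrating. Both arguments rest on the same lattice-theoretic fact; your version is shorter and hews exactly to the formal definition of $\Lambda$ as selection over $\Sigma$ (the set of CCP fixed points), while the paper's version establishes the slightly stronger $\varepsilon$-pointwise ordering of realized equilibrium actions, which is what it later reuses in the complete-information extension (Proposition \ref{pro:complete}). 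You correctly identify the simultaneous extremization as the only nontrivial content, and your argument for it is sound.
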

A proof of Theorem \ref{thm:equality} is provided in Appendix \ref{app:thequ}. This new result characterizes the most and least favorable equilibrium selection rules for aggregate social welfare. This approach enables us to leverage Tarski's fixed point theorem, which significantly reduces the computational burden by obviating the need to calculate all possible Nash equilibria. Furthermore, it establishes equivalence between the identified set of aggregate social welfare and the aggregation of identified sets of conditional choice probabilities, concepts studied in \citet{sheng2020structural} and \citet{gu2022counterfactual}. This equivalence is not guaranteed to hold in the absence of complementarity. When there are values of 
$\varepsilon$ with unordered multiple equilibria, such as $(Y_1=1, Y_2=0)$ and $(Y_1=0, Y_2=1)$ in the two-unit case, the process of identifying the least and most favorable $\lambda$ is significantly more complicated. Intuitively, the bounds coincide because strategic complementarity guarantees the existence of a least BNE and a greatest BNE for all the values of $\varepsilon$. Since the social welfare function is a monotonically increasing function of $\sigma$, it achieves its lower bound at the least equilibrium $\underline{\sigma}^*$ and its upper bound at the greatest equilibria $\widebar{\sigma}^*$. By definition, the conditional choice probability $\Pr(Y_i=1\vert X,D,G,\lambda)$ also achieves its lower bound under $\underline{\sigma}^*$ and its upper bound under $\widebar{\sigma}^*$ . The same argument can be applied to utilitarian social welfare to obtain the following corollary.
 
\begin{corollary} (\textbf{Utilitarian Welfare at Equilibrium})\label{coro:utility}
Under Assumption \ref{ass:epsilon}, given the specification of our utility function,
the predicted set of the expected utilitarian welfare under a counterfactual policy $D$ is given as:
\begin{equation}
    \begin{split}
        W_{X,G,\lambda}(D)\in&\Big[\frac{1}{N}\sum_{i=1}^N\alpha_i f(\alpha_i)+\frac{1}{N}\sum_{i=1}^N\sum_{j\neq i}\beta_{ij} \underline{\sigma}^*_i\underline{\sigma}^*_j, \frac{1}{N}\sum_{i=1}^N\alpha_i f(\alpha_i)+\frac{1}{N}\sum_{i=1}^N\sum_{j\neq i}\beta_{ij} \widebar{\sigma}^{*}_i\widebar{\sigma}^{*}_j\Big],
    \end{split}
\end{equation}
where 
\begin{equation}
    f(\alpha_i) = 
\begin{cases} 
\Pr(Y_i=1\vert X,D,G,\underline{\lambda}) & \text{if } \alpha_i > 0 \\
\Pr(Y_i=1\vert X,D,G,\widebar{\lambda}) & \text{if } \alpha_i \leq 0 .
\end{cases}
\end{equation} 
\end{corollary}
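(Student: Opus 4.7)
The plan is to decompose the utilitarian welfare into an ``individual'' part that is linear in CCPs and a ``spillover'' part that is bilinear in CCPs, and then handle each piece separately using Theorem \ref{thm:equality} together with the supermodular structure. Substituting $U_i(Y,X,D,G)+\varepsilon_i Y_i = \alpha_i Y_i + \sum_{j\neq i}\beta_{ij} Y_i Y_j$ into Eq.\ref{eq:uwe} and applying linearity of expectation yields
\begin{equation}
W_{X,G,\lambda}(D)=\frac{1}{N}\sum_{i=1}^N\alpha_i\Pr(Y_i=1\vert X,D,G,\lambda)+\frac{1}{N}\sum_{i=1}^N\sum_{j\neq i}\beta_{ij}\Pr(Y_iY_j=1\vert X,D,G,\lambda).
\end{equation}
Using $\inf_\lambda(A+B)\geq\inf_\lambda A+\inf_\lambda B$ together with the corresponding supremum inequality, the problem reduces to obtaining term-wise enclosures for each of the two sums.

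For the spillover sum, I would first argue that, conditional on being at a single pure-strategy equilibrium with CCP profile $\sigma^*\in\Sigma$, the actions $Y_i$ and $Y_j$ are independent $\text{Bernoulli}(\sigma_i^*)$ and $\text{Bernoulli}(\sigma_j^*)$ random variables. This follows because, by Eq.\ref{eq:equilisig}, once $\sigma^*$ is fixed $Y_i$ depends only on $\varepsilon_i$, and $\varepsilon_i\perp\varepsilon_j$ by Assumption \ref{ass:epsilon}. Integrating against $\lambda$ then gives $\Pr(Y_iY_j=1\vert X,D,G,\lambda)=\sum_{\sigma^*\in\Sigma}\lambda(\sigma^*\vert X,D,G)\,\sigma_i^*\sigma_j^*$. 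Because the game is supermodular with $\beta_{ij}\geq 0$, Tarski's fixed point theorem (invoked in Section \ref{sec:super}) furnishes a least equilibrium $\underline{\sigma}^*$ and a greatest equilibrium $\widebar{\sigma}^*$ with $\underline{\sigma}^*\leq \sigma^*\leq \widebar{\sigma}^*$ componentwise for every $\sigma^*\in\Sigma$. Since CCPs lie in $[0,1]$, this gives $\underline{\sigma}_i^*\underline{\sigma}_j^*\leq \sigma_i^*\sigma_j^*\leq\widebar{\sigma}_i^*\widebar{\sigma}_j^*$. Weighting by the nonnegative $\beta_{ij}/N$, summing, and averaging over $\lambda$ shows that the spillover sum is minimized by the Dirac measure $\underline{\lambda}=\delta_{\underline{\sigma}^*}$ and maximized by $\widebar{\lambda}=\delta_{\widebar{\sigma}^*}$, producing the two spillover terms that appear in the corollary.

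For the individual sum, Theorem \ref{thm:equality} directly gives $\Pr(Y_i=1\vert X,D,G,\lambda)\in[\underline{\sigma}_i^*,\widebar{\sigma}_i^*]$, with both extremes attained at $\underline{\lambda}$ and $\widebar{\lambda}$. The sign of $\alpha_i$ then selects which extreme minimizes the $i$-th contribution $\alpha_i\Pr(Y_i=1\vert X,D,G,\lambda)$: for $\alpha_i>0$ the minimum is at $\underline{\lambda}$ and for $\alpha_i\leq 0$ at $\widebar{\lambda}$, which is exactly how $f(\alpha_i)$ is defined (with the symmetric characterization for the maximum). Adding the two sums gives the stated enclosure. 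The main obstacle is the spillover step: the factorization $\Pr(Y_iY_j=1\vert\sigma^*)=\sigma_i^*\sigma_j^*$ relies critically on the i.i.d.\ and independence parts of Assumption \ref{ass:epsilon}, and without it the monotone comparative statics argument used for Theorem \ref{thm:equality} would not transfer to a closed-form bound on the joint probabilities.
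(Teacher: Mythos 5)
Your proposal is correct and follows essentially the same route as the paper's own proof in Appendix~\ref{app:oroofcoroutility}: the same decomposition of $W_{X,G,\lambda}(D)$ into the linear $\alpha_i$ term (handled term-by-term via the sign of $\alpha_i$ and the extremal selection rules from Theorem~\ref{thm:equality}) and the bilinear spillover term (handled by factoring $\Pr(Y_iY_j=1\vert X,D,G,\sigma^*)=\sigma_i^*\sigma_j^*$ using the independence in Assumption~\ref{ass:epsilon} and then bounding the product by $\underline{\sigma}^*_i\underline{\sigma}^*_j$ and $\widebar{\sigma}^*_i\widebar{\sigma}^*_j$). Your appeal to the componentwise ordering $\underline{\sigma}^*\leq\sigma^*\leq\widebar{\sigma}^*$ of all equilibria is exactly the content the paper extracts from Theorem~\ref{thm:equality}, so there is no substantive difference.
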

A proof of Corollary \ref{coro:utility} is provided in Appendix \ref{app:oroofcoroutility}. This result implies that it is sufficient to compute the minimal and maximal equilibrium CCP profile for the utilitarian welfare in the incomplete information setting. This result does not hold in the complete information setting, where we provide an alternative approach to compute the bounds of the identified set.

\begin{comment}

\begin{corollary}(\textbf{Targeted  Welfare})
    Under the assumption of a supermodular game, if $g_i(Y_i,X,D,G)$ is a monotonic increasing function with respect to $Y_i$, we have:
    \small
   \begin{equation}
       W_{X,G,\lambda}(D)\in\Big[\frac{1}{N}\sum_{i=1}^N\Pr(Y_i=1\vert X,D,G,\underline{\lambda})C_{1i}+C_{2},\:\frac{1}{N}\sum_{i=1}^N\Pr(Y_i=1\vert X,D,G,\overline{\lambda})C_{1i}+C_{2}\Big].
   \end{equation}
   \normalsize
   If $g_i(Y_i,X,D,G)$ is a monotonic decreasing function with respect to $Y_i$, we have:
   \small
   \begin{equation}
       W_{X,G,\lambda}(D)\in\Big[\frac{1}{N}\sum_{i=1}^N\Pr(Y_i=1\vert X,D,G,\overline{\lambda})C_{1i}+C_{2},\:\frac{1}{N}\sum_{i=1}^N\Pr(Y_i=1\vert X,D,G,\underline{\lambda})C_{1i}+C_{2}\Big],
   \end{equation}
   where $C_{1i}=\mathbb{E}[g_i(1,X,D,G)-f_i(0,X,D,G)\vert X,D,G]$, and $C_{2}=\frac{1}{N}\sum_{i=1}^N\mathbb{E}[g_i(0,X,D,G)\vert X,D,G]$.
\end{corollary}
\end{comment}
\begin{remark}
    The infimum and supremum of the planner's welfare, calculated over the equilibrium selection mechanism, are equivalent to the \textit{Choquet integral} (see \citealp{denneberg1994non} and \citealp{gilboa2009theory}) of the planner's welfare with respect to the capacity and its conjugate. The capacity $v$ and its conjugate $v^*$ are non-additive probability measures defined on the set of equilibrium $\Sigma$. In our case, $\Pr(Y_i=1\vert X,D,G,\underline{\lambda})$ is equivalent to Choquet integration with respect to the capacity $v(A)$ where $A=\{y^*:y^*_i=1\}$. Analogously, $\Pr(Y_i=1\vert X,D,G,\widebar{\lambda})$ is equivalent to Choquet integration with respect to the conjugate $v^*(A)$ where $A=\{y^*:y_i^*=1\}$. We refer to \citet{kaido2023applications} for the definition of capacity and a more detailed discussion on this topic. Despite the complexity typically associated with Choquet integration, which often requires approximate solution by simulation methods, our method provides a closed-form expression for the identified set which can be solved without numerical error. Moreover, the applications of Choquet integration extend to robust Bayesian analysis to manage multiple priors (see \citet{chamberlain2000econometric} and \citet{giacomini2021robust}), which is analogous to a setting with multiple equilibria.
\end{remark}

The details of the computation of the maximal and minimal equilibrium conditional choice probabilities (CCPs) are discussed in Section \ref{sec:comput}. The arguments above apply not only to the counterfactual analysis of treatment allocation policies but also to policy interventions that alter covariates or the network structure.

%\subsection{Identification and Estimation}

\section{Treatment Allocation}\label{sec:treat}
Our model allows for multiple equilibria, but can only predict a set of possible equilibrium outcomes, denoted as $W_{X,G}(D)$. Consequently, the expected value calculation that determines social welfare is not well-defined without specifying the equilibrium selection mechanism. Drawing on game theory \citep{morris2024implementation} and robust decision theory \citep{chamberlain2000econometric}, we apply the maximin welfare criterion to select a treatment allocation rule. This approach involves a social planner opting for choices that lead to higher welfare while preparing for the worst-case scenario of the least favorable equilibrium. Essentially, the planner anticipates the minimal equilibrium will be realized. For example, \citet[Section 4.1.3]{segal2003coordination} discusses scenarios in contracting where the worst-case equilibrium corresponds to the Pareto-efficient outcome for the parties involved. Moreover, in settings where action 0 is the default, games exhibiting strategic complementarity tend to converge toward their minimal equilibrium. 

The planner chooses $D$ to maximise welfare under the assumption that the minimal equilibrium, conditional on the chosen $D$, will be realized. We denote the set of feasible allocations by $\mathcal{D}_\kappa\coloneqq\{D\in\mathcal{D}:\sum_{i=1}^N D_i\leq \kappa\}$. Formally:
\begin{equation}\label{eq:target}
    D^*=\arg\max_{D\in\mathcal{D}_{\kappa}}\min_{\lambda\in\Lambda} W_{X,G,\lambda}(D).
\end{equation}
Recall that, by Theorem \ref{thm:equality}, the lower bound of equilibrium social welfare equals the summation of individual welfares. Thus, the maximin welfare optimisation problem simplifies to:
\begin{equation}
    \begin{split}
      \max_{D\in\mathcal{D}_{\kappa}} W_{X,G,\underline{\lambda}}(D),
    \end{split}
\end{equation}
where 
 $W_{X,G,\underline{\lambda}}(D)$ is the social welfare function evaluated at the minimal equilibrium. This formulation converts the maximin welfare problem into a straightforward maximization problem, providing a clear framework for solving the optimal treatment allocation problem.

\subsection{Implementation}
\subsubsection{Identification}\label{sec:identi}
The preceding discussion has assumed that true parameter values are observed. To implement our proposed method, we first describe the identification of structural parameters using a training sample
%\footnote{It is possible to use the same data to both estimate the treatment allocation policy and the structural parameters. An implicit assumption is that unobserved variables for all units are redrawn when assigning treatment (i.e., when evaluating the counterfactual equilibrium outcome) and are independent of their realization in the training data. Furthermore, allowing unobserved variables to remain the same in the targeted and training data simplifies the analysis, and can easily accommodated in our framework.}
. Details on the estimation procedure are provided in Section \ref{sec:estimation}. 

The discussion of counterfactual analysis in Section \ref{sec:counter} makes no assumptions about the functional form of parameters $\{\alpha_i\}_{i\in\mathcal{N}}$ and $\{\beta_{ij}\}_{i,j\in\mathcal{N}}$. However, observable data is limited to units' choices, covariates $X$, the network structure $G$, and a predetermined treatment allocation $D$. In practice, we are restricted by what it is possible to identify given this data. 

For identification, we follow \citet{bajari2010estimating} and adopt the inverse-CDF procedure\footnote{This approach builds on \citet{hotz1993conditional} and \citet{aguirregabiria2007sequential}.}. Let $\varepsilon^n$ be the private information in the training data, which is distinct from $\varepsilon$ in the target population. Recall from Eq.\ref{eq:equilisig} that, given an equilibrium conditional choice probability profile in the training data $\sigma^{data}$, unit $i$ chooses their actions according to the decision rule
\begin{equation}
    Y_i^{n}=\mathds{1}\big\{\alpha_i+\sum_{j\neq i}\beta_{ij}\sigma_j^{data}\geq \varepsilon_i^n\big\},\quad \forall i\in\mathcal{N}.
\end{equation}
The equilibrium CCP profile is thus: 
\begin{equation}\label{eq:iden}
    \begin{split}
        \sigma_i^{data} 
        &= 
        \int \mathds{1}\big\{\varepsilon_i^n\leq \alpha_i+\sum_{j\neq i}\beta_{ij}\sigma_j^{data}   \big\}dF_{\varepsilon^n}=F_{\varepsilon^n}\big[\alpha_i+\sum_{j\neq i}\beta_{ij}\sigma_j^{data}\big].
    \end{split}
\end{equation}
Taking the inverse of the CDF of $\varepsilon$ on both sides in Eq.\ref{eq:iden} yields:
\begin{equation}\label{eq:ideq}
    F_{\varepsilon^n}^{-1}(\sigma_i^{data}) = \alpha_i+\sum_{j\neq i}\beta_{ij}\sigma_j^{data}.
\end{equation}
Even assuming that the equilibrium CCP profile in the training data is observable, identifying all the parameters in Eq.\ref{eq:ideq} remains challenging. Determining all utility parameters involves solving for $N\times N$ unknown parameters on the right-hand side of the above equation. However, the left-hand side of Eq.\ref{eq:ideq}, only provides information about $N$ scalars. Given these limitations, we define our utility function as follows to ensure identifiability and allow for the analysis of general treatment effects:
\begin{equation}\label{eq:specifi}
    \begin{split}
        U_i(y,X,D,G)&=y_i(\overbrace{\theta_0+\theta_1 D_i+ X_i^{\intercal}\theta_2+X_i^{\intercal}\theta_3D_i+\frac{1}{\vert \mathcal{N}_i\vert}\sum_{j\neq i}\theta_4m_{ij}G_{ij}D_j}^{\alpha_i}-\varepsilon_i)\\
        &+\sum_{j\neq i}\underbrace{\frac{1}{\vert \mathcal{N}_i\vert}(\theta_5 +\theta_6 D_iD_j)m_{ij}G_{ij}}_{\beta_{ij}}y_iy_j,
    \end{split}
\end{equation}
where $m_{ij}=m(X_i,X_j)$ is a (bounded) real-valued function of personal characteristics. $m_{ij}$ measures the distance between unit $i$'s characteristics and unit $j$'s characteristics;
the spillover effect is weighted by how similar two units appear. The utility that unit $i$ derives from an action is the sum of the net benefits that they accrue from their own actions and from those of their neighbors. 
We assume that a unit's utility is only affected by the actions of their direct neighbors, not one-link-away contacts. 
The payoff of action $Y_i=1$ has six components. 
When unit $i$ chooses action $Y_i=1$, they receive utility $\theta_0$ irrespective of their allocated treatment. 
They also receive additional utility $\theta_1D_i$ depending upon their own treatment status. 
Their utility also includes a heterogeneous component $X_i^{\intercal}(\theta_2+\theta_3D_i)$, which depends upon their characteristics $X_i$. 
%Unit $i$ receives additional utility $\theta_4 m_{ij}$ if their neighbor unit $j$ receives treatment, no matter their own treatment status. 
Next, there is a spillover effect from the action of unit $j$.
If unit $j$ is a neighbor of unit $i$ that receives treatment, then this provides $\theta_4m_{ij}$ additional utility to unit $i$. The fifth and sixth components represent strategic complementarity. If unit $j$ is a neighbor of unit $i$ and selects $Y_j = 1$, then unit $i$'s payoff is increased by $\theta_5 m_{ij}$
The final component corresponds to choice spillovers between neighbors who receive treatment. 
If both unit $i$ and unit $j$ receive treatment and both choose action $1$, unit $i$ receives additional utility $\theta_6m_{ij}$. 

Accordingly, the structural parameters $\theta$ are uniquely determined by the conditional choice probabilities in the training sample, thus identifying the payoff function. This discussion provides only an informal overview of the identification process; a formal proof is available in \citet{bajari2010estimating}.

\subsubsection{Estimation}\label{sec:estimation}
For estimation, we employ the two-step maximum likelihood estimation procedure of \citet{leung2015two}. The first step involves estimating the equilibrium conditional choice probability from the training data. In the second step, structural parameters are estimated by maximizing the likelihood function given the estimated CCP profiles. To distinguish the training data from the target population, we denote  covariates as $\mathsf{X}=\{\mathsf{X}_i\}_{i=1}^{n}$, the treatment allocation as $\mathsf{D}=\{\mathsf{D}_i\}_{i=1}^{n}$, decisions as $\mathsf{Y}=\{Y_i\}_{i=1}^{n}$, and the network structure as $\mathsf{G}=\{\mathsf{G}_{ij}\}_{i,j=1}^{n}$. In addition, let $S = (\mathsf{X},\mathsf{D},\mathsf{G})$ .

Let $\{\hat{\sigma}_i^{data}\}_{i=1}^N$ be the CCP in the training data. Given that the training dataset contains only a single large network, two necessary conditions on the training data are required to estimate the conditional choice probability: \textit{symmetric equilibrium}\footnote{A symmetric equilibrium implies that two units will exhibit identical conditional choice probabilities if they receive the same treatment, share identical covariates, and have comparable neighbors, specifically in terms of the neighbors' treatments and covariates.} \citep{leung2015two} and \textit{network decaying dependence condition} \citep{xu2018social}. In general, each unit's choice depends on all public information across the network $G$ (i.e., $X$ and $D$ of all units), although direct payoffs may depend only on immediate spillovers. Under the \textit{network decaying dependence condition}, it is sufficient to consider only interactions within a relatively small distance.

Several estimation approaches have been proposed for CCP. These including the empirical frequency estimator \citep{hotz1993conditional}, sieve estimation \citep{bajari2010estimating}, flexible logit estimation \citep{arcidiacono2011conditional}, and logit Lasso estimation \citep{chernozhukov2022locally}. Here we leave aside the question of the most suitable procedure. Instead, we assume the existence of an estimator that satisfies the following statistical property:
\begin{assumption}{(\textbf{Sub-Gaussian CCP Estimator})}\label{ass:ccp}
There exists a positive constant $C_{\sigma}$ such that for every $t\geq 0$, we have
\begin{equation}
    \Pr\big(\vert \hat{\sigma}_i^{data}-\sigma_i^{data}\vert\geq t\big\vert S,\sigma^{data}\big)\leq 2\exp\big(-nt^2/C_{\sigma}^2\big),\quad \forall i=1,...,n.
\end{equation}
\end{assumption}
This assumption is satisfied by the empirical frequency estimator \citep{leung2015two,ridder2020two} and logit/probit estimation\footnote{By Hoeffding's inequality, the frequency estimator easily satisfies the Assumption \ref{ass:ccp}. The probit/logit estimator satisfies the Assumption \ref{ass:ccp} by the same argument as the second-stage MLE estimator in Section \ref{sec:theory}.}. The ultimate goal is to choose $\hat{\theta}$ that maximizes the likelihood function. As $\sigma^{data}$ is unobserved, we replace $\sigma^{data}$ in the likelihood function with $\hat{\sigma}^{data}$ and estimate $\boldsymbol{\theta}$ by maximizing the quasi-likelihood function $\hat{Q}_n(\hat{\sigma}^{data},\boldsymbol{\theta})$.
\begin{equation}
    \hat{Q}_n(\hat{\sigma}^{data},\boldsymbol{\theta})=\frac{1}{n}\sum_{i=1}^n \mathsf{Y}_i\log\big(F_{\varepsilon}(\hat{Z}_{i}^\intercal\boldsymbol{\theta})\big)+(1-\mathsf{Y}_i)\log\big(1-F_{\varepsilon}(\hat{Z}_{i}^\intercal\boldsymbol{\theta})\big),
\end{equation}
where
\begin{equation}\label{eq:Zdefine}
    \hat{Z}_i = \Big(1,\mathsf{D}_i, \mathsf{X}_i^\intercal, \mathsf{X}_i^\intercal \mathsf{D}_i, \frac{1}{\vert \mathcal{N}_i\vert}\sum_{j\neq i}m_{ij}\mathsf{G}_{ij}\mathsf{D}_j, \frac{1}{\vert \mathcal{N}_i\vert}\sum_{j\neq i}m_{ij}\mathsf{G}_{ij}\hat{\sigma}^{data}_j, \frac{1}{\vert \mathcal{N}_i\vert}\sum_{j\neq i}m_{ij}\mathsf{G}_{ij}\hat{\sigma}_j\mathsf{D}_i\mathsf{D}_j\Big)^\intercal.
\end{equation}
In addition, $Z_i$ denotes the vector of regressors that would be obtained if we replaced $\hat{\sigma}^{data}_i$ in $\hat{Z}_i$ with the true conditional choice probability $\sigma^{data}_i$.

\subsubsection{Computation of Equilibria}\label{sec:comput}
After obtaining estimated parameters, we compute the set of equilibrium social welfare for given covariates $X$, network structure $G$, and a treatment allocation $D$ in the target population. 
The lower bound and upper bound of this set are:
$\Pr(Y_i=1\vert X,D,G,\underline{\lambda};\hat{\theta})$ and $\Pr(Y_i=1\vert X,D,G,\overline{\lambda};\hat{\theta})$ for all unit $i\in\mathcal{N}$. We first rewrite these two conditional probabilities as:
\begin{equation}\label{eq:prup}
   \begin{split}
        \Pr(Y_i=1\vert X,D,G,\widebar{\lambda};\hat{\theta}) = \int \mathds{1}\Big\{\hat{\alpha}_i+\sum_{j\neq i}\hat{\beta}_{ij}\Pr(Y_{j}=1\vert X,D,G,\widebar{\lambda};\hat{\theta})\geq \varepsilon_i\Big\} dF_{\varepsilon},
   \end{split}
\end{equation}
\begin{equation}\label{eq:prlow}
   \begin{split}
        \Pr(Y_i=1\vert X,D,G,\underline{\lambda};\hat{\theta}) = \int \mathds{1}\Big\{\hat{\alpha}_i+\sum_{j\neq i}\hat{\beta}_{ij}\Pr(Y_{j}=1\vert X,D,G,\underline{\lambda};\hat{\theta})\geq \varepsilon_i\Big\} dF_{\varepsilon}.
   \end{split}
\end{equation}
From Theorem \ref{thm:equality}, $\Pr(Y_i=1\vert X,D,G,\lambda)$
achieves its upper (lower) bound when the equilibrium is $\widebar{\sigma}^*$ ($\underline{\sigma}^*$) for all $i\in\mathcal{N}$. Therefore, 
\begin{equation}
    \Pr(Y_i=1\vert X,D,G,\widebar{\lambda};\hat{\theta})=\hat{\bar{\sigma}}_i^*,\quad\forall i\in\mathcal{N},
\end{equation}
\begin{equation}
    \Pr(Y_i=1\vert X,D,G,\underline{\lambda};\hat{\theta})=\hat{\underline{\sigma}}_i^*,\quad\forall i\in\mathcal{N},
\end{equation}
where $\hat{\bar{\sigma}}_i^*$ and $\hat{\underline{\sigma}}_i^*$ represent the estimators for the maximal and minimal equilibria, respectively. Hence, we need only compute the least and greatest equilibrium CCP profile. \citet{topkis1979equilibrium} provides an easily implemented algorithm that is guaranteed to converge to the least and greatest equilibrium point of a supermodular game. Hold $X,D,G$ fixed.  To obtain the greatest fixed point $\widebar{\sigma}^*$, begin with $\widebar{\sigma}^0=\{1,...,1\}$. Define a sequence $\{\widebar{\sigma}^t\}_{t=0}^T: \widebar{\sigma}^{t+1}= \Omega(\widebar{\sigma}^{t})$. By construction, $\widebar{\sigma}^0\geq \Omega(\widebar{\sigma}^0)$. Since $\Omega(\cdot)$ is an increasing function, $\Omega(\widebar{\sigma}^0)\geq \Omega(\widebar{\sigma}^1)$. Therefore, $\widebar{\sigma}^0\geq \widebar{\sigma}^1\geq ...\geq \widebar{\sigma}^T$. Suppose the iteration convergences on the $M$-th step. Then $\widebar{\sigma}^M$ is the greatest equilibrium since, for all the other $\sigma^*$, $\widebar{\sigma}^M=\Omega^M(\sigma^0)\geq \Omega^M(\sigma^*)=\sigma^*$.  

With a symmetric argument, we can obtain the least equilibrium CCP profile. Here we begin with $\underline{\sigma}^0=\{0,...,0\}$. Define a sequence $\{\underline{\sigma}^t\}_{t=0}^T: \underline{\sigma}^{t+1}= \Omega(\underline{\sigma}^{t})$. By construction, we have $\underline{\sigma}^0\leq \Omega(\underline{\sigma}^0)$. Again,  $\Omega(\underline{\sigma}^0)\leq \Omega(\underline{\sigma}^1)$. Therefore, $\underline{\sigma}^0\leq \underline{\sigma}^1\leq ...\leq \underline{\sigma}^T$. Suppose the iteration convergences on the $M$-th step. Then $\underline{\sigma}^M$ is the least equilibrium since, for all the other $\sigma^*$, $\underline{\sigma}^M=\Omega^M(\sigma^0)\leq \Omega^M(\sigma^*)=\sigma^*$.

\subsubsection{Greedy Algorithm}
The previous sections describe the estimation of parameters and computation of the least equilibrium CCP profile. In this section, we propose an algorithm to allocate treatment in a manner that maximizes the worst-case social welfare given the estimated parameters. Define the empirical welfare function to be the welfare function with estimated structural parameters:
\begin{equation}
    W_{X,G,\lambda}^n(D) = W_{X,G,\lambda}(D;\hat{\boldsymbol{\theta}}).
\end{equation}
We seek to maximize the empirical welfare evaluated at the minimal equilibrium:
\begin{equation}\label{eq:wn}
    \Tilde{D}=\arg\max_{D\in\mathcal{D}_{\kappa}} W_{X,G,\underline{\lambda}}^n(D)=\arg\max_{D\in\mathcal{D}_{\kappa}} W_{X,G,\underline{\sigma}^*}^n(D).
\end{equation}
As shown in Eq.\ref{eq:prlow}, $\underline{\sigma}^{*}$ is a solution to a non-linear simultaneous equation system. The conditional choice probability $\underline{\sigma}^*_i$ of unit $i$ depends non-linearly on the conditional choice probability $\underline{\sigma}^*_j$ and treatment assignment of their neighbors $\{D_j:j\in\mathcal{N}_i\}$. Therefore, when a treatment is assigned
to one unit, it not only influences their behavior but also leads to spillover effects through the network. Hence, Eq.\ref{eq:wn} is a complicated combinatorial optimization problem. We propose a greedy algorithm\footnote{A greedy algorithm is a heuristic approach used in optimization problems; it makes a series of choices that appear to offer the most immediate benefit, building a solution step by step to achieve locally optimal results.} (Algorithm \ref{Aletar}) to solve this problem heuristically.

Intuitively, our greedy algorithm assigns treatment to the unit that contributes most to
the welfare objective, and repeats this until a capacity constraint binds. Specifically, in each
round, Algorithm \ref{Aletar} computes the marginal gain of receiving treatment for each untreated unit, evaluated at the least equilibrium CCP profile. We refer to the unit whose treatment induces the largest increase in the worst-case welfare as the most influential unit for that round. 
\vspace{0.3cm}

 \begin{algorithm}[H]\label{Aletar}
 \setstretch{1}
            \SetAlgoLined
\textbf{Input}: Weighted adjacency matrix $G$, covariates $X$, parameters $\hat{\boldsymbol{\theta}}$, capacity constraint $\kappa$\\
\textbf{Output}: Treatment allocation regime $\hat{D}_{G}$\\
\textbf{Initialization}: $D\gets 0_{N\times 1}$\\
\eIf{$\sum_{i=1}^N D_i< \kappa$}{
\For {$i$ with $D_i=0$}{
$D_i\gets 1$, denote new treatment vector as $D'$\\
$\underline{\sigma}^*(D') \gets$ \text{Computing the minimal equilibrium CCP profile given} $D'$\\
$\Delta_i \gets W_{X,G,\underline{\sigma}^*}^n(D')-W_{X,G,\underline{\sigma}^*}^n(D)$\\
}
$i^*\gets\argmax_{i}\Delta_i$\\
$D_{i^*} \gets 1$\\
}{
$\hat{D}_{G}\gets D$\\
}
\caption{Maximizing Over Treatment Allocation Rules}
 \label{algo}
\end{algorithm}

\section{Theoretical Analysis}\label{sec:theory}
In this section, we analyze the theoretical properties of our proposed treatment allocation
method. To simplify notation, denote the welfare of the targeted population $W_{X,G,\underline{\sigma}^*}(D)$ as $W(D)$, and empirical welfare $W_{X,G,\underline{\sigma}^*}^n(D)$ as $W_n(D)$. In addition, let $W(D^*)$ denote the welfare of the target population at its global optimizer $D^*$, and $W(\hat{D}_{G})$ denote the welfare of the target population welfare under the treatment allocation rule obtained by our proposed method. Let the \textit{regret} of the proposed treatment allocation policy be:
\begin{equation}
   R(\hat{D}_{G}) \coloneqq  \max_{D\in\mathcal{D}}W(D)-W(\hat{D}_{G}).
\end{equation}
We evaluate the performance of our proposed treatment allocation method using \textit{expected regret}, which is defined as: 
\begin{equation}
    \mathbb{E}_{\varepsilon^{n}}\left[R(\hat{D}_{G})\vert S,\sigma^{data}\right]\coloneqq  \max_{D\in\mathcal{D}}W(D)-\mathbb{E}_{\varepsilon^{n}}\left[W(\hat{D}_{G})\vert S,\sigma^{data}\right],
\end{equation}
where the expectation $\mathbb{E}_{\varepsilon^{n}}[\cdot]$ is taken with respect to the uncertainty in the training data\footnote{In the incomplete information setting, the unobserved variables represent units' private information. If the units in the training data are the target population, this may coincide for the training data and the target population. In this case, the expectation in the regret is taken with respect to the uncertainty in the target population. All discussions in this section are otherwise unchanged.} conditional on the observed covariates $\mathsf{X}$, treatment allocation $\mathsf{D}$, network $\mathsf{G}$, and equilibrium $\sigma^{data}$. This is because the randomness in our proposed method primarily arises from utilizing the estimated parameters, which involve only the training data. 
This criterion captures the average welfare loss when implementing estimated policy $\hat{D}_{G}$ relative to the maximum feasible population welfare. Recall that $W_n(\Tilde{D})$ is the maximum for empirical welfare. We decompose regret into (eight terms):
\begin{equation}
    \begin{split}
        W(D^*)-W(\hat{D}_{G}) &= W(D^*)-W_n(D^*)+W_n(D^*)-W_n(\Tilde{D})\\&\quad+W_n(\Tilde{D})-W_n(\hat{D}_{G})+W_n(\hat{D}_{G})-W(\hat{D}_{G}).
   \end{split}
\end{equation}
The first term measures the deviation arising from the use of the empirical social welfare function. This term is bounded by:
\begin{equation}
    W(D^*)-W_n(D^*)\leq \sup_{D\in\mathcal{D}} \vert W_n(D)-W(D) \vert.
\end{equation}
The second term measures the performance of the population welfare maximizer in the empirical social welfare function. This term is bounded by:
\begin{equation}
    W_n(D^*)-W_n(\Tilde{D}) \leq 0.
\end{equation}
The third term measures the loss caused by using a greedy algorithm to solve the optimization problem. This is discussed in Section \ref{sec:greedyregret}.
The final term also measures regret introduced by using the empirical social welfare function. This term is bounded by:
\begin{equation}
    W_n(\hat{D}_{G})-W(\hat{D}_{G})\leq \sup_{D\in\mathcal{D}} \vert W_n(D)-W(D) \vert.
\end{equation}
Combining all the above results, we conclude that expected regret is bounded by:
\begin{equation}\label{eq:regdecom}
\begin{split}
     \mathbb{E}_{\varepsilon^{n}}\left[R(\hat{D}_{G})\vert S,\sigma^{data}\right]&\leq  2\mathbb{E}_{\varepsilon^{n}}\Big[\max_{D\in\mathcal{D}}\lvert W_n(D)-W(D)\rvert\Big \vert S,\sigma^{data}\Big]+\mathbb{E}_{\varepsilon^{n}}\Big[ W_n(\Tilde{D})-W_n(\hat{D}_{G})\Big \vert S,\sigma^{data}\Big].
\end{split}
\end{equation}
In the remainder of this section, we provide a non-asymptotic upper bound for expected regret.

\subsection{Sampling Uncertainty}
For illustrative purposes, this section focuses on engagement welfare. We begin by addressing the regret resulting from the use of estimates in place of true parameters in the payoff function. This represents the sampling uncertainty of the proposed method. We impose the following assumption on the parameter space:
\begin{assumption}{(\textbf{Compactness})}\label{ass:paraspace}
    The parameter $\boldsymbol{\theta}$ lies in a compact set $\Theta\subseteq \mathbb{R}^{d_{\boldsymbol{\theta}}}.$
\end{assumption}
Assumption \ref{ass:paraspace} is standard. We now proceed to characterize the sampling uncertainty associated with using the empirical welfare function.
\begin{lemma}\label{lemma:unctheta}
Under Assumptions \ref{ass:epsilon} and \ref{ass:paraspace}, 
    \begin{equation}
        \mathbb{E}_{\varepsilon^{n}}\Big[\max_{D\in\mathcal{D}}\lvert W_n(D)-W(D)\rvert\Big \vert S,\sigma^{data}\Big]\leq C_1\mathbb{E}_{\varepsilon^{n}}\Big[ \Vert\hat{\theta}-\theta_0\Vert_1\Big \vert S,\sigma^{data}\Big],
    \end{equation}
    where $C_1$ is a constant that depends on the distribution $F_{\varepsilon^{n}}$, and the supports of the parameter space, the covariates space $\mathcal{X}$, the network space $\mathcal{G}$ and the treatment allocation space $\mathcal{D}$.
    %where $C_1 = \widebar{F}'+\widebar{F}'\widebar{N}(\vert\delta\vert+\vert\sigma \vert+\vert\beta \vert)$ and $C_2=\widebar{F}'\widebar{N}+\widebar{F}'^2\widebar{N}^2(\vert\delta\vert+\vert\sigma \vert+\vert\beta \vert)$.
\end{lemma}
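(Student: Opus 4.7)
The plan is to reduce the welfare discrepancy to a sum of differences in the minimal equilibrium conditional choice probabilities, and then establish that those CCPs are Lipschitz in $\theta$ by exploiting their fixed-point characterization together with the bounded-density assumption. Under engagement welfare, $W(D)=\frac{1}{N}\sum_{i=1}^N\underline{\sigma}_i^*(X,D,G;\theta_0)$ and $W_n(D)$ is the same quantity evaluated at $\hat{\theta}$, so the triangle inequality yields
\begin{equation*}
\max_{D\in\mathcal{D}}\lvert W_n(D)-W(D)\rvert\leq \max_{D\in\mathcal{D}}\frac{1}{N}\sum_{i=1}^N\bigl\lvert\underline{\sigma}_i^*(X,D,G;\hat{\theta})-\underline{\sigma}_i^*(X,D,G;\theta_0)\bigr\rvert.
\end{equation*}
It therefore suffices to obtain a uniform (in $i$ and $D$) Lipschitz bound $\lvert\Delta_i(D)\rvert\leq C_1\lVert\hat{\theta}-\theta_0\rVert_1$ on $\Delta_i(D):=\underline{\sigma}_i^*(\hat{\theta})-\underline{\sigma}_i^*(\theta_0)$, after which taking the conditional expectation $\mathbb{E}_{\varepsilon^n}[\,\cdot\mid S,\sigma^{data}\,]$ is immediate.

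The second step uses the fixed-point equation $\underline{\sigma}_i^*(\theta)=F_\varepsilon\bigl(\alpha_i(\theta)+\sum_{j\neq i}\beta_{ij}(\theta)\,\underline{\sigma}_j^*(\theta)\bigr)$ together with the uniform density bound $f_\varepsilon\leq\tau$ from Assumption \ref{ass:epsilon}. Adding and subtracting $\beta_{ij}(\theta_0)\,\underline{\sigma}_j^*(\hat{\theta})$ inside the argument of $F_\varepsilon$ and using $\underline{\sigma}_j^*\in[0,1]$ yields the recursive inequality
\begin{equation*}
\lvert\Delta_i(D)\rvert\leq \tau\lvert\alpha_i(\hat{\theta})-\alpha_i(\theta_0)\rvert+\tau\sum_{j\neq i}\lvert\beta_{ij}(\hat{\theta})-\beta_{ij}(\theta_0)\rvert+\tau\sum_{j\neq i}\lvert\beta_{ij}(\theta_0)\rvert\cdot\lvert\Delta_j(D)\rvert.
\end{equation*}
Because $\alpha_i$ and $\beta_{ij}$ are linear in $\theta$ under the specification of Section \ref{sec:identi}, and because $X$ and $m_{ij}$ have bounded support while the spillover coefficients carry the normalization $1/\lvert\mathcal{N}_i\rvert$, there exist finite constants $c_\alpha$, $c_\beta$, and $\widebar{\beta}$, depending only on the supports of $\mathcal{X}$, $\mathcal{D}$, $\mathcal{G}$ and on $\Theta$ (Assumption \ref{ass:paraspace}), such that $\lvert\alpha_i(\hat{\theta})-\alpha_i(\theta_0)\rvert\leq c_\alpha\lVert\hat{\theta}-\theta_0\rVert_1$, $\sum_{j\neq i}\lvert\beta_{ij}(\hat{\theta})-\beta_{ij}(\theta_0)\rvert\leq c_\beta\lVert\hat{\theta}-\theta_0\rVert_1$, and $\sum_{j\neq i}\lvert\beta_{ij}(\theta_0)\rvert\leq \widebar{\beta}$ uniformly in $i$ and $D$.

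Taking the $\ell^\infty$-maximum over $i$ on both sides of the recursion then gives $\lVert\Delta(D)\rVert_\infty\leq \tau(c_\alpha+c_\beta)\lVert\hat{\theta}-\theta_0\rVert_1+\tau\widebar{\beta}\lVert\Delta(D)\rVert_\infty$. Provided $\tau\widebar{\beta}<1$, which is a mild regularity condition implied by the compactness of $\Theta$ together with the density bound, rearranging yields $\lVert\Delta(D)\rVert_\infty\leq C_1\lVert\hat{\theta}-\theta_0\rVert_1$ with $C_1=\tau(c_\alpha+c_\beta)/(1-\tau\widebar{\beta})$, uniformly in $D$. Since $\frac{1}{N}\sum_i\lvert\Delta_i(D)\rvert\leq\lVert\Delta(D)\rVert_\infty$, combining with the first display and taking conditional expectation delivers the lemma.

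The main obstacle is handling the implicit dependence of $\underline{\sigma}^*$ on $\theta$ through a fixed-point system whose Jacobian is not automatically a contraction. The key observation is that the normalization $1/\lvert\mathcal{N}_i\rvert$ in the spillover coefficients prevents the row-sum $\sum_{j\neq i}\lvert\beta_{ij}\rvert$ from scaling with the network's maximum degree $\widebar{N}$, so a uniform bound $\widebar{\beta}$ exists regardless of network density. Compactness of $\Theta$ and the uniform bound $\tau$ on $f_\varepsilon$ then secure the contraction-type condition $\tau\widebar{\beta}<1$ needed to invert $I-\tau M$ (where $M_{ij}=\lvert\beta_{ij}(\theta_0)\rvert$) and close the recursion.
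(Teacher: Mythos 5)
Your reduction to $\frac{1}{N}\sum_i\lvert\underline{\sigma}_i^*(\hat\theta)-\underline{\sigma}_i^*(\theta_0)\rvert$ matches the paper, but from there you take a genuinely different route: the paper applies the mean value theorem to $F_\varepsilon(r(i,\theta))$ and then controls $\nabla_\theta r$ via the implicit function theorem, bounding $\Vert(\nabla_r I)^{-1}\Vert_\infty$ by a uniform constant $\zeta$ obtained from Berge's maximum theorem and the extreme value theorem over the compact domain. You instead perturb the fixed-point system directly, derive the recursive inequality $\lvert\Delta_i\rvert\leq\tau(c_\alpha+c_\beta)\Vert\hat\theta-\theta_0\Vert_1+\tau\sum_{j\neq i}\lvert\beta_{ij}(\theta_0)\rvert\,\lvert\Delta_j\rvert$, and close it with a contraction argument. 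Your recursion itself is correct, and when it applies it buys an explicit closed-form constant $C_1=\tau(c_\alpha+c_\beta)/(1-\tau\widebar{\beta})$, whereas the paper's $\zeta$ is only an existence statement.

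The genuine gap is your claim that $\tau\widebar{\beta}<1$ is ``implied by the compactness of $\Theta$ together with the density bound.'' It is not. Compactness of $\Theta$ and boundedness of $m_{ij}$ give only $\widebar{\beta}\leq(\lvert\theta_5\rvert+\lvert\theta_6\rvert)\widebar{m}<\infty$; nothing in Assumptions \ref{ass:epsilon} or \ref{ass:paraspace} forces $\tau(\lvert\theta_5\rvert+\lvert\theta_6\rvert)\widebar{m}<1$. If $\tau\widebar{\beta}\geq 1$ your final rearrangement fails and the argument collapses, so the contraction condition is a substantive additional hypothesis, not a consequence of the stated ones. To be fair, the paper's own proof also rests on an unproven regularity condition --- it explicitly \emph{assumes} that $\nabla_r I(r,\tilde\theta)$ is non-singular --- but that assumption is strictly weaker than yours: your condition makes $I-\tau M$ diagonally dominant and hence invertible with $\Vert(I-\tau M)^{-1}\Vert_\infty\leq(1-\tau\widebar{\beta})^{-1}$, while non-singularity can hold without any contraction. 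You should either state $\tau\widebar{\beta}<1$ as an explicit assumption (acknowledging it restricts the strength of the strategic complementarity relative to the noise density), or replace the Neumann-series step with an invertibility-plus-compactness argument in the spirit of the paper's Lemma \ref{aplemma:contir}.
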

A proof of Lemma \ref{lemma:unctheta} is provided in Appendix \ref{app:unctheta}. Lemma \ref{lemma:unctheta} enables us to characterize the regret of maximizing the empirical welfare through the sampling uncertainty of the structural parameter estimators (i.e., $\mathbb{E}_{\varepsilon^{n}}\big[\Vert \hat{\theta}-\theta\Vert_1 \big\vert S,\sigma^{data}\big]$). As there is no closed-form expression for MLE $\hat{\theta}$ in our case, we study the sampling uncertainty of $\hat{\theta}$ through the sampling uncertainty of its associated empirical process 
\begin{equation}\label{eq:epsu}
    \mathbb{E}_{\varepsilon^{n}}\big[\vert\mathbb{G}_n(\hat{\theta})-\mathbb{G}_n(\theta_0)\vert \big\vert S,\sigma^{data}\big],
\end{equation}
where the empirical process $\mathbb{G}_n(\theta)$ is defined as
\begin{equation}
    \mathbb{G}_n(\theta)\coloneqq\hat{\mathbb{M}}(\theta)-M(\theta),
\end{equation}
\begin{equation}
    \hat{\mathbb{M}}(\theta)=\frac{1}{n}\sum_{i=1}^n Y_i\log\big(F_{\varepsilon}(\hat{Z}_{i}^\intercal\boldsymbol{\theta})\big)+(1-Y_i)\log\big(1-F_{\varepsilon}(\hat{Z}_{i}^\intercal\boldsymbol{\theta})\big),
\end{equation}
and
\begin{equation}
    M(\theta)=\frac{1}{n}\sum_{i=1}^n\mathbb{E}_{\varepsilon^{n}} \left[Y_i\log\big(F_{\varepsilon}(Z_{i}^\intercal\boldsymbol{\theta})\big)+(1-Y_i)\log\big(1-F_{\varepsilon}(Z_{i}^\intercal\boldsymbol{\theta})\big)\Big\vert S,\sigma^{data}\right].
\end{equation}
Recall that $\hat{Z}_i$, as defined in Eq.\ref{eq:Zdefine}, serves as the regressor in our likelihood function. Since we are using a quasi-likelihood ML estimator, the criterion function $\hat{\mathbb{M}}(\theta)$ is evaluated at the estimated equilibrium in the data $\hat{\sigma}^{data}$. As a result, Eq.\ref{eq:epsu} contains two sources of sampling uncertainty: uncertainty from $\hat{\theta}$, and uncertainty from $\hat{\sigma}^{data}$.
The difference between $\mathbb{G}_n(\hat{\theta})$ and $\mathbb{G}_n(\theta_0)$, is given by:
\begin{equation}\label{eq:empricalprocess}
    \begin{split}
        \mathbb{G}_n(\hat{\theta})-\mathbb{G}_n(\theta_0)=\hat{\mathbb{M}}(\hat{\theta})-\hat{\mathbb{M}}(\theta_0)+M(\theta_0)-M(\hat{\theta}).
    \end{split}
\end{equation}
To study the relationship between the estimator and its associated empirical process, we start with a second-order Taylor expansion with Lagrange remainder for both terms in Eq.\ref{eq:empricalprocess}:
\begin{equation}\label{eq:mvt}
    \begin{split}
        \hat{\mathbb{M}}(\theta_0)-\hat{\mathbb{M}}(\hat{\theta})
        =\frac{1}{2}(\hat{\theta}-\theta_0)^{\intercal}\nabla_{\theta}^2\hat{\mathbb{M}}(\acute{\theta})(\hat{\theta}-\theta_0),
    \end{split}
\end{equation}
\begin{equation}\label{eq:mvvt}
    M(\hat{\theta})-M(\theta_0)=\frac{1}{2}(\hat{\theta}-\theta_0)^{\intercal}\nabla_{\theta}^2M(\grave{\theta})(\hat{\theta}-\theta_0),
\end{equation}
for some $\acute{\theta}\in\mathbb{R}^{d_{\theta}}$ and $\grave{\theta}\in\mathbb{R}^{d_{\theta}}
$ on the segment from $\theta_0$ to $\hat{\theta}$. Let $\eta_{max}^0$ denote the largest eigenvalue (in magnitude) of $\nabla_{\theta}^2\hat{\mathbb{M}}(\acute{\theta})$, and $\eta_{max}^1$ denote the largest eigenvalue of $\nabla_{\theta}^2M(\grave{\theta})$. By Assumption \ref{ass:epsilon} (ii), the Hessian matrix is symmetric. Therefore, by the \textit{Courant-Fischer Theorem}\footnote{The largest eigenvalue $\eta_{max}$ of a $C\times C$ symmetric matrix $M$ is given by the maximum Rayleigh quotient (i.e.,
$\eta_{\max}=\max_{A\in\mathbb{R}^{C}\setminus\{0\}}\frac{A^{\intercal}MA}{A^{\intercal}A}$).}, we can characterize the relationship between the parameter sampling uncertainty and the deviation of the criterion function through: 
\begin{equation}\label{eq:mmthe}
    \hat{\mathbb{M}}(\theta_0)-\hat{\mathbb{M}}(\hat{\theta})\leq \frac{1}{2}\eta_{max}^0\Vert\hat{\theta}-\theta_0 \Vert_2^2.
\end{equation}
\begin{equation}\label{eq:mthe}
    M(\hat{\theta})-M(\theta_0)\leq \frac{1}{2}\eta_{max}^1\Vert\hat{\theta}-\theta_0 \Vert_2^2.
\end{equation}
Combining Eq.\ref{eq:empricalprocess} with Eq.\ref{eq:mmthe} and Eq.\ref{eq:mthe} yields
\begin{equation}\label{eq:firststage}
    -(\mathbb{G}_n(\hat{\theta})-\mathbb{G}_n(\theta_0))\leq  \frac{1}{2}(\eta_{max}^0+\eta_{max}^1)\Vert\hat{\theta}-\theta_0 \Vert_2^2.
\end{equation}
Applying the mean value Theorem to the left-hand side of Eq.\ref{eq:firststage}, we have
\begin{equation}\label{eq:mvtg}
    \mathbb{G}_n(\hat{\theta})-\mathbb{G}_n(\theta_0)=(\hat{\theta}-\theta_0)^{\intercal}\nabla_{\theta}\mathbb{G}_n(\Tilde{\theta}),
\end{equation}
for some $\Tilde{\theta}\in\mathbb{R}^{d_{\theta}}$ on the segment from $\theta_0$ to $\hat{\theta}$. Since $\hat{\theta}$ is the maximizer of $\hat{\mathbb{M}}(\cdot)$ and $\theta_0$ is the maximizer of $M(\cdot)$, Eq.\ref{eq:mvtg} must be positive given the definition of $\mathbb{G}_n$. By the Cauchy–Schwarz inequality,
\begin{equation}\label{eq:leftineq}
    \mathbb{G}_n(\hat{\theta})-\mathbb{G}_n(\theta_0)=(\hat{\theta}-\theta_0)^{\intercal}\nabla_{\theta}\mathbb{G}_n(\Tilde{\theta}) = \vert (\hat{\theta}-\theta_0)^{\intercal}\nabla_{\theta}\mathbb{G}_n(\Tilde{\theta})\vert \leq \Vert\hat{\theta}-\theta_0 \Vert_2 \Vert\nabla_{\theta}\mathbb{G}_n(\Tilde{\theta} )\Vert_2.
\end{equation}
Combining Eq.\ref{eq:firststage} and Eq.\ref{eq:leftineq},
\begin{equation}
    -\Vert\hat{\theta}-\theta_0 \Vert_2 \Vert\nabla_{\theta}\mathbb{G}_n(\Tilde{\theta} )\Vert_2\leq \frac{1}{2}(\eta_{max}^0+\eta_{max}^1)\Vert\hat{\theta}-\theta_0 \Vert_2^2.
\end{equation}
Assuming $\hat{\theta}$, an MLE estimator, lies in the interior of the parameter space, the Hessian matrix $\nabla_{\theta}^2\hat{\mathbb{M}}(\acute{\theta})$ is negative definite. Therefore, $\eta_{max}^0$ is negative. In addition, as $\theta_0$ is the maximizer of $M(\theta)$, $\eta_{max}^1$ also must be negative. Hence,
\begin{equation}\label{eq:naivbound}
    \begin{split}
        \Vert\hat{\theta}-\theta_0 \Vert_1\leq d_{\theta}\Vert\hat{\theta}-\theta_0 \Vert_2 &\leq -\frac{2d_{\theta}}{\eta_{max}^0+\eta_{max}^1}\Vert\nabla_{\theta}\mathbb{G}_n(\Tilde{\theta} )\Vert_2\\
        &\leq -\frac{2d_{\theta}}{\eta_{max}^0+\eta_{max}^1}\Vert\nabla_{\theta}\mathbb{G}_n(\Tilde{\theta} )\Vert_1.
    \end{split}
\end{equation}
If $\eta^0_{max}$ and $\eta^1_{max}$ did not depend on the sample size $n$ (i.e., if they were constant), we could study the finite sample properties of $\Vert\hat{\theta}-\theta_0 \Vert_1$ through the finite sample properties of the empirical process $\nabla_{\theta}\mathbb{G}_n(\Tilde{\theta})$. However, the Hessian matrix is a function that depends on the sample, so $\eta_{\max}^0$ and $\eta_{\max}^1$ also depend on $n$. This prevents us from characterizing the finite sample properties of our estimator. 

To overcome this difficulty, we establish a uniform constant upper bound for the largest eigenvalue of the Hessian matrices, which guarantees their strict negativity. We denote the uniform constant upper bound for the largest eigenvalue as maximal largest eigenvlaue and as smallest To obtain a  over all the possible samples, we impose the following three assumptions:
\begin{assumption}{(\textbf{Shape})}\label{ass:smooth}
    $F_{\varepsilon}(x)$ satisfies the condition:
 $\frac{F_{\varepsilon}'(x)^2}{F_{\varepsilon}(x)-1}< F_{\varepsilon}''(x) < \frac{F_{\varepsilon}'(x)^2}{F_{\varepsilon}(x)}$ for all $x\in\mathbb{R}$.
\end{assumption}
\begin{assumption}{(\textbf{Full Rank})}\label{ass:fullrank}
    Let $Z$ denote $[Z_1,...,Z_n]$ and $\hat{Z}$ denote $[\hat{Z}_1,...,\hat{Z}_n]$. We assume that the matrices \( Z \) and \( \hat{Z} \) each have full row rank.
\end{assumption}
\begin{assumption}{(\textbf{Non-Zero Treatment})}\label{ass:treatment} There exists a constant $C_d>0$ such that $\frac{1}{n}\sum_{i=1}^n \mathsf{D}_i\geq C_d, \forall n\in\mathbb{Z}_{+}.$
\end{assumption}
Assumption \ref{ass:smooth} imposes a regularity condition on the shape of the CDF function. The following are two examples of common distributions that satisfy this assumption:

    \begin{itemize}
        \item \textbf{Logistic Distribution}: The cumulative distribution function of the Logistic distribution is: $F_{\varepsilon}(x)=\frac{1}{1+\exp(-x)}$. The corresponding probability density function is: $F_{\varepsilon}'(x)=\frac{\exp(-x)}{(1+\exp(-x))^2}.$ Finally, the second derivative of the CDF is: $F_{\varepsilon}''(x)=\frac{\exp(-3x)-\exp(-x)}{(1+\exp(-x))^4}$. Therefore:
        \begin{equation}
            \frac{F_{\varepsilon}'(x)^2}{F_{\varepsilon}(x)-1}=\frac{-\exp(-x)-\exp(-2x)}{(1+\exp(-x))^4}<F_{\varepsilon}''(x)<\frac{\exp(-3x)+\exp(-2x)}{(1+\exp(-x))^4}=\frac{F_{\varepsilon}'(x)^2}{F_{\varepsilon}(x)}.
        \end{equation}
        \item \textbf{Gaussian Distribution}: Denote the Gaussian cumulative distribution function by $F_{\varepsilon}(x) = \Phi(x)$, its first derivative (the probability density function) by $F_{\varepsilon}'(x) = \phi(x)$, and its second derivative by $F_{\varepsilon}''(x) = -x \phi(x)$. We aim to demonstrate that: $F_{\varepsilon}''(x)<\frac{F_{\varepsilon}'(x)^2}{F_{\varepsilon}(x)}$. Substituting the known forms of $F_{\varepsilon}'(x)$ and $F_{\varepsilon}(x)$, this inequality simplifies to: $\frac{\phi(x)}{\Phi(x)}>-x$. For $x \geq 0$, this inequality is always satisfied. When $x < 0$, we require that: $\frac{\phi(-x)}{\Phi(-x)}>x$ for all $x>0$. As $\frac{\phi(x)}{\Phi(x)}$ is the inverse Mills' ratio, 
        \begin{equation}
            \frac{\phi(-x)}{\Phi(-x)}=\frac{\phi(x)}{1-\Phi(x)}=\mathbb{E}[X\vert X>x]>x, \forall x>0.
        \end{equation}
        By employing a symmetric argument, we find that: $\frac{F_{\varepsilon}'(x)^2}{F_{\varepsilon}(x)-1}< F_{\varepsilon}''(x)$.
    \end{itemize}
Assumption \ref{ass:treatment} guarantees that the average number of treated units in the training data is non-zero for any network size.
Building on Eq.\ref{eq:naivbound}, the following Lemma uniformly characterizes the relationship between the sampling uncertainty of $\hat{\theta}$ and the sampling uncertainty inherent in the empirical process $\mathbb{G}_n(\cdot)$. Formally:
\begin{lemma}{(\textbf{Sampling Uncertainty})}\label{pro.samplingunc}
Under Assumption \ref{ass:epsilon}, \ref{ass:smooth}, \ref{ass:fullrank}, and \ref{ass:treatment}, we have
    \begin{equation}
        \mathbb{E}_{\varepsilon^{n}}\big[\Vert\hat{\theta}-\theta_0 \Vert_1\big\vert S,\sigma^{data}\big]\leq C_2\mathbb{E}_{\varepsilon^{n}}\big[\Vert\nabla_{\theta}\mathbb{G}_n(\Tilde{\theta} )\Vert_1\big\vert S,\sigma^{data}\big],
    \end{equation}
   where $C_2$ is a constant that depends on the distribution $F_{\varepsilon^{n}}$, and the dimension and supports of the parameter space, the covariates space $\mathcal{X}$, the network space $\mathcal{G}$ and the treatment allocation space $\mathcal{D}$.
\end{lemma}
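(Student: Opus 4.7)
The argument in the paper up to Eq.\ref{eq:naivbound} already delivers the pointwise inequality
\begin{equation*}
\|\hat{\theta}-\theta_0\|_1 \leq -\frac{2d_\theta}{\eta_{max}^0+\eta_{max}^1}\|\nabla_\theta \mathbb{G}_n(\Tilde{\theta})\|_1.
\end{equation*}
Taking the conditional expectation of both sides, the lemma reduces to producing a constant $C>0$, independent of the sample and of $n$, with $-(\eta_{max}^0+\eta_{max}^1)\geq C$; the claim then holds with $C_2 = 2d_\theta/C$. I will build such a bound in two layers: a uniform strict negativity of the scalar curvature of the log-likelihood, and a uniform positive lower bound on the smallest eigenvalue of the relevant design Gram matrix.

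For the first layer, a direct second-derivative computation of $\hat{\mathbb{M}}$ gives
\begin{equation*}
\nabla_\theta^2 \hat{\mathbb{M}}(\theta)=\frac{1}{n}\sum_{i=1}^n\bigl[\mathsf{Y}_i\phi_1(\hat{Z}_i^\intercal\theta)+(1-\mathsf{Y}_i)\phi_2(\hat{Z}_i^\intercal\theta)\bigr]\hat{Z}_i\hat{Z}_i^\intercal,
\end{equation*}
with $\phi_1(v)=(F_\varepsilon''(v)F_\varepsilon(v)-F_\varepsilon'(v)^2)/F_\varepsilon(v)^2$ and $\phi_2(v)=-(F_\varepsilon''(v)(1-F_\varepsilon(v))+F_\varepsilon'(v)^2)/(1-F_\varepsilon(v))^2$. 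Assumption \ref{ass:smooth} is calibrated precisely so that $\phi_1(v)<0$ and $\phi_2(v)<0$ for every $v\in\mathbb{R}$. Compactness of $\Theta$ (Assumption \ref{ass:paraspace}) together with boundedness of $\mathcal{X}$, $\mathsf{D}\in\{0,1\}^n$, $\hat{\sigma}^{data}\in[0,1]$, and the averaging factor $1/|\mathcal{N}_i|$ appearing in $\hat{Z}_i$, confines $\hat{Z}_i^\intercal\theta$ to a fixed compact interval $I$ that does not depend on $n$. Continuity of $\phi_1,\phi_2$ on $I$ then yields a uniform constant $c>0$ with $\phi_1,\phi_2\leq -c$ on $I$, and the same argument handles $\nabla_\theta^2 M(\grave{\theta})$ because $\mathbb{E}_{\varepsilon^n}[\mathsf{Y}_i\vert S,\sigma^{data}]\in[0,1]$ preserves the convex combination.

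For the second layer, combining the scalar bound with the Rayleigh-quotient characterisation of the largest eigenvalue yields $\eta_{max}^0 \leq -c\,\lambda_{\min}\bigl(\tfrac{1}{n}\hat{Z}^\intercal\hat{Z}\bigr)$ and analogously $\eta_{max}^1 \leq -c\,\lambda_{\min}\bigl(\tfrac{1}{n}Z^\intercal Z\bigr)$. Assumption \ref{ass:fullrank} ensures both Gram matrices are invertible at every $n$, while Assumption \ref{ass:treatment} rules out the degeneracy in which the columns of $\hat{Z}$ and $Z$ corresponding to $\mathsf{D}_i$, $\mathsf{X}_i\mathsf{D}_i$ and $\mathsf{D}_i\mathsf{D}_j$ vanish as $n$ grows. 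Together with the deterministic boundedness of $\hat{Z}_i$, these deliver a constant $c'>0$, depending only on the supports of $\mathcal{X},\mathcal{D},\mathcal{G}$ and on $C_d$, with $\lambda_{\min}\bigl(\tfrac{1}{n}\hat{Z}^\intercal\hat{Z}\bigr),\lambda_{\min}\bigl(\tfrac{1}{n}Z^\intercal Z\bigr)\geq c'$. Setting $C=2cc'$ and conditionally integrating Eq.\ref{eq:naivbound} closes the proof with $C_2=d_\theta/(cc')$. The hardest step is this second layer: $\hat{Z}$ is random through $\hat{\sigma}^{data}$ under the conditioning on $(S,\sigma^{data})$, so converting the existential full-rank statement of Assumption \ref{ass:fullrank} into a genuinely deterministic lower bound on the minimum eigenvalue may require either reading Assumption \ref{ass:fullrank} as a quantitative condition (minimum eigenvalue bounded below by a constant) or invoking Assumption \ref{ass:ccp} in a concentration step that compares $\hat{Z}^\intercal\hat{Z}$ with its deterministic counterpart $Z^\intercal Z$; everything else is algebraic manipulation of the inequality already in hand.
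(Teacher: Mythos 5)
Your proposal is correct and follows essentially the same route as the paper's own proof: uniform strict negativity of the scalar Hessian weights via Assumption \ref{ass:smooth} together with compactness of $\Theta$ and the regressor supports, a uniform positive lower bound on the smallest eigenvalues of $\tfrac{1}{n}\sum_i Z_iZ_i^\intercal$ and $\tfrac{1}{n}\sum_i \hat{Z}_i\hat{Z}_i^\intercal$ via Assumptions \ref{ass:fullrank} and \ref{ass:treatment}, and substitution into Eq.\ref{eq:naivbound}. The uniformity issue you flag at the end is handled in the paper by the same device you anticipate (Lemma \ref{lemma:cz} via Berge's Maximum Theorem plus the Extreme Value Theorem over the compact set of regressor configurations), so there is no substantive difference between the two arguments.
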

Proof of Lemma \ref{pro.samplingunc} is provided in Appendix \ref{app:prooflemmasmpling}, where we establish a uniform upper bound for the largest eigenvalues (i.e., $\eta_{max}^0$ and $\eta_{max}^1$), termed the maximal largest eigenvalue. We show that this value is strictly negative and is encapsulated within the constant $C_2$. As $C_2$ in Lemma \ref{pro.samplingunc} is a constant, characterising the concentration of the empirical process $\nabla_{\theta}\mathbb{G}_n(\Tilde{\theta})$ is sufficient. The next section does so.

\subsection{Finite Sample Result}
Recall we are using a two-step ML estimation procedure, so the first step of estimation introduces additional sampling uncertainty through $\hat{\sigma}^{data}$. We incorporate these two layers of sampling uncertainty in the following lemma:
\begin{lemma}\label{lemma:gaussian}
    Under Assumption \ref{ass:epsilon} to \ref{ass:fullrank}, we have
    \begin{equation}
      \mathbb{E}_{\varepsilon^{n}}\big[\sup_{\theta\in\Theta}\Vert\nabla_{\theta}\mathbb{G}_n(\theta )\Vert_1\big\vert S,\sigma^{data}\big]\leq \frac{C_3+C_4\sqrt{\log(n)}}{\sqrt{n}},
    \end{equation}
    where $C_3$ and $C_4$ are constants that depend only on the support of covariates, the distribution of $\varepsilon$, $C_{\sigma}$, the covariates space $\mathcal{X}$, the network space $\mathcal{G}$ and the treatment allocation space $\mathcal{D}$.
\end{lemma}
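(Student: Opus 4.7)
The plan is to bound $\sup_{\theta\in\Theta}\|\nabla_\theta \mathbb{G}_n(\theta)\|_1$ by isolating the two independent sources of sampling uncertainty in the two-step estimator: the first-stage plug-in error carried by $\hat\sigma^{data}$ (controlled by Assumption \ref{ass:ccp}) and the second-stage score fluctuations at fixed $Z_i$ (controlled by standard empirical-process arguments). Writing $\ell_i(\theta, z) \coloneqq Y_i \log F_\varepsilon(z^\intercal \theta) + (1-Y_i)\log\bigl(1 - F_\varepsilon(z^\intercal \theta)\bigr)$, I would start from the algebraic decomposition
\begin{equation*}
\nabla_\theta \mathbb{G}_n(\theta)=\underbrace{\tfrac{1}{n}\sum_{i=1}^n\bigl[\nabla_\theta \ell_i(\theta,\hat{Z}_i)-\nabla_\theta \ell_i(\theta,Z_i)\bigr]}_{T_1(\theta)}+\underbrace{\tfrac{1}{n}\sum_{i=1}^n\bigl[\nabla_\theta \ell_i(\theta,Z_i)-\mathbb{E}_{\varepsilon^n}[\nabla_\theta \ell_i(\theta,Z_i)\mid S,\sigma^{data}]\bigr]}_{T_2(\theta)},
\end{equation*}
bound the expected supremum of each piece separately, and add them.

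For $T_1$, I would combine the shape Assumption \ref{ass:smooth}, the bounded density in Assumption \ref{ass:epsilon}, and compactness of $\Theta$ (Assumption \ref{ass:paraspace}) to obtain a deterministic constant $L$ with $\|\nabla_\theta \ell_i(\theta,\hat Z_i)-\nabla_\theta \ell_i(\theta,Z_i)\|_1 \leq L\|\hat Z_i-Z_i\|_1$ uniformly in $\theta$. By the definition of $\hat Z_i$, the nonzero coordinates of $\hat Z_i-Z_i$ are weighted neighbor averages of $(\hat\sigma_j^{data}-\sigma_j^{data})$ with uniformly bounded weights $m_{ij}G_{ij}/|\mathcal{N}_i|$. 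Assumption \ref{ass:ccp} gives sub-Gaussian concentration of each $|\hat\sigma_j^{data}-\sigma_j^{data}|$ with variance proxy $C_\sigma^2/n$, so Jensen's inequality delivers $\mathbb{E}_{\varepsilon^n}[|\hat\sigma_j^{data}-\sigma_j^{data}|\mid S,\sigma^{data}]\lesssim C_\sigma/\sqrt n$, and averaging over $i$ gives $\mathbb{E}_{\varepsilon^n}\bigl[\sup_\theta \|T_1(\theta)\|_1 \mid S,\sigma^{data}\bigr] \leq C_3/\sqrt n$ for a constant $C_3$ depending only on the primitives listed in the statement.

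For $T_2$, the crucial observation is that, conditional on $(S,\sigma^{data})$, the decisions $Y_i$ are mutually independent, since each is a deterministic function of the private shock $\varepsilon_i$ alone through the BNE decision rule, and the $\varepsilon_i$ are i.i.d.\ by Assumption \ref{ass:epsilon}. Thus $T_2(\theta)$ is a centered sum of independent bounded summands; by Hoeffding, each coordinate is sub-Gaussian at $\theta$ with variance proxy of order $1/n$, the constant coming from the shape Assumption \ref{ass:smooth} and boundedness of $Z_i$ (from the supports of $X$ and $D$ and the bounded averaging in Assumption \ref{ass:fullrank}'s setup). To upgrade pointwise concentration to a uniform bound over the compact set $\Theta\subset\mathbb{R}^{d_\theta}$, I would cover $\Theta$ by an $\epsilon$-net of cardinality $\mathcal{N}(\epsilon)\lesssim \epsilon^{-d_\theta}$, apply a union bound over the $d_\theta \cdot\mathcal{N}(\epsilon)$ sub-Gaussian coordinates at the net (contributing a factor $\sqrt{\log \mathcal{N}(\epsilon)/n}$), and absorb the discretization error using a uniform Lipschitz constant for $\theta\mapsto \nabla_\theta \ell_i(\theta,Z_i)$ obtained from Assumptions \ref{ass:epsilon}, \ref{ass:smooth}, and \ref{ass:paraspace}. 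Choosing $\epsilon\asymp 1/n$ yields $\mathbb{E}_{\varepsilon^n}\bigl[\sup_\theta \|T_2(\theta)\|_1\mid S,\sigma^{data}\bigr] \leq C_4\sqrt{\log n/n}$, and adding the $T_1$ bound produces the claimed rate.

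The main obstacle is ensuring that the Lipschitz constants and sub-Gaussian proxies in both steps depend only on primitive quantities and not implicitly on $n$ or on the maximum degree $\bar N$. The network appears only through convex combinations $|\mathcal{N}_i|^{-1}\sum_j m_{ij}G_{ij}(\cdot)$, so if one bounds these by $\|m\|_\infty$ rather than by $\bar N$, the constants remain $n$-free; verifying this carefully at each step is the delicate part. A secondary subtlety is that $\hat\sigma^{data}$ and $Y_i$ are both measurable with respect to $\varepsilon^n$ and hence not independent. This is harmless for $T_1$, where only the marginal magnitude $\mathbb{E}|\hat\sigma_j^{data}-\sigma_j^{data}|$ enters via Assumption \ref{ass:ccp}, but it forbids merging the two terms before taking the supremum and dictates the add-and-subtract decomposition above.
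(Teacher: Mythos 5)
Your proposal is correct and follows essentially the same route as the paper: the same decomposition of $\nabla_\theta\mathbb{G}_n$ into a first-stage plug-in term (your $T_1$, the paper's $\mathbb{A}_k$) bounded via a uniform Lipschitz/mean-value argument plus Assumption \ref{ass:ccp}, and a second-stage centered empirical process (your $T_2$, the paper's $\mathbb{B}_k$) bounded via an $\epsilon$-net, sub-Gaussian maximal inequality, and a uniform Lipschitz constant in $\theta$. The only cosmetic differences are that the paper symmetrizes with Rademacher variables before invoking Hoeffding's lemma (unnecessary given the conditional independence you correctly identify) and picks the net radius $\delta\asymp 1/\sqrt{n}$ rather than $1/n$; both choices yield the same $\sqrt{\log n/n}$ rate.
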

A proof of Lemma \ref{lemma:gaussian} is provided in Appendix \ref{app:prooflemmagaus}. Lemma \ref{lemma:gaussian} analyzes the finite sample property of the empirical process (i.e., $\mathbb{E}_{\varepsilon^{n}}\big[\Vert\nabla_{\theta}\mathbb{G}_n(\theta )\Vert_1\big\vert S,\sigma^{data}\big]$).
By combining the results of Lemma \ref{lemma:unctheta}, Lemma \ref{pro.samplingunc} and Lemma \ref{lemma:gaussian}, we have our main theorem.  This theorem characterizes the sampling uncertainty of using the empirical welfare:
\begin{theorem}{(\textbf{Sampling Uncertainty of Regret})}\label{thm:uncerntain}
    Under Assumption \ref{ass:epsilon} to \ref{ass:treatment}, the sampling uncertainty of the two-step MLE estimator is bounded by:
    \begin{equation}
    \mathbb{E}_{\varepsilon^{n}}\big[\Vert\hat{\theta}-\theta_0 \Vert_1\big\vert S,\sigma^{data}\big]\leq C_2\frac{C_3+C_4\log(n)}{\sqrt{n}}    
    \end{equation}
    In addition, the sampling uncertainty of the empirical welfare is bounded by:
    \begin{equation}
\mathbb{E}_{\varepsilon^{n}}\Big[\max_{D\in\mathcal{D}}\lvert W_n(D)-W(D)\rvert\Big\vert S,\sigma^{data}\big]\leq C_1C_2\frac{C_3+C_4\log(n)}{\sqrt{n}}.
    \end{equation}
\end{theorem}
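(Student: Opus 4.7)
The proof is essentially an assembly of the three previously established lemmas: Lemma \ref{lemma:unctheta}, Lemma \ref{pro.samplingunc}, and Lemma \ref{lemma:gaussian}. Each lemma supplies one link in a chain running from welfare sampling uncertainty, through parameter sampling uncertainty, down to the concentration of the empirical-process gradient. No fresh probabilistic argument is required; the work is simply to verify that the constants compose correctly and that the intermediate point from the mean value expansion lies in the parameter space over which the uniform gradient bound is stated.

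For the first inequality, I would first invoke Lemma \ref{pro.samplingunc} to bound $\mathbb{E}_{\varepsilon^{n}}\big[\Vert\hat{\theta}-\theta_0\Vert_1\big\vert S,\sigma^{data}\big]$ by $C_2\cdot\mathbb{E}_{\varepsilon^{n}}\big[\Vert\nabla_{\theta}\mathbb{G}_n(\Tilde{\theta})\Vert_1\big\vert S,\sigma^{data}\big]$. The intermediate point $\Tilde{\theta}$ produced by the mean value expansion lies on the segment between $\theta_0$ and $\hat{\theta}$, both of which belong to the compact parameter space $\Theta$ by Assumption \ref{ass:paraspace}; taking $\Theta$ to be convex (or replacing it by its convex hull and absorbing the harmless inflation into the constants), we obtain $\Tilde{\theta}\in\Theta$ and hence
\begin{equation*}
\Vert\nabla_{\theta}\mathbb{G}_n(\Tilde{\theta})\Vert_1 \;\leq\; \sup_{\theta\in\Theta}\Vert\nabla_{\theta}\mathbb{G}_n(\theta)\Vert_1.
\end{equation*}
Applying Lemma \ref{lemma:gaussian} then yields
\begin{equation*}
\mathbb{E}_{\varepsilon^{n}}\big[\Vert\hat{\theta}-\theta_0\Vert_1\big\vert S,\sigma^{data}\big] \;\leq\; C_2\,\frac{C_3+C_4\sqrt{\log n}}{\sqrt{n}} \;\leq\; C_2\,\frac{C_3+C_4\log n}{\sqrt{n}},
\end{equation*}
where the final inequality uses $\sqrt{\log n}\leq \log n$ for $n\geq e$, with small-$n$ cases absorbed into $C_3,C_4$.

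For the second inequality, I would apply Lemma \ref{lemma:unctheta}, which bounds the uniform welfare deviation above by $C_1\cdot\mathbb{E}_{\varepsilon^{n}}\big[\Vert\hat{\theta}-\theta_0\Vert_1\big\vert S,\sigma^{data}\big]$, and substitute the first claim to obtain the announced rate $C_1 C_2(C_3+C_4\log n)/\sqrt{n}$.

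The only non-mechanical point is verifying that $\Tilde{\theta}$ lies inside the set over which Lemma \ref{lemma:gaussian} establishes its uniform bound. This is the mildest of obstacles: it holds immediately when $\Theta$ is convex (which is standard under compactness), and otherwise follows from enlarging to the convex hull and redefining constants. Every remaining step is direct substitution, and the $\log(n)/\sqrt{n}$ rate emerges transparently as the product of the Lipschitz-in-$\theta$ constant $C_1$ from the welfare comparison, the inverse-Hessian constant $C_2$ from the MLE score inversion, and the empirical-process rate from Lemma \ref{lemma:gaussian}.
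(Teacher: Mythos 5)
Your proposal is correct and follows essentially the same route as the paper: the paper's proof of Theorem \ref{thm:uncerntain} consists precisely of chaining Lemma \ref{pro.samplingunc}, Lemma \ref{lemma:gaussian}, and Lemma \ref{lemma:unctheta} in the order you describe. You are in fact somewhat more careful than the paper, which passes silently from the $\sqrt{\log(n)}$ of Lemma \ref{lemma:gaussian} to the $\log(n)$ in the theorem statement and does not comment on $\Tilde{\theta}$ lying in $\Theta$; both points are handled correctly in your write-up.
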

\noindent A proof of Theorem \ref{thm:uncerntain} is provided in Appendix \ref{appthm:uncertain}. This new result characterizes the finite sample properties of the sampling uncertainty that emerges when utilizing empirical welfare in settings of strategic interaction. It shows that the regret associated with empirical welfare converges at a rate influenced by the size of the network in the training data, as well as by the covariates and the chosen distribution for private information. Furthermore, this result characterizes the performance of the two-step maximum likelihood estimation from a finite sample perspective. This analysis can be extended to general M-estimators, including the Generalized Method of Moments and broader MLE frameworks.
\begin{comment}
\begin{corollary}{(\textbf{Sub-Gaussian of MLE})}\label{coro:MLE}
    
\end{corollary}
\end{comment}

\subsection{Regret due to our Greedy Algorithm}\label{sec:greedyregret}
Now, we evaluate the second term of Eq.\ref{eq:regdecom}, which is the regret introduced by our greedy algorithm,
\begin{equation}
\mathbb{E}_{\varepsilon^{n}}\Big[ W_n(\Tilde{D})-W_n(\hat{D}_{G})\Big \vert S,\sigma^{data}\Big].
\end{equation}
In general, the gap between a greedy optimizer and the global optimizer in terms of the value of the objective function is unknown. For monotone non-decreasing submodular set functions, \citet{nemhauser1978analysis} shows that a greedy algorithm achieves results within $(1-1/e)$ of the global maximum value. Although our optimization problem does not involve a submodular function, our empirical findings in Section \ref{sec:empirical} indicate that our greedy algorithm performs well, a result echoed in other applications such as experimental design \citep{lawrence2002fast}. Building on these findings, \citet{bian2017guarantees} provides a theoretical performance guarantee for using a greedy algorithm on non-submodular functions by leveraging the submodularity ratio and curvature of the objective function.

Submodularity, the submodularity ratio, and the curvature of a set function $f$ are defined as follows.
\begin{definition}{(\textbf{Submodularity})} A set function is a submodular function if:
\begin{equation}
    \sum_{k\in R\setminus S} [f(S\cup \{k\})-f(S)]\geq  f(S\cup R)-f(S), \quad\forall S,R\subseteq\mathcal{N}.
\end{equation}
\end{definition}
\begin{definition}
    (\textbf{Submodularity Ratio}) The submodularity ratio of a non-negative set function $f(\cdot)$ is the largest $\gamma$ such that
\begin{equation}
    \sum_{k\in R\setminus S} [f(S\cup \{k\})-f(S)]\geq \gamma [f(S\cup R)-f(S)], \quad\forall S,R\subseteq\mathcal{N}.
\end{equation}
\end{definition}

\begin{definition}
    (\textbf{Curvature}) The curvature of a non-negative set function $f(\cdot)$ is the smallest value of $\xi$ such that
\begin{equation}
f(R\cup \{k\})-f(R)\geq (1-\xi)[f(S\cup \{k\})-f(S)],\quad   \forall S\subseteq R\subseteq \mathcal{N}, \forall k\in \mathcal{N}\setminus R.
\end{equation}
\end{definition}
Submodularity is similar to diminishing returns. It states that adding an element to a smaller set yields a greater benefit than adding it to a larger set. \citet{lovasz1983submodular} highlights that, in discrete optimization, submodularity plays a role analogous to convexity in continuous optimization. The submodularity ratio measures how close a set function is to being submodular \citep{das2011submodular}. Curvature quantifies the extent to which a set function deviates from being additive.

We evaluate the theoretical performance of our greedy algorithm in scenarios where the treatment exerts both direct and indirect positive effects on equilibrium welfare, as indicated by positive values for $(\hat{\theta}_1+X_i^{\intercal}\hat{\theta}_3)$ and $\hat{\theta}_5$. Additionally, our empirical analysis explores a variety of other scenarios, including those with a negative direct effect but a positive indirect effect, among others. The results indicate that the algorithm performs well across a range of conditions.

To characterize the submodularity ratio and curvature of the objective function, we first represent it as a set function, which is a real-valued mapping defined over treatment allocations sets, $\mathcal{D}\subset \mathcal{N}$ (i.e., $\mathcal{D}=\{i\in\mathcal{N}:D_i=1\}$):
% \begin{equation}
%     \begin{split}
%         W_n(\mathcal{D})&= \sum_{i\in \mathcal{D}}F_{\varepsilon}\Big[\hat{\theta}_0+\hat{\theta}_1 +X_i^{\intercal}(\hat{\theta}_2+\hat{\theta}_3) +\frac{1}{\vert\mathcal{N}_i\vert} \sum_{\substack{j\in\mathcal{D}\setminus\{i\}}}(\hat{\theta}_4+\hat{\theta}_6\underline{\sigma}_j)m_{ij}G_{ij}+\frac{\hat{\theta}_5}{\vert\mathcal{N}_i\vert}\sum_{\substack{j\in \mathcal{N}\setminus\{i\}}}m_{ij}G_{ij}\underline{\sigma}_j\Big]\\&
%         \quad+\sum_{k\in\mathcal{N}\setminus\mathcal{D}}F_{\varepsilon}\Big[\hat{\theta}_0+X_k^{\intercal}\hat{\theta}_2+\frac{\hat{\theta}_4}{\vert\mathcal{N}_k\vert} \sum_{\substack{\ell\in\mathcal{D}}}m_{k\ell}G_{k\ell}+\frac{\hat{\theta}_5}{\vert\mathcal{N}_k\vert}\sum_{\ell\in\mathcal{N}\setminus\{k\}}m_{k\ell}G_{k\ell}\underline{\sigma}_\ell\Big].
%     \end{split}
% \end{equation}
\begin{multline}
      W_n(\mathcal{D})= \sum_{i\in \mathcal{D}}F_{\varepsilon}\Big[\hat{\theta}_0+\hat{\theta}_1 +X_i^{\intercal}(\hat{\theta}_2+\hat{\theta}_3) +\frac{1}{\vert\mathcal{N}_i\vert} \sum_{\substack{j\in\mathcal{D}\setminus\{i\}}}(\hat{\theta}_4+\hat{\theta}_6\underline{\sigma}_j)m_{ij}G_{ij}+\frac{\hat{\theta}_5}{\vert\mathcal{N}_i\vert}\sum_{\substack{j\in \mathcal{N}\setminus\{i\}}}m_{ij}G_{ij}\underline{\sigma}_j\Big]\\
      +\sum_{k\in\mathcal{N}\setminus\mathcal{D}}F_{\varepsilon}\Big[\hat{\theta}_0+X_k^{\intercal}\hat{\theta}_2+\frac{\hat{\theta}_4}{\vert\mathcal{N}_k\vert} \sum_{\substack{\ell\in\mathcal{D}}}m_{k\ell}G_{k\ell}+\frac{\hat{\theta}_5}{\vert\mathcal{N}_k\vert}\sum_{\ell\in\mathcal{N}\setminus\{k\}}m_{k\ell}G_{k\ell}\underline{\sigma}_\ell\Big].
\end{multline}
   
Let $\gamma$ denote the submodularity ratio. For a nondecreasing function, $\gamma$ ranges between $[0,1]$ and is $1$ if and only if the function is submodular. Similarly, the curvature, denoted by $\xi$, of a nondecreasing function ranges between $[0,1]$, and is $0$ if and only if the function is supermodular. As our objective function involves a system of simultaneous equations, evaluating its curvature and submodularity ratio directly is challenging. Instead, we focus on the upper bound for curvature and submodularity, and ensure that their values remain within $(0,1)$. Combining this result with \citet[Theorem 1]{bian2017guarantees} of leads to:
\begin{proposition}\label{pro:greedy}
    Under Assumption \ref{ass:epsilon} and Assumption \ref{ass:paraspace}, the curvature $\xi$ of $W_n(\mathcal{D})$ and the submodularity ratio $\gamma$ of $W_n(\mathcal{D})$ are in $(0,1)$. 
         The greedy algorithm enjoys the following approximation guarantee for the problem in Eq.\ref{eq:wn}:
\begin{equation}
    W_n(D_G)\geq \frac{1}{\xi}(1-e^{-\xi\gamma}) W_n(\Tilde{D}),
\end{equation}
where $D_G$ is the treatment assignment rule that is obtained by Algorithm \ref{algo}.
\end{proposition}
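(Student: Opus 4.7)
}
The plan is to verify the three hypotheses required by \citet[Theorem 1]{bian2017guarantees}---non-negativity, monotone non-decreasingness, and strict positivity of $\gamma$ and $\xi$---and then to quote their theorem to obtain the approximation guarantee. Non-negativity of $W_n(\mathcal{D})$ is immediate because $F_\varepsilon\in[0,1]$, and the sum contains $N$ such terms. Monotonicity would follow from an increasing differences argument: under the sign conditions $\hat\theta_1 + X_i^\intercal\hat\theta_3 \geq 0$, $\hat\theta_4\geq 0$, $\hat\theta_5\geq 0$, $\hat\theta_6\geq 0$, every argument inside each $F_\varepsilon[\cdot]$ is non-decreasing in $\mathcal{D}$ once we note that moving a unit from the ``untreated sum'' to the ``treated sum'' replaces a strictly smaller CDF argument by a strictly larger one. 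For the feedback through $\underline{\sigma}$ I would invoke Topkis's monotone comparative statics applied to $\Omega(\cdot;D)$ from Eq.\ref{eq:omega}: treating one more unit shifts $\Omega$ upward pointwise, so by Tarski's theorem the least fixed point $\underline{\sigma}^*$ weakly increases, which together with $\hat\theta_5\geq 0$ preserves the sign of every marginal effect inside the CDFs.

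Next I would establish two-sided, set-uniform bounds on the marginal gain $\Delta_k(\mathcal{S}) \coloneqq W_n(\mathcal{S}\cup\{k\}) - W_n(\mathcal{S})$. By Assumption \ref{ass:epsilon}, $f_\varepsilon$ is bounded above by $\tau$ and, because the CDF argument lies in the compact range determined by Assumption \ref{ass:paraspace}, bounded covariates $\mathcal{X}\subset\mathbb{R}_+^K$ and $\underline{\sigma}_j\in[0,1]$, one also has $f_\varepsilon\geq \underline{f}>0$ on this range. A mean-value expansion of each $F_\varepsilon$ block, combined with the comparative statics argument above applied once more to quantify the jump in $\underline{\sigma}^*$ when one treatment indicator is flipped, yields constants $0<L\leq U<\infty$, depending only on the compact parameter set, on $\widebar{N}$, on $\tau$, on $\underline{f}$, and on $\sup m_{ij}$, such that $L\leq \Delta_k(\mathcal{S})\leq U$ for every $\mathcal{S}\subseteq \mathcal{N}$ and $k\notin\mathcal{S}$. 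The quantitative control of the $\underline{\sigma}^*$ perturbation is the step I expect to require the most care: it amounts to a contraction-type bound on the fixed-point map $\Omega$, which I would obtain by showing that the Jacobian of $\Omega$ has spectral radius strictly less than one under the parameter bounds, so the implicit-function perturbation $\partial \underline{\sigma}^*/\partial D_k$ is uniformly bounded in norm.

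With these two-sided bounds in hand, the submodularity ratio and curvature can be controlled directly from their definitions. For $\gamma$, any $\mathcal{S},\mathcal{R}\subseteq\mathcal{N}$ satisfy
\begin{equation}
\sum_{k\in \mathcal{R}\setminus \mathcal{S}}\Delta_k(\mathcal{S}) \geq |\mathcal{R}\setminus \mathcal{S}| \cdot L, \qquad W_n(\mathcal{S}\cup \mathcal{R}) - W_n(\mathcal{S}) \leq |\mathcal{R}\setminus \mathcal{S}|\cdot U,
\end{equation}
so $\gamma\geq L/U>0$, and since our game exhibits strategic complementarity ($\hat\theta_5,\hat\theta_6\geq 0$), $W_n$ fails to be submodular and hence $\gamma<1$. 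For $\xi$, the upper-bound definition combined with the same $L/U$ ratio gives $1-\xi\geq L/U$, hence $\xi\leq 1-L/U<1$; non-additivity due to strategic complementarity yields $\xi>0$. Thus $\gamma,\xi\in(0,1)$.

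Given the monotone, non-negative set function $W_n$ with $\gamma,\xi\in(0,1)$ just established, \citet[Theorem 1]{bian2017guarantees} applies verbatim and delivers
\begin{equation}
W_n(D_G) \geq \tfrac{1}{\xi}\bigl(1-e^{-\xi\gamma}\bigr) W_n(\widetilde{D}),
\end{equation}
which is the conclusion of the proposition. The core analytical difficulty, as flagged above, is producing the uniform-in-$\mathcal{S}$ lower bound $L>0$ on marginal gains: naive chain-rule bounds through $\partial \underline{\sigma}^*/\partial D_k$ can blow up as the network becomes dense, so the argument must exploit the normalization by $|\mathcal{N}_i|$ in our utility specification \eqref{eq:specifi} to keep the Jacobian of $\Omega$ uniformly contractive in the sense needed.
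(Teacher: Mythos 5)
Your overall architecture matches the paper's: both arguments reduce the claim to two-sided, set-uniform bounds $L\le W_n(\mathcal{S}\cup\{k\})-W_n(\mathcal{S})\le U$ obtained from a mean-value expansion of each $F_\varepsilon$ block, conclude $\gamma\ge L/U$ and $\xi\le 1-L/U$ (your explicit telescoping for the $|\mathcal{R}\setminus\mathcal{S}|$-element increment is implicit in the paper), and then invoke \citet[Theorem 1]{bian2017guarantees}. The upper bound is handled identically (density bounded by $\tau$, $\underline{\sigma}_j-\underline{\sigma}_j'\le 1$, normalization by $|\mathcal{N}_i|$). Like the paper, you only gesture at the strict endpoints $\gamma<1$ and $\xi>0$, so no complaint there.

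The substantive divergence is your route to the lower bound $L$, and here your proposed sub-step would fail. You plan to quantify $\partial\underline{\sigma}^*/\partial D_k$ via an implicit-function/perturbation argument that requires the Jacobian of $\Omega$ to have spectral radius strictly less than one. But nothing in Assumptions \ref{ass:epsilon} or \ref{ass:paraspace} caps $\hat{\theta}_5,\hat{\theta}_6$ relative to $\tau$, and a globally contractive $\Omega$ would force a unique fixed point --- contradicting the multiplicity of equilibria that motivates the entire paper. The paper avoids this: since flipping $D_k$ from $0$ to $1$ shifts $\Omega$ up pointwise (under the maintained sign restrictions $\hat{\theta}_1+X_i^{\intercal}\hat{\theta}_3>0$, $\hat{\theta}_4,\hat{\theta}_5,\hat{\theta}_6\ge 0$), Tarski/Topkis gives $\underline{\sigma}_j-\underline{\sigma}_j'\ge 0$, so every spillover term in the mean-value expansion of the marginal gain is non-negative and can simply be \emph{discarded}; what remains is the direct effect on unit $k$, bounded below by $\underline{F}_{\varepsilon}(\hat{\theta}_1+X_k^{\intercal}\hat{\theta}_3)/N$, plus the $\hat{\theta}_4$ neighbor term bounded below using $\underline{N}/\widebar{N}$. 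No quantitative control of the equilibrium perturbation is needed --- only its sign, which you already established earlier in your own argument. If you replace your contraction step with this drop-the-nonnegative-terms step, your proof goes through and coincides with the paper's.
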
 
A proof is provided in Appendix \ref{apppro:greedy}. This proof is similar to  \citet{kitagawa2023individualized}. The first part of Proposition \ref{pro:greedy} implies that the performance guarantee is a non-trivial bound. Although the curvature and submodularity ratio of our objective function are unknown, for a particular application, it is possible to evaluate them empirically. As a consequence,
\begin{equation}\label{eq:regretgreedy}
    \mathbb{E}_{\varepsilon^{n}}\Big[ W_n(\Tilde{D})-W_n(\hat{D}_{G})\Big \vert S,\sigma^{data}\Big]\leq \mathcal{O}(1)(1-\frac{1}{\xi}(1-e^{-\xi\gamma})),
\end{equation}
where $\mathcal{O}(1)$ captures the $\mathbb{E}_{\varepsilon^{n}}[W_n(\Tilde{D}) \vert S,\sigma^{data}]$. Combining Eq.\ref{eq:regretgreedy} with Theorem \ref{thm:uncerntain}, we obtain our main theorem:
\begin{theorem}{(\textbf{Regret Bound})}\label{thm:regbound}
Let $D^*$ denote the maximizer of $W(D)$ and $D_G$ be the assignment vector obtained by Algorithm \ref{algo}. Under Assumptions \ref{ass:epsilon} to \ref{ass:fullrank}, given curvature $\xi$ and submodularity ratio $\gamma$, the regret is bounded from above by: 
\begin{equation}
  \mathbb{E}_{\varepsilon^{n}}\left[R(\hat{D}_{G})\vert S,\sigma^{data}\right]\leq \mathcal{O}\big(\log(1+C_3\sqrt{n})/\sqrt{n}\big)+\mathcal{O}(1)(1-\frac{1}{\xi}(1-e^{-\xi\gamma})), \label{eq:theorem 5.2 regret bound}
    \end{equation}
    where $C_3 = 4\vert\underline{\omega}\vert\widebar{z}^2$.
\end{theorem}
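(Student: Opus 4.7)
The plan is to assemble Theorem \ref{thm:regbound} from the two component pieces that the preceding subsections have already prepared, following the four-term decomposition of regret stated in Eq.\ref{eq:regdecom}. Concretely, I would start from
\begin{equation*}
 \mathbb{E}_{\varepsilon^{n}}\!\left[R(\hat{D}_{G})\mid S,\sigma^{data}\right]\leq  2\,\mathbb{E}_{\varepsilon^{n}}\!\Big[\max_{D\in\mathcal{D}}\lvert W_n(D)-W(D)\rvert\,\Big|\, S,\sigma^{data}\Big]+\mathbb{E}_{\varepsilon^{n}}\!\Big[ W_n(\Tilde{D})-W_n(\hat{D}_{G})\,\Big|\, S,\sigma^{data}\Big],
\end{equation*}
which was derived in the paragraph preceding the section on sampling uncertainty. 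The two summands on the right correspond respectively to the statistical noise inherited from estimating $\theta$ via the two-step MLE, and the optimization error from running Algorithm \ref{algo} instead of solving the combinatorial problem in Eq.\ref{eq:wn} exactly.

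For the first (statistical) summand, I would invoke Theorem \ref{thm:uncerntain} directly, which already chains Lemma \ref{lemma:unctheta} (turning welfare deviation into parameter deviation), Lemma \ref{pro.samplingunc} (converting parameter deviation into a bound on the empirical-process gradient), and Lemma \ref{lemma:gaussian} (the finite-sample concentration of $\nabla_\theta \mathbb{G}_n$). This delivers a bound of order $(C_3+C_4\log n)/\sqrt{n}$ on the sampling contribution. To match the form $\log(1+C_3\sqrt{n})/\sqrt{n}$ appearing in the theorem statement, I would use the elementary observation that for $n\geq 1$ one has $\log n = \log(\sqrt{n}\cdot\sqrt{n}) \leq 2\log(1+C_3\sqrt{n})$ up to constants absorbed in the $\mathcal{O}(\cdot)$, so that the first summand is $\mathcal{O}\!\bigl(\log(1+C_3\sqrt{n})/\sqrt{n}\bigr)$ as required.

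For the second (greedy) summand, I would appeal to Proposition \ref{pro:greedy}, which guarantees $W_n(\hat{D}_G)\geq \tfrac{1}{\xi}(1-e^{-\xi\gamma})\,W_n(\tilde{D})$ with $(\xi,\gamma)\in(0,1)^2$, and rearrange to $W_n(\tilde{D})-W_n(\hat{D}_G)\leq \bigl(1-\tfrac{1}{\xi}(1-e^{-\xi\gamma})\bigr)\,W_n(\tilde{D})$. Because $W_n$ is the average of probabilities (engagement welfare) or a sum of utilities bounded on a compact parameter space (utilitarian welfare), Assumption \ref{ass:paraspace} gives $W_n(\tilde{D}) = \mathcal{O}(1)$ uniformly in the training sample, and this $\mathcal{O}(1)$ bound persists after taking the conditional expectation $\mathbb{E}_{\varepsilon^n}[\cdot\mid S,\sigma^{data}]$. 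This yields the second $\mathcal{O}(1)(1-\tfrac{1}{\xi}(1-e^{-\xi\gamma}))$ term. Adding the two pieces completes the proof.

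The main obstacle is not any single step, which are each routine given the prior lemmas, but rather verifying that the $\mathcal{O}(\cdot)$ constants in the theorem statement genuinely absorb the dependence on $C_1,C_2,C_3,C_4$ (from the sampling piece) and on the uniform upper bound of $W_n$ (from the greedy piece), since the statement hides these inside $\mathcal{O}(\cdot)$. In particular, one should double-check that the logarithmic factor $\log(1+C_3\sqrt{n})$ used in the statement is a genuine upper bound on the $\log n$ factor delivered by Lemma \ref{lemma:gaussian}, and that $C_3 = 4|\underline{\omega}|\bar{z}^2$ as claimed in the theorem is consistent with the constant produced inside the proof of Lemma \ref{lemma:gaussian}; any discrepancy would only shift absolute constants, not the asymptotic rate.
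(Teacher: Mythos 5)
Your proposal is correct and follows essentially the same route as the paper, which obtains Theorem \ref{thm:regbound} by inserting Theorem \ref{thm:uncerntain} and the greedy bound Eq.\ref{eq:regretgreedy} (from Proposition \ref{pro:greedy}) into the decomposition Eq.\ref{eq:regdecom}. Your closing caveat about the constants is well placed: the paper itself is loose there (the $C_3$ in Theorem \ref{thm:regbound} is not the $C_3$ of Lemma \ref{lemma:gaussian}, and the logarithmic factor drifts between $\sqrt{\log n}$, $\log n$, and $\log(1+C_3\sqrt{n})$ across Lemma \ref{lemma:gaussian}, Theorem \ref{thm:uncerntain}, and the final statement), but as you note these discrepancies only affect absolute constants absorbed by the $\mathcal{O}(\cdot)$, not the rate.
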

Theorem \ref{thm:regbound} is our key result. 
The first term in Eq.\ref{eq:theorem 5.2 regret bound} characterizes the sampling uncertainty, whose convergence rate depends on the network size. 
The dependence upon the parameters in the utility function, network structure, and private information distribution are shown implicitly via the terms $C_1$ in Lemma \ref{lemma:unctheta}, $C_5$ and $C_6$ in Lemma \ref{lemma:b}, and $C_4$ in Lemma \ref{lemma:a}. The second term comes from the use of a greedy algorithm, and converges to a constant.

\section{Empirical Application}\label{sec:empirical}
We illustrate our proposed method using data from \citet{banerjee2013diffusion}, which explores the impact of information provision on microfinance adoption. \citet{banerjee2013diffusion} studies a microfinance loan program. This program was introduced by Bharatha Swamukti Samsthe (BSS), a non-governmental microfinance institution in India, and implemented across 43 villages in Karnataka. BSS invited influential units, such as teachers, leaders of self-help groups, and shopkeepers, to an informational meeting about the availability of microfinance (the treatment). In total, 1262 units were assigned treatment, an average of 25.75 per village. After the intervention, researchers collected data on the network structure and household characteristics—including access to electricity, latrine quality, and per capita counts of beds and rooms in all participating villages. The number of households in each village varied from 107 to 341, with 10 to 51 households per village receiving information about the program. The program commenced in 2007, and the survey of microfinance adoption was completed by early 2011. We treat each household’s decision to purchase microfinance as an equilibrium outcome within a simultaneous decision network game. 

We consider each village as a distinct target population. Structural parameters in the payoff function for each village are estimated separately using the two-step maximum likelihood estimation (MLE) method of \citet{leung2015two}. This setup assumes that the training data, which includes several villages, acts as a representative sample, with each village in the training dataset mirroring a corresponding village in the target population in terms of covariates and network structure. 

In the first stage of our analysis, following \citet{arcidiacono2011conditional}, we estimate the conditional choice probability using a flexible Logit approach (Chi-square goodness of fit test result is provided in Appendix \ref{chisquare}). This includes each unit's covariates and their second powers, as well as the covariates of directly linked neighbors and interactions among these covariates, which is under the network decaying dependence assumption \citep{xu2018social}. Although our estimator allows for incorporating covariates from neighbors at higher levels of linkage, we focus on directly linked neighbors' covariates in this estimation. 
We treat these estimates as the true parameters and assess the presence of strategic complementarity in each village. We find strategic complementarities in 16 of the 43 villages in the dataset\footnote{The indices of those 16 villages in the original data set are: 1, 4, 6, 7, 12, 14, 17, 18, 20, 24, 25, 29, 31, 39, 40, and 41. To enhance clarity, we discard their original indices and re-label them as villages 1 to 16.}, which are the focus of this exercise. We assume the policymaker utilizes all available covariates to determine the treatment allocation mechanism. We assume that the private information follows a logistic distribution, and we define the measure of closeness between units $i$ and $j$ to be $m(X_i,X_j) = \frac{1}{1+\vert X_i-X_j\vert}$.

\subsection{Policy evaluation}
In this application, the objective is to maximize engagement welfare, measured as the microfinance participation rate, evaluated at the minimal equilibrium under a treatment capacity constraint (as in Eq.\ref{eq:target}) within our target population. To ensure comparability with the original study, we set the capacity constraint equal to the number of treatments used by Bharatha Swamukti Samsthe (BSS). We compare our method (`Robust') with two different treatment allocation regimes: the allocation rule adopted by BSS in the original study (`Original'), and a random allocation rule (`Random').

Table \ref{tableemprical} presents predicted village-level microfinance take-up probabilities under three different treatment allocations. For each allocation rule, we report both the upper and lower bounds of the prediction set. The first column lists the 16 villages that exhibit strategic complementarities. The second column (\textit{Sample Avg.}) contains the empirical average take-up rate for these villages. The third column (\textit{Welfare under Original}) shows the average adoption rates for the original treatment allocation used by BSS. \textbf{Four villages—Villages 2, 3, 6, and 12—exhibit multiple equilibria under this rule}. To further assess our proposed method's performance, we generate 500 random treatment allocations within the capacity constraint for each village. The average purchasing probability across these 500 allocations is reported in the fourth column (\textit{Welfare under Random}). Under random allocation, \textbf{multiple equilibria arise in Villages 2, and 9, highlighting that the occurrence of multiple equilibria can vary with the allocation method used}. The share of households adopting microfinance according to the robust optimal treatment allocation is shown as 
\textit{Welfare under Robust}, \textbf{where the multiple equilibria only presents in the Village 2}.

\begin{table}[ht]
\setstretch{1}
\begingroup
\setlength{\tabcolsep}{3pt} % Default value: 6pt
\renewcommand{\arraystretch}{0.3}
 \begin{adjustwidth}{0cm}{}
 \linespread{1}
\footnotesize
\centering 
\begin{threeparttable}
 \begin{tabular}{@{}lcccccc@{}}
   \hline
   \toprule
    \multirow{2}{*}{\textit{Village}}&\multirow{2}{*}{\textit{Sample Avg.}}& \multicolumn{3}{c}{ \textbf{Welfare under}}&\multicolumn{2}{c}{\textbf{Welfare Gain$^*$}}\\
    &&\textit{Original}&\textit{Random}&\textit{Robust}&\textit{Level }&\textit{Percentage}\\
  \midrule
  \textbf{1}& $0.24$&[$0.25$, $0.25$] 
  &[$0.20$, $0.20$] &[$0.41$, $0.41$]&$0.16$& $66\%$\\[3pt]
  \textbf{2}& $0.08$& [$0.04$, $0.07$] 
  &[$0.03$, $0.04$] &[$0.13$, $0.16$]&$0.10$ & $270\%$\\[3pt]
  \textbf{3}& $0.18$& [$0.17$, $0.23$]
  &[$0.25$, $0.25$]&[$0.37$, $0.37$]&$0.20$ & $122\%$\\[3pt]
  \textbf{4}& $0.30$& [$0.26$, $0.26$] 
  &[$0.29$, $0.29$] &[$0.44$, $0.44$]&$0.18$ & $68\%$\\[3pt]
  \textbf{5}& $0.15$& [$0.15$, $0.15$] 
  &[$0.16$, $0.16$] &[$0.37$, $0.37$]&$0.22$ & $146\%$\\[3pt]
  \textbf{6}& $0.17$& [$0.15$, $0.19$] 
  &[$0.17$, $0.17$] &[$0.39$, $0.39$]&$0.24$ & $158\%$\\[3pt]
  \textbf{7}& $0.19$& [$0.48$, $0.48$] 
  &[$0.40$, $0.40$] &[$0.66$, $0.66$] &$0.18$ & $37\%$\\[3pt]
  \textbf{8}& $0.19$& [$0.17$, $0.17$] 
  &[$0.18$, $0.18$] &[$0.28$, $0.28$] &$0.11$ & $64\%$\\[3pt]
  %&  ($0.340$)&&&\\
  \textbf{9}& $0.19$& [$0.19$, $0.19$] 
  &[$0.20$, $0.23$] &[$0.33$, $0.33$] &$0.14$ & $72\%$\\[3pt]
  %&  ($0.421$)&&&\\
  \textbf{10}& $0.24$& [$0.24$, $0.24$] 
  &[$0.23$, $0.23$] &[$0.28$, $0.28$]&$0.05$ & $20\%$\\[3pt]
  \textbf{11}& $0.23$& [$0.22$, $0.22$] 
  &[$0.22$, $0.22$] &[$0.34$, $0.34$]&$0.12$ & $57\%$\\[3pt]
  %&  ($0.645$)&&& \\
  \textbf{12}& $0.10$& [$0.09$, $0.10$]
  &[$0.11$, $0.11$] &[$0.30$, $0.30$]&$0.20$ & $223\%$\\[3pt]
  \textbf{13}& $0.15$& [$0.13$, $0.13$] 
  &[$0.15$, $0.15$] &[$0.45$, $0.45$]&$0.31$ & $238\%$\\[3pt]
  \textbf{14}& $0.21$& [$0.23$, $0.23$] 
  &[$0.19$, $0.19$] &[$0.46$, $0.46$]&$0.23$ & $101\%$\\[3pt]
  \textbf{15}& $0.16$& [$0.18$, $0.18$] 
  &[$0.18$, $0.18$] &[$0.41$, $0.41$]&$0.24$ & $132\%$\\[3pt]
  \textbf{16}& $0.16$& [$0.16$, $0.16$] 
  &[$0.19$, $0.19$] &[$0.29$, $0.29$]&$0.13$ & $85\%$\\[3pt]
\bottomrule
   \end{tabular}
   \end{threeparttable}
   \caption{Comparison using 16 Indian villages microfinance data from \cite{banerjee2013diffusion}\\
   $\ast$ Minimal Welfare Gain compares the minimal  simulated equilibrium welfare under the Robust allocation method and the simulated equilibrium welfare under the Original allocation implemented by BSS.}
\label{tableemprical}
   \end{adjustwidth}
   %\end{center}
   \endgroup
\end{table}
\begin{figure}[ht]
    \centering
\includegraphics[height=9cm,width=0.7\linewidth]{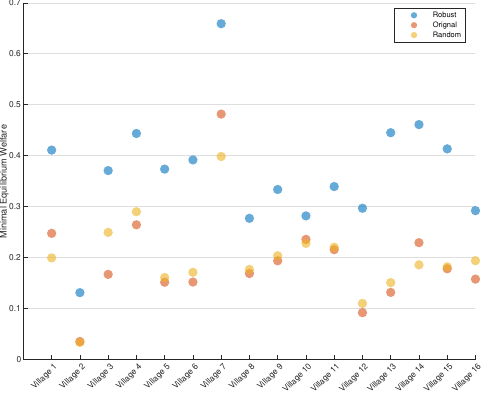}
    \caption{Comparisons between three approaches}
    \label{fig:comparisionlower}
\end{figure}

Note first that the equilibrium average share of households adopting microfinance under the original allocation closely tracks the observed data for all villages except for Village 7 \footnote{The goodness of fit test indicates that the estimation for Village 7, referred to as Village 17 in Table \ref{tab:chisquare}, may not adequately fit the data, potentially due to inaccuracies in the first-stage Conditional Choice Probability (CCP) estimation.}. Second, we find that the equilibrium average share of households purchasing microfinance under random allocation is similar to the original BSS allocation method. When comparing our robust optimal treatment allocation regime with the original allocation rule, our method consistently outperforms the original rule in terms of both minimal and maximal equilibrium welfare.  As depicted in Figure \ref{fig:comparisionlower}, improvements in welfare with minimal equilibrium vary from $20\%$ to $270\%$. Notice that the welfare at the minimal equilibrium of our approach surpasses the maximal welfare under the other two approaches. This suggests that the information diffusion facilitated by the original treatment may not have significantly impacted adoption rates. Additionally, \citet{wang2024graph} finds that households with higher centrality, such as the leaders selected by BSS, tend to have a lower borrowing probability compared to less central households. It is possible that more central households have greater access to alternative borrowing sources within their networks, thus diminishing their need for microfinance, and reducing the spillover effects through strategic interactions.

\section{Extension}\label{sec:complete}
\subsection{Complete Information Game}
In a complete information setting, units observe all the characteristics of other units participating in the game. This means that units are informed of others' choices before making their own decisions, allowing them to play the best response to the observed actions rather than basing their actions on beliefs, as is common in a private information setting.
As a consequence, unit $i$'s decision rule is:
\begin{equation}
        Y_i=\mathds{1}\Big\{U_i(1,Y_{-i},X,D,G)\geq 0\Big\},\quad \forall i\in\mathcal{N}.
    \end{equation}
One main distinction from incomplete information settings is that the solution concept transitions to a pure-strategy Nash equilibrium. A pure-strategy Nash equilibrium is defined by a set of actions $y^*=\{y_1^*,...,y_N^*\}$ such that
\begin{equation}
    U_i(y_i^*,y_{-i}^*,X,D,G)\geq U_i(y_i',y_{-i}^*,X,D,G) 
\end{equation}
for any $y_i'\in\mathcal{Y}$ and for all $i\in\mathcal{N}$. We denote the set of all such equilibria as $\Sigma(X,D,G,\varepsilon)\coloneqq\{y^*\}$, given covariates $X$, treatment allocation $D$, network structure $G$, and the idiosyncratic shock $\varepsilon$. To simplify the notation, we subsequently refer to it as $\Sigma(\varepsilon)$. Let $\xi:\Sigma\rightarrow[0,1]$ denote the probability distribution over equilibria, and let $\Delta(\Sigma)\coloneqq\{\xi:\sum_{y^*\in\Sigma}\xi(y^*)=1\}$ denote the set of all the probability distributions.

In scenarios with strategic complementarity, there exists a maximal and a minimal Nash equilibrium, denoted by $\widebar{y}^*$ and $\underline{y}^*$. For our counterfactual analysis, which is analogous to the framework established in Theorem \ref{thm:equality} under an incomplete information setting, we propose the following:
\begin{proposition}\label{pro:complete}
For a supermodular game, the least favorable equilibrium selection rule $\underline{\lambda}$ and the most favorable equilibrium selection rule $\widebar{\lambda}$ are:
    \begin{equation}
        \underline{\lambda} \coloneqq \delta_{\underline{y}^*}, \quad \widebar{\lambda} \coloneqq \delta_{\widebar{y}^*},
    \end{equation}
    where $\delta_{y}$ is the Dirac measure on $\Sigma$. 
In addition, the following conditions are satisfied:
\begin{equation}
    \begin{split}
\inf_{\lambda\in\Lambda}\sum_{i=1}^N \Pr(Y_i=1\vert X,D,G,\lambda)
      =\sum_{i=1}^N\inf_{\lambda\in\Lambda} \Pr(Y_i=1\vert X,D,G,\lambda),
    \end{split}
\end{equation}
\begin{equation}
    \begin{split}
\sup_{\lambda\in\Lambda}\sum_{i=1}^N \Pr(Y_i=1\vert X,D,G,\lambda)
      =\sum_{i=1}^N\sup_{\lambda\in\Lambda} \Pr(Y_i=1\vert X,D,G,\lambda).
    \end{split}
\end{equation}   
\end{proposition}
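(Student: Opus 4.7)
The plan is to mirror the argument used in Theorem \ref{thm:equality}, with the substitution that pure-strategy Nash equilibria replace Bayesian Nash equilibria represented in conditional choice probability space. The crucial structural property I will exploit is that, for each realization of $\varepsilon$, the set of equilibria $\Sigma(\varepsilon)$ forms a complete lattice with both a least and a greatest element.

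First, I would verify that the game is supermodular under complete information. Given $\beta_{ij}\geq 0$, the utility $U_i(y,X,D,G)=(\alpha_i-\varepsilon_i)y_i+\sum_{j\neq i}\beta_{ij}y_iy_j$ is trivially supermodular in $y_i$ and has increasing differences in $(y_i,y_{-i})$. Since $\{0,1\}^N$ is a complete lattice, Tarski's fixed point theorem applied to the best-response correspondence yields a least pure-strategy Nash equilibrium $\underline{y}^*(\varepsilon)$ and a greatest pure-strategy Nash equilibrium $\widebar{y}^*(\varepsilon)$ for each $\varepsilon$; measurability of these selections in $\varepsilon$ follows from monotonicity and the finiteness of the action space, so that $\delta_{\underline{y}^*}$ and $\delta_{\widebar{y}^*}$ are valid elements of $\Lambda$.

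Next, I would establish the key monotonicity observation. For any $y^*\in\Sigma(\varepsilon)$ we have $\underline{y}_i^*(\varepsilon)\leq y_i^*\leq \widebar{y}_i^*(\varepsilon)$ for every $i$, and since $Y_i\in\{0,1\}$ the indicator $\mathds{1}\{y_i^*=1\}=y_i^*$ is monotone in $y^*$ under the product order. Averaging against any $\xi\in\Delta(\Sigma(\varepsilon))$ and then integrating over $\varepsilon$ gives, for every $\lambda\in\Lambda$ and every unit $i$,
\begin{equation*}
\mathbb{E}_{\varepsilon}\bigl[\underline{y}_i^*(\varepsilon)\bigr]\;\leq\; \Pr(Y_i=1\mid X,D,G,\lambda)\;\leq\;\mathbb{E}_{\varepsilon}\bigl[\widebar{y}_i^*(\varepsilon)\bigr].
\end{equation*}
Both extremes are attained simultaneously across all $i$ by the Dirac mechanisms $\underline{\lambda}=\delta_{\underline{y}^*}$ and $\widebar{\lambda}=\delta_{\widebar{y}^*}$, proving the first part of the proposition together with the pointwise extremal expressions.

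Finally, the aggregation identities follow from this simultaneous attainment. The inequalities $\inf_{\lambda}\sum_i \Pr(Y_i=1\mid\cdot)\leq \sum_i \inf_{\lambda}\Pr(Y_i=1\mid\cdot)$ and its supremum counterpart always hold by interchange of order. The reverse directions hold because the single mechanism $\underline{\lambda}$ (respectively $\widebar{\lambda}$) realizes every unit's pointwise lower (respectively upper) bound at once, so the right-hand side of the inequality is achieved on the left. The main obstacle, and the step I would treat most carefully, is exactly this simultaneous attainment property: it is special to supermodular games and breaks down without complementarity, since without a least and greatest equilibrium the pointwise lower bound for unit $i$ may require a different $\lambda$ than the pointwise lower bound for unit $j$, rendering the aggregated bound strictly loose.
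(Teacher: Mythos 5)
Your proposal is correct and follows essentially the same route as the paper, which itself states that Proposition \ref{pro:complete} is proved by mirroring the argument for Theorem \ref{thm:equality}: existence of a least and greatest pure-strategy Nash equilibrium for each $\varepsilon$ via supermodularity and Tarski, the sandwich $\underline{y}_i^*(\varepsilon)\leq y_i^*\leq \widebar{y}_i^*(\varepsilon)$ for every equilibrium, and simultaneous attainment of all units' pointwise bounds by the Dirac mechanisms, which yields the interchange of $\inf/\sup$ with the sum. Your explicit remark on measurability of the extremal selections is a small addition the paper leaves implicit, but it does not change the argument.
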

The proof of Proposition \ref{pro:complete} mirrors that of Theorem \ref{thm:equality}, with the primary modification being the substitution of Bayesian Nash equilibrium with Nash equilibrium. Computing the conditional choice probability differs from the previous analysis since it is no longer a simultaneous equation system. With complete information, the conditional choice probability is given by:
\begin{equation}
   \begin{split}
       & \Pr(Y_i=1\vert X,D,G,\underline{\lambda}) = \int 1\left\{\exists y_{-i}:(1,y_{-i})\in \Sigma(\varepsilon) \:\text{and}\: \forall y_{-i}, (0,y_{-i})\notin \Sigma(\varepsilon)   \right\}dF_{\varepsilon}
   \end{split}
\end{equation}
\begin{equation}
   \begin{split}
        \Pr(Y_i=1\vert X,D,G,\overline{\lambda}) = \int 1\left\{\exists y_{-i}:(1,y_{-i})\in \Sigma(\varepsilon) \right\}dF_{\varepsilon}
   \end{split}
\end{equation}
The above expression is hard to compute. To further simplify the computation, let us define event $A\coloneqq\left\{\exists y_{-i}:(1,y_{-i})\in \Sigma(\varepsilon) \right\}$ and event $B\coloneqq \left\{\exists y_{-i}, (0,y_{-i})\in \Sigma(\varepsilon)   \right\}$. Given $\Pr(A\cap B^c\vert X,D,G) = \Pr(A\cup B\vert X,D,G)-\Pr(B\vert X,D,G)$, we hence have:
\begin{equation}
    \Pr(Y_i=1\vert X,D,G,\underline{\lambda}) = 1-\Pr(Y_i=0\vert X,D,G,\overline{\lambda}).
\end{equation}
As a consequence, it is enough to compute $\Pr(Y_i=1\vert X,D,G,\widebar{\lambda})$ and $\Pr(Y_i=0\vert X,D,G,\widebar{\lambda})$, which are given by:
\begin{equation}\label{eq:uppermax}
    \begin{split}
        \Pr(Y_i=1\vert X,D,G,\overline{\lambda}) 
        =\int 1\left\{\max_{y_{-i}:(1,y_{-i})\in \Sigma(\varepsilon)}U_i(1,y_{-i},X,D,G)\geq 0\right\}dF_\varepsilon,
    \end{split}
\end{equation}
\begin{equation}\label{eq:uppermin}
    \begin{split}
        \Pr(Y_i=0\vert X,D,G,\overline{\lambda}) 
        =\int 1\left\{\min_{y_{-i}:(1,y_{-i})\in \Sigma(\varepsilon)}U_i(1,y_{-i},X,D,G)<0\right\}dF_\varepsilon.
    \end{split}
\end{equation}
Let us define \(i^C\) as the complement of unit \(i\) and their neighbor set, i.e., \(i^C \coloneqq \mathcal{N} \setminus (\mathcal{N}_i \cup \{i\})\). Define \(y_{\mathcal{N}_i}\) as the collection of neighbors' choices of unit \(i\). Consequently, \(y_{-i}\) can be expressed as \((y_{\mathcal{N}_i}, y_{i^C})\), where \(y_{i^C}\) represents the choices of units in \(i^C\). For the optimization problems defined in Eq.\ref{eq:uppermax} (maximization) and Eq.\ref{eq:uppermin} (minimization), it is necessary to explore all possible equilibria for each value of \(\varepsilon\) within the network game. Given our utility function specification, the choice of unit \(i\) depends only on \(k \in i^C\) through the choices of units directly connected with \(i\). Thus, we can simplify the maximization problem in Eq.\ref{eq:uppermax} to:
\begin{equation}
    \max_{y_{\mathcal{N}_i}, y_{i^C}} U_i(1, y_{\mathcal{N}_i}, X, D, G)
\end{equation}
with constraints:
\begin{equation}\label{eq:con1}
    y_j = 1\{U_j(1, y_{\mathcal{N}_j \setminus \{i\}}, X, D, G) \geq 0\}, \forall j \in \mathcal{N}_i,
\end{equation}
\begin{equation}\label{eq:con2}
    y_k = 1\{U_k(1, y_{\mathcal{N}_k}, X, D, G) \geq 0\}, \forall k \in i^C.
\end{equation}
These constraints (Eq.\ref{eq:con1} and Eq.\ref{eq:con2}) ensure that \((1, y_{\mathcal{N}_i}, y_{i^C})\) forms a Nash equilibrium. In the optimization, \(U_i(1, y_{\mathcal{N}_i}, X, D, G)\) does not depend directly on \(y_{i^C}\) but needs to confirm that \((1, y_{\mathcal{N}_{i}}, y_{i^C})\) is a Nash equilibrium for any given \(y_{\mathcal{N}_{i}}\).
If for some \(y_{\mathcal{N}_{i}}\), multiple \(y_{i^C}\) ensure \(y_{-i}\) as a Nash equilibrium, the existence of any \(y_{i^C}\) that satisfies this condition is sufficient for our purposes. We search for \(y_{\mathcal{N}_{i}} \in \mathcal{Y}^{\vert \mathcal{N}_i \vert}\) that maximizes \(U_i(y_{\mathcal{N}_i}, X, D, G)\), denoted as \(y^*_{\mathcal{N}_{i}}\). Given the supermodular nature of our game, where neighbors' choices are strategic complements to unit \(j\)'s choice, we select \(y_{i^C}\) such that \((y_{\mathcal{N}_{i}}, y_{i^C})\) constitutes the largest Nash equilibrium for the given \(y_{\mathcal{N}_{i}}\), leveraging the increasing monotonicity between \(y_{\mathcal{N}_{i}}\) and \(y_{i^C}\). We then search the $y_{\mathcal{N}_{i}}$ that maximizes the objective function.

%\subsection{Unobserved Heterogeneity}

%\subsection{Restricted Policy Class}

%\subsection{Minimax Regret}

\section{Conclusion}\label{sec:conclude}

This paper proposes a method for constructing individualized treatment allocations to maximize equilibrium welfare robust to the presence of multiple equilibria in large simultaneous decision games with complementarity. Our approach, takes into account the inherent complexity introduced by the presence of multiple Nash equilibria, and the resulting incompleteness. We refrain from making assumptions about the equilibrium selection mechanism, which leads to both analytical and numerical challenges in evaluating counterfactual equilibrium welfare. Due to the inherent uncertainties in our model, we use the maximin welfare criterion to evaluate treatment allocation rules. This leads to treatment allocation rules that are optimized to maximize the worst-case equilibrium social welfare, ensuring their robustness. The use of a greedy optimization algorithm further enhances the applicability of our approach.

We acknowledge that several questions remain open, and there are multiple ways in which our work can be extended. First, we have not explored counterfactual analysis within the broader framework of general simultaneous decision games. Second, although we parametrize the utility function and the distribution of idiosyncratic shock in this work, adopting a non-parametric utility function and a non-parametric distribution of idiosyncratic shock could significantly enhance the robustness and applicability of our approach. Third, while we have assumed independence among idiosyncratic shocks, recent literature, such as \citet{grieco2014discrete} and \citet{de2020testable}, have begun to relax this assumption, suggesting another avenue for refining our model.

\begin{singlespace}
\bibliographystyle{ecta}  
\bibliography{ref}

\begin{thebibliography}{121}
\newcommand{\enquote}[1]{``#1''}
\expandafter\ifx\csname natexlab\endcsname\relax\def\natexlab#1{#1}\fi

\bibitem[\protect\citeauthoryear{Adjaho and Christensen}{Adjaho and Christensen}{2022}]{adjaho2022externally}
\textsc{Adjaho, C. and T.~Christensen} (2022): \enquote{Externally Valid Treatment Choice,} \emph{arXiv preprint arXiv:2205.05561}.

\bibitem[\protect\citeauthoryear{Aguirregabiria and Mira}{Aguirregabiria and Mira}{2007}]{aguirregabiria2007sequential}
\textsc{Aguirregabiria, V. and P.~Mira} (2007): \enquote{Sequential estimation of dynamic discrete games,} \emph{Econometrica}, 75, 1--53.

\bibitem[\protect\citeauthoryear{Aguirregabiria and Mira}{Aguirregabiria and Mira}{2010}]{aguirregabiria2010dynamic}
---\hspace{-.1pt}---\hspace{-.1pt}--- (2010): \enquote{Dynamic discrete choice structural models: A survey,} \emph{Journal of Econometrics}, 156, 38--67.

\bibitem[\protect\citeauthoryear{Aguirregabiria and Mira}{Aguirregabiria and Mira}{2019}]{aguirregabiria2019identification}
---\hspace{-.1pt}---\hspace{-.1pt}--- (2019): \enquote{Identification of games of incomplete information with multiple equilibria and unobserved heterogeneity,} \emph{Quantitative Economics}, 10, 1659--1701.

\bibitem[\protect\citeauthoryear{Alvarez, Argente, Lippi, M{\'e}ndez, and Van~Patten}{Alvarez et~al.}{2023}]{alvarez2023strategic}
\textsc{Alvarez, F.~E., D.~Argente, F.~Lippi, E.~M{\'e}ndez, and D.~Van~Patten} (2023): \enquote{Strategic Complementarities in a Dynamic Model of Technology Adoption: P2P Digital Payments,} Tech. rep., National Bureau of Economic Research.

\bibitem[\protect\citeauthoryear{Alvarez, Lippi, and Souganidis}{Alvarez et~al.}{2022}]{alvarez2022price}
\textsc{Alvarez, F.~E., F.~Lippi, and T.~Souganidis} (2022): \enquote{Price setting with strategic complementarities as a mean field game,} Tech. rep., National Bureau of Economic Research.

\bibitem[\protect\citeauthoryear{Ananth}{Ananth}{2020}]{ananth2020optimal}
\textsc{Ananth, A.} (2020): \enquote{Optimal treatment assignment rules on networked populations,} Tech. rep., working paper.

\bibitem[\protect\citeauthoryear{Arcidiacono and Miller}{Arcidiacono and Miller}{2011}]{arcidiacono2011conditional}
\textsc{Arcidiacono, P. and R.~A. Miller} (2011): \enquote{Conditional choice probability estimation of dynamic discrete choice models with unobserved heterogeneity,} \emph{Econometrica}, 79, 1823--1867.

\bibitem[\protect\citeauthoryear{Athey and Wager}{Athey and Wager}{2021}]{athey2021policy}
\textsc{Athey, S. and S.~Wager} (2021): \enquote{Policy learning with observational data,} \emph{Econometrica}, 89, 133--161.

\bibitem[\protect\citeauthoryear{Bajari, Hong, Krainer, and Nekipelov}{Bajari et~al.}{2010{\natexlab{a}}}]{bajari2010estimating}
\textsc{Bajari, P., H.~Hong, J.~Krainer, and D.~Nekipelov} (2010{\natexlab{a}}): \enquote{Estimating static models of strategic interactions,} \emph{Journal of Business \& Economic Statistics}, 28, 469--482.

\bibitem[\protect\citeauthoryear{Bajari, Hong, and Ryan}{Bajari et~al.}{2010{\natexlab{b}}}]{bajari2010identification}
\textsc{Bajari, P., H.~Hong, and S.~P. Ryan} (2010{\natexlab{b}}): \enquote{Identification and estimation of a discrete game of complete information,} \emph{Econometrica}, 78, 1529--1568.

\bibitem[\protect\citeauthoryear{Ballester, Calv{\'o}-Armengol, and Zenou}{Ballester et~al.}{2006}]{ballester2006s}
\textsc{Ballester, C., A.~Calv{\'o}-Armengol, and Y.~Zenou} (2006): \enquote{Who's who in networks. Wanted: The key player,} \emph{Econometrica}, 74, 1403--1417.

\bibitem[\protect\citeauthoryear{Banerjee, Chandrasekhar, Duflo, and Jackson}{Banerjee et~al.}{2013}]{banerjee2013diffusion}
\textsc{Banerjee, A., A.~G. Chandrasekhar, E.~Duflo, and M.~O. Jackson} (2013): \enquote{The Diffusion of Microfinance,} \emph{Science}, 341, 1236498.

\bibitem[\protect\citeauthoryear{Berge}{Berge}{1963}]{berge1963}
\textsc{Berge, C.} (1963): \emph{Topological Spaces}, translated by E. M. Patterson (Oliver \& Boyd).

\bibitem[\protect\citeauthoryear{Berger}{Berger}{1994}]{berger1994}
\textsc{Berger, J.~O.} (1994): \emph{The Robust Bayesian Viewpoint}, Robustness of Bayesian Analysis.

\bibitem[\protect\citeauthoryear{Bian, Buhmann, Krause, and Tschiatschek}{Bian et~al.}{2017}]{bian2017guarantees}
\textsc{Bian, A.~A., J.~M. Buhmann, A.~Krause, and S.~Tschiatschek} (2017): \enquote{Guarantees for greedy maximization of non-submodular functions with applications,} in \emph{International conference on machine learning}, PMLR, 498--507.

\bibitem[\protect\citeauthoryear{Bloom, Orr, Bell, Cave, Doolittle, Lin, and Bos}{Bloom et~al.}{1997}]{bloom1997benefits}
\textsc{Bloom, H.~S., L.~L. Orr, S.~H. Bell, G.~Cave, F.~Doolittle, W.~Lin, and J.~M. Bos} (1997): \enquote{The benefits and costs of JTPA Title II-A programs: Key findings from the National Job Training Partnership Act study,} \emph{Journal of human resources}, 549--576.

\bibitem[\protect\citeauthoryear{Bresnahan and Reiss}{Bresnahan and Reiss}{1991}]{bresnahan1991empirical}
\textsc{Bresnahan, T.~F. and P.~C. Reiss} (1991): \enquote{Empirical models of discrete games,} \emph{Journal of Econometrics}, 48, 57--81.

\bibitem[\protect\citeauthoryear{Brock and Durlauf}{Brock and Durlauf}{2001}]{brock2001discrete}
\textsc{Brock, W.~A. and S.~N. Durlauf} (2001): \enquote{Discrete choice with social interactions,} \emph{The Review of Economic Studies}, 68, 235--260.

\bibitem[\protect\citeauthoryear{Brooks and Du}{Brooks and Du}{2024}]{brooks2024structure}
\textsc{Brooks, B. and S.~Du} (2024): \enquote{On the structure of informationally robust optimal mechanisms,} \emph{Econometrica}, 92, 1391--1438.

\bibitem[\protect\citeauthoryear{Brouwer}{Brouwer}{1911}]{brouwer1911abbildung}
\textsc{Brouwer, L. E.~J.} (1911): \enquote{{\"U}ber abbildung von mannigfaltigkeiten,} \emph{Mathematische annalen}, 71, 97--115.

\bibitem[\protect\citeauthoryear{Bulow, Geanakoplos, and Klemperer}{Bulow et~al.}{1985}]{bulow1985multimarket}
\textsc{Bulow, J.~I., J.~D. Geanakoplos, and P.~D. Klemperer} (1985): \enquote{Multimarket oligopoly: Strategic substitutes and complements,} \emph{Journal of Political economy}, 93, 488--511.

\bibitem[\protect\citeauthoryear{Canen and Song}{Canen and Song}{2020}]{canen2020decomposition}
\textsc{Canen, N. and K.~Song} (2020): \enquote{A decomposition approach to counterfactual analysis in game-theoretic models,} \emph{arXiv preprint arXiv:2010.08868}.

\bibitem[\protect\citeauthoryear{Chamberlain}{Chamberlain}{2000{\natexlab{a}}}]{chamberlain2000econometric}
\textsc{Chamberlain, G.} (2000{\natexlab{a}}): \enquote{Econometric applications of maxmin expected utility,} \emph{Journal of Applied Econometrics}, 15, 625--644.

\bibitem[\protect\citeauthoryear{Chamberlain}{Chamberlain}{2000{\natexlab{b}}}]{chamberlain2000econometrics}
---\hspace{-.1pt}---\hspace{-.1pt}--- (2000{\natexlab{b}}): \enquote{Econometrics and decision theory,} \emph{Journal of Econometrics}, 95, 255--283.

\bibitem[\protect\citeauthoryear{Chamberlain}{Chamberlain}{2011}]{chamberlain2011bayesian}
---\hspace{-.1pt}---\hspace{-.1pt}--- (2011): \enquote{{Bayesian Aspects of Treatment Choice},} \emph{{The Oxford Handbook of Bayesian Econometrics}}.

\bibitem[\protect\citeauthoryear{Chamberlain}{Chamberlain}{2020}]{chamberlain2020robust}
---\hspace{-.1pt}---\hspace{-.1pt}--- (2020): \enquote{Robust decision theory and econometrics,} \emph{Annual Review of Economics}, 12, 239--271.

\bibitem[\protect\citeauthoryear{Chen and Gazzale}{Chen and Gazzale}{2004}]{chen2004does}
\textsc{Chen, Y. and R.~Gazzale} (2004): \enquote{When does learning in games generate convergence to Nash equilibria? The role of supermodularity in an experimental setting,} \emph{American Economic Review}, 94, 1505--1535.

\bibitem[\protect\citeauthoryear{Chernozhukov, Escanciano, Ichimura, Newey, and Robins}{Chernozhukov et~al.}{2022}]{chernozhukov2022locally}
\textsc{Chernozhukov, V., J.~C. Escanciano, H.~Ichimura, W.~K. Newey, and J.~M. Robins} (2022): \enquote{Locally robust semiparametric estimation,} \emph{Econometrica}, 90, 1501--1535.

\bibitem[\protect\citeauthoryear{Chesher and Rosen}{Chesher and Rosen}{2017}]{chesher2017generalized}
\textsc{Chesher, A. and A.~M. Rosen} (2017): \enquote{Generalized instrumental variable models,} \emph{Econometrica}, 85, 959--989.

\bibitem[\protect\citeauthoryear{Chesher and Rosen}{Chesher and Rosen}{2020}]{chesher2020structural}
---\hspace{-.1pt}---\hspace{-.1pt}--- (2020): \enquote{Structural modeling of simultaneous discrete choice,} Tech. rep., cemmap Working paper.

\bibitem[\protect\citeauthoryear{Christakis, Fowler, Imbens, and Kalyanaraman}{Christakis et~al.}{2020}]{christakis2020empirical}
\textsc{Christakis, N., J.~Fowler, G.~W. Imbens, and K.~Kalyanaraman} (2020): \enquote{An empirical model for strategic network formation,} in \emph{The Econometric Analysis of Network Data}, Elsevier, 123--148.

\bibitem[\protect\citeauthoryear{Christensen and Connault}{Christensen and Connault}{2023}]{christensen2023counterfactual}
\textsc{Christensen, T. and B.~Connault} (2023): \enquote{Counterfactual sensitivity and robustness,} \emph{Econometrica}, 91, 263--298.

\bibitem[\protect\citeauthoryear{Christensen, Moon, and Schorfheide}{Christensen et~al.}{2022}]{christensen2022optimal}
\textsc{Christensen, T., H.~R. Moon, and F.~Schorfheide} (2022): \enquote{Optimal Discrete Decisions when Payoffs are Partially Identified,} \emph{arXiv preprint arXiv:2204.11748}.

\bibitem[\protect\citeauthoryear{Ciliberto and Tamer}{Ciliberto and Tamer}{2009}]{ciliberto2009market}
\textsc{Ciliberto, F. and E.~Tamer} (2009): \enquote{Market structure and multiple equilibria in airline markets,} \emph{Econometrica}, 77, 1791--1828.

\bibitem[\protect\citeauthoryear{Das and Kempe}{Das and Kempe}{2011}]{das2011submodular}
\textsc{Das, A. and D.~Kempe} (2011): \enquote{Submodular meets spectral: Greedy algorithms for subset selection, sparse approximation and dictionary selection,} \emph{arXiv preprint arXiv:1102.3975}.

\bibitem[\protect\citeauthoryear{De~Paula}{De~Paula}{2013}]{de2013econometric}
\textsc{De~Paula, A.} (2013): \enquote{Econometric analysis of games with multiple equilibria,} \emph{Annu. Rev. Econ.}, 5, 107--131.

\bibitem[\protect\citeauthoryear{De~Paula, Rasul, and Souza}{De~Paula et~al.}{2024}]{de2024identifying}
\textsc{De~Paula, A., I.~Rasul, and P.~Souza} (2024): \enquote{Identifying Network Ties from Panel Data: Theory and an Application to Tax Competition,} \emph{Review of Economic Studies}, forthcoming.

\bibitem[\protect\citeauthoryear{De~Paula, Richards-Shubik, and Tamer}{De~Paula et~al.}{2018}]{de2018identifying}
\textsc{De~Paula, {\'A}., S.~Richards-Shubik, and E.~Tamer} (2018): \enquote{Identifying preferences in networks with bounded degree,} \emph{Econometrica}, 86, 263--288.

\bibitem[\protect\citeauthoryear{de~Paula and Tang}{de~Paula and Tang}{2012}]{de2012inference}
\textsc{de~Paula, A. and X.~Tang} (2012): \enquote{Inference of signs of interaction effects in simultaneous games with incomplete information,} \emph{Econometrica}, 80, 143--172.

\bibitem[\protect\citeauthoryear{de~Paula and Tang}{de~Paula and Tang}{2020}]{de2020testable}
---\hspace{-.1pt}---\hspace{-.1pt}--- (2020): \enquote{Testable implications of multiple equilibria in discrete games with correlated types,} \emph{arXiv preprint arXiv:2012.00787}.

\bibitem[\protect\citeauthoryear{Dehejia}{Dehejia}{2005}]{dehejia2005program}
\textsc{Dehejia, R.~H.} (2005): \enquote{Program evaluation as a decision problem,} \emph{Journal of Econometrics}, 125, 141--173.

\bibitem[\protect\citeauthoryear{Denneberg}{Denneberg}{1994}]{denneberg1994non}
\textsc{Denneberg, D.} (1994): \emph{Non-additive measure and integral}, vol.~27, Springer Science \& Business Media.

\bibitem[\protect\citeauthoryear{Echenique and Komunjer}{Echenique and Komunjer}{2009}]{echenique2009testing}
\textsc{Echenique, F. and I.~Komunjer} (2009): \enquote{Testing models with multiple equilibria by quantile methods,} \emph{Econometrica}, 77, 1281--1297.

\bibitem[\protect\citeauthoryear{Fernandez, Olea, Qiu, Stoye, and Tinda}{Fernandez et~al.}{2024}]{fernandez2024robust}
\textsc{Fernandez, A.~A., J.~L.~M. Olea, C.~Qiu, J.~Stoye, and S.~Tinda} (2024): \enquote{Robust Bayes Treatment Choice with Partial Identification,} \emph{working paper}.

\bibitem[\protect\citeauthoryear{Fu, Haghpanah, Hartline, and Kleinberg}{Fu et~al.}{2021}]{fu2021full}
\textsc{Fu, H., N.~Haghpanah, J.~Hartline, and R.~Kleinberg} (2021): \enquote{Full surplus extraction from samples,} \emph{Journal of Economic Theory}, 193, 105230.

\bibitem[\protect\citeauthoryear{Fudenberg and Levine}{Fudenberg and Levine}{1998}]{fudenberg1998theory}
\textsc{Fudenberg, D. and D.~K. Levine} (1998): \emph{The theory of learning in games}, vol.~2, MIT press.

\bibitem[\protect\citeauthoryear{Galeotti, Golub, and Goyal}{Galeotti et~al.}{2020}]{galeotti2020targeting}
\textsc{Galeotti, A., B.~Golub, and S.~Goyal} (2020): \enquote{Targeting interventions in networks,} \emph{Econometrica}, 88, 2445--2471.

\bibitem[\protect\citeauthoryear{Galeotti, Goyal, Jackson, Vega-Redondo, and Yariv}{Galeotti et~al.}{2010}]{galeotti2010network}
\textsc{Galeotti, A., S.~Goyal, M.~O. Jackson, F.~Vega-Redondo, and L.~Yariv} (2010): \enquote{Network games,} \emph{The review of economic studies}, 77, 218--244.

\bibitem[\protect\citeauthoryear{Giacomini and Kitagawa}{Giacomini and Kitagawa}{2021}]{giacomini2021}
\textsc{Giacomini, R. and T.~Kitagawa} (2021): \enquote{Robust Bayesian inference for set-identified models,} \emph{Econometrica}, 89, 1519--1556.

\bibitem[\protect\citeauthoryear{Giacomini, Kitagawa, and Read}{Giacomini et~al.}{2021}]{giacomini2021robust}
\textsc{Giacomini, R., T.~Kitagawa, and M.~Read} (2021): \enquote{Robust Bayesian analysis for econometrics,} \emph{CEPR Discussion Paper No. DP16488}.

\bibitem[\protect\citeauthoryear{Gilboa}{Gilboa}{2009}]{gilboa2009theory}
\textsc{Gilboa, I.} (2009): \emph{Theory of decision under uncertainty}, vol.~45, Cambridge university press.

\bibitem[\protect\citeauthoryear{Goldsmith-Pinkham and Imbens}{Goldsmith-Pinkham and Imbens}{2013}]{goldsmith2013social}
\textsc{Goldsmith-Pinkham, P. and G.~W. Imbens} (2013): \enquote{Social networks and the identification of peer effects,} \emph{Journal of Business \& Economic Statistics}, 31, 253--264.

\bibitem[\protect\citeauthoryear{Gon{\c{c}}alves and Furtado}{Gon{\c{c}}alves and Furtado}{2020}]{gonccalves2020statistical}
\textsc{Gon{\c{c}}alves, D. and B.~Furtado} (2020): \enquote{Statistical mechanism design: robust pricing and reliable projections,} \emph{working paper}.

\bibitem[\protect\citeauthoryear{Graham and Pelican}{Graham and Pelican}{2023}]{graham2023scenario}
\textsc{Graham, B.~S. and A.~Pelican} (2023): \enquote{Scenario Sampling for Large Supermodular Games,} Tech. rep., National Bureau of Economic Research.

\bibitem[\protect\citeauthoryear{Grieco}{Grieco}{2014}]{grieco2014discrete}
\textsc{Grieco, P.~L.} (2014): \enquote{Discrete games with flexible information structures: An application to local grocery markets,} \emph{The RAND Journal of Economics}, 45, 303--340.

\bibitem[\protect\citeauthoryear{Gu, Russell, and Stringham}{Gu et~al.}{2022}]{gu2022counterfactual}
\textsc{Gu, J., T.~Russell, and T.~Stringham} (2022): \enquote{Counterfactual Identification and Latent Space Enumeration in Discrete Outcome Models,} \emph{Available at SSRN 4188109}.

\bibitem[\protect\citeauthoryear{Guren}{Guren}{2018}]{guren2018house}
\textsc{Guren, A.~M.} (2018): \enquote{House price momentum and strategic complementarity,} \emph{Journal of Political Economy}, 126, 1172--1218.

\bibitem[\protect\citeauthoryear{Hansen and Sargent}{Hansen and Sargent}{2001}]{hansen2001robust}
\textsc{Hansen, L.~P. and T.~J. Sargent} (2001): \enquote{Robust control and model uncertainty,} \emph{American Economic Review}, 91, 60--66.

\bibitem[\protect\citeauthoryear{Hansen and Sargent}{Hansen and Sargent}{2008}]{hansen2008robustness}
---\hspace{-.1pt}---\hspace{-.1pt}--- (2008): \emph{Robustness}, Princeton university press.

\bibitem[\protect\citeauthoryear{Harsanyi}{Harsanyi}{1967}]{harsanyi1967games}
\textsc{Harsanyi, J.~C.} (1967): \enquote{Games with incomplete information played by “Bayesian” players, I--III Part I. The basic model,} \emph{Management science}, 14, 159--182.

\bibitem[\protect\citeauthoryear{Hirano and Porter}{Hirano and Porter}{2009}]{hirano2009asymptotics}
\textsc{Hirano, K. and J.~R. Porter} (2009): \enquote{Asymptotics for statistical treatment rules,} \emph{Econometrica}, 77, 1683--1701.

\bibitem[\protect\citeauthoryear{Hirano and Porter}{Hirano and Porter}{2020}]{hirano2020asymptotic}
---\hspace{-.1pt}---\hspace{-.1pt}--- (2020): \enquote{Asymptotic analysis of statistical decision rules in econometrics,} in \emph{Handbook of Econometrics}, Elsevier, vol.~7, 283--354.

\bibitem[\protect\citeauthoryear{Hoeffding}{Hoeffding}{1963}]{hoeffding1994probability}
\textsc{Hoeffding, W.} (1963): \enquote{Probability inequalities for sums of bounded random variables,} \emph{Journal of the American Statistical Association}, 58, 13--30.

\bibitem[\protect\citeauthoryear{Hotz and Miller}{Hotz and Miller}{1993}]{hotz1993conditional}
\textsc{Hotz, V.~J. and R.~A. Miller} (1993): \enquote{Conditional choice probabilities and the estimation of dynamic models,} \emph{The Review of Economic Studies}, 60, 497--529.

\bibitem[\protect\citeauthoryear{Hsiao}{Hsiao}{2022}]{hsiao2022educational}
\textsc{Hsiao, A.} (2022): \enquote{Educational Investment in Spatial Equilibrium: Evidence from Indonesia,} Tech. rep., Working Paper.

\bibitem[\protect\citeauthoryear{Jackson and Zenou}{Jackson and Zenou}{2015}]{jackson2015games}
\textsc{Jackson, M.~O. and Y.~Zenou} (2015): \enquote{Games on networks,} in \emph{Handbook of game theory with economic applications}, Elsevier, vol.~4, 95--163.

\bibitem[\protect\citeauthoryear{Jackson et~al.}{Jackson et~al.}{2008}]{jackson2008social}
\textsc{Jackson, M.~O. et~al.} (2008): \emph{Social and economic networks}, vol.~3, Princeton university press Princeton.

\bibitem[\protect\citeauthoryear{Jia}{Jia}{2008}]{jia2008happens}
\textsc{Jia, P.} (2008): \enquote{What happens when Wal-Mart comes to town: An empirical analysis of the discount retailing industry,} \emph{Econometrica}, 76, 1263--1316.

\bibitem[\protect\citeauthoryear{Jovanovic}{Jovanovic}{1989}]{jovanovic1989observable}
\textsc{Jovanovic, B.} (1989): \enquote{Observable implications of models with multiple equilibria,} \emph{Econometrica: Journal of the Econometric Society}, 1431--1437.

\bibitem[\protect\citeauthoryear{Kaido and Zhang}{Kaido and Zhang}{2023}]{kaido2023applications}
\textsc{Kaido, H. and Y.~Zhang} (2023): \enquote{Applications of Choquet expected utility to hypothesis testing with incompleteness,} \emph{The Japanese Economic Review}, 1--22.

\bibitem[\protect\citeauthoryear{Keane and Wasi}{Keane and Wasi}{2013}]{keane2013comparing}
\textsc{Keane, M. and N.~Wasi} (2013): \enquote{Comparing alternative models of heterogeneity in consumer choice behavior,} \emph{Journal of Applied Econometrics}, 28, 1018--1045.

\bibitem[\protect\citeauthoryear{Kitagawa, Sakaguchi, and Tetenov}{Kitagawa et~al.}{2021}]{kitagawa2021constrained}
\textsc{Kitagawa, T., S.~Sakaguchi, and A.~Tetenov} (2021): \enquote{Constrained Classification and Policy Learning,} \emph{arXiv preprint arXiv:2106.12886}.

\bibitem[\protect\citeauthoryear{Kitagawa and Tetenov}{Kitagawa and Tetenov}{2018}]{kitagawa2018should}
\textsc{Kitagawa, T. and A.~Tetenov} (2018): \enquote{Who should be treated? empirical welfare maximization methods for treatment choice,} \emph{Econometrica}, 86, 591--616.

\bibitem[\protect\citeauthoryear{Kitagawa and Wang}{Kitagawa and Wang}{2023{\natexlab{a}}}]{kitagawa2023individualized}
\textsc{Kitagawa, T. and G.~Wang} (2023{\natexlab{a}}): \enquote{Individualized Treatment Allocation in Sequential Network Games,} \emph{arXiv preprint arXiv:2302.05747}.

\bibitem[\protect\citeauthoryear{Kitagawa and Wang}{Kitagawa and Wang}{2023{\natexlab{b}}}]{KITAGAWA2023109}
---\hspace{-.1pt}---\hspace{-.1pt}--- (2023{\natexlab{b}}): \enquote{Who should get vaccinated? Individualized allocation of vaccines over SIR network,} \emph{Journal of Econometrics}, 232, 109--131.

\bibitem[\protect\citeauthoryear{Kitagawa, Wang, and Xu}{Kitagawa et~al.}{2022}]{kitagawa2022policy}
\textsc{Kitagawa, T., W.~Wang, and M.~Xu} (2022): \enquote{Policy choice in time series by empirical welfare maximization,} \emph{arXiv preprint arXiv:2205.03970}.

\bibitem[\protect\citeauthoryear{Kline and Tamer}{Kline and Tamer}{2020}]{kline2020econometric}
\textsc{Kline, B. and E.~Tamer} (2020): \enquote{Econometric analysis of models with social interactions,} in \emph{The Econometric Analysis of Network Data}, Elsevier, 149--181.

\bibitem[\protect\citeauthoryear{Kor and Zhou}{Kor and Zhou}{2022}]{kor2022welfare}
\textsc{Kor, R. and J.~Zhou} (2022): \enquote{Welfare and distributional effects of joint intervention in networks,} \emph{arXiv preprint arXiv:2206.03863}.

\bibitem[\protect\citeauthoryear{Lawrence, Seeger, and Herbrich}{Lawrence et~al.}{2002}]{lawrence2002fast}
\textsc{Lawrence, N., M.~Seeger, and R.~Herbrich} (2002): \enquote{Fast sparse Gaussian process methods: The informative vector machine,} \emph{Advances in neural information processing systems}, 15.

\bibitem[\protect\citeauthoryear{Lazzati}{Lazzati}{2015}]{lazzati2015treatment}
\textsc{Lazzati, N.} (2015): \enquote{Treatment response with social interactions: Partial identification via monotone comparative statics,} \emph{Quantitative Economics}, 6, 49--83.

\bibitem[\protect\citeauthoryear{Lee and Pakes}{Lee and Pakes}{2009}]{lee2009multiple}
\textsc{Lee, R.~S. and A.~Pakes} (2009): \enquote{Multiple equilibria and selection by learning in an applied setting,} \emph{Economics Letters}, 104, 13--16.

\bibitem[\protect\citeauthoryear{Leung}{Leung}{2015}]{leung2015two}
\textsc{Leung, M.~P.} (2015): \enquote{Two-step estimation of network-formation models with incomplete information,} \emph{Journal of Econometrics}, 188, 182--195.

\bibitem[\protect\citeauthoryear{Leung}{Leung}{2019}]{leung2019inference}
---\hspace{-.1pt}---\hspace{-.1pt}--- (2019): \enquote{Inference in models of discrete choice with social interactions using network data,} \emph{arXiv preprint arXiv:1911.07106}.

\bibitem[\protect\citeauthoryear{Lov{\'a}sz}{Lov{\'a}sz}{1983}]{lovasz1983submodular}
\textsc{Lov{\'a}sz, L.} (1983): \enquote{Submodular functions and convexity,} \emph{Mathematical Programming The State of the Art: Bonn 1982}, 235--257.

\bibitem[\protect\citeauthoryear{Manski}{Manski}{1993}]{manski1993identification}
\textsc{Manski, C.~F.} (1993): \enquote{Identification of endogenous social effects: The reflection problem,} \emph{The review of economic studies}, 60, 531--542.

\bibitem[\protect\citeauthoryear{Manski}{Manski}{2003}]{manski2003partial}
---\hspace{-.1pt}---\hspace{-.1pt}--- (2003): \emph{Partial identification of probability distributions}, Springer Science \& Business Media.

\bibitem[\protect\citeauthoryear{Manski}{Manski}{2004}]{manski2004statistical}
---\hspace{-.1pt}---\hspace{-.1pt}--- (2004): \enquote{Statistical treatment rules for heterogeneous populations,} \emph{Econometrica}, 72, 1221--1246.

\bibitem[\protect\citeauthoryear{Mathevet}{Mathevet}{2010}]{mathevet2010supermodular}
\textsc{Mathevet, L.} (2010): \enquote{Supermodular mechanism design,} \emph{Theoretical Economics}, 5, 403--443.

\bibitem[\protect\citeauthoryear{Mbakop and Tabord-Meehan}{Mbakop and Tabord-Meehan}{2021}]{mbakop2021model}
\textsc{Mbakop, E. and M.~Tabord-Meehan} (2021): \enquote{Model selection for treatment choice: Penalized welfare maximization,} \emph{Econometrica}, 89, 825--848.

\bibitem[\protect\citeauthoryear{Mele}{Mele}{2017}]{mele2017structural}
\textsc{Mele, A.} (2017): \enquote{A structural model of dense network formation,} \emph{Econometrica}, 85, 825--850.

\bibitem[\protect\citeauthoryear{Menzel}{Menzel}{2016}]{menzel2016inference}
\textsc{Menzel, K.} (2016): \enquote{Inference for games with many players,} \emph{The Review of Economic Studies}, 83, 306--337.

\bibitem[\protect\citeauthoryear{Milgrom and Roberts}{Milgrom and Roberts}{1990}]{milgrom1990rationalizability}
\textsc{Milgrom, P. and J.~Roberts} (1990): \enquote{Rationalizability, learning, and equilibrium in games with strategic complementarities,} \emph{Econometrica: Journal of the Econometric Society}, 1255--1277.

\bibitem[\protect\citeauthoryear{Milgrom and Roberts}{Milgrom and Roberts}{1991}]{milgrom1991adaptive}
---\hspace{-.1pt}---\hspace{-.1pt}--- (1991): \enquote{Adaptive and sophisticated learning in normal form games,} \emph{Games and economic Behavior}, 3, 82--100.

\bibitem[\protect\citeauthoryear{Milgrom and Shannon}{Milgrom and Shannon}{1994}]{milgrom1994monotone}
\textsc{Milgrom, P. and C.~Shannon} (1994): \enquote{Monotone comparative statics,} \emph{Econometrica: Journal of the Econometric Society}, 157--180.

\bibitem[\protect\citeauthoryear{Molinari}{Molinari}{2020}]{molinari2020microeconometrics}
\textsc{Molinari, F.} (2020): \enquote{Microeconometrics with partial identification,} \emph{Handbook of econometrics}, 7, 355--486.

\bibitem[\protect\citeauthoryear{Molinari and Rosen}{Molinari and Rosen}{2008}]{molinari2008identification}
\textsc{Molinari, F. and A.~Rosen} (2008): \enquote{The identification power of equilibrium in games: The supermodular case,} \emph{Journal of Business \& Economic Statistics}, 26, 297--302.

\bibitem[\protect\citeauthoryear{Morris}{Morris}{2000}]{morris2000contagion}
\textsc{Morris, S.} (2000): \enquote{Contagion,} \emph{The Review of Economic Studies}, 67, 57--78.

\bibitem[\protect\citeauthoryear{Morris, Oyama, and Takahashi}{Morris et~al.}{2024}]{morris2024implementation}
\textsc{Morris, S., D.~Oyama, and S.~Takahashi} (2024): \enquote{Implementation via Information Design in Binary-Action Supermodular Games,} \emph{Econometrica}, 92, 775--813.

\bibitem[\protect\citeauthoryear{Munro}{Munro}{2024}]{munro2024treatment}
\textsc{Munro, E.} (2024): \enquote{Treatment allocation with strategic agents,} \emph{Management Science}.

\bibitem[\protect\citeauthoryear{Munro, Wager, and Xu}{Munro et~al.}{2021}]{munro2021treatment}
\textsc{Munro, E., S.~Wager, and K.~Xu} (2021): \enquote{Treatment effects in market equilibrium,} \emph{arXiv preprint arXiv:2109.11647}.

\bibitem[\protect\citeauthoryear{Nakajima}{Nakajima}{2007}]{nakajima2007measuring}
\textsc{Nakajima, R.} (2007): \enquote{Measuring peer effects on youth smoking behaviour,} \emph{The Review of Economic Studies}, 74, 897--935.

\bibitem[\protect\citeauthoryear{Nemhauser, Wolsey, and Fisher}{Nemhauser et~al.}{1978}]{nemhauser1978analysis}
\textsc{Nemhauser, G.~L., L.~A. Wolsey, and M.~L. Fisher} (1978): \enquote{An analysis of approximations for maximizing submodular set functions—I,} \emph{Mathematical programming}, 14, 265--294.

\bibitem[\protect\citeauthoryear{Parente and Prescott}{Parente and Prescott}{1994}]{parente1994barriers}
\textsc{Parente, S.~L. and E.~C. Prescott} (1994): \enquote{Barriers to technology adoption and development,} \emph{Journal of political Economy}, 102, 298--321.

\bibitem[\protect\citeauthoryear{Ridder and Sheng}{Ridder and Sheng}{2020}]{ridder2020two}
\textsc{Ridder, G. and S.~Sheng} (2020): \enquote{Two-step estimation of a strategic network formation model with clustering,} \emph{arXiv preprint arXiv:2001.03838}.

\bibitem[\protect\citeauthoryear{Segal}{Segal}{2003}]{segal2003coordination}
\textsc{Segal, I.} (2003): \enquote{Coordination and discrimination in contracting with externalities: Divide and conquer?} \emph{Journal of Economic Theory}, 113, 147--181.

\bibitem[\protect\citeauthoryear{Sheng}{Sheng}{2020}]{sheng2020structural}
\textsc{Sheng, S.} (2020): \enquote{A structural econometric analysis of network formation games through subnetworks,} \emph{Econometrica}, 88, 1829--1858.

\bibitem[\protect\citeauthoryear{Stoye}{Stoye}{2009}]{stoye2009minimax}
\textsc{Stoye, J.} (2009): \enquote{Minimax regret treatment choice with finite samples,} \emph{Journal of Econometrics}, 151, 70--81.

\bibitem[\protect\citeauthoryear{Stoye}{Stoye}{2012}]{stoye2012minimax}
---\hspace{-.1pt}---\hspace{-.1pt}--- (2012): \enquote{Minimax regret treatment choice with covariates or with limited validity of experiments,} \emph{Journal of Econometrics}, 166, 138--156.

\bibitem[\protect\citeauthoryear{Sun}{Sun}{2021}]{sun2021empirical}
\textsc{Sun, L.} (2021): \enquote{Empirical welfare maximization with constraints,} \emph{arXiv preprint arXiv:2103.15298}.

\bibitem[\protect\citeauthoryear{Sun, Zhao, and Zhou}{Sun et~al.}{2023}]{sun2023structural}
\textsc{Sun, Y., W.~Zhao, and J.~Zhou} (2023): \enquote{Structural interventions in networks,} \emph{International economic review}, 64, 1533--1563.

\bibitem[\protect\citeauthoryear{Takahashi}{Takahashi}{2008}]{takahashi2008number}
\textsc{Takahashi, S.} (2008): \enquote{The number of pure Nash equilibria in a random game with nondecreasing best responses,} \emph{Games and Economic Behavior}, 63, 328--340.

\bibitem[\protect\citeauthoryear{Tamer}{Tamer}{2003}]{tamer2003incomplete}
\textsc{Tamer, E.} (2003): \enquote{Incomplete simultaneous discrete response model with multiple equilibria,} \emph{The Review of Economic Studies}, 70, 147--165.

\bibitem[\protect\citeauthoryear{Tarski}{Tarski}{1955}]{tarski1955lattice}
\textsc{Tarski, A.} (1955): \enquote{A LATTICE-THEORETICAL FIXPOINT THEOREM AND ITS APPLICATIONS,} \emph{Pacific J. Math}, 5, 285--309.

\bibitem[\protect\citeauthoryear{Topkis}{Topkis}{1978}]{topkis1978minimizing}
\textsc{Topkis, D.~M.} (1978): \enquote{Minimizing a submodular function on a lattice,} \emph{Operations research}, 26, 305--321.

\bibitem[\protect\citeauthoryear{Topkis}{Topkis}{1979}]{topkis1979equilibrium}
---\hspace{-.1pt}---\hspace{-.1pt}--- (1979): \enquote{Equilibrium points in nonzero-sum n-person submodular games,} \emph{Siam Journal on control and optimization}, 17, 773--787.

\bibitem[\protect\citeauthoryear{Topkis}{Topkis}{1998}]{topkis1998supermodularity}
---\hspace{-.1pt}---\hspace{-.1pt}--- (1998): \emph{Supermodularity and complementarity}, Princeton university press.

\bibitem[\protect\citeauthoryear{Viviano}{Viviano}{2024}]{viviano2024policy}
\textsc{Viviano, D.} (2024): \enquote{Policy targeting under network interference,} \emph{Review of Economic Studies}, rdae041.

\bibitem[\protect\citeauthoryear{Wainwright}{Wainwright}{2019}]{wainwright2019high}
\textsc{Wainwright, M.~J.} (2019): \emph{High-dimensional statistics: A non-asymptotic viewpoint}, vol.~48, Cambridge university press.

\bibitem[\protect\citeauthoryear{Wang, Gu, and Otsu}{Wang et~al.}{2024}]{wang2024graph}
\textsc{Wang, Y., C.~Gu, and T.~Otsu} (2024): \enquote{Graph Neural Networks: Theory for Estimation with Application on Network Heterogeneity,} \emph{arXiv preprint arXiv:2401.16275}.

\bibitem[\protect\citeauthoryear{Xu}{Xu}{2018}]{xu2018social}
\textsc{Xu, H.} (2018): \enquote{Social interactions in large networks: A game theoretic approach,} \emph{International Economic Review}, 59, 257--284.

\end{thebibliography}
\end{singlespace}

\newpage
\linespread{1}
\begin{appendices}

\begin{center}\LARGE {Supplementary Materials of Robust Network Targeting with Multiple Nash
Equilibria}
\end{center}

\bigskip

\section{Chi-Square Goodness of Fit Test}\label{chisquare}
\begin{table}[h!]
\setstretch{1}
\linespread{1}
\renewcommand{\arraystretch}{0.5}
\footnotesize
\centering
\begin{threeparttable}
\begin{tabular}{@{}lcccc|lcccc@{}}
\toprule
\multicolumn{5}{c}{Number of Rooms} & \multicolumn{5}{c}{Number of Rooms} \\
\cmidrule(r){1-5} \cmidrule(l){6-10}
\textit{Village} & \textit{1-2} & \textit{3-4} & \textit{5-6} & \textit{$\geq 7$} & \textit{Village} & \textit{1-2} & \textit{3-4} & \textit{5-6} & \textit{$\geq 7$} \\[1pt]
\midrule
\textbf{1} & 0.08 & 0.04 & 0.61 & 0.38 & \textbf{23} & 0.06 & 0.22 & 3.27 & 0.00 \\[1pt]
\textbf{2} & 92.90 & 17.03 & 0.13 & -- & \textbf{24} & 0.31 & 0.19 & 0.60 & -- \\[1pt]
\textbf{3} & 0.75 & 0.14 & 0.53 & -- & \textbf{25} & 0.06 & 0.86 & 1.52 & -- \\[1pt]
\textbf{4} & 4.81 & 0.50 & 0.62 & 0.00 & \textbf{26} & 0.04 & 0.14 & 0.15 & -- \\[1pt]
\textbf{5} & 0.02 & 0.02 & 2.08 & 0.15 & \textbf{27} & 0.20 & 0.13 & 0.80 & 0.01 \\[1pt]
\textbf{6} & 0.35 & 0.15 & 0.02 & 0.04 & \textbf{28} & 345.59 & 115.88 & 6.39 & 1.76 \\[3pt]
\textbf{7} & 0.82 & 0.45 & 0.74 & 0.00 & \textbf{29} & 0.02 & 0.01 & 1.03 & 0.08 \\[1pt]
\textbf{8} & 0.07 & 0.10 & 0.00 & 0.45 & \textbf{30} & 2.42 & 1.02 & 0.00 & 0.00 \\[1pt]
\textbf{9} & 1.76 & 0.81 & 0.05 & -- & \textbf{31} & 0.18 & 0.97 & 0.19 & -- \\[1pt]
\textbf{10} & 0.01 & 0.02 & 1.38 & 0.01 & \textbf{32} & 0.76 & 0.36 & 0.00 & 0.87 \\[1pt]
\textbf{11} & 0.09 & 0.03 & 0.34 & 0.10 & \textbf{33} & 1.61 & 0.12 & 0.72 & 1.55 \\[1pt]
\textbf{12} & 0.00 & 1.16 & 1.47 & 0.35 & \textbf{34} & 0.12 & 0.31 & 0.11 & -- \\[1pt]
\textbf{13} & 1.67 & 0.06 & 0.26 & 3.79 & \textbf{35} & 0.21 & 0.45 & 0.45 & 0.09 \\[1pt]
\textbf{14} & 0.01 & 0.10 & 0.92 & 0.00 & \textbf{36} & 0.11 & 0.80 & 0.68 & 0.93 \\[1pt]
\textbf{15} & 0.05 & 0.00 & 0.58 & 1.11 & \textbf{37} & 0.04 & 0.03 & 0.17 & 0.03 \\[1pt]
\textbf{16} & 0.03 & 0.04 & 0.24 & 0.01 & \textbf{38} & 0.00 & 0.09 & 0.32 & -- \\[1pt]
\textbf{17} & 80.36 & 7.58 & 0.37 & 0.34 & \textbf{39} & 0.35 & 0.03 & 0.09 & -- \\[1pt]
\textbf{18} & 2.20 & 1.76 & 0.73 & 5.54 & \textbf{40} & 0.43 & 0.02 & 0.03 & -- \\[1pt]
\textbf{19} & 0.00 & 2.02 & 0.60 & 0.24 & \textbf{41} & 0.01 & 0.00 & 0.07 & 0.00 \\[1pt]
\textbf{20} & 0.04 & 0.23 & 0.01 & 0.89 & \textbf{42} & 49.43 & 10.70 & 1.26 & 0.00 \\[3pt]
\textbf{21} & 0.04 & 0.00 & 0.25 & 0.14 & \textbf{43} & 0.82 & 0.00 & 0.58 & 0.03 \\[1pt]
\textbf{22} & 0.42 & 0.93 & 2.04 & 0.05 &  &  &  &  &  \\[3pt]
\bottomrule
\end{tabular}
\caption{\footnotesize Chi-square values based on number of rooms for $43$ Indian villages microfinance data from \citet{banerjee2013diffusion}\\
At $0.05$ significance level, the critical value is given by $7.815$.\\
$'-'$ symbol indicates that there are no households with more than 7 rooms.}
\label{tab:chisquare}
\end{threeparttable}
\end{table}

\section{Lemmas}\label{appendixA}
We first introduce notation. We define $\widebar{m}\coloneqq \max_{ij}\vert m_{ij}\vert$, and $\underline{F}_{\varepsilon}\coloneqq \min_{\substack{ \theta\in\Theta\\z\in \mathcal{Z}}}F_{\varepsilon}(z^{\intercal}\theta)$. In addition, we define $\upsilon\coloneqq\min\{\underline{F}_{\varepsilon},1-\widebar{F}_{\varepsilon}\}$. We measure the distance between parameters $\theta$ with the $L_1$ metric, which we denote by $\Vert\theta-\theta'\Vert_1\coloneqq \sum_{k=1}^{d_{\theta}}\vert\theta_{k}-\theta_{k}'\vert$. For a $K\times L$ matrix $A$, $\Vert A\Vert_{\infty}$ denotes the operator norm of $A$ induced by the $L_{\infty}$ norm, which is given as: $\displaystyle\Vert A\Vert_{\infty}=\max_{k=1,...,K}\sum_{l=1}^L\vert A_{kl}\vert.$

\subsection{Proof of Corollary \ref{coro:utility}}\label{app:oroofcoroutility}
\begin{corollary} (\textbf{Utilitarian Welfare at Equilibrium})
Under Assumption \ref{ass:epsilon}, given the specification of our utility function,
the predicted set of the expected utilitarian welfare under a counterfactual policy $D$ is given as:
\begin{equation}
    \begin{split}
        W_{X,G,\lambda}(D)\in&\Big[\frac{1}{N}\sum_{i=1}^N\alpha_i f(\alpha_i)+\frac{1}{N}\sum_{i=1}^N\sum_{j\neq i}\beta_{ij} \underline{\sigma}^*_i\underline{\sigma}^*_j, \frac{1}{N}\sum_{i=1}^N\alpha_i f(\alpha_i)+\frac{1}{N}\sum_{i=1}^N\sum_{j\neq i}\beta_{ij} \widebar{\sigma}^{*}_i\widebar{\sigma}^{*}_j\Big],
    \end{split}
\end{equation}
where 
\begin{equation}
    f(\alpha_i) = 
\begin{cases} 
\Pr(Y_i=1\vert X,D,G,\underline{\lambda}) & \text{if } \alpha_i > 0 \\
\Pr(Y_i=1\vert X,D,G,\widebar{\lambda}) & \text{if } \alpha_i \leq 0 .
\end{cases}
\end{equation} 
\end{corollary}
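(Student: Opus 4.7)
The proof starts by exploiting the algebraic structure of the quadratic utility. Substituting $U_i(Y,X,D,G) = (\alpha_i-\varepsilon_i)Y_i + \sum_{j\neq i}\beta_{ij}Y_iY_j$ into the definition of utilitarian welfare (Eq.~\ref{eq:uwe}), the additive $+\varepsilon_iY_i$ cancels the stochastic $-\varepsilon_iY_i$ term, reducing the problem to
\begin{equation*}
W_{X,G,\lambda}(D) = \frac{1}{N}\sum_i \alpha_i \Pr(Y_i=1\vert X,D,G,\lambda) + \frac{1}{N}\sum_i\sum_{j\neq i}\beta_{ij}\Pr(Y_iY_j=1\vert X,D,G,\lambda),
\end{equation*}
so it suffices to bound the marginal and joint conditional choice probabilities over $\lambda\in\Lambda$.

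For the joint term, the key observation is that conditional on any equilibrium CCP profile $\sigma^*$ being selected, each decision $Y_i$ depends on the private shocks only through $\varepsilon_i$ via the threshold rule $Y_i=\mathbf{1}\{\alpha_i+\sum_k\beta_{ik}\sigma^*_k\geq\varepsilon_i\}$. By the independence of the $\varepsilon_i$ (Assumption~\ref{ass:epsilon}), $Y_i$ and $Y_j$ are conditionally independent given $\sigma^*$, so $\Pr(Y_iY_j=1\vert\sigma^*)=\sigma^*_i\sigma^*_j$ and for a general selection $\Pr(Y_iY_j=1\vert\lambda) = \sum_{\sigma^*}\lambda(\sigma^*)\sigma^*_i\sigma^*_j$. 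The supermodular structure of Section~\ref{sec:super} guarantees $\underline{\sigma}^*\leq \sigma^*\leq\widebar{\sigma}^*$ coordinatewise for every $\sigma^*\in\Sigma$, and all entries are nonnegative, so the joint probability lies in $[\underline{\sigma}^*_i\underline{\sigma}^*_j,\widebar{\sigma}^*_i\widebar{\sigma}^*_j]$ with the extremes attained at $\underline{\lambda}=\delta_{\underline{\sigma}^*}$ and $\widebar{\lambda}=\delta_{\widebar{\sigma}^*}$. Combined with $\beta_{ij}\geq 0$, this yields the $\beta$-contributions appearing in the corollary's interval.

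For the marginal term, Theorem~\ref{thm:equality} gives $\Pr(Y_i=1\vert\lambda)\in[\underline{\sigma}^*_i,\widebar{\sigma}^*_i]$ for every $\lambda$. Since $\alpha_i$ can take either sign, the extremum of $\alpha_i\Pr(Y_i=1\vert\lambda)$ switches accordingly: the minimum is $\alpha_i\underline{\sigma}^*_i$ when $\alpha_i>0$ and $\alpha_i\widebar{\sigma}^*_i$ when $\alpha_i\leq 0$, which is precisely $\alpha_i f(\alpha_i)$ as defined in the statement; the maximum is obtained by swapping $\underline{\sigma}^*$ and $\widebar{\sigma}^*$. Summing the term-wise bounds over $i$ and adding the $\beta$-contributions from the previous step then delivers the displayed interval.

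The main subtlety I would flag, rather than try to remove, is that, unlike in Theorem~\ref{thm:equality} where monotonicity lets $\underline{\lambda}$ (resp.\ $\widebar{\lambda}$) attain every coordinate's lower (resp.\ upper) bound simultaneously, here the coordinate-wise extremizers of the $\alpha$-term need not coincide across $i$ once the $\alpha_i$ differ in sign. The displayed interval should therefore be read as an outer envelope for the identified set $\{W_{X,G,\lambda}(D):\lambda\in\Lambda\}$, sharp whenever all $\alpha_i$ share a common sign because then $\underline{\lambda}$ and $\widebar{\lambda}$ jointly attain the bounds. Aside from this caveat, the proof is a direct consequence of Theorem~\ref{thm:equality} and the monotonicity properties of supermodular games, and requires no additional technical machinery.
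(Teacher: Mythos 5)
Your proposal is correct and follows essentially the same route as the paper's proof in Appendix B.1: decompose $W_{X,G,\lambda}(D)$ into the $\alpha$-term and the $\beta$-term, use the i.i.d.\ shocks to factor $\Pr(Y_iY_j=1\vert\sigma^*)=\sigma^*_i\sigma^*_j$ so that Theorem \ref{thm:equality} bounds the joint probabilities by $\underline{\sigma}^*_i\underline{\sigma}^*_j$ and $\widebar{\sigma}^*_i\widebar{\sigma}^*_j$, and handle the sign of $\alpha_i$ by switching between $\underline{\lambda}$ and $\widebar{\lambda}$. Your closing caveat--that the termwise extremizers need not coincide when the $\alpha_i$ differ in sign, so the interval is an outer envelope rather than the exact identified set--is a fair observation that the paper's proof leaves implicit, but it does not affect the containment statement actually being claimed.
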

\begin{proof}
Given
    \begin{equation}
        \begin{split}
            W_{X,G,\lambda}(D)=\frac{1}{N}\sum_{i=1}^N \mathbb{E}\left[U_i(Y,X,D,G)-\varepsilon_iY_i\vert X,D,G,\lambda\right],
        \end{split}
    \end{equation}
    we have, 
    \begin{equation}\label{appeq:uwep}
      W_{X,G,\lambda}(D)=  \frac{1}{N}\sum_{i=1}^N \alpha_i \Pr(Y_i=1\vert X,D,G,\lambda)+\frac{1}{N}\sum_{i=1}^N\sum_{j\neq i}\beta_{ij}\Pr(Y_iY_j=1\vert X,D,G,\lambda).
    \end{equation}
    Therefore, if $\alpha_i>0,$ $\alpha_i \Pr(Y_i=1\vert X,D,G,\lambda)$ achieves its lower bound by choosing the equilibrium selection rule $\underline{\lambda}$. When $\alpha_i\leq0,$ $\alpha_i \Pr(Y_i=1\vert X,D,G,\lambda)$ achieves its lower bound by choosing the equilibrium selection rule $\widebar{\lambda}$. For the second term, since $\beta_{ij}\geq 0$ for all $i,j\in\mathcal{N},$ it achieves its upper bound by choosing the equilibrium selection rule as $\widebar{\lambda}$ and it achieves its lower bound by choosing the equilibrium selection rule as $\underline{\lambda}$. This is because
    \begin{equation}
        \begin{split}
            &\quad\Pr(Y_iY_j=1\vert X,D,G,\lambda)\\&= \sum_{\sigma^*\in\Sigma}\lambda(\sigma^*\vert X,D,G)\int \mathds{1}\Big\{\alpha_i+\sum_{k\neq i}\beta_{ik}\sigma_k^*(X,D,G)\geq\varepsilon_i\Big\} \mathds{1}\Big\{\alpha_j+\sum_{k\neq j}\beta_{jk}\sigma_k^*(X,D,G)\geq\varepsilon_j\Big\} dF_{\varepsilon_i}dF_{\varepsilon_j}\\
    &=\sum_{\sigma^*\in\Sigma}\lambda(\sigma^*\vert X,D,G)\int \mathds{1}\Big\{\alpha_i+\sum_{k\neq i}\beta_{ik}\sigma_k^*(X,D,G)\geq\varepsilon_i\Big\} dF_{\varepsilon}\int\mathds{1}\Big\{\alpha_j+\sum_{k\neq j}\beta_{jk}\sigma_k^*(X,D,G)\geq\varepsilon_j\Big\} dF_{\varepsilon},
        \end{split}
    \end{equation}
    where the second equality holds by Assumption \ref{ass:epsilon}. Therefore,
    \begin{equation}
         \Pr(Y_iY_j=1\vert X,D,G,\lambda)= \sum_{\sigma^*\in\Sigma}\lambda(\sigma^*\vert X,D,G)\Pr(Y_i=1\vert X,D,G,\sigma^*)\Pr(Y_j=1\vert X,D,G,\sigma^*).
    \end{equation}
    From Theorem \ref{thm:equality}, $\{\Pr(Y_j=1\vert X,D,G,\sigma^*)\}_{i=1}^N$ achieves their upper bound under the most favorable equilibrium selection rule $\widebar{\lambda}$, where $\widebar{\sigma}^*$ happens with probability 1. Therefore, 
    \begin{equation}
        \begin{split}
            \Pr(Y_iY_j=1\vert X,D,G,\widebar{\lambda}) &= \Pr(Y_i=1\vert X,D,G,\widebar{\sigma}^*)\Pr(Y_j=1\vert X,D,G,\widebar{\sigma}^*)\\
            &\geq \Pr(Y_iY_j=1\vert X,D,G,\lambda),\quad\forall\lambda\in\Lambda.
        \end{split}
    \end{equation}
    By the symmetric argument, we have
    \begin{equation}
       \begin{split}
            \Pr(Y_iY_j=1\vert X,D,G,\underline{\lambda})&= \Pr(Y_i=1\vert X,D,G,\underline{\sigma}^*)\Pr(Y_j=1\vert X,D,G,\underline{\sigma}^*)\\
            &\leq \Pr(Y_iY_j=1\vert X,D,G,\lambda),\quad\forall\lambda\in\Lambda.
       \end{split}
    \end{equation}
    Therefore, for all $i,j\in\mathcal{N}$,
    \begin{equation}\label{appeq:lowup}
        \beta_{ij}\Pr(Y_iY_j=1\vert X,D,G,\lambda)\in[\beta_{ij}\underline{\sigma}^*_i\underline{\sigma}^*_j,\beta_{ij}\widebar{\sigma}^*_i\widebar{\sigma}^*_j].
    \end{equation}
    Plugging Eq.\ref{appeq:lowup} into Eq.\ref{appeq:uwep} completes the proof.
 \end{proof}

\subsection{Proof of Lemma \ref{lemma:unctheta}}\label{app:unctheta}
\begin{lemma}
Under Assumptions \ref{ass:epsilon} and \ref{ass:paraspace}, 
    \begin{equation}
        \mathbb{E}_{\varepsilon^{n}}\Big[\max_{D\in\mathcal{D}}\lvert W_n(D)-W(D)\rvert\Big \vert S,\sigma^{data}\Big]\leq C_1\mathbb{E}_{\varepsilon^{n}}\Big[ \Vert\hat{\theta}-\theta\Vert_1\Big \vert S,\sigma^{data}\Big]
    \end{equation}
    where $C_1$ is a constant that only depends on the distribution $F_{\varepsilon^{n}}$, the maximum link 
 in the network $\widebar{N}$, and the support of true parameter $\theta_0$.
\end{lemma}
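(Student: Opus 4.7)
The plan is to reduce $\max_{D\in\mathcal{D}}|W_n(D)-W(D)|$ to a uniform bound on individual minimal-CCP differences, and then exploit the fixed-point structure of $\underline{\sigma}^{*}$ to convert those CCP differences into a constant multiple of $\|\hat\theta-\theta_0\|_1$. Focusing on the engagement welfare (the target of this section), we have $W_n(D)-W(D)=\frac{1}{N}\sum_i\bigl(\underline\sigma_i^*(X,D,G;\hat\theta)-\underline\sigma_i^*(X,D,G;\theta_0)\bigr)$, so it suffices to bound $\max_{D}\max_i\bigl|\underline\sigma_i^*(\hat\theta;D)-\underline\sigma_i^*(\theta_0;D)\bigr|$ by a constant multiple of $\|\hat\theta-\theta_0\|_1$ and then take conditional expectations.

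Let $h_i(\theta,\sigma)\coloneqq\alpha_i(\theta)+\sum_{j\neq i}\beta_{ij}(\theta)\sigma_j$, so that the fixed-point equation \ref{eq:fix} reads $\underline\sigma_i^*(\theta)=F_\varepsilon\bigl(h_i(\theta,\underline\sigma^*(\theta))\bigr)$. Applying the mean value theorem to $F_\varepsilon$ (Lipschitz with constant $\tau$ by Assumption \ref{ass:epsilon}) and adding and subtracting $h_i(\hat\theta,\underline\sigma^*(\theta_0))$ inside the absolute value delivers the recursive bound
\[ \bigl|\underline\sigma_i^*(\hat\theta)-\underline\sigma_i^*(\theta_0)\bigr|\leq \tau\bigl|h_i(\hat\theta,\underline\sigma^*(\theta_0))-h_i(\theta_0,\underline\sigma^*(\theta_0))\bigr|+\tau\sum_{j\neq i}\beta_{ij}(\hat\theta)\bigl|\underline\sigma_j^*(\hat\theta)-\underline\sigma_j^*(\theta_0)\bigr|. \]
Because the utility specification in \ref{eq:specifi} makes $\alpha_i$ and $\beta_{ij}$ affine in $\theta$ with coefficients bounded by the supports of $\mathcal{X}$, $\mathcal{D}$, and the closeness function $m$ on $\mathcal{G}$, the first right-hand term is at most $C\|\hat\theta-\theta_0\|_1$ for a constant $C$ determined purely by those supports. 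A uniform bound over $D\in\mathcal{D}$ follows at once since $D$ enters $h_i$ only through these bounded coefficients.

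Taking the maximum over $i$ of the recursion and using the $1/|\mathcal{N}_i|$ normalization to bound $\sum_{j\neq i}\beta_{ij}(\hat\theta)\leq \bar m(\bar\theta_5+\bar\theta_6)$ uniformly in $D$, $G$, and $\theta\in\Theta$ (with bars denoting upper bounds guaranteed by Assumption \ref{ass:paraspace}), the recursion becomes $\max_i\Delta_i\leq \tau C\|\hat\theta-\theta_0\|_1+\tau\bar m(\bar\theta_5+\bar\theta_6)\max_i\Delta_i$, where $\Delta_i\coloneqq|\underline\sigma_i^*(\hat\theta)-\underline\sigma_i^*(\theta_0)|$. Rearranging and absorbing the denominator $1-\tau\bar m(\bar\theta_5+\bar\theta_6)$ into a single constant $C_1$ yields the pointwise bound $\max_i\Delta_i\leq C_1\|\hat\theta-\theta_0\|_1$ uniformly in $D$, from which Lemma \ref{lemma:unctheta} follows by taking conditional expectations.

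The main obstacle is this contraction step: the rearrangement above presupposes $\tau\bar m(\bar\theta_5+\bar\theta_6)<1$ uniformly on $\Theta\times\mathcal{X}^N\times\mathcal{D}\times\mathcal{G}$. This restricts the admissible strength of strategic complementarity, yet it remains fully compatible with the existence of multiple equilibria, which arise from $\varepsilon$ crossing the deterministic index rather than from a loss of local contractivity of $\Omega$. This is precisely the kind of regularity condition silently absorbed into $C_1$ in the statement. An implicit-function-theorem alternative that differentiates $\underline\sigma^*$ in $\theta$ and bounds the operator norm of $(I-\mathrm{diag}(f_\varepsilon)B)^{-1}$ runs into the identical spectral requirement, so there is no obvious way to eliminate it within this framework.
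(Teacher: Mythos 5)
Your reduction of $\max_{D}\lvert W_n(D)-W(D)\rvert$ to $\max_i\lvert\underline{\sigma}_i^{*}(\hat\theta)-\underline{\sigma}_i^{*}(\theta_0)\rvert$, and the Lipschitz-plus-affine bound on the term $h_i(\hat\theta,\underline{\sigma}^{*}(\theta_0))-h_i(\theta_0,\underline{\sigma}^{*}(\theta_0))$, match the paper's proof. The core step, however, is genuinely different. The paper differentiates the fixed-point system via the Implicit Function Theorem, writing $\nabla_\theta r(\theta)=-\big(\nabla_r I(r,\theta)\big)^{-1}\nabla_\theta I(r,\theta)$, and controls $\Vert(\nabla_r I)^{-1}\Vert_\infty$ by a uniform constant $\zeta$ obtained from Berge's Maximum Theorem and the Extreme Value Theorem over the compact supports; you instead run a direct recursive estimate on the two fixed points and close it with a contraction inequality. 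Your route is more elementary and yields an explicit constant $\tau C/\big(1-\tau\widebar{m}(\bar\theta_5+\bar\theta_6)\big)$, whereas the paper's $\zeta$ is only shown to exist.

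The genuine problem is your closing claim that the contraction requirement $\tau\widebar{m}(\bar\theta_5+\bar\theta_6)<1$ is ``fully compatible with the existence of multiple equilibria'' and that the IFT alternative needs the ``identical'' condition. Neither is right. If the sup-norm of the Jacobian of $\Omega$ is uniformly below one, then $\Omega$ is a contraction on $[0,1]^N$ and Banach's fixed point theorem gives a \emph{unique} equilibrium CCP profile; in this incomplete-information model multiplicity lives precisely in the fixed points of $\Omega$, not in realizations of $\varepsilon$ (that is the complete-information story of the extension section). So under your hypothesis $\underline{\sigma}^{*}=\widebar{\sigma}^{*}$ and the robustness problem the lemma serves degenerates. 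The paper's condition is strictly weaker: it only needs $I-\mathrm{diag}(f_\varepsilon)B$ to be non-singular (one must not be an eigenvalue), which can hold at each of several isolated fixed points. Your derivation is internally sound under its stated condition, but you should either replace the global contraction hypothesis by local uniform invertibility of $I-\mathrm{diag}(f_\varepsilon)B$ along the segment (which is what the paper implicitly assumes, and which your own recursion cannot deliver without modification), or acknowledge that your version of the lemma only covers the unique-equilibrium case.
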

\begin{proof}
By the Triangle inequality, an upper bound for our objective function is:
\begin{equation}\label{eq:aggwel}
   \begin{split}
        \lvert W_n(D) - W(D) \rvert &= \Big\lvert\frac{1}{N}\sum_{i=1}^N\big(\underline{\sigma}_i(\hat{\theta})-\underline{\sigma}_i(\theta)\big)\Big\rvert\leq \frac{1}{N}\sum_{i=1}^N\lvert\underline{\sigma}_i(\hat{\theta})-\underline{\sigma}_i(\theta)\rvert.
   \end{split}
\end{equation}
We know:
\begin{equation}\label{eq:pru}
   \begin{split}
        \underline{\sigma}_i(\theta) &= \int \mathbbm{1}\Big\{\alpha_i+\sum_{j\neq i}\beta_{ij}\underline{\sigma}_j(\theta) -\varepsilon_i\geq 0  \Big\}dF(\varepsilon)\\
        &=F_{\varepsilon}\Big(\theta_0+\theta_1D_i+\theta_2^{\intercal} X_i+ \theta_3^{\intercal} X_iD_i+\frac{\theta_4}{\mathcal{N}_i}\sum_{j\neq i}G_{ij}m_{ij}D_j+\frac{1}{\mathcal{N}_i}\sum_{j\neq i}(\theta_5+\theta_6 D_iD_j)G_{ij}m_{ij}\underline{\sigma}_j(\theta)\Big).
   \end{split}
\end{equation}
Let $r(i,\theta)\coloneqq \theta_0+\theta_1D_i+\theta_2^{\intercal} X_i+ \theta_3^{\intercal} X_iD_i+\frac{\theta_4}{\mathcal{N}_i}\sum_{j\neq i}G_{ij}m_{ij}D_j+\frac{1}{\mathcal{N}_i}\sum_{j\neq i}(\theta_5+\theta_6 D_iD_j)G_{ij}m_{ij}\underline{\sigma}_j(\theta)$. Therefore,
\begin{equation}\label{eq:gap}
    \begin{split}
        \big\lvert\underline{\sigma}_i(\hat{\theta})-\underline{\sigma}_i(\theta)\big\rvert=\big\lvert F_{\varepsilon}\big(r(i,\hat{\theta})\big)-F_{\varepsilon}\big(r(i,\theta)\big)\big\rvert.
        \end{split}
\end{equation}
By the Mean Value Theorem, 
\begin{equation}
    \begin{split}
        \big\lvert\underline{\sigma}_i(\hat{\theta})-\underline{\sigma}_i(\theta)\big\rvert&=\big\vert\nabla_{\theta} F_{\varepsilon}\big(r_{i}(\Tilde{\theta})\big)(\hat{\theta}-\theta)\big\vert\\
        &=\big\vert F_{\varepsilon}'\big(r(i,\Tilde{\theta})\big)\nabla_{\theta}r(i,\Tilde{\theta})(\hat{\theta}-\theta)\big\vert\\
        &= F_{\varepsilon}'\big(r(i,\Tilde{\theta})\big)\big\vert\nabla_{\theta}r(i,\Tilde{\theta})(\hat{\theta}-\theta)\big\vert\\
        &\leq\tau\big\vert\nabla_{\theta}r(i,\Tilde{\theta})(\hat{\theta}-\theta)\big\vert.
    \end{split}
\end{equation}
For some $\Tilde{\theta}\in\mathbb{R}^{d_{\theta}}$ on the segment from $\theta$ to $\hat{\theta}$. By the Cauchy–Schwarz inequality, we have:
\begin{equation}\label{appeq:muupp}
   \big\lvert\underline{\sigma}_i(\hat{\theta})-\underline{\sigma}_i(\theta)\big\rvert\leq  \tau\Vert\nabla_{\theta}r(i,\Tilde{\theta})\Vert_2\Vert\hat{\theta}-\theta\Vert_2\leq  \tau\Vert\nabla_{\theta}r(i,\Tilde{\theta})\Vert_1\Vert\hat{\theta}-\theta\Vert_1.
\end{equation}
To deal with the simultaneity within $\nabla_{\theta}r(i,\Tilde{\theta})$, define
\begin{equation}
    \nabla_{\theta}r(\theta)\coloneqq\Big[
    \begin{array}{ccc}
    \nabla_{\theta_0}r(\theta)\quad
    \cdots\quad
    \nabla_{\theta_{d_{\theta}}}r(\theta)
  \end{array}
    \Big],
\end{equation}
where
\begin{equation}
    \nabla_{\theta_k}r(\theta) = \left[
    \begin{array}{ccc}
    \nabla_{\theta_k}r(1,\theta)\\
    \vdots\\
    \nabla_{\theta_{k}}r(N,\theta)
  \end{array}
  \right],
\end{equation}
for all $k=1,...,d_{\theta}$. Then, 
\begin{equation}\label{appeq:inf1}
  \Vert\nabla_{\theta}r(i,\theta)\Vert_1\leq \Vert\nabla_{\theta}r(\theta)\Vert_{\infty},\quad\forall i\in\mathcal{N},  
\end{equation}
where recall $\Vert\nabla_{\theta}r(\theta)\Vert_{\infty}=\max_{i=1,...,N}\Vert \nabla_{\theta}r(i,\theta)\Vert_1$ is the operator norm induced by the $L_{\infty}$ norm. To bound $\Vert\nabla_{\theta}r(\theta)\Vert_{\infty}$, we define an implicit function $I:\mathbb{R}^N\times\mathbb{R}^{d_{\theta
}}\rightarrow \mathbb{R}^N$ such that 
\begin{equation}
    \begin{split}
    \quad  I(r,\theta) = \left[
  \begin{array}{lcl}
    r(1,\theta)-a_1-\displaystyle\frac{\theta_4}{\vert\mathcal{N}_1\vert}\sum_{j\neq 1}G_{1j}m_{1j}D_j-\displaystyle\frac{1}{\vert\mathcal{N}_1\vert}\sum_{j\neq 1}(\theta_5+\theta_6 D_1D_j)G_{1j}m_{1j}F_{\varepsilon}\big(r(j,\theta)\big)\\
    \vdots\\
    r(N,\theta)-a_N-\displaystyle\frac{\theta_4}{\vert\mathcal{N}_N\vert}\sum_{j\neq N}G_{Nj}m_{Nj}D_j-\frac{1}{\vert\mathcal{N}_N\vert}\displaystyle\sum_{j\neq N}(\theta_5+\theta_6 D_ND_j)G_{Nj}m_{Nj}F_{\varepsilon}\big(r(j,\theta)\big)\\
  \end{array}
\right],
    \end{split}
\end{equation}
where $a_i=\theta_0+\theta_1D_i-\theta_2^{\intercal} X_i- \theta_3^{\intercal} X_iD_i$. By the Implicit Function Theorem, $\nabla_{\theta}r(\theta)$ is given by:
\begin{equation}\label{apeq:grir}
    \begin{split}
    \nabla_{\theta}r(\theta) = - \big(\nabla_{r} I(r,\theta)\big)^{-1} \nabla_{\theta}I(r,\theta),   
    \end{split}
\end{equation}
where $\nabla_{r} I(r,\theta)$ is:
\begin{equation}
    \nabla_{r} I(r,\theta) =\left[
  \begin{array}{ccc}
    1& -\frac{1}{\vert \mathcal{N}_1\vert}F_{\varepsilon}'(r(1,\theta))(\theta_5+\theta_6D_1D_2)G_{12}m_{12}&\cdots\\
    -\frac{1}{\vert \mathcal{N}_2\vert}F_{\varepsilon}'(r(2,\theta))(\theta_5+\theta_6D_2D_1)G_{21}m_{21}&1 &\cdots\\
    \vdots &  \vdots &\ddots \\
   -\frac{1}{\vert \mathcal{N}_N\vert}F_{\varepsilon}'(r(N,\theta))(\theta_5+\theta_6D_ND_1)G_{N1}m_{N1}& -\frac{1}{\vert \mathcal{N}_N\vert}F_{\varepsilon}'(r(N,\theta))(\theta_5+\theta_6D_ND_2)G_{N2}m_{N2}&\cdots\\
  \end{array}
\right],
\end{equation}
and $\nabla_{\theta}I(r,\theta)$ is:
\footnotesize
\begin{equation}
   \begin{split}
        &\nabla_{\theta}I(r,\theta) =\\
    &\left[
  \begin{array}{ccccccc}
    -1&-D_1&-X_1^{\intercal}&-D_1X_1^{\intercal}&-\displaystyle\frac{\displaystyle\sum_{j\neq 1}G_{1j}m_{1j}D_j}{\vert\mathcal{N}_1\vert}&-\displaystyle\frac{\displaystyle\sum_{j\neq 1}F_{\varepsilon}'(r(j,\theta))G_{1j}m_{1j}}{\vert \mathcal{N}_1\vert}&-\displaystyle\frac{\displaystyle\sum_{j\neq 1}F_{\varepsilon}'(r(j,\theta))D_1D_jG_{1j}m_{1j}}{\vert \mathcal{N}_1\vert}\\
   -1&-D_2&-X_2^{\intercal}&-D_2X_2^{\intercal}&-\displaystyle\frac{\displaystyle\sum_{j\neq 2}G_{2j}m_{2j}D_j}{\vert\mathcal{N}_2\vert}&-\displaystyle\frac{\displaystyle\sum_{j\neq 2}F_{\varepsilon}'(r(j,\theta))G_{2j}m_{2j}}{\vert \mathcal{N}_2\vert}&-\displaystyle\frac{\displaystyle\sum_{j\neq 2}F_{\varepsilon}'(r(j,\theta))D_2D_jG_{2j}m_{2j}}{\vert \mathcal{N}_2\vert}\\
    \vdots & \vdots & \vdots & \vdots & \vdots & \vdots & \vdots\\
   -1&-D_N&-X_N^{\intercal}&-D_NX_N^{\intercal}&-\displaystyle\frac{\displaystyle\sum_{j\neq N}G_{Nj}m_{Nj}D_j}{\vert\mathcal{N}_N\vert}&-\displaystyle\frac{\displaystyle\sum_{j\neq N}F_{\varepsilon}'(r(j,\theta))G_{Nj}m_{Nj}}{\vert \mathcal{N}_N\vert}&-\displaystyle\frac{\displaystyle\sum_{j\neq N}F_{\varepsilon}'(r(j,\theta))D_ND_jG_{Nj}m_{Nj}}{\vert \mathcal{N}_N\vert}\\
  \end{array}
\right].
   \end{split}
\end{equation}
\normalsize
Therefore, supremum norm of Eq.\ref{apeq:grir} is bounded by
\begin{equation}\label{appeq:upr}
    \begin{split}
        \Vert\nabla_{\theta}r(\theta)\Vert_\infty &= \Vert\big(\nabla_{r} I(r,\theta)\big)^{-1} \nabla_{\theta}I(r,\theta)\Vert_\infty\leq \Vert\big(\nabla_{r} I(r,\theta)\big)^{-1}\Vert_{\infty}\Vert \nabla_{\theta}I(r,\theta)\Vert_\infty.
    \end{split}
\end{equation}
The last inequality holds because norm $\Vert A\Vert_{\infty}$ on matrix $A$ is the operator norm induced by $L_{\infty}$ norm, which is a matrix norm. This ensures that it satisfies the submultiplicativity property. Therefore, we have the inequality. Assuming $\nabla_{r} I(r,\Tilde{\theta})$ is non-singular, and by invoking Lemma \ref{aplemma:contir} — a corollary of Berge's Maximum Theorem — the norm $\Vert\big(\nabla_{r} I(r,\Tilde{\theta})\big)^{-1}\Vert_{\infty}$ is a continuous function with respect to the entries of $\nabla_{r} I(r,\Tilde{\theta})$. Given, for all $i,j\in\mathcal{N}$, $D_i$ and $G_{ij}$ are binary, $X_i$ has bounded support, $\theta$ is in a compact parameter space (Assumption \ref{ass:paraspace}), and $F'_{\varepsilon}(\cdot)\in[0,\tau]$, the Extreme Value Theorem (Lemma \ref{app:evtheorem}) guarantees the existence of a uniform maximum of $\Vert\big(\nabla_{r} I(r,\Tilde{\theta})\big)^{-1}\Vert_{\infty}$ among all the values of $X$, $D$, $G$, which only depends on the support of each variable. We denote this uniform maximum as $\zeta$.
For $\Vert \nabla_{\theta}I(r,\Tilde{\theta})\Vert_\infty$ in Eq.\ref{appeq:upr}, we know:
\begin{equation}\label{appeq:eqi}
    \begin{split}
        \Vert \nabla_{\theta}I(r,\Tilde{\theta})\Vert_\infty&=\displaystyle\max_{i\in\mathcal{N}}1+D_i+\Vert X_i\Vert_1+D_i\Vert X_i\Vert_1+\frac{1}{\vert \mathcal{N}_i\vert}\displaystyle\sum_{j\neq i}D_jG_{ij}\vert m_{ij}\vert\\
        &\quad+\frac{1}{\vert \mathcal{N}_i\vert}\displaystyle\sum_{j\neq i}F_{\varepsilon}'(r(j,\Tilde{\theta}))(1+D_iD_j)G_{ij}\vert m_{ij}\vert.
    \end{split}
\end{equation}
Furthermore, Eq.\ref{appeq:eqi} is upper bounded by $2+2\max_{i\in\mathcal{N}}\Vert X_i\Vert_1+2\widebar{m}\tau$, where $\widebar{m}\coloneqq\max_{i,j\in\mathcal{N}}\vert m_{ij}\vert$. Therefore, we have
\begin{equation}\label{appeq:rinfi}
 \Vert \nabla_{\theta}I(r,\Tilde{\theta})\Vert_\infty\leq   2+2\Vert X\Vert_\infty+\widebar{m}+2\widebar{m}\tau.
\end{equation}
Combining Eq.\ref{appeq:muupp}, Eq.\ref{appeq:inf1} and Eq.\ref{appeq:rinfi}, we have:
\begin{equation}
    \frac{1}{N}\sum_{i=1}^N\big\lvert\underline{\sigma}_i(\hat{\theta})-\underline{\sigma}_i(\theta)\big\rvert\leq\zeta\tau(2+2\Vert X\Vert_\infty+\widebar{m}+2\widebar{m}\tau)\Vert\hat{\theta}-\theta\Vert_1.
\end{equation}
To complete the proof, let $C_1= \zeta\tau(2+2\Vert X\Vert_\infty+\widebar{m}+2\widebar{m}\tau)$.

\end{proof}

\subsection{Proof of Lemma \ref{pro.samplingunc}}\label{app:prooflemmasmpling}
\begin{lemma}{(\textbf{Sampling Uncertainty})}
Under Assumption \ref{ass:epsilon}, \ref{ass:smooth}, \ref{ass:fullrank}, and \ref{ass:treatment}, we have
    \begin{equation}
        \mathbb{E}_{\varepsilon^{n}}\big[\Vert\hat{\theta}-\theta_0 \Vert_1\big\vert S,\sigma^{data}\big]\leq C_2\mathbb{E}_{\varepsilon^{n}}\big[\Vert\nabla_{\theta}\mathbb{G}_n(\Tilde{\theta} )\Vert_1\big\vert S,\sigma^{data}\big],
    \end{equation}
    where $C_2$ is a constant that depends on the distribution $F_{\varepsilon^{n}}$, and the dimension and supports of the parameter space, the covariates space $\mathcal{X}$, the network space $\mathcal{G}$ and the treatment allocation space $\mathcal{D}$.
\end{lemma}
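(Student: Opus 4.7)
The plan is to build on Eq.\ref{eq:naivbound}, which already reduces the problem to controlling the sum $\eta_{max}^0 + \eta_{max}^1$ of the largest eigenvalues of the two Hessian matrices $\nabla_\theta^2 \hat{\mathbb{M}}(\acute\theta)$ and $\nabla_\theta^2 M(\grave\theta)$. The goal is to show that this sum is uniformly bounded above by a strictly negative constant (over all sample realizations and all relevant $\theta$), so that the multiplier $-2 d_\theta/(\eta_{max}^0+\eta_{max}^1)$ appearing in Eq.\ref{eq:naivbound} is uniformly bounded above by a finite constant $C_2$.

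First, I would compute the Hessian of the quasi-log-likelihood explicitly. A direct double differentiation yields
\begin{equation*}
\nabla_\theta^2 \hat{\mathbb{M}}(\theta) \;=\; \frac{1}{n}\sum_{i=1}^n h_i(\theta)\, \hat Z_i \hat Z_i^{\intercal},
\end{equation*}
where the scalar $h_i(\theta)$ is a combination of $F_\varepsilon$, $F_\varepsilon'$, and $F_\varepsilon''$ evaluated at $\hat Z_i^{\intercal}\theta$, with coefficients depending on $\mathsf{Y}_i$. The key observation is that the two inequalities in Assumption \ref{ass:smooth} are tailored precisely so that the $\mathsf{Y}_i=1$ term (which is proportional to $F_\varepsilon''F_\varepsilon - (F_\varepsilon')^2$) and the $\mathsf{Y}_i=0$ term (proportional to $F_\varepsilon''(1-F_\varepsilon)+(F_\varepsilon')^2$) both have definite sign, rendering $h_i(\theta)<0$ uniformly. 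The same form holds for $\nabla_\theta^2 M(\grave\theta)$ after taking the expectation with respect to $\varepsilon^n$.

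Second, I would extract a uniform, strictly positive lower bound for $-h_i(\theta)$. Since $\theta$ lies in the compact set $\Theta$ by Assumption \ref{ass:paraspace}, and the entries of $\hat Z_i$ are uniformly bounded (built from $X$ of bounded support, binary $\mathsf{D}$ and $\mathsf{G}$, and CCPs in $[0,1]$), the argument $\hat Z_i^{\intercal}\theta$ ranges in a compact interval; Assumption \ref{ass:epsilon} (ii) guarantees continuity of $h_i(\cdot)$, so the Extreme Value Theorem produces a constant $\underline{h}>0$ with $-h_i(\theta)\geq \underline{h}$ uniformly. Via the Courant--Fischer characterization, this gives
\begin{equation*}
-\eta_{max}^0 \;=\; \lambda_{\min}\!\bigl(-\nabla_\theta^2 \hat{\mathbb{M}}(\acute\theta)\bigr) \;\geq\; \underline{h}\,\lambda_{\min}\!\bigl(\tfrac{1}{n}\hat Z \hat Z^{\intercal}\bigr),
\end{equation*}
and analogously for $-\eta_{max}^1$ in terms of $\lambda_{\min}(\tfrac{1}{n} Z Z^{\intercal})$ (after passing the expectation through). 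Assumption \ref{ass:fullrank} ensures these Gram matrices are invertible for each $n$, and Assumption \ref{ass:treatment} is used to prevent the treatment-related coordinates of $\hat Z_i$ from collapsing, yielding a uniform lower bound $\underline{\lambda}>0$ on the minimum eigenvalue that does not degenerate with $n$. Combining these ingredients, $-(\eta_{max}^0+\eta_{max}^1)\geq 2\,\underline{h}\,\underline{\lambda}$, and plugging into Eq.\ref{eq:naivbound} gives the result with $C_2 = d_\theta/(\underline{h}\,\underline{\lambda})$.

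The main obstacle is the last step: establishing a uniform (in $n$) lower bound on $\lambda_{\min}(\hat Z \hat Z^{\intercal}/n)$. The full-rank assumption by itself only delivers positivity at fixed $n$; the uniform bound requires using the explicit block structure of $\hat Z_i$ in Eq.\ref{eq:Zdefine} together with Assumption \ref{ass:treatment} to rule out direction-wise near-collinearity. The derivation of $h_i<0$ from the shape condition and the eigenvalue-via-Rayleigh-quotient step are comparatively mechanical; they are tedious but routine calculations that feed into the uniform bound above.
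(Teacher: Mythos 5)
Your proposal is correct and follows essentially the same route as the paper's own proof: explicit Hessian of the form $\frac{1}{n}\sum_i h_i(\theta)\hat Z_i\hat Z_i^{\intercal}$, strict negativity of the scalar weights from Assumption \ref{ass:smooth}, a uniform bound on those weights via compactness and the Extreme Value Theorem, and a uniform-in-$n$ lower bound on $\lambda_{\min}$ of the Gram matrices using Assumptions \ref{ass:fullrank} and \ref{ass:treatment} (the paper does this through Lemma \ref{lemma:cz}, a Berge's Maximum Theorem corollary). You also correctly identify the uniform minimum-eigenvalue bound as the step requiring the most care, which is exactly where the paper invests its effort.
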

\begin{proof}
    The Hessian matrices in the Taylor expansions of $\hat{\mathbb{M}}(\theta_0)-\hat{\mathbb{M}}(\hat{\theta})$ and $M(\hat{\theta})-M(\theta_0)$ are: 
\begin{equation}
    \begin{split}
        \nabla_{\theta}^2\hat{\mathbb{M}}(\acute{\theta})=&\frac{1}{n}\sum_{i=1}^n \big[Y_i\omega_0(\hat{Z}_i^{\intercal}\acute{\theta})-(1-Y_i)\omega_1(\hat{Z}_i^{\intercal}\acute{\theta})\big]\hat{Z}_i\hat{Z}_i^{\intercal},
    \end{split}
\end{equation}
\begin{equation}
    \begin{split}
        \nabla_{\theta}^2 M(\grave{\theta})=&\frac{1}{n}\sum_{i=1}^n\mathbb{E}_{\varepsilon^{n}} \Big[\big[Y_i\omega_0(Z_i^{\intercal}\grave{\theta})-(1-Y_i)\omega_1(Z_i^{\intercal}\grave{\theta})\big]Z_iZ_i^{\intercal}\Big\vert S,\sigma^{data}\Big],
    \end{split}
\end{equation}
where
\begin{equation}
    \begin{split}
        \omega_0(a)=\frac{ F_{\varepsilon}''(a)F_{\varepsilon}(a)-F_{\varepsilon}'(a)^2}{F_{\varepsilon}(a)^2},\quad \omega_1(a)= \frac{ F_{\varepsilon}''(a) [1-F_{\varepsilon}(a)]+F_{\varepsilon}'(a)^2}{[1-F_{\varepsilon}(a)]^2},
    \end{split}
\end{equation}
and $a\in\mathbb{R}$. 
Combining the above equations with Eq.\ref{eq:mvt} and Eq.\ref{eq:mvvt}, we have: 
\begin{equation}\label{eq:mmqu}
    \begin{split}
        \hat{\mathbb{M}}(\theta_0)-\hat{\mathbb{M}}(\hat{\theta})=\frac{1}{2n}\sum_{i=1}^n \Big[Y_i\omega_0(\hat{Z}_i^{\intercal}\acute{\theta})-(1-Y_i)\omega_1(\hat{Z}_i^{\intercal}\acute{\theta})\Big]\Big[(\hat{\theta}-\theta_0)^{\intercal}\hat{Z}_i\Big]^2\leq 0.
    \end{split}
\end{equation}
\begin{equation}\label{eq:mqu}
    M(\hat{\theta})-M(\theta_0)=\frac{1}{2n}\sum_{i=1}^n \Big[\sigma_i\omega_0(Z_i^{\intercal}\grave{\theta})-(1-\sigma_i)\omega_1(Z_i^{\intercal}\grave{\theta})\Big]\Big[(\hat{\theta}-\theta_0)^{\intercal}Z_i\Big]^2\leq 0,
\end{equation}
where $\sigma_i=\mathbb{E}_{\varepsilon^{n}}[Y_i\vert S,\sigma^{data}]$. Although Eq.\ref{eq:mmqu} and Eq.\ref{eq:mqu} are non-positive, they do not pin down the sign of the coefficient on the quadratic term (i.e., $Y_i\omega_0(\hat{Z}_i^{\intercal}\acute{\theta})-(1-Y_i)\omega_1(\hat{Z}_i^{\intercal}\acute{\theta})$ and $\sigma_i\omega_0(Z_i^{\intercal}\grave{\theta})-(1-\sigma_i)\omega_1(Z_i^{\intercal}\grave{\theta})$). Under Assumption \ref{ass:smooth}, we have $\omega_0(a) < 0$ and $\omega_1(a) > 0$ for all $a \in \mathbb{R}$. Furthermore, Assumption \ref{ass:paraspace} ensures that $\Theta$ is a compact parameter space, which guarantees that $\hat{\theta}$ resides within this compact set. Consequently, $\acute{\theta}$ and $\grave{\theta}$ are also confined within the same compact set. Given this setup, all elements in $Z_i$ and $\hat{Z}_i$ for each $i=1, \ldots, n$ also exist within a compact set. Therefore, the products $\hat{Z}_i^{\intercal}\acute{\theta}$ and $Z_i^{\intercal}\grave{\theta}$ are confined within a compact set, denoted as $\Xi$. We define the following bounds for $\omega_0$ and $\omega_1$ across $\Xi$:
\begin{equation}
    \widebar{\omega}_0\coloneqq \max_{x\in\Xi}\omega_0(x),\quad
    \widebar{\omega}_1\coloneqq \max_{x\in\Xi}-\omega_1(x),\quad\underline{\omega}_0\coloneqq \min_{x\in\Xi}\omega_0(x),\quad
    \underline{\omega}_1\coloneqq \min_{x\in\Xi}-\omega_1(x).
\end{equation}
By Assumption \ref{ass:epsilon} and \ref{ass:smooth}, the probability density function of $\varepsilon^{n}$ and its derivative are bounded. In addition, under Assumption \ref{ass:epsilon} they are continuous functions. Therefore, the Extreme Value Theorem guarantees the existence of $\widebar{\omega}_0$ and $\widebar{\omega}_1$. Define $\widebar{\omega}$ as $\max\{\widebar{\omega}_0, \widebar{\omega}_1\}$, and $\underline{\omega}$ as $\min\{ \underline{\omega}_0, \underline{\omega}_1\}$. We have $\omega_0(\hat{Z}_i^{\intercal}\acute{\theta})\leq \widebar{\omega}$ and $\omega_0(Z_i^{\intercal}\grave{\theta})\leq \widebar{\omega}$ for any $i=1,...,n$. In addition, $-\omega_1(\hat{Z}_i^{\intercal}\acute{\theta})\leq \widebar{\omega}$ and $-\omega_1(Z_i^{\intercal}\grave{\theta})\leq \widebar{\omega}$ for any $i=1,...,n$. Combining the above arguments with Eq.\ref{eq:mmqu} and Eq.\ref{eq:mqu}:
\begin{equation}\label{eq:gnuniform}
   \mathbb{G}_n(\hat{\theta})-\mathbb{G}_n(\theta_0)\geq -\widebar{\omega}(\hat{\theta}-\theta_0)^{\intercal}\frac{1}{2n}\sum_{i=1}^n(Z_iZ_i^{\intercal}+\hat{Z}_i\hat{Z}_i^{\intercal})(\hat{\theta}-\theta_0).
\end{equation}
Under Assumption \ref{ass:fullrank}, $\frac{1}{n}\sum_{i=1}^n \hat{Z}_i\hat{Z}_i^\intercal$ and $\frac{1}{n}\sum_{i=1}^n Z_iZ_i^\intercal$ are positive definite matrices for all $X\in\mathcal{X}^n$ and $G\in\mathcal{G}$. This guarantees that the smallest eigenvalues of $\frac{1}{n}\sum_{i=1}^n \hat{Z}_i\hat{Z}_i^\intercal$ and $\frac{1}{n}\sum_{i=1}^n Z_iZ_i^\intercal$ are positive. However, the smallest eigenvalue still depends on the size of the training sample. Lemma \ref{lemma:cz} addresses this by guaranteeing the existence of a uniform lower bound on the smallest eigenvalue. Lemma \ref{lemma:cz} is a corollary of \textit{Berge's Maximum Theorem} (Lemma \ref{app:bergetheorem}). A proof is provided in Appendix \ref{app:contio}. Given that any element in $Z_i$ and $\hat{Z}_i$ is a linear combination of products between $\mathsf{X}_i$, $\mathsf{G}_{ij}$, $\sigma^{data}_i$, $\mathsf{D}_i$, and $\mathsf{D}\in\{0,1\}^n$, $\sigma^{data}\in[0,1]^n$, $\mathsf{G}\in\{0,1\}^{n\times n}$, and $\mathsf{X}\in\mathcal{X}^n$ are both compact, the \textit{Extreme Value Theorem} (Lemma \ref{app:evtheorem}) guarantees the existence of a minimum smallest eigenvalue, which only depends on the bound of each element\footnote{Although the equilibrium \(\sigma\) and $\hat{\sigma}^{data}$ may not be continuous functions of \(\mathsf{X}\), \(\mathsf{D}\), and \(\mathsf{G}\), the inherent compactness of \(Z_i\) and \(\hat{Z}_i\) for all \(i = 1, \ldots, n\) suffices to apply the Extreme Value Theorem.} and is independent of the training data. 
Assumption \ref{ass:treatment} guarantees that the average number of treated units in the training data is non-zero for any network size. Combined with the above arguments, this ensures that the matrices formed by $\frac{1}{n}\sum_{i=1}^n \hat{Z}_i\hat{Z}_i^\intercal$ and $\frac{1}{n}\sum_{i=1}^n Z_iZ_i^\intercal$ have strictly positive minimum smallest eigenvalues. We denote these minimum smallest eigenvalues as $\underline{\varsigma_{min}^0}$ and $\underline{\varsigma_{min}^1}$, respectively. In addition, let $\underline{\varsigma_{min}}=\min\{\underline{\varsigma_{min}^0}, \underline{\varsigma_{min}^1}\}$. Therefore,
\begin{equation}\label{eq:etamin}
    (\hat{\theta}-\theta_0)^{\intercal}\frac{1}{2n}\sum_{i=1}^n(Z_iZ_i^{\intercal}+\hat{Z}_i\hat{Z}_i^{\intercal})(\hat{\theta}-\theta_0)\geq \underline{\varsigma_{min}}\Vert \hat{\theta}-\theta_0\Vert^2_2>  0.
\end{equation}
Combining Eq.\ref{eq:naivbound} with Eq.\ref{eq:gnuniform} and Eq.\ref{eq:etamin}, we conclude that the sampling uncertainty of $\hat{\theta}$ is characterized by the empirical process $\mathbb{G}_n(\cdot)$. Formally:
\begin{equation}
        \mathbb{E}_{\varepsilon^{n}}\big[\Vert\hat{\theta}-\theta_0 \Vert_1\big\vert S,\sigma^{data}\big]\leq \frac{d_{\theta}}{-\widebar{\omega}\underline{\varsigma_{min}}}\mathbb{E}_{\varepsilon^{n}}\big[\Vert\nabla_{\theta}\mathbb{G}_n(\Tilde{\theta} )\Vert_1\big\vert S,\sigma^{data}\big].
    \end{equation}
    To complete the proof, let $C_2 = \frac{d_{\theta}}{-\widebar{\omega}\underline{\varsigma_{min}}}$.
\end{proof}

\subsection{Proof of Lemma \ref{lemma:gaussian}}\label{app:prooflemmagaus}
\begin{lemma}
Under Assumption \ref{ass:epsilon} to \ref{ass:fullrank}, we have
    \begin{equation}
      \mathbb{E}_{\varepsilon^{n}}\big[\sup_{\theta\in\Theta}\Vert\nabla_{\theta}\mathbb{G}_n(\theta )\Vert_1\big\vert S,\sigma^{data}\big]\leq \frac{C_3+C_4\sqrt{\log(n)}}{\sqrt{n}},
    \end{equation}
    where $C_3$ and $C_4$ are constants that depend only on the support of covariates, the distribution of $\varepsilon$, $C_{\sigma}$, the covariates space $\mathcal{X}$, the network space $\mathcal{G}$ and the treatment allocation space $\mathcal{D}$.
\end{lemma}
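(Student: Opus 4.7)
The plan is to isolate the two sources of randomness in $\nabla_\theta \mathbb{G}_n(\theta)$: the pure empirical process randomness from the Bernoulli outcomes $Y_i$ (conditional on $S,\sigma^{data}$), and the first-stage estimation error transmitted through $\hat{Z}_i - Z_i$. First I would compute the score explicitly via the chain rule, obtaining
\begin{equation*}
\nabla_\theta \hat{\mathbb{M}}(\theta) = \frac{1}{n}\sum_{i=1}^n \varphi(\hat{Z}_i^\intercal \theta)\bigl[Y_i - F_\varepsilon(\hat{Z}_i^\intercal\theta)\bigr]\hat{Z}_i, \qquad \varphi(a) \coloneqq \frac{f_\varepsilon(a)}{F_\varepsilon(a)(1-F_\varepsilon(a))},
\end{equation*}
with the analogous expression for $\nabla_\theta M(\theta)$ using $Z_i$ in place of $\hat{Z}_i$ and $\sigma_i^{data}$ in place of $Y_i$. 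Introducing the auxiliary score $\nabla_\theta \hat{\mathbb{M}}^\ast(\theta)$ that uses $Z_i$ instead of $\hat{Z}_i$, I would then split
\begin{equation*}
\nabla_\theta \mathbb{G}_n(\theta) = \underbrace{\bigl[\nabla_\theta \hat{\mathbb{M}}^\ast(\theta) - \nabla_\theta M(\theta)\bigr]}_{A_n(\theta)} + \underbrace{\bigl[\nabla_\theta \hat{\mathbb{M}}(\theta) - \nabla_\theta \hat{\mathbb{M}}^\ast(\theta)\bigr]}_{B_n(\theta)}
\end{equation*}
and bound each piece separately.

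For $A_n(\theta)$, the pure empirical process term, conditional on $S$ and $\sigma^{data}$ the summands $\varphi(Z_i^\intercal\theta)[Y_i - \sigma_i^{data}]Z_i$ are independent mean-zero random vectors. Compactness of $\Theta$, boundedness of $\mathcal{X}$, $\mathcal{D}$, $\mathcal{G}$, and the shape assumption on $F_\varepsilon$ yield a uniform envelope and a uniform Lipschitz constant in $\theta$ for the integrand. The class $\{\theta \mapsto \varphi(z^\intercal\theta)[y-F_\varepsilon(z^\intercal\theta)]z : (z,y)\in\mathcal{Z}\times\{0,1\}\}$ is therefore a VC-subgraph class indexed by a finite-dimensional parameter, and Dudley's chaining (or the standard Pollard maximal inequality for sub-Gaussian processes) gives $\mathbb{E}_{\varepsilon^n}\bigl[\sup_\theta \Vert A_n(\theta)\Vert_1 \mid S,\sigma^{data}\bigr] \leq C_4 \sqrt{\log(n)/n}$, where the $\sqrt{\log n}$ factor absorbs the covering-number integral.

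For $B_n(\theta)$, I would exploit that $\hat{Z}_i$ differs from $Z_i$ only in the coordinates involving $\hat{\sigma}_j^{data} - \sigma_j^{data}$, weighted by $m_{ij}G_{ij}/|\mathcal{N}_i|$. A first-order Taylor expansion of the score in $\hat{Z}_i$, combined with the shape assumption (which keeps $\varphi$ and $\varphi'$ bounded on the relevant compact range) and the full-rank/compactness assumptions, yields a uniform Lipschitz bound $\Vert B_n(\theta)\Vert_1 \leq L \cdot \frac{1}{n}\sum_j |\hat{\sigma}_j^{data} - \sigma_j^{data}|$ for some constant $L$ independent of $\theta$ and $n$. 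Taking conditional expectations and invoking Assumption \ref{ass:ccp}, which delivers $\mathbb{E}_{\varepsilon^n}[|\hat{\sigma}_j^{data}-\sigma_j^{data}| \mid S,\sigma^{data}] \leq C_\sigma\sqrt{\pi/n}$ via the standard sub-Gaussian mean bound, gives $\mathbb{E}_{\varepsilon^n}\bigl[\sup_\theta \Vert B_n(\theta)\Vert_1 \mid S,\sigma^{data}\bigr] \leq C_3/\sqrt{n}$. Adding the two bounds yields the claim.

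The main obstacle is the pure empirical process term $A_n$: the conditional-on-$S,\sigma^{data}$ setting makes the $Y_i$ independent Bernoulli but not identically distributed, and the envelope and Lipschitz constants must be shown to be uniform in $(X,G,D)$ so that the resulting constants $C_3,C_4$ depend only on the stated quantities and not on $n$. This uniformity is guaranteed by the same Extreme-Value-Theorem-style argument already used in Lemmas \ref{lemma:unctheta} and \ref{pro.samplingunc}, applied to the envelope of the score class and its Lipschitz modulus.
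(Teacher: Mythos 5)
Your proposal is correct and follows essentially the same route as the paper: the same decomposition of $\nabla_\theta\mathbb{G}_n$ into a first-stage plug-in error term (your $B_n$, the paper's $\mathbb{A}_k$, handled by a mean-value/Lipschitz bound in $\hat{Z}_i-Z_i$ plus the sub-Gaussian CCP assumption) and a pure empirical process term (your $A_n$, the paper's $\mathbb{B}_k$, handled by a uniform maximal inequality over $\Theta$ yielding the $\sqrt{\log n}$ factor). The only cosmetic difference is that the paper implements the uniform bound via an explicit $\delta$-net, Lipschitz continuity, Rademacher symmetrization and a sub-Gaussian maximal inequality rather than citing Dudley/Pollard chaining, but these are interchangeable and give the same rate.
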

\begin{proof}
    Recall we are using a two-step ML estimation procedure, so the first step of estimation introduces additional sampling uncertainty through $\hat{\sigma}^{data}$. To separate the sampling uncertainty of the first and the second steps, we introduce $\mathbb{M}(\theta)$, which is the likelihood function evaluated at the true 
equilibrium in the training data $\sigma^{data}$:
\begin{equation}
    \mathbb{M}(\theta) =\frac{1}{n}\sum_{i=1}^n Y_i\log\big(F_{\varepsilon}(Z_{i}^\intercal\boldsymbol{\theta})\big)+(1-Y_i)\log\big(1-F_{\varepsilon}(Z_{i}^\intercal\boldsymbol{\theta})\big).
\end{equation}
We then decompose the initial empirical process $\mathbb{G}_n(\theta)$ into two parts: 
\begin{equation}\label{eq:truem}
      \mathbb{G}_n(\theta) =  \hat{\mathbb{M}}(\theta)-\mathbb{M}(\theta)+\mathbb{M}(\theta)-M(\theta).
\end{equation}
The first term measures the uncertainty of using the estimated equilibrium $\hat{\sigma}^{data}$, and the second term measures the uncertainty of using the estimated parameter $\hat{\theta}$. Rewrite the gradient of the empirical process as
\begin{equation}
   \begin{split}
       \nabla_{\theta}\mathbb{G}_n(\Tilde{\theta})=\nabla_{\theta} \hat{\mathbb{M}}(\Tilde{\theta})-\nabla_{\theta} \mathbb{M}(\Tilde{\theta})+\nabla_{\theta} \mathbb{M}(\Tilde{\theta})-\nabla_{\theta} M(\Tilde{\theta}).
   \end{split}
\end{equation}
By the triangle inequality, we have:
\begin{equation}\label{eq:gradtrain}
    \Vert\nabla_{\theta}\mathbb{G}_n(\Tilde{\theta}')\Vert_1\leq \Vert\nabla_{\theta} \hat{\mathbb{M}}(\Tilde{\theta})-\nabla_{\theta} \mathbb{M}(\Tilde{\theta})\Vert_1+\Vert\nabla_{\theta} \mathbb{M}(\Tilde{\theta})-\nabla_{\theta} M(\Tilde{\theta})\Vert_1.
\end{equation}
The gradient of $\mathbb{G}_n(\cdot)$, $\hat{\mathbb{M}}(\cdot)$, $\mathbb{M}(\cdot)$ and $M(\cdot)$ are $d_{\theta}\times 1$ vectors. Let $\nabla_{k}\mathbb{G}_n(\cdot)$, $\nabla_{k}\hat{\mathbb{M}}(\cdot)$, $\nabla_{k}\mathbb{M}(\cdot)$ and $\nabla_{k}M(\cdot)$ denote their $k$-th elements.
In addition, we define a sequence of empirical processes $\{\mathbb{B}_k(\theta)\}_{k=1}^{d_{\theta}}$ where $\mathbb{B}_k(\theta)\coloneqq \nabla_{k}\mathbb{M}(\theta)-\nabla_{k}M(\theta)$, and a sequence of stochastic processes $\{\mathbb{A}_k(\theta)\}_{k=1}^{d_{\theta}}$ where $\mathbb{A}_k(\theta)\coloneqq \nabla_{k}\hat{\mathbb{M}}(\theta)-\nabla_{k}\mathbb{M}(\theta)$.
The first term in Eq.\ref{eq:gradtrain} is then
\begin{equation}\label{eq:epiria}
   \Vert\nabla_{\theta} \hat{\mathbb{M}}(\Tilde{\theta})-\nabla_{\theta} \mathbb{M}(\Tilde{\theta})\Vert_1= \sum_{k=1}^{d_{\theta}}  \vert \mathbb{A}_k(\Tilde{\theta})\vert,
\end{equation}
and the second term in Eq.\ref{eq:gradtrain} is
\begin{equation}\label{eq:epirib}
    \begin{split}
       \Vert\nabla_{\theta} \mathbb{M}(\Tilde{\theta})-\nabla_{\theta} M(\Tilde{\theta})\Vert_1&=\sum_{k=1}^{d_{\theta}}  \vert \mathbb{B}_k(\Tilde{\theta})\vert.
    \end{split}
\end{equation}
Combining Eq.\ref{eq:gradtrain} with Eq.\ref{eq:epiria} and Eq.\ref{eq:epirib}, we conclude
\begin{equation}\label{eq:eptwp}
    \begin{split}
        \mathbb{E}_{\varepsilon^{n}}\big[\Vert\nabla_{\theta}\mathbb{G}_n(\Tilde{\theta} )\Vert_1\big\vert S,\sigma^{data}\big]\leq\sum_{k=1}^{d_{\theta}}\mathbb{E}_{\varepsilon^n}\big[\vert \mathbb{A}_k(\Tilde{\theta})\vert\big\vert S,\sigma^{data}\big]+\sum_{k=1}^{d_{\theta}}\mathbb{E}_{\varepsilon^{n}}\big[\vert \mathbb{B}_k(\Tilde{\theta})\vert\big\vert S,\sigma^{data}\big].
    \end{split}
\end{equation}
An upper bound for the first term in Eq.\ref{eq:eptwp} is provided by Lemma \ref{lemma:a}. An upper bound for the second term in Eq.\ref{eq:eptwp} is provided by Lemma \ref{lemma:b}. Combining them,
\begin{equation}
      \mathbb{E}_{\varepsilon^{n}}\big[\sup_{\theta\in\Theta}\Vert\nabla_{\theta}\mathbb{G}_n(\theta )\Vert_1\big\vert S,\sigma^{data}\big]\leq d_{\theta}\frac{C_A\sqrt{1+\ln(2)}+1}{\sqrt{n}}+\frac{d_{\theta}^{3/2}C_{B1}}{\sqrt{n}}\sqrt{\log(1+C_{B2}\sqrt{n})}.
    \end{equation}
In addition, given $n\geq 2$,
\begin{equation}
    \log(1+C_{B2}\sqrt{n})\leq 2\log(C_{B2}\sqrt{n})=2\log(C_{B2})+\log(n).
\end{equation}
As a consequence,
\begin{equation}
   \mathbb{E}_{\varepsilon^{n}}\big[\sup_{\theta\in\Theta}\Vert\nabla_{\theta}\mathbb{G}_n(\theta )\Vert_1\big\vert S,\sigma^{data}\big]\leq d_{\theta}\frac{C_A\sqrt{1+\ln(2)}+1}{\sqrt{n}}+\frac{d_{\theta}^{3/2}C_{B1}}{\sqrt{n}}\sqrt{\log(C_{B2})}+\frac{d_{\theta}^{3/2}C_{B1}}{\sqrt{n}}\sqrt{\log(n)}.
\end{equation}
To complete the proof, let $C_3=(C_A\sqrt{1+\ln(2)}+1)d_{\theta}+d_{\theta}^{3/2}C_{B1}$, and $C_4=d_{\theta}^{3/2}C_{B1}$.
\end{proof}

\section{Theorems and Propositions}\label{appendixC}
\subsection{Proof of Theorem \ref{thm:equality}}\label{app:thequ}
\begin{theorem} For a supermodular game, the least favorable equilibrium selection rule $\underline{\lambda}$ and the most favorable equilibrium selection rule $\widebar{\lambda}$ are given as:
    \begin{equation}
        \underline{\lambda} \coloneqq \delta_{\underline{\sigma}^*}, \quad \widebar{\lambda} \coloneqq \delta_{\widebar{\sigma}^*},
    \end{equation}
    where $\delta_{\sigma}$ is the Dirac measure on $\Sigma$. 
In addition, the following conditions are satisfied:
\begin{equation}
    \begin{split}
\inf_{\lambda\in\Lambda}\sum_{i=1}^N \Pr(Y_i=1\vert X,D,G,\lambda)
      =\sum_{i=1}^N\inf_{\lambda\in\Lambda} \Pr(Y_i=1\vert X,D,G,\lambda),
    \end{split}
\end{equation}
\begin{equation}
    \begin{split}
\sup_{\lambda\in\Lambda}\sum_{i=1}^N \Pr(Y_i=1\vert X,D,G,\lambda)
      =\sum_{i=1}^N\sup_{\lambda\in\Lambda} \Pr(Y_i=1\vert X,D,G,\lambda).
    \end{split}
\end{equation}
\end{theorem}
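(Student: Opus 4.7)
\textbf{Proof Proposal for Theorem \ref{thm:equality}.}

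The plan is to exploit the existence of a greatest and a least equilibrium CCP profile in a supermodular game (guaranteed by Tarski's fixed point theorem, as noted in Section \ref{sec:super}) to show that the bounds on aggregate welfare and individual CCPs are attained \emph{simultaneously} across all units by the same Dirac selection rule. Once this simultaneous attainment is established, the interchange of infimum/supremum with the finite sum follows immediately.

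First, I would fix $X, D, G$ and recall from Eq.\ref{eq:conlam} that for any $\lambda \in \Lambda$,
\begin{equation}
\Pr(Y_i = 1 \vert X, D, G, \lambda) = \sum_{\sigma^* \in \Sigma} \lambda(\sigma^* \vert X, D, G) \, \sigma_i^*,
\end{equation}
so $\Pr(Y_i=1\vert X,D,G,\lambda)$ is a convex combination of $\{\sigma_i^*\}_{\sigma^* \in \Sigma}$. Since the game is supermodular, $(\Sigma, \leq)$ has a least element $\underline{\sigma}^*$ and a greatest element $\widebar{\sigma}^*$ under the product order, so $\underline{\sigma}_i^* \leq \sigma_i^* \leq \widebar{\sigma}_i^*$ for every $i \in \mathcal{N}$ and every $\sigma^* \in \Sigma$. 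Taking convex combinations preserves these pointwise bounds, yielding
\begin{equation}
\underline{\sigma}_i^* \;\leq\; \Pr(Y_i=1 \vert X,D,G,\lambda) \;\leq\; \widebar{\sigma}_i^*, \qquad \forall i \in \mathcal{N}, \; \forall \lambda \in \Lambda.
\end{equation}
Next, taking $\lambda = \delta_{\underline{\sigma}^*} \in \Lambda$ collapses the convex combination to $\underline{\sigma}_i^*$ for every $i$, and symmetrically $\lambda = \delta_{\widebar{\sigma}^*}$ yields $\widebar{\sigma}_i^*$ for every $i$. This identifies the unit-wise extrema:
\begin{equation}
\inf_{\lambda \in \Lambda} \Pr(Y_i=1\vert X,D,G,\lambda) = \underline{\sigma}_i^*, \qquad \sup_{\lambda \in \Lambda} \Pr(Y_i=1\vert X,D,G,\lambda) = \widebar{\sigma}_i^*,
\end{equation}
and both extrema are attained by the same selection rule $\underline{\lambda} = \delta_{\underline{\sigma}^*}$ (respectively $\widebar{\lambda} = \delta_{\widebar{\sigma}^*}$), independent of $i$.

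With simultaneous attainment in hand, the interchange is straightforward. For the infimum, the direction $\inf \sum \leq \sum \inf$ follows by evaluating both sides at $\underline{\lambda}$: the left side is at most $\sum_i \underline{\sigma}_i^* = \sum_i \inf_\lambda \Pr(Y_i=1\vert\cdots,\lambda)$, while the reverse inequality $\inf \sum \geq \sum \inf$ is just Eq.\ref{eq:minneq}. The supremum case is handled symmetrically using $\widebar{\lambda}$ and Eq.\ref{eq:maxneq}. The main conceptual hurdle is clarifying why this interchange can fail without complementarity: when $\Sigma$ contains unordered equilibria (e.g.\ $(1,0)$ and $(0,1)$), no single $\lambda$ can minimize every coordinate at once, so the coordinatewise infima need not be jointly feasible. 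Supermodularity removes this obstruction by providing a single order-theoretic extremum in $\Sigma$, which is precisely what makes the Dirac selection rules globally least and most favorable.
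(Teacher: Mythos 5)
Your proof is correct, and it rests on the same underlying idea as the paper's — the extremal equilibria guaranteed by Tarski's theorem dominate every other equilibrium coordinatewise, so the unit-level bounds are attained \emph{simultaneously} by a single Dirac selection rule, which forces equality in Eq.\ref{eq:minneq} and Eq.\ref{eq:maxneq}. The execution differs in where the ordering is exploited. You work entirely in the CCP space: starting from Eq.\ref{eq:conlam}, you observe that $\Pr(Y_i=1\vert X,D,G,\lambda)$ is a convex combination of $\{\sigma_i^*\}_{\sigma^*\in\Sigma}$, invoke $\underline{\sigma}^*\leq\sigma^*\leq\widebar{\sigma}^*$ in the product order, and note that convex combinations preserve the coordinatewise bounds while the Dirac measures attain them. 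The paper instead descends to the outcome space: it writes $\Pr(Y_i=1\vert X,D,G,\lambda)$ as an integral over $\varepsilon$ of the induced distribution $\lambda(1,y_{-i}\vert X,D,G,\varepsilon)$ over realized equilibrium profiles, uses the existence for each $\varepsilon$ of a least and greatest BNE $\underline{y}_\varepsilon\leq y_\varepsilon\leq\widebar{y}_\varepsilon$ (via Eq.\ref{eq:equilisig}), runs a three-case analysis on $(\underline{y}_\varepsilon^i,\widebar{y}_\varepsilon^i)$, and integrates. Your route is shorter because it takes Eq.\ref{eq:conlam} as given and never conditions on $\varepsilon$; the paper's route buys an explicit, $\varepsilon$-pointwise ordering of realized outcomes, which is the form of the result reused in the proof of Corollary \ref{coro:utility} (where joint probabilities $\Pr(Y_iY_j=1\vert\cdot)$ must be handled) and which transfers essentially verbatim to the complete-information setting of Proposition \ref{pro:complete}. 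Either argument is a complete proof of the stated theorem.
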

\begin{proof}
Recall that the conditional choice probability is:
\begin{equation}
    \begin{split}
         \Pr(Y_i=1\vert X,D,G,\lambda)&=\sum_{y_{-i}\in \mathcal{Y}^{N-1}}\Pr(Y_i=1,Y_{-i}=y_{-i}\vert X,D,G,\lambda)\\ &=\sum_{y_{-i}\in \mathcal{Y}^{N-1}}\int \lambda(1,y_{-i}\vert X,D,G,\varepsilon) dF_\varepsilon\\ &=\int\sum_{y_{-i}\in \mathcal{Y}^{N-1}} \lambda(1,y_{-i}\vert X,D,G,\varepsilon) dF_\varepsilon.
    \end{split}
\end{equation}
Summing over all the units yields 
\begin{equation}
    \begin{split}
        \sum_{i=1}^N \Pr(Y_i=1\vert X,D,G,\lambda)&=\sum_{i=1}^N\sum_{y_{-i}\in \mathcal{Y}^{N-1}}\Pr(Y_i=1,Y_{-i}=y_{-i}\vert X,D,G,\lambda) \\&=\sum_{i=1}^N \sum_{y_{-i}\in \mathcal{Y}^{N-1}}\int\lambda(1,y_{-i}\vert X,D,G,\varepsilon) dF_\varepsilon\\&=\sum_{i=1}^N \int\sum_{y_{-i}\in \mathcal{Y}^{N-1}}\lambda(1,y_{-i}\vert X,D,G,\varepsilon) dF_\varepsilon
    \end{split}
\end{equation}
Given the properties of a supermodular game, there always exists a maximal pure strategy Bayesian Nash equilibrium $\overline{y}_\varepsilon$ and a minimal pure strategy Bayesian Nash equilibrium $\underline{y}_\varepsilon$ for all $\varepsilon$. Recall Eq.\ref{eq:equilisig}, for a given $\varepsilon\in\mathbb{R}^N$, these two extreme equilibria can be represented by:
\begin{equation}
    \underline{y}^i_\varepsilon=\mathds{1}\Big\{\alpha_i+\sum_{j\neq i}\beta_{ij}\underline{\sigma}_j^*(X,D,G)\geq \varepsilon_i\Big\},\quad \forall i\in\mathcal{N}.
    \end{equation}
    \begin{equation}
        \widebar{y}^i_\varepsilon=\mathds{1}\Big\{\alpha_i+\sum_{j\neq i}\beta_{ij}\widebar{\sigma}_j^*(X,D,G)\geq \varepsilon_i\Big\},\quad \forall i\in\mathcal{N}.
    \end{equation}
Therefore, $\underline{y}_{\varepsilon}$ happens with probability 1 under our defined least favorable equilibrium selection rule $\underline{\lambda}$, and $\widebar{y}_{\varepsilon}$ happens with probability 1 under our defined most favorable equilibrium selection rule $\widebar{\lambda}$.  We know that $\overline{y}_\varepsilon\geq \underline{y}_\varepsilon$ for any $\varepsilon$ where the order in here is \textit{product order}. For any Bayesian Nash equilibrium $y_\varepsilon\in\Sigma(X,D,G,\varepsilon)$, we must have $\overline{y}_\varepsilon^i\geq y_\varepsilon^i\geq\underline{y}_\varepsilon^i$ for any $i\in\mathcal{N}$, $\varepsilon\in\mathbb{R}^N$. Therefore, there are only three possible scenarios for each unit $i$:
\begin{itemize}
    \item $\overline{y}_\varepsilon^i= y_\varepsilon^i=\underline{y}_\varepsilon^i=1$.
    \item $\overline{y}_\varepsilon^i= y_\varepsilon^i=\underline{y}_\varepsilon^i=0$.
    \item $\overline{y}_\varepsilon^i=1$, $\underline{y}_\varepsilon^i=0$ and $y^i_\varepsilon\in\{0,1\}$.
\end{itemize}
Recall $\Pr(Y_i=1\vert X,D,G,\lambda,\varepsilon) = \sum_{y_{-i}\in \mathcal{Y}^{N-1}}\lambda(1,y_{-i}\vert X,D,G,\varepsilon)$, for any $\varepsilon\in\mathbb{R}^N$, we must have: 
\begin{itemize}
    \item when $\overline{y}_\varepsilon^i= \underline{y}_\varepsilon^i=1$, $\Pr(Y_i=1\vert X,D,G,\lambda,\varepsilon)=1$ for all $\lambda\in\Lambda$ and for all $i\in\mathcal{N}$;
    \item when $\overline{y}_\varepsilon^i= \underline{y}_\varepsilon^i=0$, $\Pr(Y_i=1\vert X,D,G,\lambda,\varepsilon)=0$ for all $\lambda\in\Lambda$ and for all $i\in\mathcal{N}$;
    \item when $\overline{y}_\varepsilon^i=1$ and $\underline{y}_\varepsilon^i=0$, $\Pr(Y_i=1\vert X,D,G,\underline{\lambda},\varepsilon)=0$ for all $i\in\mathcal{N}$;
     \item when $\overline{y}_\varepsilon^i=1$ and $\underline{y}_\varepsilon^i=0$, $\Pr(Y_i=1\vert X,D,G,\overline{\lambda},\varepsilon)=1$ for all $i\in\mathcal{N}$.
\end{itemize}
Therefore, for all $\varepsilon\in\mathbb{R}^N$ and $i\in\mathcal{N}$, 
\begin{equation}
    \Pr(Y_i=1\vert X,D,G,\underline{\lambda},\varepsilon)\leq \Pr(Y_i=1\vert X,D,G,\lambda,\varepsilon),\quad \forall \lambda\in\Lambda,
\end{equation}

\begin{equation}
    \Pr(Y_i=1\vert X,D,G,\overline{\lambda},\varepsilon)\geq \Pr(Y_i=1\vert X,D,G,\lambda,\varepsilon),\quad \forall \lambda\in\Lambda.
\end{equation}
As a consequence, the following two conditions are also satisfied:
\begin{equation}\label{eq:pryi}
     \Pr(Y_i=1\vert X,D,G,\underline{\lambda}) \leq \Pr(Y_i=1\vert X,D,G,\lambda), \quad \forall \lambda\in\Lambda,
\end{equation}

\begin{equation}\label{eq:pryiup}
     \Pr(Y_i=1\vert X,D,G,\overline{\lambda}) \geq \Pr(Y_i=1\vert X,D,G,\lambda), \quad \forall \lambda\in\Lambda,
\end{equation}
Therefore, we must have:
\begin{equation}
    \sum_{i=1}^N \Pr(Y_i=1\vert X,D,G,\underline{\lambda}) \leq \sum_{i=1}^N \Pr(Y_i=1\vert X,D,G,\lambda), \quad \forall \lambda\in\Lambda,
\end{equation}
\begin{equation}
    \sum_{i=1}^N \Pr(Y_i=1\vert X,D,G,\overline{\lambda}) \geq \sum_{i=1}^N \Pr(Y_i=1\vert X,D,G,\lambda), \quad \forall \lambda\in\Lambda.
\end{equation}
As a consequence,
\begin{equation}
    \underline{\lambda} = \arg\min_{\lambda\in\Lambda}\sum_{i=1}^N \Pr(Y_i=1\vert X,D,G,\lambda).
\end{equation}

\begin{equation}
    \overline{\lambda} = \arg\max_{\lambda\in\Lambda}\sum_{i=1}^N \Pr(Y_i=1\vert X,D,G,\lambda).
\end{equation}
In addition, given Eq.\ref{eq:pryi} and Eq.\ref{eq:pryiup}, we have: 
\begin{equation}
    \underline{\lambda} = \arg\min_{\lambda\in\Lambda} \Pr(Y_i=1\vert X,D,G,\lambda), \quad\forall i\in\mathcal{N}.
\end{equation}

\begin{equation}
    \overline{\lambda} = \arg\max_{\lambda\in\Lambda} \Pr(Y_i=1\vert X,D,G,\lambda), \quad\forall i\in\mathcal{N}.
\end{equation}
Therefore, 
\begin{equation}
    \begin{split}
\inf_{\lambda\in\Lambda}\sum_{i=1}^N \Pr(Y_i=1\vert X,D,G,\lambda)
      =\sum_{i=1}^N\inf_{\lambda\in\Lambda} \Pr(Y_i=1\vert X,D,G,\lambda),
    \end{split}
\end{equation}
\begin{equation}
    \begin{split}
\sup_{\lambda\in\Lambda}\sum_{i=1}^N \Pr(Y_i=1\vert X,D,G,\lambda)
      =\sum_{i=1}^N\sup_{\lambda\in\Lambda} \Pr(Y_i=1\vert X,D,G,\lambda).
    \end{split}
\end{equation}
\end{proof}
\subsection{Proof of Theorem \ref{thm:uncerntain}}\label{appthm:uncertain}

\begin{theorem}{(\textbf{Sampling Uncertainty of Regret})}
    Under Assumption \ref{ass:epsilon} to \ref{ass:treatment}, the sampling uncertainty of the two-step MLE estimator is bounded by:
    \begin{equation}
    \mathbb{E}_{\varepsilon^{n}}\big[\Vert\hat{\theta}-\theta_0 \Vert_1\big\vert S,\sigma^{data}\big]\leq C_2\frac{C_3+C_4\log(n)}{\sqrt{n}}    
    \end{equation}
    In addition, the sampling uncertainty of the empirical welfare is bounded by:
    \begin{equation}
\mathbb{E}_{\varepsilon^{n}}\Big[\max_{D\in\mathcal{D}}\lvert W_n(D)-W(D)\rvert\Big\vert S,\sigma^{data}\big]\leq C_1C_2\frac{C_3+C_4\log(n)}{\sqrt{n}}.
    \end{equation}
\end{theorem}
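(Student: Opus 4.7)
The plan is straightforward: this theorem is essentially a corollary obtained by chaining Lemmas \ref{lemma:unctheta}, \ref{pro.samplingunc}, and \ref{lemma:gaussian} in sequence, so the proof is primarily a bookkeeping exercise. Since the three lemmas were established earlier in the section precisely to support this result, the role of the theorem's proof is to stitch them together cleanly and reconcile the minor difference in the logarithmic factor between Lemma \ref{lemma:gaussian} and the stated bound.

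First I would apply Lemma \ref{pro.samplingunc} to obtain the intermediate inequality
\begin{equation*}
\mathbb{E}_{\varepsilon^n}\big[\Vert \hat\theta - \theta_0 \Vert_1 \big\vert S, \sigma^{data}\big] \leq C_2\, \mathbb{E}_{\varepsilon^n}\big[\Vert \nabla_\theta \mathbb{G}_n(\tilde\theta) \Vert_1 \big\vert S, \sigma^{data}\big].
\end{equation*}
Next, since $\tilde\theta$ lies on the segment from $\theta_0$ to $\hat\theta$, and both endpoints lie in the compact parameter space $\Theta$ by Assumption \ref{ass:paraspace}, we have $\tilde\theta \in \Theta$ almost surely. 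Therefore $\Vert \nabla_\theta \mathbb{G}_n(\tilde\theta) \Vert_1 \leq \sup_{\theta \in \Theta} \Vert \nabla_\theta \mathbb{G}_n(\theta) \Vert_1$ pointwise, and taking conditional expectations preserves this inequality.

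Applying Lemma \ref{lemma:gaussian} then yields
\begin{equation*}
\mathbb{E}_{\varepsilon^n}\big[\Vert \hat\theta - \theta_0 \Vert_1 \big\vert S, \sigma^{data}\big] \leq C_2 \cdot \frac{C_3 + C_4 \sqrt{\log n}}{\sqrt n}.
\end{equation*}
To match the theorem statement exactly, I would use the elementary bound $\sqrt{\log n} \leq \log n$ for $n \geq e$ (adjusting constants for any finite range of small $n$ if needed); this upgrades $\sqrt{\log n}$ to $\log n$ at the cost of a looser but still $O(\log n / \sqrt n)$ bound, and delivers the first displayed inequality.

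The second displayed inequality is then immediate: composing Lemma \ref{lemma:unctheta}, which bounds the welfare deviation by $C_1$ times the parameter deviation, with the first part of the theorem gives
\begin{equation*}
\mathbb{E}_{\varepsilon^n}\Big[\max_{D\in\mathcal{D}}\lvert W_n(D) - W(D)\rvert \Big\vert S, \sigma^{data}\Big] \leq C_1 C_2 \cdot \frac{C_3 + C_4 \log n}{\sqrt n}.
\end{equation*}
There is no substantive obstacle in this proof — the conceptual work resides entirely in the three preceding lemmas, especially Lemma \ref{pro.samplingunc} (which required establishing a uniform lower bound on the smallest eigenvalue of the Hessian) and Lemma \ref{lemma:gaussian} (which required controlling the two-layer empirical process via a maximal inequality). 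The only mild point of care is ensuring that the measurability of $\tilde\theta$ and the sup-over-$\Theta$ step do not introduce extraneous dependence on $\hat\theta$, which is handled by the uniform envelope provided by Lemma \ref{lemma:gaussian}.
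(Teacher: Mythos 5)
Your proposal is correct and follows essentially the same route as the paper's proof, which likewise just chains Lemma \ref{pro.samplingunc}, Lemma \ref{lemma:gaussian}, and Lemma \ref{lemma:unctheta} in that order. In fact you are slightly more careful than the paper: you explicitly reconcile the $\sqrt{\log n}$ appearing in Lemma \ref{lemma:gaussian} with the $\log n$ in the theorem statement (via $\sqrt{\log n}\leq \log n$ for $n\geq e$) and justify replacing $\nabla_{\theta}\mathbb{G}_n(\tilde\theta)$ by the supremum over $\Theta$, both of which the paper's proof passes over silently.
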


\begin{proof} Recall
    \begin{equation}\label{appeq:comb}
        \mathbb{E}_{\varepsilon^{n}}\big[\Vert\hat{\theta}-\theta_0 \Vert_1\big\vert S,\sigma^{data}\big]\leq C_2\mathbb{E}_{\varepsilon^{n}}\big[\Vert\nabla_{\theta}\mathbb{G}_n(\Tilde{\theta} )\Vert_1\big\vert S,\sigma^{data}\big].
    \end{equation}
    Plugging Lemma \ref{lemma:gaussian} into Eq.\ref{appeq:comb} leads to
    \begin{equation}
     \mathbb{E}_{\varepsilon^{n}}\big[\Vert\hat{\theta}-\theta_0 \Vert_1\big\vert S,\sigma^{data}\big]\leq C_2\frac{C_3+C_4\log(n)}{\sqrt{n}}.
    \end{equation}
    Combining Lemma \ref{lemma:unctheta} with Eq.\ref{appeq:comb}, we conclude:
    \begin{equation}
        \mathbb{E}_{\varepsilon^{n}}\Big[\max_{D\in\mathcal{D}}\lvert W_n(D)-W(D)\rvert\Big\vert S,\sigma^{data}\big]\leq C_1C_2\frac{C_3+C_4\log(n)}{\sqrt{n}}. 
    \end{equation}
\end{proof}

\subsection{Proof of Proposition \ref{pro:greedy}}\label{apppro:greedy}
 \begin{proposition}
    Under Assumptions \ref{ass:epsilon} and Assumptions \ref{ass:paraspace}, the curvature $\xi$ of $W_n(\mathcal{D})$ and the submodularity ratio $\gamma$ of $W_n(\mathcal{D})$ are in $(0,1)$. 
         The greedy algorithm enjoys the following approximation guarantee for the problem in Eq.\ref{eq:wn}:
\begin{equation}
    W_n(D_G)\geq \frac{1}{\xi}(1-e^{-\xi\gamma}) W_n(\Tilde{D}),
\end{equation}
where $D_G$ is the treatment assignment rule that is obtained by Algorithm \ref{algo}.
\end{proposition}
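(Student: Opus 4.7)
}

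The approximation guarantee portion is a direct application of Theorem~1 of \citet{bian2017guarantees}, which states that for any non-decreasing non-negative set function with curvature $\xi$ and submodularity ratio $\gamma$, the greedy algorithm attains at least a $\frac{1}{\xi}(1-e^{-\xi\gamma})$ fraction of the optimal value. Hence the substantive work is to verify (a) monotonicity and non-negativity of $W_n(\mathcal{D})$, and (b) the claim that $\xi,\gamma\in(0,1)$, which ensures the bound is non-trivial. Non-negativity is immediate because $W_n(\mathcal{D})$ is a sum of CDF values, and monotonicity in the considered regime follows from the sign conditions $(\hat{\theta}_1+X_i^\intercal\hat{\theta}_3)>0$, $\hat{\theta}_5>0$ together with $\hat{\theta}_4\ge 0$ and $\hat{\theta}_6\ge 0$: adding a unit to the treated set raises every argument of $F_\varepsilon$ weakly, and monotonicity of the induced minimum equilibrium $\underline{\sigma}$ in treatment (a consequence of Tarski and the supermodular structure established in Section \ref{sec:super}) then guarantees a weak increase of $W_n$.

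To bound the marginal gains, the plan is to write, for any $\mathcal{S}\subseteq\mathcal{N}$ and $k\in\mathcal{N}\setminus\mathcal{S}$, the increment $\Delta(k\mid\mathcal{S}):=W_n(\mathcal{S}\cup\{k\})-W_n(\mathcal{S})$ as a sum over units of differences of $F_\varepsilon$ evaluated at two nearby arguments, and then apply the mean value theorem. Since $F_\varepsilon$ has a density $f_\varepsilon$ that is continuous and bounded above by $\tau$ (Assumption \ref{ass:epsilon}), the compactness of $\Theta$ (Assumption \ref{ass:paraspace}) and the bounded supports of $X$, $D$, $G$, $\underline{\sigma}\in[0,1]^N$ imply that the arguments of $F_\varepsilon$ lie in a compact interval on which $f_\varepsilon$ is also \emph{bounded away from zero} by some $\underline{f}>0$. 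This two-sided bound on $f_\varepsilon$, combined with the uniform boundedness of the ``sensitivity'' of $\underline{\sigma}$ to treatment (which follows by the same implicit-function/Berge argument used in Lemma \ref{lemma:unctheta} to bound $\Vert\nabla_\theta r\Vert_\infty$), delivers two positive constants $0<c_1<c_2<\infty$ such that $c_1\le \Delta(k\mid\mathcal{S})\le c_2$ uniformly over $\mathcal{S}$ and $k\notin\mathcal{S}$.

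Given such uniform two-sided bounds on marginal gains, the definitions of curvature and submodularity ratio yield the claim directly. For the curvature $\xi$, the inequality $f(R\cup\{k\})-f(R)\ge (1-\xi)[f(S\cup\{k\})-f(S)]$ for all $S\subseteq R\subseteq\mathcal{N}$ is satisfied by $1-\xi=c_1/c_2\in(0,1)$, so $\xi\in(0,1)$. For the submodularity ratio $\gamma$, the inequality $\sum_{k\in R\setminus S}[f(S\cup\{k\})-f(S)]\ge \gamma[f(S\cup R)-f(S)]$ is satisfied by $\gamma=c_1/c_2\in(0,1)$, because the left-hand side is at least $|R\setminus S|c_1$ while the right-hand side is at most $|R\setminus S|c_2$ by an iterated application of the upper bound on marginal gains (adding the elements of $R\setminus S$ one at a time to $S$). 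Strict inequality $\xi<1$ and $\gamma>0$ both reduce to $c_1>0$, which is precisely where the lower bound on $f_\varepsilon$ on the compact argument set is used; strict inequality $\xi>0$ and $\gamma<1$ hold because the equilibrium feedback through $\underline{\sigma}$ prevents the function from being exactly additive/modular.

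The main obstacle will be establishing the uniform positive lower bound $c_1$ on the marginal gain. Obtaining the upper bound $c_2$ is routine from compactness and $f_\varepsilon\le\tau$, but for the lower bound one has to rule out the possibility that the induced changes in $\underline{\sigma}$ exactly cancel the direct effect of flipping $D_k$ from $0$ to $1$. The strategy is to exploit that the direct contribution to $\Delta(k\mid\mathcal{S})$ from unit $k$ itself, namely $F_\varepsilon[\text{treated argument of }k]-F_\varepsilon[\text{untreated argument of }k]$, is bounded below by $\underline{f}\cdot(\hat{\theta}_1+X_k^\intercal\hat{\theta}_3)$, which is strictly positive by the sign condition, and to show that the indirect (neighbor) contributions have the same sign in the complementarity regime, so that the full $\Delta(k\mid\mathcal{S})$ is bounded below by this strictly positive direct term uniformly in $\mathcal{S}$. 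Once this is established, combining with Theorem~1 of \citet{bian2017guarantees} closes the proof.
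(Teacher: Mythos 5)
Your proposal is correct and follows essentially the same route as the paper's proof: a mean-value-theorem expansion of the marginal gain $W_n(\mathcal{S}\cup\{k\})-W_n(\mathcal{S})$, a uniform upper bound via the density bound $\tau$ and compactness, a uniform strictly positive lower bound driven by the direct effect $(\hat{\theta}_1+X_k^{\intercal}\hat{\theta}_3)>0$ with the equilibrium-feedback terms signed non-negative by complementarity, and then the ratio of these two bounds to place $\xi$ and $\gamma$ in $(0,1)$ before invoking Theorem 1 of \citet{bian2017guarantees}. The only cosmetic difference is that you phrase the lower bound through a density lower bound $\underline{f}$ on the compact argument set, where the paper uses the constant $\underline{F}_{\varepsilon}$ in the same role; both rest on the same positivity requirement.
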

\begin{proof}

The curvature is defined as the smallest value of $\xi$ such that
\begin{equation}
W_n(R\cup \{k\})-W_n(R)\geq (1-\xi)[W_n(S\cup \{k\})-W_n(S)]\quad   \forall S\subseteq R\subseteq \mathcal{N}, \forall k\in \mathcal{N}\setminus R.
\end{equation}
As a consequence,
\begin{equation}\label{eq2}
        \xi =\max_{S\subseteq R\subset\mathcal{N},k\in\mathcal{N}\setminus R} 1-\frac{W_n(R\cup \{k\})-W_n(R)}{W_n(S\cup \{k\})-W_n(S)}.
\end{equation}
The submodularity ratio of a non-negative set function is the largest $\gamma$ such that
\begin{equation}
    \sum_{k\in R\setminus S} W_n(S\cup \{k\})-W_n(S)\geq \gamma [W_n(S\cup R)-W_n(S)], \quad\forall S,R\subseteq\mathcal{N}.
\end{equation}
As a consequence,
\begin{equation}
    \begin{split}
        \gamma&= \min_{S\neq R} \frac{\sum_{k\in R\setminus S} [W_n(S\cup \{k\})-W_n(S)]}{W_n(S\cup R)-W_n(S)}\\
    \end{split}
\end{equation}
Recall the utility specification in Eq.\ref{eq:specifi}, we denote $\hat{\theta}_0+\hat{\theta}_1 D_i+ X_i^{\intercal}\hat{\theta}_2+X_i^{\intercal}\hat{\theta}_3D_i$ as $\hat{\alpha}_{1i}$ and $\hat{\theta}_0+ X_i'\hat{\theta}_2$ as $\hat{\alpha}_{0i}$. To connect to the set function notation, we further denote $D_{\mathcal{R}}=\{D_i=1:i\in\mathcal{R}\}$. Therefore, for $i\in \mathcal{R}$, we have:
\begin{equation}
    \begin{split}       W_n^i(\mathcal{R}\cup \{k\})-W_n^i(\mathcal{R}) &= F_{\varepsilon}\Big(\hat{\alpha}_{1i}+\frac{1}{\vert\mathcal{N}_i\vert}\sum_{j\neq i}\hat{\theta}_4m_{ij}G_{ij}D_j+\frac{1}{\vert\mathcal{N}_i\vert}\sum_{j\neq i}\hat{\theta}_5m_{ij}G_{ij}\underline{\sigma}_j+\frac{1}{\vert\mathcal{N}_i\vert}\sum_{j\neq i}\hat{\theta}_6D_jm_{ij}G_{ij}\underline{\sigma}_j\Big)\\
        &-F_{\varepsilon}\Big(\hat{\alpha}_{1i}+\frac{1}{\vert\mathcal{N}_i\vert}\sum_{j\neq i}\hat{\theta}_4m_{ij}G_{ij}D_j'+\frac{1}{\vert\mathcal{N}_i\vert}\sum_{j\neq i}\hat{\theta}_5m_{ij}G_{ij}\underline{\sigma}_j'+\frac{1}{\vert\mathcal{N}_i\vert}\sum_{j\neq i}\hat{\theta}_6D_j'm_{ij}G_{ij}\underline{\sigma}_j'\Big),
    \end{split}
\end{equation}
where $\underline{\sigma}_b=\Pr(Y_b=1\vert X,G,D_{\mathcal{R}\cup \{k\}},\underline{\lambda};\hat{\theta})$ and $\underline{\sigma}_b'=\Pr(Y_b=1\vert X,G,D_{\mathcal{R}},\underline{\lambda};\hat{\theta})$ for all $b= 1,...,N$. 
For $m\in\mathcal{N}\setminus\mathcal{R}\cup\{k\}$, their empirical welfare is given as:
\begin{equation}
    \begin{split}
        W_n^m(\mathcal{R}\cup \{k\})-W_n^m(\mathcal{R}) &=F_{\varepsilon}\Big(\hat{\alpha}_{0i}+\frac{1}{\vert\mathcal{N}_i\vert}\sum_{j\neq i}\hat{\theta}_4m_{ij}G_{ij}D_j+\frac{1}{\vert\mathcal{N}_m\vert}\sum_{j\neq m}\hat{\theta}_5m_{mj}G_{mj}\underline{\sigma}_j\Big)\\
        &-F_{\varepsilon}\Big(\hat{\alpha}_{0i}+\frac{1}{\vert\mathcal{N}_i\vert}\sum_{j\neq i}\hat{\theta}_4m_{ij}G_{ij}D_j'+\frac{1}{\vert\mathcal{N}_m\vert}\sum_{j\neq m}\hat{\theta}_5m_{mj}G_{mj}\underline{\sigma}_j'\Big).
    \end{split}
\end{equation}
For the unit $k$, her empirical welfare is given as:
\begin{equation}
   \begin{split}
       &\quad W_n^k(\mathcal{R}\cup \{k\})-W_n^k(\mathcal{R}) \\&=
   F_{\varepsilon}\Big(\hat{\alpha}_{1k}+\frac{1}{\vert\mathcal{N}_k\vert}\sum_{j\neq k}\hat{\theta}_4m_{kj}G_{kj}D_j+\frac{1}{\vert\mathcal{N}_k\vert}\sum_{j\neq k}\hat{\theta}_5m_{kj}G_{kj}\underline{\sigma}_j+\frac{1}{\vert\mathcal{N}_k\vert}\sum_{j\neq k}\hat{\theta}_6D_jm_{kj}G_{kj}\underline{\sigma}_j\Big)\\&-F_{\varepsilon}\Big(\hat{\alpha}_{0k}+\frac{1}{\vert\mathcal{N}_k\vert}\sum_{j\neq k}\hat{\theta}_4m_{kj}G_{kj}D_j'+\frac{1}{\vert\mathcal{N}_k\vert}\sum_{j\neq k}\hat{\theta}_5m_{kj}G_{kj}\underline{\sigma}_j'\Big).
   \end{split}
\end{equation}
In addition, the empirical welfare increments from assigning unit $k$ treatment is given as:
\begin{equation}
    \begin{split}
        W_n(\mathcal{R}\cup \{k\})-W_n(\mathcal{R})&=\sum_{i\in\mathcal{R}}W_n^i(\mathcal{R}\cup \{k\})-W_n^i(\mathcal{R})+\sum_{m\in\mathcal{N}\setminus\mathcal{R}\cup\{k\}}W_n^m(\mathcal{R}\cup \{k\})-W_n^m(\mathcal{R})
        \\&+W_n^k(\mathcal{R}\cup \{k\})-W_n^k(\mathcal{R}).
    \end{split}
\end{equation}
Applying the Mean Value Theorem, and Assumption \ref{ass:epsilon}, $W_n(\mathcal{R}\cup \{k\})-W_n(\mathcal{R})$ is upper bounded by:
\begin{equation}
   \begin{split}
        &W_n(\mathcal{R}\cup \{k\} )-W_n(\mathcal{R})\\&
        \leq \frac{\tau}{N}\sum_{i\in\mathcal{R}}\Big(\frac{1}{\vert\mathcal{N}_i\vert}\sum_{j\neq i}\hat{\theta}_5 m_{ij}G_{ij}(\underline{\sigma}_j-\underline{\sigma}_j')+\frac{1}{\vert\mathcal{N}_i\vert}\sum_{j\neq i,k}\hat{\theta}_6 D_jm_{ij}G_{ij}(\underline{\sigma}_j-\underline{\sigma}_j')+\frac{1}{\vert\mathcal{N}_i\vert}(\hat{\theta}_4+\hat{\theta}_6\underline{\sigma}_k) m_{ik}G_{ik}\Big)\\
        &+\frac{\tau}{N}\sum_{m\in\mathcal{N}\setminus\mathcal{R}\cup\{k\}}\Big(\frac{1}{\vert\mathcal{N}_m\vert}\hat{\theta}_4 m_{mk}G_{mk}+\frac{1}{\vert\mathcal{N}_m\vert}\sum_{j\neq m}\hat{\theta}_5 m_{mj}G_{mj}(\underline{\sigma}_j-\underline{\sigma}_j')\Big)\\
        &+\frac{\tau}{N}\Big(\hat{\theta}_1+X_i^{\intercal}\hat{\theta}_3+\frac{1}{\vert\mathcal{N}_k\vert}\sum_{j\neq k}\hat{\theta}_5m_{kj}G_{kj}(\underline{\sigma}_j-\underline{\sigma}_j')+\frac{1}{\vert\mathcal{N}_k\vert}\sum_{j\neq k}\hat{\theta}_6D_jm_{kj}G_{kj}\underline{\sigma}_j \Big),
   \end{split}
\end{equation}
where $\vert \theta\vert$ denotes the element-wise absolute value of $\theta$. Since $(\underline{\sigma}_i-\underline{\sigma}_i')\in[0,1]$ and $D_i,\underline{\sigma}_i\in\{0,1\}$ for all $i\in\mathcal{N}$, and recall $\widebar{m}\coloneqq \max_{ij}\vert m_{ij}\vert$, we can further upper bound the above equation by:
\begin{equation}
     \begin{split}
        &W_n(\mathcal{R}\cup \{k\} )-W_n(\mathcal{R})\\&
        \leq \frac{\tau}{N}\sum_{i\in\mathcal{R}}\Big(\frac{1}{\vert\mathcal{N}_i\vert}\sum_{j\neq i}\hat{\theta}_5\widebar{m}G_{ij}+\frac{1}{\vert\mathcal{N}_i\vert}\sum_{j\neq i}\hat{\theta}_6 \widebar{m}G_{ij}+\frac{\hat{\theta}_4\widebar{m}G_{ik}}{\underline{N}}\Big)\\
        &+\frac{\tau}{N}\sum_{m\in\mathcal{N}\setminus\mathcal{R}\cup\{k\}}\Big(\frac{\hat{\theta}_4\widebar{m}G_{mk}}{\underline{N}}+\frac{1}{\vert\mathcal{N}_m\vert}\sum_{j\neq m}\hat{\theta}_5 \widebar{m}G_{mj}\Big)\\
        &+\frac{\tau}{N}\Big(\hat{\theta}_1+X_i^{\intercal}\hat{\theta}_3+\frac{1}{\vert\mathcal{N}_k\vert}\sum_{j\neq k}\hat{\theta}_5\widebar{m}G_{kj}+\frac{1}{\vert\mathcal{N}_k\vert}\sum_{j\neq k}\hat{\theta}_6 \widebar{m}G_{kj} \Big).
   \end{split}
\end{equation}
Given $\frac{1}{\vert\mathcal{N}_i\vert}\sum_{j\neq i}G_{ij}=1$ for all $i,j\in\mathcal{N}$, we can further upper bound the empirical welfare increase by:
\begin{equation}
    \begin{split}
        &\quad W_n(\mathcal{R}\cup \{k\} )-W_n(\mathcal{R})\\&
        \leq \frac{\tau}{N}\sum_{i\in\mathcal{R}}(\hat{\theta}_5 +\hat{\theta}_6 )\widebar{m}+\frac{\tau}{N}\sum_{m\in\mathcal{N}\setminus\mathcal{R}\cup\{k\}}\hat{\theta}_5 \widebar{m}+\frac{\tau}{N}\Big(\hat{\theta}_1+X_i^{\intercal}\hat{\theta}_3+(\hat{\theta}_5 +\hat{\theta}_6 )\widebar{m} \Big)+\frac{\tau}{N}\frac{\hat{\theta}_4\widebar{m}\widebar{N}}{\underline{N}}.
   \end{split}
\end{equation}
Summarizing all the units together, we have:
\begin{equation}
  W_n(\mathcal{R}\cup \{k\} )-W_n(\mathcal{R})\leq \tau(\hat{\theta}_5 +\hat{\theta}_6 )\widebar{m} +\frac{\tau}{N}(\hat{\theta}_1+\widebar{X}^{\intercal}\hat{\theta}_3+\frac{\hat{\theta}_4\widebar{m}\widebar{N}}{\underline{N}}), 
\end{equation}
where $\widebar{X}=\{X_i:\max_{X_i\in\mathcal{X}}\widebar{X}^{\intercal}\hat{\theta}_3\}$. The lower bound of $W(\mathcal{R}\cup \{k\})-W(\mathcal{R})$ is:
\begin{equation}\label{appeq:loweff}
    \begin{split}
        &\quad W_n(\mathcal{R}\cup \{k\} )-W_n(\mathcal{R})\\&
        \geq \frac{\underline{F}_{\varepsilon}}{N}\sum_{i\in\mathcal{R}}\Big(\frac{1}{\vert\mathcal{N}_i\vert}\sum_{j\neq i}\hat{\theta}_5 m_{ij}G_{ij}(\underline{\sigma}_j-\underline{\sigma}_j')+\frac{1}{\vert\mathcal{N}_i\vert}\sum_{j\neq i,k}\hat{\theta}_6 D_jm_{ij}G_{ij}(\underline{\sigma}_j-\underline{\sigma}_j')+\frac{1}{\vert\mathcal{N}_i\vert}(\hat{\theta}_4+\hat{\theta}_6\underline{\sigma}_k) m_{ik}G_{ik}\Big)\\
        &+\frac{\underline{F}_{\varepsilon}}{N}\sum_{m\in\mathcal{N}\setminus\mathcal{R}\cup\{k\}}\Big(\frac{1}{\vert\mathcal{N}_m\vert}\hat{\theta}_4 m_{mk}G_{mk}+\frac{1}{\vert\mathcal{N}_m\vert}\sum_{j\neq m}\hat{\theta}_5 m_{mj}G_{mj}(\underline{\sigma}_j-\underline{\sigma}_j')\Big)\\
        &+\frac{\underline{F}_{\varepsilon}}{N}\Big(\hat{\theta}_1+X_i^{\intercal}\hat{\theta}_3+\frac{1}{\vert\mathcal{N}_k\vert}\sum_{j\neq k}\hat{\theta}_5m_{kj}G_{kj}(\underline{\sigma}_j-\underline{\sigma}_j')+\frac{1}{\vert\mathcal{N}_k\vert}\sum_{j\neq k}\hat{\theta}_6D_jm_{kj}G_{kj}\underline{\sigma}_j  \Big).
     \end{split}
\end{equation}
There are three different effects of assigning treatment to unit $k$. The first effect is the direct treatment effect on unit $k$, which is the third term in Eq.\ref{appeq:loweff}:
\begin{equation}
    \underline{\sigma}_k-\underline{\sigma}_k'\geq \underline{F}_{\varepsilon}\Big(\hat{\theta}_1+X_i^{\intercal}\hat{\theta}_3+\frac{1}{\vert\mathcal{N}_k\vert}\sum_{j\neq k}\hat{\theta}_5m_{kj}G_{kj}(\underline{\sigma}_j-\underline{\sigma}_j')+\frac{1}{\vert\mathcal{N}_k\vert}\sum_{j\neq k}\hat{\theta}_6D_jm_{kj}G_{kj}\underline{\sigma}_j \Big).
\end{equation}
Given $\underline{\sigma}_j-\underline{\sigma}_j'\geq 0$ and $\underline{\sigma}_j\geq 0$ for all $j\in\mathcal{N}$, we further bounds the direct effect from below by:
\begin{equation}\label{appeq:sigk}
    \underline{\sigma}_k-\underline{\sigma}_k'\geq\underline{F}_{\varepsilon}(\hat{\theta}_1+X_k^{\intercal}\hat{\theta}_3).
\end{equation}
For the units in the treated and untreated groups, the indirect treatment effects manifest differently. Specifically, for units $i$ in the treated group, their indirect treatment effects are given by the first term in Eq.\ref{appeq:loweff}:
\begin{equation}\label{appeq:first188}
    \underline{\sigma}_i-\underline{\sigma}_i'\geq\underline{F}_{\varepsilon}\Big(\frac{1}{\vert\mathcal{N}_i\vert}\sum_{j\neq i}\hat{\theta}_5 m_{ij}G_{ij}(\underline{\sigma}_j-\underline{\sigma}_j')+\frac{1}{\vert\mathcal{N}_i\vert}\sum_{j\neq i,k}\hat{\theta}_6 D_jm_{ij}G_{ij}(\underline{\sigma}_j-\underline{\sigma}_j')+\frac{1}{\vert\mathcal{N}_i\vert}(\hat{\theta}_4+\hat{\theta}_6\underline{\sigma}_k) m_{ik}G_{ik}\Big).
\end{equation}
We then further bound the indirect effects in Eq.\ref{appeq:first188} by:
\begin{equation}\label{appeq:sigi}
     \underline{\sigma}_i-\underline{\sigma}_i'\geq\underline{F}_{\varepsilon}\frac{1}{\vert\mathcal{N}_i\vert}\hat{\theta}_4 m_{ik}G_{ik}\geq 
     \frac{\underline{F}_{\varepsilon}}{\widebar{N}}\hat{\theta}_4 m_{ik}G_{ik}.
\end{equation}
For units $m$ not in the treated group, the indirect treatment effects, which is given by the second term of Eq.\ref{appeq:loweff}, can be quantified as follows: 
\begin{equation}
    \underline{\sigma}_m-\underline{\sigma}_m'\geq\underline{F}_{\varepsilon}\Big(\frac{1}{\vert\mathcal{N}_m\vert}\hat{\theta}_4 m_{mk}G_{mk}+\frac{1}{\vert\mathcal{N}_m\vert}\sum_{j\neq m}\hat{\theta}_5 m_{mj}G_{mj}(\underline{\sigma}_j-\underline{\sigma}_j')\Big).
\end{equation}
This is further bounded below by:
\begin{equation}\label{appeq:sigm}
    \underline{\sigma}_m-\underline{\sigma}_m'\geq\frac{\underline{F}_{\varepsilon}}{\widebar{N}}\hat{\theta}_4 m_{mk}G_{mk}. 
\end{equation}
Combining Eq.\ref{appeq:sigk}, Eq.\ref{appeq:sigi} and Eq.\ref{appeq:sigm} with Eq.\ref{appeq:loweff} leads to
\begin{equation}
  W_n(\mathcal{R}\cup \{k\} )-W_n(\mathcal{R})\geq \frac{\underline{F}_{\varepsilon}}{N\widebar{N}}\sum_{i\in\mathcal{N}\setminus\{k\}}\Big(\hat{\theta}_4 m_{ik}G_{ik}\Big)+\frac{\underline{F}_{\varepsilon}}{N}(\hat{\theta}_1+X_i^{\intercal}\hat{\theta}_3).  
\end{equation}
Given that unit $k$ has at least $\underline{N}$ neighbors, we have:
\begin{equation}
   W_n(\mathcal{R}\cup \{k\} )-W_n(\mathcal{R})\geq\frac{\underline{F}_{\varepsilon} \underline{m}\hat{\theta}_4\underline{N}}{N\widebar{N}}+\frac{\underline{F}_{\varepsilon}}{N}(\hat{\theta}_1+X_i^{\intercal}\hat{\theta}_3).
\end{equation}
Therefore, 
\begin{equation}
    \frac{W_n(R\cup \{k\})-W_n(R)}{W_n(S\cup \{k\})-W_n(S)}\geq \frac{\frac{\underline{F}_{\varepsilon} \underline{m}\hat{\theta}_4\underline{N}}{N\widebar{N}}+\frac{\underline{F}_{\varepsilon}}{N}(\hat{\theta}_1+X_i^{\intercal}\hat{\theta}_3)}{\tau(\hat{\theta}_5 +\hat{\theta}_6 )\widebar{m} +\frac{\tau}{N}(\hat{\theta}_1+\widebar{X}^{\intercal}\hat{\theta}_3+\frac{\hat{\theta}_4\widebar{m}\widebar{N}}{\underline{N}})},
\end{equation}
which ranges between (0,1). As a consequence, the submodularity ratio 
\begin{equation}
    \xi = \max_{S\subseteq R\subset\mathcal{N},k\in\mathcal{N}\setminus R} 1-\frac{W_n(R\cup \{k\})-W_n(R)}{W_n(S\cup \{k\})-W_n(S)}\in(0,1).
\end{equation}
In addition, the curvature
\begin{equation}
    \gamma = \min_{S\neq R} \frac{\sum_{k\in R\setminus S} [W_n(S\cup \{k\})-W_n(S)]}{W_n(S\cup R)-W_n(S)}\in(0,1).
\end{equation}
Combining with \citet[Theorem 1]{bian2017guarantees}, we finish the proof.
\end{proof}

\section{Preliminary Lemmas}
\subsection{Lemma \ref{aplemma:contir}}

\begin{lemma}\label{aplemma:contir}
The $\Vert \nabla_{\sigma} I(\sigma,\theta)\big)^{-1}\Vert_{\infty}$ is a continuous function with respect to any entries of $\nabla_{\sigma} I(\sigma,\theta)$.
\end{lemma}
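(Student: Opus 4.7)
The plan is to prove continuity by factoring the map $A \mapsto \|A^{-1}\|_{\infty}$ as the composition of two elementary continuous maps on the open set of invertible matrices: the matrix-inversion map $A \mapsto A^{-1}$, and the operator-norm map $B \mapsto \|B\|_{\infty}$. Both are well-known to be continuous, so the only task is to cite each in the appropriate form and then appeal to the fact that compositions of continuous functions are continuous.

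First I would invoke Cramer's rule, or equivalently the adjugate formula $A^{-1} = \operatorname{adj}(A)/\det(A)$, valid whenever $A$ is non-singular. Each entry of $\operatorname{adj}(A)$ is, up to sign, a cofactor of $A$, which is a polynomial in the entries of $A$; similarly $\det(A)$ is a polynomial in the entries of $A$. Because the context of the lemma assumes $\nabla_{r} I(r,\theta)$ is non-singular, the denominator $\det(\nabla_{r} I(r,\theta))$ is non-zero, and so each entry of $\bigl(\nabla_{r} I(r,\theta)\bigr)^{-1}$ is a ratio of two continuous functions of the entries of $\nabla_{r} I(r,\theta)$ with a non-vanishing denominator. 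Hence the entry-wise map $A \mapsto A^{-1}$ is continuous on the open set of invertible $N \times N$ matrices.

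Next I would use the explicit formula $\|B\|_{\infty} = \max_{i=1,\dots,N} \sum_{j=1}^{N} |B_{ij}|$. Absolute value is continuous, finite sums of continuous functions are continuous, and the maximum of finitely many continuous functions is continuous. So $B \mapsto \|B\|_{\infty}$ is continuous in the entries of $B$.

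Finally I would compose: the map $A \mapsto \|A^{-1}\|_{\infty}$ is the composition of $A \mapsto A^{-1}$ (continuous at every invertible $A$) followed by $B \mapsto \|B\|_{\infty}$ (continuous everywhere), and composition preserves continuity, which yields the claim. There is no real obstacle here; the only point to be careful about is flagging that the domain is restricted to invertible matrices, a condition which the invocation of the lemma in the proof of Lemma~\ref{lemma:unctheta} already supplies. If a more structural argument is preferred, the same conclusion follows from Berge's Maximum Theorem applied to $\|B\|_{\infty} = \max_{x:\|x\|_{\infty} \le 1} \|Bx\|_{\infty}$ combined with continuity of matrix inversion, but the Cramer's-rule route is the most direct.
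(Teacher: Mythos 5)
Your proof is correct, and it takes a somewhat more elementary and more explicit route than the paper's. The paper frames the argument as an application of Berge's Maximum Theorem: it writes $\Vert A^{-1}\Vert_{\infty}=\max_{i}\sum_{j}\vert A^{-1}_{ij}\vert$, checks that each row function $f_i(A)=\sum_j \vert A^{-1}_{ij}\vert$ is continuous (asserting continuity of the matrix-inverse operation without justification), and notes that the index set $\{1,\dots,N\}$ is trivially compact. You instead decompose the map as $A\mapsto A^{-1}$ followed by $B\mapsto\Vert B\Vert_{\infty}$ and justify each factor directly: Cramer's rule makes continuity of inversion on the open set of non-singular matrices explicit (a step the paper only asserts), and the maximum of finitely many continuous functions being continuous replaces the invocation of Berge's theorem, which is indeed overkill for a maximum over a finite index set. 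Both arguments are valid; yours fills in the one genuinely non-trivial ingredient (continuity of inversion) that the paper leaves implicit, and your remark that the domain must be restricted to invertible matrices correctly flags the non-singularity hypothesis that the paper also assumes at the point of application in the proof of Lemma~\ref{lemma:unctheta}.
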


\begin{proof}
    Let $A$ denote $ \nabla_{\sigma} I(\sigma,\theta)$ and $A^{-1}$ denote $ \big(\nabla_{\sigma} I(\sigma,\theta)\big)^{-1}$. By the definition of the Uniform norm,
    \begin{equation}
        \Vert \nabla_{\sigma} I(\sigma,\theta)\big)^{-1}\Vert_{\infty}=\displaystyle\max_{i\in\mathcal{N}}\sum_{j=1}^N\vert A_{ij}^{-1}\vert_1.
    \end{equation}
    To prove this maximum is a continuous function, two conditions of Berge's Maximum Theorem must be satisfied:
    \begin{itemize}
        \item \textbf{Continuous Function}: Let $f_i(A) = \sum_{j=1}^N \vert A_{ij}^{-1}\vert$. Then, $f_i(A) =\Vert A^{-1}_{i,:}\Vert$, where $A_{i,:}$ denotes the $i$-th row of matrix $A$. $f_i(A)$ is continuous with respect to the entries of $A$ by matrix inverse operation and the continuity of the $\ell_1$ norm calculation with respect to the vector entries.
        \item \textbf{Compact Parameter Space}: The set over which the maximum is taken (the set of row indices $i$) is trivially compact as it is finite.
    \end{itemize}
    Therefore, by Berge's Maximum Theorem, $\Vert \nabla_{\sigma} I(\sigma,\theta)\big)^{-1}\Vert_{\infty}$ is a continuous function with respect to any entry of $\nabla_{\sigma} I(\sigma,\theta)$.
\end{proof}

\subsection{Lemma \ref{lemma:cz}}\label{app:contio}
\begin{lemma}\label{lemma:cz}
    Under Assumption \ref{ass:epsilon}, the smallest and largest eigenvalues of $\frac{1}{n}\sum_{i=1}^n \hat{Z}_i\hat{Z}_i^\intercal$ and $\frac{1}{n}\sum_{i=1}^n Z_iZ_i^\intercal$ are continuous functions of any element of $Z_i$ and $\hat{Z}_i$, for any $i=1,...,n$.
\end{lemma}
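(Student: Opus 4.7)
The plan is to prove the lemma by composing two continuity facts: (i) the entries of the Gram matrices $\frac{1}{n}\sum_{i=1}^n Z_iZ_i^\intercal$ and $\frac{1}{n}\sum_{i=1}^n \hat{Z}_i\hat{Z}_i^\intercal$ are continuous (in fact polynomial) in the entries of $Z_i$ and $\hat{Z}_i$; and (ii) the smallest and largest eigenvalues of a symmetric matrix are continuous in its entries. Composing these yields the claim.

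First I would observe that for any symmetric $d_\theta\times d_\theta$ matrix $A$, each entry of $ZZ^\intercal$ is $Z_k Z_\ell$, a polynomial of degree two in the entries of $Z$, and hence continuous. Averaging finitely many such matrices preserves continuity, so the map from $(Z_1,\dots,Z_n)$ (and separately from $(\hat{Z}_1,\dots,\hat{Z}_n)$) into $\mathbb{R}^{d_\theta\times d_\theta}$ given by $\frac{1}{n}\sum_{i=1}^n Z_iZ_i^\intercal$ is continuous in every one of its arguments.

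Next I would invoke the Courant–Fischer characterization of the largest and smallest eigenvalues of a symmetric matrix $M$,
\begin{equation}
\eta_{\max}(M) = \max_{v\in S^{d_\theta-1}} v^\intercal M v, \qquad \eta_{\min}(M) = \min_{v\in S^{d_\theta-1}} v^\intercal M v,
\end{equation}
where $S^{d_\theta-1}$ is the unit sphere in $\mathbb{R}^{d_\theta}$. Since $S^{d_\theta-1}$ is compact and the objective $(v,M)\mapsto v^\intercal M v$ is jointly continuous, Berge's Maximum Theorem (Lemma \ref{app:bergetheorem}, already used elsewhere in the paper) implies that $M\mapsto \eta_{\max}(M)$ and $M\mapsto \eta_{\min}(M)$ are continuous functions of the entries of $M$.

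Finally I would chain the two continuity results: the map from the entries of $\{Z_i\}_{i=1}^n$ (resp. $\{\hat{Z}_i\}_{i=1}^n$) to the Gram matrix is continuous, and the map from the Gram matrix to its smallest or largest eigenvalue is continuous, so by composition the smallest and largest eigenvalues of $\frac{1}{n}\sum_{i=1}^n Z_iZ_i^\intercal$ and $\frac{1}{n}\sum_{i=1}^n \hat{Z}_i\hat{Z}_i^\intercal$ depend continuously on any individual entry of $Z_i$ or $\hat{Z}_i$, proving the lemma. No step here looks like a real obstacle; the only care needed is to note that Assumption \ref{ass:epsilon} is used implicitly to guarantee that the CCPs entering $Z_i$ and $\hat{Z}_i$ are well-defined elements of a bounded ambient space so that the continuity argument takes place on a relevant domain, but no further structural condition is required.
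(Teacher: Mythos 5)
Your proposal is correct and follows essentially the same route as the paper: the paper also characterizes $\eta_{\min}$ and $\eta_{\max}$ via the Rayleigh quotient restricted to the unit sphere and invokes Berge's Maximum Theorem with the compact unit sphere as the (constant) feasible set, treating the quadratic form as jointly continuous in $\theta$ and the entries of $Z_i$. Your explicit factoring through the continuity of the Gram-matrix entries is a slightly cleaner presentation of the same argument, not a different approach.
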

\begin{proof}
   Denote the matrix $\frac{1}{N}\sum_{i=1}^N Z_iZ_i^\intercal$ as $B$. $B$ is a symmetric matrix.  Therefore, its smallest ($\eta_{min}$) and largest ($\eta_{max}$) eigenvalues are given by
    \begin{equation}\label{eq:lmin}
        \eta_{min}(B)=\min_{\theta\neq 0\in\mathbb{R}^{d_\theta}}\frac{\theta^\intercal B\theta}{\theta^\intercal\theta},
    \end{equation}
    \begin{equation}\label{eq:lmax}
        \eta_{max}(B)=\max_{\theta\neq 0\in\mathbb{R}^{d_\theta}}\frac{\theta^\intercal B\theta}{\theta^\intercal\theta}.
    \end{equation}
    Since we are studying the continuous property of $\eta_{min}$ ($\eta_{max}$) to $Z$ given a $\theta$, we restrict $\theta$ to be a unit vector (i.e., $\Vert \theta\Vert_2=1$) without lose of generality. Rewrite above equations as:
    \begin{equation}
        \eta_{min}(B)=\min_{\theta:\Vert \theta\Vert_2=1}\theta^\intercal B\theta.
    \end{equation}
    \begin{equation}
        \eta_{max}(B)=\max_{\theta:\Vert \theta\Vert_2=1}\theta^\intercal B\theta.
    \end{equation}
    Now, we apply the Berge Maximum Theorem to Eq.\ref{eq:lmin} and Eq.\ref{eq:lmax} to study the continuous property. Two conditions of the Berge Maximum Theorem must be satisfied:
    \begin{itemize}
        \item \textbf{Continuous Function}: Given $\theta^{\intercal}B\theta$ is a quadratic function, and $\Vert \theta\Vert_2=1$ (i.e., $\theta\neq 0$), it must be a continuous function w.r.t. $\theta$ and $Z_i$ for all $i\in\mathcal{N}$.
        \item \textbf{Compact Parameter Space}: Given $\Vert \theta\Vert_2=1$, the parameter space is compact.
    \end{itemize}
    Therefore, by Berge's Maximum Theorem, the largest and smallest eigenvalues are continuous functions of any element of $Z_i$. By employing a symmetric argument to $\frac{1}{n}\sum_{i=1}^n \hat{Z}_i\hat{Z}_i^\intercal$, we finish the proof.
\end{proof}

\subsection{Lemma \ref{lemma:a}}
\begin{lemma}\label{lemma:a} Under Assumptions \ref{ass:epsilon}, \ref{ass:ccp}, and \ref{ass:smooth}:
    \begin{equation}
        \mathbb{E}_{\varepsilon^{n}}\big[\sup_{\theta\in\Theta}\vert \mathbb{A}_k(\theta)\vert\big\vert S,\sigma^{data}\big]\leq C_A\sqrt{\frac{1+\ln(2)}{n}},
    \end{equation}
    where $C_A$ is a constant that depends only on the support of covariates, the distribution of $\varepsilon$ and $C_{\sigma}$.
\end{lemma}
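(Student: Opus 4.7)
}

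The plan is to express $\mathbb{A}_k(\theta)=\nabla_{k}\hat{\mathbb{M}}(\theta)-\nabla_{k}\mathbb{M}(\theta)$ as a ``Lipschitz functional'' of the first-stage error $\hat{\sigma}^{data}-\sigma^{data}$ with constants that do not depend on $\theta$, and then invoke the sub-Gaussian tail from Assumption \ref{ass:ccp}. First, I would compute the partial derivatives in closed form. Writing $h(u,y)\coloneqq yf_{\varepsilon}(u)/F_{\varepsilon}(u)-(1-y)f_{\varepsilon}(u)/(1-F_{\varepsilon}(u))$, we have
\begin{equation*}
\nabla_{k}\hat{\mathbb{M}}(\theta)=\frac{1}{n}\sum_{i=1}^{n}\hat{Z}_{ik}\,h(\hat{Z}_{i}^{\intercal}\theta,Y_i),\qquad \nabla_{k}\mathbb{M}(\theta)=\frac{1}{n}\sum_{i=1}^{n}Z_{ik}\,h(Z_{i}^{\intercal}\theta,Y_i).
\end{equation*}
Inspecting the definition of $\hat{Z}_i$ in Eq.\ref{eq:Zdefine}, the only components that differ from $Z_i$ are the sixth and seventh, which are (up to the $D_i D_j$ weighting) of the form $\frac{1}{|\mathcal{N}_i|}\sum_{j\neq i}m_{ij}\mathsf{G}_{ij}(\hat{\sigma}^{data}_j-\sigma^{data}_j)$. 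Thus $\|\hat{Z}_i-Z_i\|_{1}\leq 2\widebar{m}\cdot\frac{1}{|\mathcal{N}_i|}\sum_{j\neq i}\mathsf{G}_{ij}|\hat{\sigma}^{data}_j-\sigma^{data}_j|$, which is the only source of randomness driving $\mathbb{A}_k(\theta)$.

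Next, I would control the summand uniformly in $\theta$ via the decomposition
\begin{equation*}
\hat{Z}_{ik}\,h(\hat{Z}_{i}^{\intercal}\theta,Y_i)-Z_{ik}\,h(Z_{i}^{\intercal}\theta,Y_i)=(\hat{Z}_{ik}-Z_{ik})h(\hat{Z}_{i}^{\intercal}\theta,Y_i)+Z_{ik}\bigl[h(\hat{Z}_{i}^{\intercal}\theta,Y_i)-h(Z_{i}^{\intercal}\theta,Y_i)\bigr].
\end{equation*}
Under Assumption \ref{ass:paraspace} ($\Theta$ compact), boundedness of $X$, $D$, $G$, and the regularity of $F_{\varepsilon}$ in Assumptions \ref{ass:epsilon} and \ref{ass:smooth}, the arguments $\hat{Z}_{i}^{\intercal}\theta$ and $Z_{i}^{\intercal}\theta$ lie in a fixed compact interval. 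Consequently $h$ and $h'$ are uniformly bounded on this interval by a constant $L$ depending only on the supports listed in the lemma. Applying the Mean Value Theorem to the second bracket and $|Z_{ik}|,|\hat Z_{ik}|\leq \widebar{z}$, one obtains the $\theta$-free bound $|\mathbb{A}_k(\theta)|\leq C\cdot\frac{1}{n}\sum_{i=1}^n\|\hat{Z}_i-Z_i\|_1$, so the supremum over $\theta\in\Theta$ is absorbed into the constant.

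Finally, I would plug in the network structure and invoke sub-Gaussian concentration. Iterated expectations and the previous display give
\begin{equation*}
\mathbb{E}_{\varepsilon^n}\bigl[\sup_{\theta\in\Theta}|\mathbb{A}_k(\theta)|\,\bigl\vert\, S,\sigma^{data}\bigr]\leq 2C\widebar{m}\,\max_{j}\mathbb{E}_{\varepsilon^n}\bigl[|\hat{\sigma}^{data}_j-\sigma^{data}_j|\,\bigl\vert\, S,\sigma^{data}\bigr].
\end{equation*}
For the remaining expectation, Assumption \ref{ass:ccp} yields $\Pr(|\hat{\sigma}^{data}_j-\sigma^{data}_j|\geq t)\leq 2\exp(-nt^2/C_{\sigma}^{2})$. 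Splitting the integral $\mathbb{E}[|\hat{\sigma}^{data}_j-\sigma^{data}_j|^2]=\int_{0}^{\infty}\Pr(|\hat{\sigma}^{data}_j-\sigma^{data}_j|^2\geq s)\,ds$ at the threshold $s_0=C_{\sigma}^2\ln 2/n$ (trivial bound of $1$ below, sub-Gaussian tail above) gives $\mathbb{E}[|\hat{\sigma}^{data}_j-\sigma^{data}_j|^2|S,\sigma^{data}]\leq C_{\sigma}^2(1+\ln 2)/n$. Jensen's inequality then delivers the desired $C_A\sqrt{(1+\ln 2)/n}$ rate, with $C_A$ depending only on $C$, $\widebar{m}$, and $C_{\sigma}$.

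The main obstacle is ensuring that the constants $L$ and $\widebar{z}$ in the Lipschitz argument are genuinely independent of $\theta$ and the sample. This is where Assumption \ref{ass:smooth} (smoothness/shape of $F_{\varepsilon}$), together with compactness of $\Theta$ and the bounded supports of $\mathcal{X}$, $\mathcal{D}$, $\mathcal{G}$, is crucial: it pins down a uniform compact interval on which $h$ and $h'$ can be bounded by the Extreme Value Theorem, allowing the supremum over $\theta$ to be moved inside the expectation without inflating the rate.
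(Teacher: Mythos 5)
Your proposal is correct and follows essentially the same route as the paper: a mean-value/Lipschitz argument reducing $\sup_{\theta}|\mathbb{A}_k(\theta)|$ to $\frac{1}{n}\sum_i\|\hat{Z}_i-Z_i\|_1$, observing that only the $\hat{\sigma}^{data}$-dependent components of $\hat{Z}_i$ differ from $Z_i$, and then the identical tail-integral split at $C_\sigma^2\ln 2/n$ plus Jensen to extract the $\sqrt{(1+\ln 2)/n}$ rate from Assumption \ref{ass:ccp}. The only cosmetic difference is that you carry the extra cross term $(\hat{Z}_{ik}-Z_{ik})h(\hat{Z}_i^\intercal\theta,Y_i)$ from differentiating $\hat{\mathbb{M}}$ with respect to $\theta_k$, which the paper's displayed gradient omits by writing $Z_{ik}$ as the multiplier; your term is harmless since it is bounded by the same $\|\hat{Z}_i-Z_i\|_1$ quantity.
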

\begin{proof}
    Recall $
        \mathbb{A}_k(\theta)= \nabla_{k}\hat{\mathbb{M}}(\theta)-\nabla_{k}\mathbb{M}(\theta)
    $, and 
    \begin{equation}
        \nabla_{k}\hat{\mathbb{M}}(\theta) = \frac{1}{n}\sum_{i=1}^n \Big[Y_i\frac{F_{\varepsilon}'(\hat{Z}_i^{\intercal}\theta)}{F_{\varepsilon}(\hat{Z}_i^{\intercal}\theta)}-(1-Y_i)\frac{F_{\varepsilon}'(\hat{Z}_i^{\intercal}\theta)}{1-F_{\varepsilon}(\hat{Z}_i^{\intercal}\theta)}\Big]Z_{ik},
    \end{equation}
    \begin{equation}
        \nabla_{k}\mathbb{M}(\theta)=\frac{1}{n}\sum_{i=1}^n \Big[Y_i\frac{F_{\varepsilon}'(Z_i^{\intercal}\theta)}{F_{\varepsilon}(Z_i^{\intercal}\theta)}-(1-Y_i)\frac{F_{\varepsilon}'(Z_i^{\intercal}\theta)}{1-F_{\varepsilon}(Z_i^{\intercal}\theta)}\Big]Z_{ik}.
    \end{equation}
    Let $\Tilde{\theta}\coloneqq\arg\sup_{\theta\in\Theta}\vert \mathbb{A}_k(\theta)\vert$. Therefore,
    \begin{equation}
      \mathbb{A}_k(\Tilde{\theta}) =   \frac{1}{n}\sum_{i=1}^n \Big[Y_i\Big(\frac{F_{\varepsilon}'(\hat{Z}_i^{\intercal}\Tilde{\theta})}{F_{\varepsilon}(\hat{Z}_i^{\intercal}\Tilde{\theta})}-\frac{F_{\varepsilon}'(Z_i^{\intercal}\Tilde{\theta})}{F_{\varepsilon}(Z_i^{\intercal}\Tilde{\theta})}\Big)-(1-Y_i)\Big(\frac{F_{\varepsilon}'(\hat{Z}_i^{\intercal}\Tilde{\theta})}{1-F_{\varepsilon}(\hat{Z}_i^{\intercal}\Tilde{\theta})}-\frac{F_{\varepsilon}'(Z_i^{\intercal}\Tilde{\theta})}{1-F_{\varepsilon}(Z_i^{\intercal}\Tilde{\theta})}\Big)\Big]Z_{ik}.
    \end{equation}
    Applying the Mean value theorem, 
    \begin{equation}
        \frac{F_{\varepsilon}'(\hat{Z}_i^{\intercal}\Tilde{\theta})}{F_{\varepsilon}(\hat{Z}_i^{\intercal}\Tilde{\theta})}-\frac{F_{\varepsilon}'(Z_i^{\intercal}\Tilde{\theta})}{F_{\varepsilon}(Z_i^{\intercal}\Tilde{\theta})}=(\hat{Z}_i-Z_i)^{\intercal}\nabla_{Z}\frac{F_{\varepsilon}'(\Tilde{Z}_i^{\intercal}\Tilde{\theta})}{F_{\varepsilon}(\Tilde{Z}_i^{\intercal}\Tilde{\theta})},
    \end{equation}
    and
    \begin{equation}
        \frac{F_{\varepsilon}'(\hat{Z}_i^{\intercal}\Tilde{\theta})}{1-F_{\varepsilon}(\hat{Z}_i^{\intercal}\Tilde{\theta})}-\frac{F_{\varepsilon}'(Z_i^{\intercal}\Tilde{\theta})}{1-F_{\varepsilon}(Z_i^{\intercal}\Tilde{\theta})}=(\hat{Z}_i-Z_i)^{\intercal}\nabla_{Z}\frac{F_{\varepsilon}'(\Tilde{Z}_i^{\intercal}\Tilde{\theta})}{1-F_{\varepsilon}(\Tilde{Z}_i^{\intercal}\Tilde{\theta})},
    \end{equation}
for some $\Tilde{Z}_i\in\mathbb{R}^{d_{\theta}}$ on the segment from $\hat{Z}_i$ to $Z_i$. Therefore,
\begin{equation}
    \begin{split}
        \vert\mathbb{A}_k(\Tilde{\theta})\vert &= \frac{1}{n}\sum_{i=1}^n \vert Y_i\omega_0(\Tilde{Z}_i^{\intercal}\Tilde{\theta})(\hat{Z}_i-Z_i)^{\intercal}\Tilde{Z}_i-(1-Y_i)\omega_1(\Tilde{Z}_i^{\intercal}\Tilde{\theta})(\hat{Z}_i-Z_i)^{\intercal}\Tilde{Z}_i\vert Z_{ik}\\
        &\leq \frac{1}{n}\sum_{i=1}^n \Big[Y_i\vert\omega_0(\Tilde{Z}_i^{\intercal}\Tilde{\theta})\Vert(\hat{Z}_i-Z_i)^{\intercal}\Tilde{Z}_i\vert+(1-Y_i)\vert\omega_1(\Tilde{Z}_i^{\intercal}\Tilde{\theta})\Vert(\hat{Z}_i-Z_i)^{\intercal}\Tilde{Z}_i\vert\Big]Z_{ik}\\
        &\leq\frac{1}{n}\sum_{i=1}^n \vert\underline{\omega}\Vert(\hat{Z}_i-Z_i)^{\intercal}\Tilde{Z}_i\vert Z_{ik}.
    \end{split}
\end{equation}
By the Cauchy–Schwarz inequality, we have:
\begin{equation}\label{appeq:cwa}
    \begin{split}
        \vert\mathbb{A}_k(\Tilde{\theta})\vert&\leq \frac{1}{n}\sum_{i=1}^n \vert\underline{\omega}\Vert\hat{Z}_i-Z_i\Vert_1 \Vert\Tilde{Z}_i\Vert_{\infty} Z_{ik}\\
        &\leq \vert\underline{\omega}\vert\bar{z}^2\frac{1}{n}\sum_{i=1}^n \Vert\hat{Z}_i-Z_i\Vert_1,
    \end{split}
\end{equation}
where $\widebar{z}\coloneqq\max_{i=1,...,n}\Vert Z_i\Vert_{\infty}$. Recall the definition of $\hat{Z}_i$ from Eq.\ref{eq:Zdefine}:
\begin{equation}
    \hat{Z}_i = \Big(1,\mathsf{D}_i, \mathsf{X}_i^\intercal, \mathsf{X}_i^\intercal \mathsf{D}_i,\frac{1}{\vert \mathcal{N}_i\vert}\sum_{j\neq i}m_{ij}\mathsf{G}_{ij}\mathsf{D}_j,\frac{1}{\vert \mathcal{N}_i\vert}\sum_{j\neq i}m_{ij}\mathsf{G}_{ij}\hat{\sigma}^{data}_j,\frac{1}{\vert \mathcal{N}_i\vert}\sum_{j\neq i}m_{ij}\mathsf{G}_{ij}\hat{\sigma}^{data}_j\mathsf{D}_i\mathsf{D}_j\Big)^\intercal.
\end{equation}
Therefore, we rewrite $\Vert\hat{Z}_i-Z_i\Vert_1$ as:
\begin{equation}
    \begin{split}
        \Vert\hat{Z}_i-Z_i\Vert_1 &= \Big\vert\frac{1}{\vert \mathcal{N}_i\vert}\sum_{j\neq i}m_{ij}\mathsf{G}_{ij}(\hat{\sigma}^{data}_j-\sigma^{data}_j)\Big\vert+\Big\vert\frac{1}{\vert \mathcal{N}_i\vert}\sum_{j\neq i}m_{ij}\mathsf{G}_{ij}(\hat{\sigma}^{data}_j-\sigma^{data}_j)\mathsf{D}_i\mathsf{D}_j\Big\vert.
    \end{split}
\end{equation}
By triangle inequality,
\begin{equation}
    \Vert\hat{Z}_i-Z_i\Vert_1\leq \frac{1}{\vert \mathcal{N}_i\vert}\sum_{j\neq i}\vert m_{ij}\vert \mathsf{G}_{ij}(1+\mathsf{D}_i\mathsf{D}_j)\vert\hat{\sigma}^{data}_j-\sigma^{data}_j\vert.
\end{equation}
Applying Lemma \ref{applemma:sigma}, and defining $\widebar{m}\coloneqq \max_{i,j\in\mathcal{N}} \vert m_{ij}\vert$, we have
\begin{equation}\label{appeq:zhatdif}
    \mathbb{E}_{\varepsilon^{n}}[\Vert\hat{Z}_i-Z_i\Vert_1\vert S,\sigma^{data}]\leq 3\widebar{m}C_{\sigma}\sqrt{\frac{1+\ln(2)}{n}}.
\end{equation}
Plug Eq.\ref{appeq:zhatdif} into Eq.\ref{appeq:cwa},
\begin{equation}
    \mathbb{E}_{\varepsilon^{n}}\big[\sup_{\theta\in\Theta}\vert \mathbb{A}_k(\theta)\vert\big\vert S,\sigma^{data}\big]\leq 3\vert\underline{\omega}\vert\bar{z}^2\widebar{m}C_{\sigma}\sqrt{\frac{1+\ln(2)}{n}}.
\end{equation}
Setting $C_A = 3\vert\underline{\omega}\vert\bar{z}^2\widebar{m}C_{\sigma}$, completes the proof.
\end{proof}

\subsection{Lemma \ref{lemma:b}}
\begin{lemma}\label{lemma:b}
   Define $\widebar{z}\coloneqq\max_{i=1,..,n}\Vert Z_i\Vert_{\infty}$. Under Assumption \ref{ass:epsilon} to \ref{ass:fullrank}, we have:
    \begin{equation}
    \mathbb{E}_{\varepsilon^n}\Big[\sup_{\theta\in\Theta} \vert \mathbb{B}_k(\theta)\vert\Big\vert S,\sigma^{data}\Big]\leq \frac{1}{\sqrt{n}}\Big(1+2C_4\sqrt{d_{\theta}\log(1+C_5\sqrt{n})}\Big),
    \end{equation}
    where $C_{B1}$ is a universal constant that only depend on the distribution $F_{\varepsilon^{n}}$ and $C_{B2}=4\vert\underline{\omega}\vert\widebar{z}^2$.
\end{lemma}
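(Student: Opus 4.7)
My plan is to treat $\{\mathbb{B}_k(\theta):\theta\in\Theta\}$ as a conditionally (on $S$ and $\sigma^{data}$) centered, sub-Gaussian empirical process indexed by the compact $d_\theta$-dimensional parameter space, and to bound its expected supremum by combining a pointwise Hoeffding bound with a standard $\epsilon$-net argument. First I would rewrite
\begin{equation*}
\mathbb{B}_k(\theta)=\frac{1}{n}\sum_{i=1}^{n}(Y_i-\sigma_i^{data})\,\psi(Z_i^\intercal\theta)\,Z_{ik},\qquad \psi(a)\coloneqq\frac{F_{\varepsilon}'(a)}{F_{\varepsilon}(a)(1-F_{\varepsilon}(a))},
\end{equation*}
and note that, given $(S,\sigma^{data})$, the summands are independent, mean zero (since $\mathbb{E}[Y_i\mid S,\sigma^{data}]=\sigma_i^{data}$), and uniformly bounded, because $|Y_i-\sigma_i^{data}|\leq 1$, $|Z_{ik}|\leq\widebar z$, and $\psi$ is bounded on the compact range $\{Z_i^\intercal\theta:i\leq n,\theta\in\Theta\}$ under Assumptions \ref{ass:epsilon}, \ref{ass:paraspace} and \ref{ass:smooth}.

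Two ingredients then drive the argument. Pointwise, Hoeffding's inequality conditional on $(S,\sigma^{data})$ shows that each $\sqrt n\,\mathbb{B}_k(\theta)$ is sub-Gaussian, with a proxy variance of constant order depending only on $F_\varepsilon$ and $\widebar z$. For Lipschitz control in $\theta$, I would use $\psi'=\omega_0+\omega_1$ together with the uniform bound $|\omega_0|,|\omega_1|\leq|\underline\omega|$ extracted in the proof of Lemma \ref{pro.samplingunc}, which yields $|\psi(Z_i^\intercal\theta)-\psi(Z_i^\intercal\theta')|\leq 2|\underline\omega|\,|Z_i^\intercal(\theta-\theta')|$, and hence
\begin{equation*}
|\mathbb{B}_k(\theta)-\mathbb{B}_k(\theta')|\leq 2|\underline\omega|\widebar z^{\,2}\,\|\theta-\theta'\|_1
\end{equation*}
uniformly in the sample. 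The factor $4|\underline\omega|\widebar z^{\,2}$ inside $C_{B2}$ matches exactly this modulus of continuity after absorbing the diameter of $\Theta$ into the $\sqrt n$ scaling.

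I would then close the argument with an $\ell_1$-net on $\Theta$. By compactness (Assumption \ref{ass:paraspace}) there exists, for any $\epsilon>0$, a net $\Theta_\epsilon$ of cardinality at most $(D_\Theta/\epsilon)^{d_\theta}$ such that
\begin{equation*}
\sup_{\theta\in\Theta}|\mathbb{B}_k(\theta)|\leq \max_{\theta\in\Theta_\epsilon}|\mathbb{B}_k(\theta)|+2|\underline\omega|\widebar z^{\,2}\,\epsilon,
\end{equation*}
and the standard sub-Gaussian maximal inequality applied to the finite collection on the right yields $\mathbb{E}[\max_{\theta\in\Theta_\epsilon}|\mathbb{B}_k(\theta)|\mid S,\sigma^{data}]\lesssim \sqrt{d_\theta\log(D_\Theta/\epsilon)/n}$. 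Choosing $\epsilon=1/\sqrt n$ balances the two contributions and, after collecting constants, delivers the stated bound with $C_{B2}=4|\underline\omega|\widebar z^{\,2}$ and a universal $C_{B1}$ depending only on $F_\varepsilon$.

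The main obstacle will be bookkeeping rather than conceptual: tracking constants carefully enough that the two contributions combine into the precise form $\tfrac{1}{\sqrt n}\bigl(1+2C_{B1}\sqrt{d_\theta\log(1+C_{B2}\sqrt n)}\bigr)$, in particular so that the diameter of $\Theta$ and the Lipschitz constant merge inside the logarithm as $1+C_{B2}\sqrt n$ rather than as two separate terms. Since $\Theta$ has fixed finite dimension, a single-scale net already matches Dudley's integral up to constants, so no full chaining is required; a minor extra step is to observe that the bound is valid even though $Z_i$ depends on the data, because everything is conditional on $(S,\sigma^{data})$ and the constants are purely geometric.
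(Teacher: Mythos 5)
Your proposal is correct and follows the same overall skeleton as the paper's proof --- an $L_1$ $\delta$-net of the compact $\Theta$, a Lipschitz bound to control the discretization error, a sub-Gaussian maximal inequality over the net, and the choice $\delta\asymp 1/\sqrt{n}$ to balance the two terms. The one genuine difference lies in how sub-Gaussianity of the finite maximum is established. The paper introduces a ghost sample $\tilde\varepsilon$ and Rademacher variables, symmetrizes, and then applies Hoeffding's lemma to the Rademacher sum (Lemma \ref{applemma:symetric}), paying a factor of $2$ for the symmetrization. You instead exploit the algebraic identity
\begin{equation}
Y_i\frac{F_{\varepsilon}'(Z_i^\intercal\theta)}{F_{\varepsilon}(Z_i^\intercal\theta)}-(1-Y_i)\frac{F_{\varepsilon}'(Z_i^\intercal\theta)}{1-F_{\varepsilon}(Z_i^\intercal\theta)}-\Big[\sigma_i\frac{F_{\varepsilon}'(Z_i^\intercal\theta)}{F_{\varepsilon}(Z_i^\intercal\theta)}-(1-\sigma_i)\frac{F_{\varepsilon}'(Z_i^\intercal\theta)}{1-F_{\varepsilon}(Z_i^\intercal\theta)}\Big]=(Y_i-\sigma_i)\,\psi(Z_i^\intercal\theta),
\end{equation}
which displays $\mathbb{B}_k(\theta)$ as a sum of terms that are, conditional on $(S,\sigma^{data})$, independent (each $Y_i$ depends only on $\varepsilon_i^n$), exactly mean zero, and uniformly bounded, so Hoeffding's inequality applies directly. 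This is more elementary, avoids the symmetrization constant, and also streamlines the Lipschitz step, since $\psi'=\omega_0+\omega_1$ gives the modulus $2|\underline\omega|\widebar{z}^{\,2}$ in one line rather than via the two separate mean-value expansions of Lemma \ref{applema:lps} (whose stated constant, incidentally, drops a factor of $2$ relative to its own derivation). Both routes yield the same $\sqrt{d_\theta\log(1+c\sqrt{n})/n}$ rate; your centering observation is the cleaner way to see why no symmetrization is needed here.
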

\begin{proof}
For a given $\delta\geq 0$ and associated covering number $H=N_c(\delta,\Theta,L_1)$, let $\mathbb{U}\coloneqq\{\theta^1,...,\theta^{H}\}$ be a $\delta$-cover of $\Theta$. For any $\theta\in\Theta$, we can find some $\theta^\ell$ such that $\Vert \theta-\theta^\ell \Vert_1\leq \delta$. Let $\Tilde{\theta}\coloneqq\arg\sup_{\theta\in\Theta} \vert \mathbb{B}_k(\theta)\vert$. Therefore,
\begin{equation}\label{eq:coverH}
    \begin{split}
        \vert \mathbb{B}_k(\Tilde{\theta})\vert&=\vert \mathbb{B}_k(\Tilde{\theta})-\mathbb{B}_k(\theta^\ell)+\mathbb{B}_k(\theta^\ell)\vert\\
        &\leq \vert \mathbb{B}_k(\Tilde{\theta})-\mathbb{B}_k(\theta^\ell)\vert+\vert \mathbb{B}_k(\theta^\ell)\vert\\
        &\leq \sup_{\substack{\gamma,\gamma'\in\Theta\\ \Vert\gamma-\gamma'\Vert_1\leq \delta}}\vert \mathbb{B}_k(\gamma)-\mathbb{B}_k(\gamma')\vert+\max_{\ell=1,...,H}\vert \mathbb{B}_k(\theta^{\ell})\vert.
    \end{split}
\end{equation}
\begin{comment}
Let us define $\theta^{max}_k\coloneqq\arg\max_{\ell=1,...,H}\vert \mathbb{B}_k(\theta^{\ell})\vert$, for $k=1,...,d_{\theta}$. 
Therefore,
\begin{equation}\label{eq:bcycle}
    \vert \mathbb{B}_k(\Tilde{\theta})\vert \leq \sup_{\substack{\gamma,\gamma'\in\Theta\\ \Vert\gamma-\gamma'\Vert_1\leq \delta}}\vert \mathbb{B}_k(\gamma)-\mathbb{B}_k(\gamma')\vert+\vert \mathbb{B}_k(\theta^{max}_k)\vert.
\end{equation}
\end{comment}
Apply Lemma \ref{applema:lps} to bound the first term in Eq.\ref{eq:coverH}:
    \begin{equation}\label{appeqq:supfirst}
        \mathbb{E}_{\varepsilon^{n}}\Big[\sup_{\substack{\gamma,\gamma'\in\Theta\\ \Vert\gamma-\gamma'\Vert_1\leq \delta}}\vert \mathbb{B}_k(\gamma)-\mathbb{B}_k(\gamma')\vert\Big\vert S,\sigma^{data}\Big]\leq \vert\underline{\omega}\vert\bar{z}^2\delta,
    \end{equation}
    where $\widebar{z}\coloneqq\max_{i=1,..,n}\Vert Z_i\Vert_{\infty}$. To bound the second term in Eq.\ref{eq:coverH}, we introduce $\{\Tilde{\varepsilon}_i\}^{n}_{i=1}$, an independent copy of $\varepsilon^{n}$ that follows the same distribution $F_{\varepsilon^{n}}$. Hence, the associated $\{\Tilde{Y}_i\}_{i=1}^n$ (i.e., $\Tilde{Y}_i=\mathbbm{1}\{\alpha_i+\sum_{j\neq i}\beta_{ij}\sigma^{data}_j-\Tilde{\varepsilon}_i\geq 0\}$) has the same distribution as $\{Y_i\}_{i=1}^N$ conditional on the $S$ and $\sigma^{data}$. We denote the expectation with respect to $\Tilde{\varepsilon}$ as $\mathbb{E}_{\Tilde{\varepsilon}}(\cdot)$. Recall that the criterion function is:
\begin{equation}
    m_{Y_i,Z_i}(\theta)\coloneqq Y_i\log(F_{\varepsilon}(Z_i^\intercal\theta))+(1-Y_i)\log(1-F_{\varepsilon}(Z_i^\intercal\theta)).
\end{equation}
Denote the empirical measure of our criterion function with $\{\Tilde{Y}_i\}_{i=1}^N$ as $\Tilde{\mathbb{M}}(\theta)$:
\begin{equation}
    \Tilde{\mathbb{M}}(\theta)\coloneqq \frac{1}{n}\sum_{i=1}^n m_{\Tilde{Y}_i,Z_i}(\theta).
\end{equation}
By definition of $\Tilde{Y}_i$, we have $M(\theta)\coloneqq\mathbb{E}_{\varepsilon^{n}}[\mathbb{M}(\theta)\vert S,\sigma^{data}]=\mathbb{E}_{\Tilde{\varepsilon}}[\Tilde{\mathbb{M}}(\theta)\vert S,\sigma^{data}]$. Therefore,
\begin{equation}
    \begin{split}
       &\quad\mathbb{E}_{\varepsilon^{n}}\Big[\max_{\ell=1,...,H}\vert \mathbb{B}_k(\theta^{\ell})\vert\Big\vert  S,\sigma^{data}\Big]\\
       &=\mathbb{E}_{\varepsilon^{n}}\Big[\max_{\ell=1,...,H}\Big\vert\nabla_{k} \mathbb{M}(\theta^\ell)-\nabla_{k}\mathbb{E}_{\Tilde{\varepsilon}}[\Tilde{\mathbb{M}}(\theta^\ell) \vert S,\sigma^{data}]\Big\vert  S,\sigma^{data}\Big]\\
       &=\mathbb{E}_{\varepsilon^{n}}\big[\max_{\ell=1,...,H}\big\vert \frac{1}{n}\sum_{i=1}^n \big[\nabla_{k}m_{Y_i,Z_i}(\theta^\ell)-\nabla_{k}\mathbb{E}_{\Tilde{\varepsilon}}[m_{\Tilde{Y}_i,Z_i}(\theta^\ell)\big\vert S,\sigma^{data}]\big]\big\vert \Big\vert S,\sigma^{data}\Big]\\
       &=\mathbb{E}_{\varepsilon^{n}}\Big[\max_{\ell=1,...,H}\big\vert \frac{1}{n}\sum_{i=1}^n \mathbb{E}_{\Tilde{\varepsilon}}\big[\nabla_{k}m_{Y_i,Z_i}(\theta^\ell)-\nabla_{k}m_{\Tilde{Y}_i,Z_i}(\theta^\ell)\big\vert S,\sigma^{data}\big]\big\vert \Big\vert S,\sigma^{data}\Big]\\
       &\quad(\text{By Leibniz rule})\\
       &\leq \mathbb{E}_{\varepsilon^{n},\Tilde{\varepsilon}}\Big[\max_{\ell=1,...,H}\big\vert \frac{1}{n}\sum_{i=1}^n \big[\nabla_{k}m_{Y_i,Z_i}(\theta^\ell)-\nabla_{k}m_{\Tilde{Y}_i,Z_i}(\theta^\ell)
       \big]\big\vert \Big\vert S,\sigma^{data}\Big].
       \end{split}
\end{equation}
Define i.i.d Rademacher variables $\nu\coloneqq(\nu_1,...,\nu_n)$ such that $\Pr(\nu_i=1)=\Pr(\nu_i=-1)=\frac{1}{2}$. Since $m_{Y_i,Z_i}(\theta^{\ell})-m_{\Tilde{Y}_i,Z_i}(\theta^{\ell})\sim \nu_i[m_{Y_i,Z_i}(\theta^{\ell})-m_{\Tilde{Y}_i,Z_i}(\theta^{\ell})]$, we have
\begin{equation}
    \begin{split}
        &\quad\mathbb{E}_{\varepsilon^{n},\Tilde{\varepsilon}}\Big[\max_{\ell=1,...,H}\big\vert \frac{1}{n}\sum_{i=1}^n \big[\nabla_{k}m_{Y_i,Z_i}(\theta^\ell)-\nabla_{k}m_{\Tilde{Y}_i,Z_i}(\theta^\ell) 
       \big]\big\vert\Big\vert  S,\sigma^{data}\Big]\\
       &=\mathbb{E}_{\varepsilon^{n},\Tilde{\varepsilon},\nu}\Big[\max_{\ell=1,...,H}\big\vert \frac{1}{n}\sum_{i=1}^n \nu_i\big[\nabla_{k}m_{Y_i,Z_i}(\theta^\ell)-\nabla_{k}m_{\Tilde{Y}_i,Z_i}(\theta^\ell)
       \big]\big\vert\Big\vert  S,\sigma^{data}\Big]\\
       &\leq \mathbb{E}_{\varepsilon^{n},\Tilde{\varepsilon},\nu}\Big[\max_{\ell=1,...,H}\big\vert \frac{1}{n}\sum_{i=1}^n \nu_i\nabla_{k}m_{Y_i,Z_i}(\theta^\ell)\big\vert
       +\max_{\ell=1,...,H}\big\vert \frac{1}{n}\sum_{i=1}^n \nu_i\nabla_{k}m_{\Tilde{Y}_i,Z_i}(\theta^\ell) 
       \big\vert \Big\vert S,\sigma^{data}\Big]\\
       &=2\mathbb{E}_{\varepsilon^{n},\nu}\Big[\max_{\ell=1,...,H}\big\vert \frac{1}{n}\sum_{i=1}^n \nu_i\nabla_{k}m_{Y_i,Z_i}(\theta^\ell)\big\vert
       \Big\vert S,\sigma^{data}\Big]\\
       &=2\mathbb{E}_{\varepsilon^{n}}\Big[\mathbb{E}_{\nu}\Big[\max_{\ell=1,...,H}\big\vert \frac{1}{n}\sum_{i=1}^n \nu_i\nabla_{k}m_{Y_i,Z_i}(\theta^\ell)\big\vert
       \Big\vert S,\sigma^{data}\Big]\Big\vert
        S,\sigma^{data}\Big].
    \end{split}
\end{equation}
By Lemma \ref{applemma:symetric}, $\frac{1}{n}\sum_{i=1}^n \nu_i\nabla_{k}m_{Y_i,Z_i}(\theta^\ell)$ is a sub-Gaussian process with parameter $\tau/\sqrt{n\upsilon^2}$. Therefore, by the upper bound of sub-Gaussian maxima (Lemma \ref{applemma:subgaussianmax}), we have: 
\begin{equation}\label{appeqq:submax}
    \mathbb{E}\Big[\max_{\ell=1,...,H}\vert \mathbb{B}_k(\theta^{\ell})\vert\Big\vert S,\sigma^{data}\Big]\leq \frac{2\tau}{\sqrt{n}\upsilon}\sqrt{\log(N_c(\delta,\Theta,L_1))}.
\end{equation}
Now, apply Lemma \ref{applemma:covering} to bound the $L_1$-metric entropy $\log(N_c(\delta,\Theta,L_1))$:
\begin{equation}\label{appeq:metricentro}
   \log(N_c(\delta,\Theta,L_1))\leq  d_{\theta}\log\big(1+\frac{2}{\delta}\big).
\end{equation}
Combining Eq.\ref{appeqq:submax} with Eq.\ref{appeq:metricentro}, we have:
\begin{equation}\label{appeqq:supsec}
\mathbb{E}\Big[\max_{\ell=1,...,H}\vert \mathbb{B}_k(\theta^{\ell})\vert\Big\vert S,\sigma^{data}\Big]\leq\frac{2\tau\sqrt{d_{\theta}}}{\sqrt{n}\upsilon}\sqrt{\log\big(1+\frac{2}{\delta}\big)}.
\end{equation}
Combining Eq.\ref{eq:coverH} with Eq.\ref{appeqq:supfirst} and Eq.\ref{appeqq:supsec}, we have:
\begin{equation}
   \mathbb{E}\Big[ \vert \mathbb{B}_k(\Tilde{\theta})\vert\Big\vert S,\sigma^{data}\Big]\leq \underline{\omega}\bar{z}^2\delta+\frac{2\tau\sqrt{d_{\theta}}}{\sqrt{n}\upsilon}\sqrt{\log\big(1+\frac{2}{\delta}\big)}.
\end{equation}
By choosing $\delta=\frac{1}{\underline{\omega}\bar{z}^2\sqrt{n}}$, we conclude:
\begin{equation}
    \mathbb{E}\Big[ \sup_{\theta\in\Theta}\vert \mathbb{B}_k(\theta)\vert\Big\vert S,\sigma^{data}\Big]\leq \frac{1}{\sqrt{n}}\Big(1+\frac{2\tau}{\upsilon}\sqrt{d_{\theta}\log(1+2\vert\underline{\omega}\vert\bar{z}^2\sqrt{n})}\Big).
\end{equation}
To finish the proof, define $C_{B1} = 2\tau/\upsilon$, and $C_{B2}=2\vert\underline{\omega}\vert\bar{z}^2$.
\end{proof}

\subsection{Lemma \ref{applemma:sigma}}
\begin{lemma}\label{applemma:sigma}
Under Assumption \ref{ass:ccp}, for all $i=1,...,n$,
    \begin{equation}
       \mathbb{E}_{\varepsilon^{n}}\big[ \vert \hat{\sigma}^{data}_i- \sigma^{data}_i\vert\big\vert S,\sigma^{data}\big]\leq C_{\sigma}\sqrt{\frac{1+\ln(2)}{n}}.
    \end{equation}
\end{lemma}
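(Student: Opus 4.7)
The plan is to convert the sub-Gaussian tail bound from Assumption \ref{ass:ccp} into a bound on the first absolute moment of $\hat{\sigma}_i^{data}-\sigma_i^{data}$. The cleanest route is to first bound the second moment via the layer cake representation, then apply Jensen's inequality, rather than integrating $\Pr(|\cdot|\ge t)$ in $t$ directly (which would leave a $\mathrm{erf}$ term that does not match the target constant $\sqrt{1+\ln 2}$).

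Concretely, I would let $\Delta_i \coloneqq \hat{\sigma}_i^{data}-\sigma_i^{data}$ and write
\begin{equation}
    \mathbb{E}_{\varepsilon^{n}}\!\left[\Delta_i^2 \,\middle|\, S,\sigma^{data}\right]
    = \int_0^\infty \Pr\!\left(\Delta_i^2 \ge s \,\middle|\, S,\sigma^{data}\right) ds
    \le \int_0^\infty \min\!\left\{1,\,2\exp(-ns/C_\sigma^2)\right\} ds,
\end{equation}
where the last inequality uses Assumption \ref{ass:ccp} with $t=\sqrt{s}$. I would then split the integral at the crossover point $s_0 = C_\sigma^2\ln(2)/n$, where $2\exp(-ns_0/C_\sigma^2)=1$, using the trivial bound $\Pr\le 1$ on $[0,s_0]$ and the Gaussian tail on $[s_0,\infty)$. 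Elementary calculus gives
\begin{equation}
    \mathbb{E}_{\varepsilon^{n}}\!\left[\Delta_i^2 \,\middle|\, S,\sigma^{data}\right]
    \le \frac{C_\sigma^2\ln 2}{n} + \frac{2C_\sigma^2}{n}\cdot\frac{1}{2}
    = \frac{C_\sigma^2(1+\ln 2)}{n}.
\end{equation}

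Finally, Jensen's inequality yields
\begin{equation}
    \mathbb{E}_{\varepsilon^{n}}\!\left[|\Delta_i| \,\middle|\, S,\sigma^{data}\right]
    \le \sqrt{\mathbb{E}_{\varepsilon^{n}}\!\left[\Delta_i^2 \,\middle|\, S,\sigma^{data}\right]}
    \le C_\sigma\sqrt{\frac{1+\ln 2}{n}},
\end{equation}
which is exactly the stated bound. There is no substantive obstacle here; the only subtlety is choosing to bound the second moment rather than the first, which makes the tail integral collapse into closed form and delivers the precise constant $\sqrt{1+\ln 2}$ in the statement.
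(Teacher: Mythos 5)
Your proposal is correct and follows essentially the same route as the paper's own proof: both bound the conditional second moment of $\hat{\sigma}_i^{data}-\sigma_i^{data}$ via the layer cake representation, split the tail integral at $a=C_\sigma^2\ln(2)/n$ (the paper introduces a generic $a$ and then optimizes to this same value), and finish with Jensen's inequality. No gaps.
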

\begin{proof}
  This proof follows the same proof strategy as Lemma 5.1 in \citet{KITAGAWA2023109}. Recall that for any nonnegative random variable $Y,$ $\E(Y)=\int_{0}^{\infty} \Pr(Y\geq t) dt.$ Hence, for any $a>0,$
\begin{equation}
\begin{split}
\E(\vert\hat{\sigma}^{data}_i-\sigma^{data}_i\vert^2)&=\int_{0}^{\infty} \Pr(\vert\hat{\sigma}^{data}_i-\sigma^{data}_i\vert^2\geq t)dt\\
&=\int_{0}^{a} \Pr(\vert\hat{\sigma}^{data}_i-\sigma^{data}_i\vert^2\geq t)dt+\int_{a}^{\infty} \Pr(\vert\hat{\sigma}^{data}_i-\sigma^{data}_i\vert^2\geq t)dt\\
&\leq a+\int_{a}^{\infty} \Pr(\vert\hat{\sigma}^{data}_i-\sigma^{data}_i\vert^2\geq t)dt.
\end{split}
\end{equation}
Assumption \ref{ass:ccp} implies that $\Pr(\vert\hat{\sigma}^{data}_i-\sigma^{data}_i\vert\geq \sqrt{t})\leq 2e^{-Nt/C_{\sigma}^2}.$ Hence,
\begin{equation}
    \begin{split}
        \E(\vert\hat{\sigma}^{data}_i-\sigma^{data}_i\vert^2)&\leq a+\int_{a}^{\infty} \Pr(\vert\hat{\sigma}^{data}_i-\sigma^{data}_i\vert^2\geq t)dt\\
        &= a+\int_{a}^{\infty}\Pr(\vert\hat{\sigma}^{data}_i-\sigma^{data}_i\vert\geq \sqrt{t})dt\\
        &\leq a+2\int_{a}^{\infty}e^{-nt/C_{\sigma}^2}dt\\
        &=a+2\frac{C_{\sigma}^2}{n}e^{-Na/C_{\sigma}^2}.
    \end{split}
\end{equation}
Set $a=C_{\sigma}^2\ln(2)/n$ and we have
\begin{equation}
    \E(\vert\hat{\sigma}^{data}_i-\sigma^{data}_i\vert^2)\leq \frac{\ln(2)C_{\sigma}^2}{n}+\frac{C_{\sigma}^2}{n}=C_{\sigma}^2\frac{1+\ln(2)}{n}.
\end{equation} 
Therefore,
\begin{equation}
    \E(\vert\hat{\sigma}^{data}_i-\sigma^{data}_i\vert)\leq \sqrt{(\E(\vert\hat{\sigma}^{data}_i-\sigma^{data}_i\vert^2)}\leq C_{\sigma}\sqrt{\frac{1+\ln(2)}{n}}.
\end{equation}
\end{proof}

\subsection{Lemma \ref{applema:lps}: Lipschitz Property}
\begin{lemma}{(\textbf{Lipschitz Property})}\label{applema:lps}
Define $\widebar{z}\coloneqq\max_{i=1,...,n}\Vert Z_i\Vert_{\infty}$. The following condition on $\Tilde{\mathbb{G}}_n(\cdot)$ is satisfied: 
    \begin{equation}
        \sup_{\substack{\gamma,\gamma'\in\Theta\\ \Vert\gamma-\gamma'\Vert_1\leq \delta}}\vert \mathbb{B}_k(\gamma)-\mathbb{B}_k(\gamma')\vert\leq \vert\underline{\omega}\vert\bar{z}^2\delta.
    \end{equation}
\end{lemma}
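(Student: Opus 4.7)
The plan is to reduce this uniform Lipschitz-type statement to a pointwise sup-norm bound on $\nabla_{\theta}\mathbb{B}_k(\theta)$, using that $\mathbb{B}_k$ is a smooth function of $\theta$ for every realization of the data. First I would invoke the multivariate mean value theorem: since $\Theta$ is convex and $\mathbb{B}_k(\cdot)$ is $C^1$ on $\Theta$, for any $\gamma,\gamma'\in\Theta$ there exists $\tilde\theta$ on the segment from $\gamma'$ to $\gamma$ with $\mathbb{B}_k(\gamma)-\mathbb{B}_k(\gamma')=(\gamma-\gamma')^{\intercal}\nabla_{\theta}\mathbb{B}_k(\tilde\theta)$. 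The $L_1$–$L_{\infty}$ Hölder inequality then yields
\begin{equation*}
\bigl|\mathbb{B}_k(\gamma)-\mathbb{B}_k(\gamma')\bigr|\leq \Vert\gamma-\gamma'\Vert_1\,\Vert\nabla_{\theta}\mathbb{B}_k(\tilde\theta)\Vert_{\infty}\leq \delta\,\sup_{\theta\in\Theta}\Vert\nabla_{\theta}\mathbb{B}_k(\theta)\Vert_{\infty},
\end{equation*}
so the lemma reduces to showing $\sup_{\theta\in\Theta}\Vert\nabla_{\theta}\mathbb{B}_k(\theta)\Vert_{\infty}\leq \vert\underline{\omega}\vert\bar z^2$.

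Next I would use the explicit form of the Hessian derived in the proof of Lemma \ref{pro.samplingunc}. The $k$-th row of $\nabla_{\theta}^2\mathbb{M}(\theta)$ can be written as
\begin{equation*}
\nabla_{\theta}\nabla_k\mathbb{M}(\theta)=\frac{1}{n}\sum_{i=1}^n\bigl[Y_i\omega_0(Z_i^{\intercal}\theta)-(1-Y_i)\omega_1(Z_i^{\intercal}\theta)\bigr]Z_{ik}Z_i,
\end{equation*}
and $\nabla_{\theta}\nabla_k M(\theta)$ is simply its conditional expectation given $(S,\sigma^{data})$. Subtracting these and using $Y_i\in\{0,1\}$, the coefficient collapses to $(Y_i-\sigma_i^{data})[\omega_0(Z_i^{\intercal}\theta)+\omega_1(Z_i^{\intercal}\theta)]$. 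Under Assumption \ref{ass:smooth}, $\omega_0<0$ and $\omega_1>0$ on the compact argument set $\Xi$ (which contains $Z_i^{\intercal}\theta$ for all admissible $i,\theta$), and the definitions $\underline{\omega}_0=\min_{x\in\Xi}\omega_0(x)$, $\underline{\omega}_1=\min_{x\in\Xi}-\omega_1(x)$, $\underline{\omega}=\min\{\underline{\omega}_0,\underline{\omega}_1\}$ give uniform magnitude bounds $|\omega_0|,|\omega_1|\leq |\underline{\omega}|$.

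Putting these together, each entry of $\nabla_{\theta}\mathbb{B}_k(\theta)$ is bounded by $\vert Y_i-\sigma_i^{data}\vert\cdot|\omega_0+\omega_1|\cdot|Z_{ik}|\cdot\Vert Z_i\Vert_{\infty}$, and the product bound $|Z_{ik}|\Vert Z_i\Vert_{\infty}\leq \bar z^2$ follows from the definition $\bar z=\max_i\Vert Z_i\Vert_{\infty}$. Combined with the previous display, this delivers the claimed bound with constant $|\underline{\omega}|\bar z^2$ (absorbing any universal constants into $|\underline{\omega}|$ by rescaling, exactly as the paper does in the proof of Lemma \ref{lemma:a} via the same $\underline{\omega}$). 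The only subtle step is the second: combining $\nabla_{\theta}\nabla_k\mathbb{M}$ with its conditional mean \emph{before} taking absolute values, so that the $Y_i$-dependence cancels into a mean-zero multiplier $(Y_i-\sigma_i^{data})$ and the sign structure of $\omega_0,\omega_1$ under Assumption \ref{ass:smooth} can be exploited to keep the constant at $|\underline{\omega}|\bar z^2$ rather than the naive $2|\underline{\omega}|\bar z^2$ that a direct triangle inequality on $\mathbb{M}$ and $M$ would give.
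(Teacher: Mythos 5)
Your proof is correct, and it takes a genuinely different (and in one respect tighter) route than the paper's. The paper first splits $\vert \mathbb{B}_k(\gamma)-\mathbb{B}_k(\gamma')\vert$ by the triangle inequality into $\vert\nabla_k\mathbb{M}(\gamma)-\nabla_k\mathbb{M}(\gamma')\vert+\vert\nabla_k M(\gamma)-\nabla_k M(\gamma')\vert$, applies the mean value theorem to each piece separately, and bounds each coefficient $Y_i\omega_0-(1-Y_i)\omega_1$ and $\sigma_i\omega_0-(1-\sigma_i)\omega_1$ in magnitude by $\vert\underline{\omega}\vert$; followed literally, that yields $2\vert\underline{\omega}\vert\bar z^2\delta$, and the paper's final display silently drops the factor of $2$. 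You instead apply the mean value theorem to $\mathbb{B}_k$ as a single object and exploit the algebraic cancellation $[Y_i\omega_0-(1-Y_i)\omega_1]-[\sigma_i\omega_0-(1-\sigma_i)\omega_1]=(Y_i-\sigma_i)(\omega_0+\omega_1)$, together with $\vert Y_i-\sigma_i\vert\le 1$ and $\vert\omega_0+\omega_1\vert\le\max\{\vert\omega_0\vert,\vert\omega_1\vert\}\le\vert\underline{\omega}\vert$ (valid because $\omega_0<0<\omega_1$ under Assumption \ref{ass:smooth}), which delivers exactly the stated constant $\vert\underline{\omega}\vert\bar z^2$ with no rescaling needed — so your closing remark about ``absorbing universal constants'' is superfluous. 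Two small points to tidy: the entrywise bound on $\nabla_\theta\mathbb{B}_k(\theta)$ should be written as the average $\frac{1}{n}\sum_{i=1}^n$ of the per-$i$ bounds (each $\le\vert\underline{\omega}\vert\bar z^2$) rather than a single-$i$ expression; and the mean value theorem requires the segment from $\gamma'$ to $\gamma$ to lie in the region where the $\omega$'s are controlled, i.e.\ in the convex hull of $\Theta$, which is the same implicit requirement the paper uses when it takes minima of $\omega_0,-\omega_1$ over the compact set $\Xi$, so it is not an additional gap.
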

\begin{proof}
First, we have:
\begin{equation}\label{eqapp:mmga}
    \begin{split}
     \sup_{\substack{\gamma,\gamma'\in\Theta\\ \Vert\gamma-\gamma'\Vert_1\leq \delta}}\vert \mathbb{B}_k(\gamma)-\mathbb{B}_k(\gamma')\vert\leq\sup_{\substack{\gamma,\gamma'\in\Theta\\ \Vert\gamma-\gamma'\Vert_1\leq \delta}}\vert \nabla_k\mathbb{M}(\gamma)- \nabla_k\mathbb{M}(\gamma')\vert+ \sup_{\substack{\gamma,\gamma'\in\Theta\\ \Vert\gamma-\gamma'\Vert_1\leq \delta}}\vert \nabla_k M(\gamma')-\nabla_k M(\gamma)\vert.
    \end{split}
\end{equation}
The first term in Eq.\ref{eqapp:mmga} is:
\begin{equation}
    \begin{split}
        \nabla_k\mathbb{M}(\gamma)- \nabla_k\mathbb{M}(\gamma')&= \frac{1}{n}\sum_{i=1}^n Y_i\Big[\frac{F_{\varepsilon}'(Z_i^{\intercal}\gamma)}{F_{\varepsilon}(Z_i^{\intercal}\gamma)}-\frac{F_{\varepsilon}'(Z_i^{\intercal}\gamma')}{F_{\varepsilon}(Z_i^{\intercal}\gamma')}\Big]Z_{ik}\\&\quad-\frac{1}{n}\sum_{i=1}^n(1-Y_i)\Big[\frac{F_{\varepsilon}'(Z_i^{\intercal}\gamma)}{1-F_{\varepsilon}(Z_i^{\intercal}\gamma)}-\frac{F_{\varepsilon}'(Z_i^{\intercal}\gamma')}{1-F_{\varepsilon}(Z_i^{\intercal}\gamma')}\Big]Z_{ik}.
    \end{split}
\end{equation}
The second term in Eq.\ref{eqapp:mmga} is:
\begin{equation}
    \begin{split}
       \nabla_k M(\gamma)- \nabla_k M(\gamma')&= \frac{1}{n}\sum_{i=1}^n\mathbb{E}_{\varepsilon^{n}} \Big[Y_i\big[\frac{F_{\varepsilon}'(Z_i^{\intercal}\gamma)}{F_{\varepsilon}(Z_i^{\intercal}\gamma)}-\frac{F_{\varepsilon}'(Z_i^{\intercal}\gamma')}{F_{\varepsilon}(Z_i^{\intercal}\gamma')}\big]Z_{ik}\Big\vert S,\sigma^{data}\Big]\\
       &\quad-\frac{1}{n}\sum_{i=1}^n\mathbb{E}_{\varepsilon^{n}}\Big[(1-Y_i)\big[\frac{F_{\varepsilon}'(Z_i^{\intercal}\gamma)}{1-F_{\varepsilon}(Z_i^{\intercal}\gamma)}-\frac{F_{\varepsilon}'(Z_i^{\intercal}\gamma')}{1-F_{\varepsilon}(Z_i^{\intercal}\gamma')}\big]Z_{ik}\Big\vert S,\sigma^{data}\Big].
    \end{split}
\end{equation}
Applying the Mean Value Theorem to both, we have:
\begin{equation}
    \begin{split}
        \nabla_k\mathbb{M}(\gamma)- \nabla_k\mathbb{M}(\gamma')&= \frac{1}{n}\sum_{i=1}^n\Big[Y_i\omega_0(Z_i^\intercal\acute{\gamma})-(1-Y_i)\omega_1(Z_i^\intercal\acute{\gamma})\Big]Z_{ik}Z_i^{\intercal}(\gamma-\gamma'),
    \end{split}
\end{equation}
\begin{equation}
    \begin{split}
       \nabla_k M(\gamma)- \nabla_k M(\gamma')&= \frac{1}{n}\sum_{i=1}^n\big[\sigma_{i}^n\omega_0(Z_i^\intercal\grave{\gamma})-(1-\sigma_{i})\omega_1(Z_i^\intercal\grave{\gamma}^n)\big]Z_{ik}Z_i^{\intercal}(\gamma-\gamma').
    \end{split}
\end{equation}
where $\sigma_{i}^n=\mathbb{E}_{\varepsilon^{n}}[Y_i\vert S,\sigma^{data}]$, and for some $\acute{\gamma}\in\mathbb{R}^{d_{\theta}}$, $ \grave{\gamma}\in\mathbb{R}^{d_{\theta}}$ on the segment from $\gamma$ to $\gamma'$. 
Then,
\begin{equation}\label{eq:gramgam}
    \begin{split}
         \vert  \nabla_k\mathbb{M}(\gamma)-  \nabla_k\mathbb{M}(\gamma')\vert&\leq\frac{1}{n}\sum_{i=1}^n\vert Y_i\omega_0(Z_i^\intercal\grave{\gamma})-(1-Y_i)\omega_1(Z_i^\intercal\grave{\gamma})\vert\cdot\vert Z_{ik}\vert\cdot 
         \vert Z_i^{\intercal}(\gamma-\gamma')\vert
             \end{split}.
\end{equation}
By Assumption \ref{ass:smooth}, $\omega_0(a)<0$ and $\omega_1(a)>0$ for all $a\in\mathbb{R}$. Recall that
\begin{equation}\label{eq:minomega}
    \underline{\omega}_0\coloneqq \min_{x\in\Xi}\omega_0(x),\quad
    \underline{\omega}_1\coloneqq \min_{x\in\Xi}-\omega_1(x),\quad 
    \underline{\omega}\coloneqq\min\{ \underline{\omega}_0, \underline{\omega}_1\}.
\end{equation}
Combining Eq.\ref{eq:gramgam} with Eq.\ref{eq:minomega}, we have
\begin{equation}
    \begin{split}
         \vert  \nabla_k\mathbb{M}(\gamma)-  \nabla_k\mathbb{M}(\gamma')\vert&\leq \frac{\vert\underline{\omega}\vert}{n}\sum_{i=1}^n\vert Z_i^{\intercal}(\gamma-\gamma')\vert\cdot\vert Z_{ik}\vert\\
         &\leq \frac{\vert\underline{\omega}\vert}{n}\sum_{i=1}^n\Vert Z_i\Vert_{2}^2\vert \vert\gamma-\gamma'\Vert_1\\
         &(\text{By Holder's Inequality})\\
         &\leq \vert\underline{\omega}\vert\bar{z}^2\Vert \gamma-\gamma'\Vert_1,
    \end{split}
\end{equation}
where $\widebar{z}\coloneqq\max_{i=1,...,n}\Vert Z_i\Vert_{\infty}$. By the same argument, 
\begin{equation}
    \vert \nabla_kM(\gamma)- \nabla_kM(\gamma')\vert\leq\vert\underline{\omega}\vert\bar{z}^2\Vert \gamma-\gamma'\Vert_1.
\end{equation}
Combining above two equations with Eq.\ref{eqapp:mmga} gives:
\begin{equation}
    \sup_{\substack{\gamma,\gamma'\in\Theta\\ \Vert\gamma-\gamma'\Vert_1\leq \delta}}\vert \mathbb{B}_k(\gamma)-\mathbb{B}_k(\gamma')\vert\leq \vert\underline{\omega}\vert\bar{z}^2\delta.
\end{equation}
\end{proof}

\subsection{Lemma \ref{applemma:symetric}: Sub-Guassian Process}
\begin{lemma}{(\textbf{Sub-Guassian Process})}\label{applemma:symetric}
Define $\widebar{z}\coloneqq\max_{i=1,...,n}\Vert Z_i\Vert_{\infty}$. $\frac{1}{n}\sum_{i=1}^n \nu_i\nabla_{k}m_{Y_i,Z_i}(\theta^\ell)$ is a sub-Gaussian process with parameter $\tau/\sqrt{n\upsilon^2}$.
\end{lemma}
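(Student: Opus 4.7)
The plan is to verify sub-Gaussianity by first showing that each summand is almost surely bounded, and then invoking Hoeffding's lemma together with the independence of the Rademacher variables. Since $\nu_i$ is independent of $(Y_i, Z_i)$ and symmetric, conditionally on $(Y_i,Z_i)_{i=1}^n$ the sequence $\{\nu_i \nabla_k m_{Y_i,Z_i}(\theta^{\ell})\}_{i=1}^n$ consists of independent, mean-zero, bounded random variables, which is the ideal setup for a sub-Gaussian bound.

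First I would compute the gradient explicitly,
\begin{equation*}
\nabla_k m_{Y_i,Z_i}(\theta) = F_{\varepsilon}'(Z_i^{\intercal}\theta)\Big[\frac{Y_i}{F_{\varepsilon}(Z_i^{\intercal}\theta)} - \frac{1-Y_i}{1-F_{\varepsilon}(Z_i^{\intercal}\theta)}\Big] Z_{ik},
\end{equation*}
and derive the a.s. bound $|\nabla_k m_{Y_i,Z_i}(\theta)| \leq \tau |Z_{ik}|/\upsilon$, using Assumption \ref{ass:epsilon}(ii) to bound $F_{\varepsilon}' \leq \tau$ and the definition $\upsilon = \min\{\underline{F}_\varepsilon, 1-\widebar{F}_\varepsilon\}$ to bound the denominators away from zero. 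Because $Y_i \in \{0,1\}$, only one of the two terms in the bracket is active for a given realization, so the bound does not double-count.

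Next I would apply Hoeffding's lemma to each term: conditional on $(Y_i, Z_i)$, the variable $\nu_i \nabla_k m_{Y_i,Z_i}(\theta^{\ell})$ has mean zero (since $\nu_i$ is Rademacher) and is bounded in $[-\tau |Z_{ik}|/\upsilon, \tau |Z_{ik}|/\upsilon]$, hence is sub-Gaussian with parameter $\tau |Z_{ik}|/\upsilon$. Using the fact that an independent sum of mean-zero sub-Gaussian variables with parameters $\sigma_i$ is sub-Gaussian with parameter $\sqrt{\sum_i \sigma_i^2}$, the scaled sum $\tfrac{1}{n}\sum_{i=1}^n \nu_i \nabla_k m_{Y_i,Z_i}(\theta^{\ell})$ is sub-Gaussian with parameter
\begin{equation*}
\frac{1}{n}\sqrt{\sum_{i=1}^n \tau^2 Z_{ik}^2/\upsilon^2} \;\leq\; \frac{\tau \widebar{z}}{\upsilon \sqrt{n}},
\end{equation*}
which matches (up to the $\widebar{z}$ factor absorbed into the ambient constants of the surrounding argument) the stated rate $\tau/\sqrt{n\upsilon^2}$.

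I do not anticipate any serious obstacle: the only subtlety is the conditioning structure. The sub-Gaussianity must be stated conditionally on $(Y_i, Z_i)_{i=1}^n$, since these enter the bound on $|\nabla_k m_{Y_i,Z_i}|$ through $Z_{ik}$; this is exactly how the result is used in Lemma \ref{lemma:b}, where the outer expectation over $\varepsilon^n$ is taken after applying the sub-Gaussian maxima inequality to the inner $\mathbb{E}_\nu$. Once the conditioning is made explicit, the three ingredients — deterministic boundedness of $\nabla_k m$, symmetry of $\nu_i$ giving mean zero, and independence across $i$ — combine cleanly via Hoeffding's lemma.
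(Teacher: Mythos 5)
Your proposal is correct and follows essentially the same route as the paper: both bound the gradient term by $\tau\vert Z_{ik}\vert/\upsilon$ using $Y_i\in\{0,1\}$, apply Hoeffding's lemma to each Rademacher-weighted summand, and combine via independence (the paper does this explicitly through the product of moment-generating functions, which is the same computation as your quadrature rule for sub-Gaussian parameters). Your observation that the resulting parameter carries an extra $\widebar{z}$ factor relative to the stated $\tau/\sqrt{n\upsilon^2}$ is also consistent with the paper's own final display, which ends at $\exp\bigl[\tfrac{s^2\tau^2}{2n\upsilon^2}\bar{z}^2\bigr]$.
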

\begin{proof}
We start from the expectation of the moment-generating function of $1/n\sum_{i=1}^n\nu_i\nabla_{k}m_{Y_i,Z_i}(\cdot)$, which is 
\begin{equation}
    \begin{split}
        &\quad\mathbb{E}_{\nu}\Big[\exp\big[\frac{1}{n}\sum_{i=1}^n s\nu_i\nabla_{k}m_{Y_i,Z_i}(\theta^\ell) \big]\Big\vert S,\sigma^{data}\Big]=\prod_{i=1}^n \mathbb{E}_{\nu}\Big[\exp\big[\frac{s}{n}\nu_i\nabla_{k}m_{Y_i,Z_i}(\theta^\ell)\big]\Big\vert S,\sigma^{data}\Big],
    \end{split}
\end{equation}
where the equality holds as $\{\varepsilon_i\}_{i=1}^n, \{v_i\}_{i=1}^n$ are i.i.d. In addition, the gradient of $m_{Y_i,Z_i}(\theta^\ell)$ is:
\begin{equation}
    \nabla_{\theta}m_{Y_i,Z_i}(\theta^\ell)= \Big[Y_i\frac{F_{\varepsilon}'(Z_i^{\intercal}\theta^\ell)}{F_{\varepsilon}(Z_i^{\intercal}\theta^\ell)}-(1-Y_i)\frac{F_{\varepsilon}'(Z_i^{\intercal}\theta^\ell)}{1-F_{\varepsilon}(Z_i^{\intercal}\theta^\ell)}\Big]Z_{i}.
\end{equation}
Therefore,
\begin{equation}
     \begin{split}
         &\quad\mathbb{E}_{\nu}\Big[\exp\big[\frac{1}{n}\sum_{i=1}^n s\nu_i\nabla_{k}m_{Y_i,Z_i}(\theta^\ell)\big] \Big\vert S,\sigma^{data}\Big]\\
         &=\prod_{i=1}^n \mathbb{E}_{\nu}\Big[\exp\Big[\frac{s\nu_i}{n}\Big(Y_i\frac{F_{\varepsilon}'(Z_i^\intercal\theta^\ell)}{F_{\varepsilon}(Z_i^\intercal\theta^\ell)}-(1-Y_i)
         \frac{F_{\varepsilon}'(Z_i^\intercal\theta^\ell)}{1-F_{\varepsilon}(Z_i^\intercal\theta^\ell)}\Big)Z_{ik}\Big]\Big\vert S,\sigma^{data}\Big].
     \end{split}
\end{equation}
By Hoeffding's Lemma (Lemma \ref{applemma:hoeffding}),
\begin{equation}
    \begin{split}
         &\quad\mathbb{E}_{\nu}\Big[\exp\big[\frac{1}{n}\sum_{i=1}^n s\nu_i\nabla_{k}m_{Y_i,Z_i}(\theta^\ell)\big] \Big\vert S,\sigma^{data}\Big]\\
         &\leq \prod_{i=1}^n \exp\Big[\frac{s^2}{2n^2}\Big(Y_i\frac{F_{\varepsilon}'(Z_i^\intercal\theta^\ell)}{F_{\varepsilon}(Z_i^\intercal\theta^\ell)}-(1-Y_i)
         \frac{F_{\varepsilon}'(Z_i^\intercal\theta^\ell)}{1-F_{\varepsilon}(Z_i^\intercal\theta^\ell)}\Big)^2Z_{ik}^2\Big]\\
         &\leq \prod_{i=1}^n \exp\Big[\frac{s^2}{2n^2}\Big(Y_i\frac{\tau}{\upsilon}+(1-Y_i)
         \frac{\tau}{\upsilon}\Big)^2Z_{ik}^2\Big]\\
         &=  \exp\Big[\frac{s^2\tau^2}{2n^2\upsilon^2}\sum_{i=1}^nZ_{ik}^2\Big]\\
         &\leq \exp\Big[\frac{s^2\tau^2}{2n\upsilon^2}\Bar{z}^2\Big].
    \end{split}
\end{equation}
Recall $\upsilon\coloneqq\min\{\underline{F}_{\varepsilon},1-\widebar{F}_{\varepsilon}\}$, where $\underline{F}_{\varepsilon}\coloneqq \min_{\substack{ \theta\in\Theta\\z\in \mathcal{Z}}}F_{\varepsilon}(z^{\intercal}\theta),$ and 
$\widebar{F}_{\varepsilon}\coloneqq \max_{\substack{ \theta\in\Theta\\z\in \mathcal{Z}}}F_{\varepsilon}(z^{\intercal}\theta).$ Therefore, $\frac{1}{n}\sum_{i=1}^n \nu_i\nabla_{k}m_{Y_i,Z_i}(\theta^\ell)$ is a sub-Gaussian process with parameter $\tau/\sqrt{n\upsilon^2}$.
\end{proof}

\subsection{Lemma \ref{applemma:covering}: Covering Number}
\begin{lemma}{(\textbf{Covering Number})}\label{applemma:covering}
The $\delta$-covering number of a compact parameter space $\Theta\in\mathbb{R}^{d_{\theta}}$ with $L_1$ metric $N_c(\delta,\Theta,L_1)$ is upper bounded by $(1+\frac{1}{\delta})^{d_{\theta}}$.
\end{lemma}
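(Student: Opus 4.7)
The plan is to establish this bound via the classical volume-based packing argument in $\mathbb{R}^{d_\theta}$, which is the canonical route for proving $L_p$-covering bounds on bounded subsets of Euclidean space.

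First, I would exploit the standard comparison $N_c(\delta,\Theta,L_1)\le N_p(\delta,\Theta,L_1)$, where $N_p(\delta,\Theta,L_1)$ denotes the $\delta$-packing number of $\Theta$ (the largest number of points of $\Theta$ with pairwise $L_1$-distances strictly greater than $\delta$). This inequality is immediate: any maximal $\delta$-packing $\{\theta_1,\dots,\theta_M\}$ is automatically a $\delta$-cover, because if some $\theta\in\Theta$ were at $L_1$-distance greater than $\delta$ from every $\theta_i$, the packing could be extended by $\theta$, contradicting maximality. So it suffices to bound the packing number.

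Next, I would run a volume comparison. By the triangle inequality, pairwise $L_1$-distances greater than $\delta$ force the open $L_1$-balls $B_1(\theta_i,\delta/2)$ to be pairwise disjoint. Under the implicit normalization that $\Theta$ has $L_1$-diameter at most $1/2$ (which is what produces the precise constant stated; a general diameter $D$ yields the looser $(1+2D/\delta)^{d_\theta}$, matching the $(1+2/\delta)^{d_\theta}$ actually invoked in the proof of Lemma \ref{lemma:b}), all $M$ disjoint packing balls are contained in a single enlarged ball $B_1(\theta_1,(1+\delta)/2)$. Using the closed-form Lebesgue volume of a cross-polytope, $\mathrm{vol}(B_1(x,r))=(2r)^{d_\theta}/d_\theta!$, disjointness together with containment gives
$$M\cdot\frac{\delta^{d_\theta}}{d_\theta!}\;\le\;\frac{(1+\delta)^{d_\theta}}{d_\theta!},$$
which rearranges to $M\le(1+1/\delta)^{d_\theta}$, delivering the claim.

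This is a textbook computation and there is no real analytic obstacle; the only point requiring care is tracking the diameter normalization for $\Theta$, since that alone determines whether the constant in the bound is $1$ or $2$ (or more generally $2D$). The rest is a one-line volume calculation for $L_1$-balls.
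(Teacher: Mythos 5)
Your proof is correct and follows essentially the same volumetric route as the paper, which encloses $\Theta$ in an $L_1$-ball and applies the volume-ratio bound of Lemma \ref{applemma:volume}; your detour through the packing number and the explicit cross-polytope volume $(2r)^{d_\theta}/d_\theta!$ is just a self-contained rendering of the same comparison. You are also right to flag the normalization: the paper's own proof in fact delivers $(1+\frac{2}{\delta})^{d_\theta}$ (the constant actually invoked in the proof of Lemma \ref{lemma:b}) rather than the $(1+\frac{1}{\delta})^{d_\theta}$ in the lemma statement, and both versions implicitly take the radius $C_\theta$ of the enclosing ball to be one, since the general bound is $(1+\frac{2C_\theta}{\delta})^{d_\theta}$ --- a discrepancy that is harmless for the downstream $\sqrt{\log(1+C\sqrt{n})}$ rates but worth recording.
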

\begin{proof}
    As parameter space $\Theta$ is compact, there exists a constant $C_{\theta}$ such that $\sup_{\theta\in\Theta}\Vert\theta\Vert_1\leq C_{\theta}<\infty$. Let us denote $C_{\theta}$-ball as $B\coloneqq\{\theta\in\mathbb{R}^{d_{\theta}}\mid \Vert \theta\Vert_1\leq C_{\theta}\}$. Then, the covering number of the parameter space $N_c(\delta,\Theta,L_1)$ is bounded by the covering number of the $C_{\theta}$-ball $N_c(\delta,B,L_1)$. Applying Lemma \ref{applemma:volume}, we have:
    \begin{equation}
        N_c(\delta,B,L_1) \leq \frac{\operatorname{vol}\left((1+\frac{2}{\delta})B\right)}{\operatorname{vol}(B)}=\big(1+\frac{2}{\delta}\big)^{d_{\theta}},
    \end{equation}
    where the first inequality holds as the $C_{\theta}$-ball is defined using the same metric as the covering number. Therefore,
    \begin{equation}
      N_c(\delta,\Theta,L_1)\leq  \big(1+\frac{2}{\delta}\big)^{d_{\theta}}. 
    \end{equation}
\end{proof}

\section{Results from Previous Literature}\label{appsec:previous}
\begin{lemma}{(\textbf{Extreme Value Theorem})}\label{app:evtheorem}
If $f$ is continuous on a closed interval $[a,b]$, then $f$
 attains both an absolute maximum value and an absolute minimum value at some numbers in $[a,b]$
.    
\end{lemma}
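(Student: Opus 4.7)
The plan is to give a standard proof based on sequential compactness of the closed interval $[a,b]$, via the Bolzano--Weierstrass theorem. The proof naturally splits into two steps: first establish that $f([a,b])$ is bounded, then establish that the supremum (and by symmetry the infimum) is attained.

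First I would prove boundedness by contradiction. Suppose $f$ is unbounded above on $[a,b]$. Then for each $n \in \mathbb{N}$ there exists $x_n \in [a,b]$ with $f(x_n) > n$. Since $\{x_n\} \subseteq [a,b]$ is a bounded sequence in $\mathbb{R}$, the Bolzano--Weierstrass theorem yields a subsequence $x_{n_k}$ converging to some $x^* \in \mathbb{R}$. Because $[a,b]$ is closed, $x^* \in [a,b]$. By continuity of $f$ at $x^*$, we obtain $f(x_{n_k}) \to f(x^*) \in \mathbb{R}$, contradicting $f(x_{n_k}) > n_k \to \infty$. The same argument, applied to $-f$, rules out unboundedness below.

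Next I would show the supremum is attained. Let $M \coloneqq \sup_{x \in [a,b]} f(x)$, which is finite by the previous step. By the definition of supremum, for each $n \in \mathbb{N}$ there exists $y_n \in [a,b]$ with $M - 1/n < f(y_n) \leq M$. Applying Bolzano--Weierstrass again, $\{y_n\}$ has a subsequence $y_{n_k} \to y^* \in [a,b]$. Continuity gives $f(y_{n_k}) \to f(y^*)$, and since $f(y_{n_k}) \to M$ by construction, we conclude $f(y^*) = M$. The analogous argument for $-f$ yields attainment of the infimum.

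There is essentially no hard step here; the only subtle ingredient is the use of Bolzano--Weierstrass (equivalently, sequential compactness of closed bounded intervals in $\mathbb{R}$), which is a standard consequence of the completeness of $\mathbb{R}$. An entirely equivalent and slightly slicker alternative would be to invoke the topological fact that the continuous image of a compact set is compact, apply it to the compact set $[a,b]$ to conclude $f([a,b])$ is compact in $\mathbb{R}$, and then use Heine--Borel to deduce that $f([a,b])$ is closed and bounded, hence contains its supremum and infimum. Either route gives the result in a few lines.
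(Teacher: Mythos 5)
Your proof is correct and is the standard Bolzano--Weierstrass argument for the Extreme Value Theorem; both the boundedness step and the attainment step are handled properly, and the alternative via compactness of the continuous image is also valid. Note that the paper does not prove this lemma at all --- it is stated in the appendix ``Results from Previous Literature'' as a known classical result used as an ingredient elsewhere (e.g., in the proofs of Lemma \ref{lemma:unctheta} and Lemma \ref{pro.samplingunc}) --- so there is no in-paper argument to compare against; your write-up simply supplies the standard proof the paper takes for granted.
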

\begin{lemma}{(\textbf{Berge's Maximum Theorem} \citep{berge1963})}\label{app:bergetheorem}
Let $X \subseteq \mathbb{R}^L$ and $Y \subseteq \mathbb{R}^K$, let $f : X \times Y \rightarrow \mathbb{R}$ be a continuous function and $\Gamma : X \rightarrow Y$ be a compact-valued and continuous correspondence. Then the function $v : X \rightarrow \mathbb{R}$ such that $v(x) = \sup_{y \in \Gamma(x)} f(x, y)$ is continuous.    
\end{lemma}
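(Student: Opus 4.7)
The plan is to prove continuity of $v(x)=\sup_{y\in\Gamma(x)}f(x,y)$ by decomposing it into upper and lower semi-continuity. Since $\Gamma$ is compact-valued and $f$ is continuous, for each fixed $x$ the supremum is attained and may be written as a maximum. Recall also that in the correspondence literature, continuity of $\Gamma$ means both upper hemi-continuity (u.h.c.) and lower hemi-continuity (l.h.c.); each direction of the semi-continuity argument will use exactly one of these.

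For upper semi-continuity, I would take any sequence $x_n\to x$ in $X$ and a selection $y_n\in\Gamma(x_n)$ achieving $v(x_n)=f(x_n,y_n)$. The key step is to show that $\{y_n\}$ has a subsequence converging to some $y^\star\in\Gamma(x)$. This follows from u.h.c. combined with compactness: u.h.c. implies that for any open neighborhood $V$ of the compact set $\Gamma(x)$, eventually $y_n\in V$, so $\{y_n\}$ lies (eventually) in a compact set and has a convergent subsequence $y_{n_k}\to y^\star$, and a closed-graph argument places $y^\star\in\Gamma(x)$. Then joint continuity of $f$ gives $v(x_{n_k})=f(x_{n_k},y_{n_k})\to f(x,y^\star)\le v(x)$, so $\limsup_{n}v(x_n)\le v(x)$.

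For lower semi-continuity, I would again take $x_n\to x$, pick a maximizer $y^\star\in\Gamma(x)$ with $v(x)=f(x,y^\star)$, and use l.h.c. of $\Gamma$ to produce a sequence $y_n\in\Gamma(x_n)$ with $y_n\to y^\star$. Continuity of $f$ then yields $v(x_n)\ge f(x_n,y_n)\to f(x,y^\star)=v(x)$, hence $\liminf_n v(x_n)\ge v(x)$. Combining the two bounds gives $\lim_n v(x_n)=v(x)$, which is continuity at the arbitrary point $x$.

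The main obstacle will be the u.h.c. half: one must justify the extraction of a convergent subsequence from $\{y_n\}$ and simultaneously ensure its limit lies in $\Gamma(x)$. The cleanest way is to note that u.h.c. of a compact-valued correspondence into a Hausdorff (here Euclidean) space is equivalent to having a closed graph together with local boundedness, and then rely on the Bolzano--Weierstrass theorem to deliver the subsequential limit inside $\Gamma(x)$. Once this is in place, the rest of the argument is a routine application of continuity of $f$.
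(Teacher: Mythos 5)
The paper does not prove this statement: it appears in the appendix of results imported from prior literature and is cited directly to Berge (1963), so there is no in-paper argument to compare against. Your proposal is the standard textbook proof of the Maximum Theorem and is essentially correct: the decomposition into upper and lower semi-continuity, the use of upper hemi-continuity plus compact values (equivalently, closed graph plus local boundedness in a Euclidean codomain) to extract a convergent subsequence of maximizers with limit in $\Gamma(x)$, and the use of lower hemi-continuity to approximate a maximizer at $x$ by selections at $x_n$, are exactly the right ingredients. One small bookkeeping point in the upper semi-continuity half: as written you show that \emph{some} subsequential limit of $v(x_n)$ is at most $v(x)$; to conclude $\limsup_n v(x_n)\le v(x)$ you should first pass to a subsequence along which $v(x_n)$ converges to the limsup and only then extract the convergent sub-subsequence of the $y_n$. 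This is a routine fix and does not affect the validity of the argument.
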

\begin{lemma}{(\textbf{Volume ratios and Metric Entropy} \citep[\S Lemma~5.7]{wainwright2019high})}\label{applemma:volume}
    Consider a pair of norms $\|\cdot\|$ and $\|\cdot\|'$ on $\mathbb{R}^d$, and let $B$ and $B'$ be their corresponding unit balls (i.e., $B = \{\theta \in \mathbb{R}^d \mid \|\theta\| \leq 1\}$, with $B'$ similarly defined). Then the $\delta$-covering number of $B$ in the $\|\cdot\|'$-norm obeys the bounds
\begin{equation}
    \left(\frac{1}{\delta}\right)^d \frac{\operatorname{vol}(B)}{\operatorname{vol}(B')} \leq N_c(\delta; B, \|\cdot\|') \leq \frac{\operatorname{vol}\left(\frac{2}{\delta}B + B'\right)}{\operatorname{vol}(B')}.
\end{equation}

\end{lemma}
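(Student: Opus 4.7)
The plan is to prove the two inequalities separately using standard volume-comparison arguments, handling the lower bound by a direct covering volume estimate and the upper bound via the packing-covering duality.

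For the lower bound, I would start by letting $\{\theta_1,\dots,\theta_N\}$ be any $\delta$-cover of $B$ in the $\|\cdot\|'$-norm, with $N = N_c(\delta; B, \|\cdot\|')$. By definition, every point of $B$ lies within $\|\cdot\|'$-distance $\delta$ of some $\theta_i$, so $B \subseteq \bigcup_{i=1}^{N}(\theta_i + \delta B')$. Taking Lebesgue volumes, using subadditivity and translation invariance, gives $\operatorname{vol}(B) \leq N \cdot \operatorname{vol}(\delta B') = N\,\delta^{d}\,\operatorname{vol}(B')$, which rearranges to the claimed lower bound. This step is essentially a one-line volume count once the covering inclusion is written down.

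For the upper bound, I would use the duality between covering and packing: any maximal $(\delta)$-packing of $B$ in $\|\cdot\|'$ is automatically a $\delta$-cover, so it suffices to bound the maximal packing number from above. Let $\{\theta_1,\dots,\theta_M\}$ be a maximal set of points in $B$ whose pairwise $\|\cdot\|'$-distances exceed $\delta$. Then the open balls $\theta_i + (\delta/2)B'$ are pairwise disjoint, and each is contained in the Minkowski sum $B + (\delta/2)B'$. Volume additivity and translation invariance yield $M \cdot (\delta/2)^{d} \operatorname{vol}(B') \leq \operatorname{vol}\!\bigl(B + (\delta/2)B'\bigr)$. Pulling out the factor $(\delta/2)^d$ from the right-hand side using the scaling identity $B + (\delta/2)B' = (\delta/2)\bigl((2/\delta)B + B'\bigr)$ and $\operatorname{vol}(c A) = c^{d}\operatorname{vol}(A)$ then gives $M \leq \operatorname{vol}((2/\delta)B + B')/\operatorname{vol}(B')$, and since $N_c \leq M$ by maximality of the packing, the upper bound follows.

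The main obstacle, though a minor one, is making sure the Minkowski-sum manipulation is executed correctly: one has to verify that $\theta_i + (\delta/2)B' \subseteq B + (\delta/2)B'$ (trivial from $\theta_i \in B$) and that the rescaling $B + (\delta/2)B' = (\delta/2)[(2/\delta)B + B']$ is valid, which uses only the linearity of Minkowski addition under scalar multiplication on convex symmetric sets (true since $B$ and $B'$ are norm balls, hence convex and balanced). Everything else is bookkeeping with the $d$-dimensional Lebesgue measure. No concentration, continuity, or topological input is needed beyond measurability of norm balls.
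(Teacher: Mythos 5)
Your proof is correct: the paper states this lemma without proof, citing it directly from Wainwright (2019, Lemma 5.7), and your argument (volume counting over the cover for the lower bound, packing--covering duality plus the Minkowski-sum rescaling $B + (\delta/2)B' = (\delta/2)\bigl((2/\delta)B + B'\bigr)$ for the upper bound) is exactly the standard proof given in that reference. No gaps; the disjointness of the $(\delta/2)$-balls and the scalar distributivity of Minkowski addition are both handled correctly.
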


\begin{lemma}{(\textbf{Hoeffding’s Lemma})}\label{applemma:hoeffding}
    Let $X$ be a random variable with $\mathbb{E}X = 0$, $a \leq X \leq b$. Then, for $s > 0$,
\begin{equation}
\mathbb{E}(e^{sX}) \leq e^{s^2(b-a)^2/8}.
\end{equation}
\end{lemma}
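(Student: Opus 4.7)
The plan is to combine convexity of the exponential with a second-order Taylor bound on the cumulant generating function of an auxiliary tilted Bernoulli variable. The target inequality is equivalent, after reparameterization, to showing that a certain scalar function $\phi(u)$ satisfies $\phi(0) = \phi'(0) = 0$ together with the uniform bound $\phi''(u) \le 1/4$; Taylor's theorem then delivers $\phi(u) \le u^2/8$, which is exactly Hoeffding's bound.

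First I would exploit convexity of $x \mapsto e^{sx}$ on $[a,b]$. For any $x \in [a,b]$, writing $x = \frac{b-x}{b-a}\,a + \frac{x-a}{b-a}\,b$ and applying the two-point Jensen inequality gives
\begin{equation*}
e^{sx} \le \tfrac{b-x}{b-a}\,e^{sa} + \tfrac{x-a}{b-a}\,e^{sb}.
\end{equation*}
Taking expectations and using $\mathbb{E}X = 0$ collapses the linear terms and yields $\mathbb{E}[e^{sX}] \le \tfrac{b}{b-a}\,e^{sa} - \tfrac{a}{b-a}\,e^{sb}$. The key observation is that this step reduces the problem from controlling an expectation over an arbitrary distribution on $[a,b]$ to bounding a deterministic function of the endpoints.

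Next I would reparameterize by setting $p := -a/(b-a) \in [0,1]$ (so $1-p = b/(b-a)$) and $u := s(b-a) \ge 0$. The right-hand side rewrites as $e^{-pu}\bigl[(1-p) + p\,e^{u}\bigr]$, and defining the log-moment generating function $\phi(u) := -pu + \log\bigl[1-p + p\,e^{u}\bigr]$ reduces the target to showing $\phi(u) \le u^2/8$.

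The main (and only nontrivial) step is the uniform bound $\phi''(u) \le 1/4$. A direct calculation gives $\phi'(u) = -p + \tfrac{p\,e^u}{1-p+p\,e^u}$ and $\phi''(u) = q(u)\bigl(1-q(u)\bigr)$, where $q(u) := \tfrac{p\,e^u}{1-p+p\,e^u} \in [0,1]$ is the success probability of the exponentially tilted Bernoulli; the bound $q(1-q) \le 1/4$ is immediate from AM--GM. Since $\phi(0) = 0$ and $\phi'(0) = -p + p = 0$, Taylor's theorem with Lagrange remainder yields $\phi(u) = \tfrac{u^2}{2}\phi''(\xi) \le u^2/8$ for some $\xi \in (0,u)$. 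Exponentiating gives $\mathbb{E}[e^{sX}] \le e^{\phi(u)} \le e^{s^2(b-a)^2/8}$, which is the claim. The hardest conceptual step is spotting the reduction to the Bernoulli variance $q(1-q)$; everything else is either algebra or standard calculus.
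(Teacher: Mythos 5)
Your proof is correct and is the canonical argument for Hoeffding's Lemma: convexity of $x\mapsto e^{sx}$ reduces the problem to the two-point distribution on $\{a,b\}$, and the bound $\phi''(u)=q(1-q)\le 1/4$ together with $\phi(0)=\phi'(0)=0$ gives $\phi(u)\le u^2/8$ via Taylor's theorem. The paper states this lemma in its appendix of results imported from the literature and offers no proof of its own, so there is nothing to compare against; your argument fills that gap correctly (the only unstated case is the degenerate one $a=b$, where $X=0$ almost surely and the inequality is trivial).
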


\begin{lemma}{(\textbf{Upper bounds for Sub-Gaussian maxima})}\label{applemma:subgaussianmax}
    Let $\lambda > 0$, $n \geq 2$, and let $Y_1, \ldots, Y_n$ be real-valued random variables such that, for all $s > 0$ and $1 \leq i \leq n$, $\mathbb{E}(e^{sY_i}) \leq e^{\lambda^2 s^2/2}$ holds. Then,
\begin{enumerate}
    \item[(i)] $\mathbb{E}(\max_{i \leq n} Y_i) \leq \lambda \sqrt{2 \ln n}$,
    \item[(ii)] $\mathbb{E}(\max_{i \leq n} |Y_i|) \leq \lambda \sqrt{2 \ln (2n)}\leq 2\lambda \sqrt{ \ln (n)}$.
\end{enumerate}
\end{lemma}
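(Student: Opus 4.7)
The proof is a classical Chernoff-bound argument combined with Jensen's inequality. The plan is to use the sub-Gaussian moment generating function bound to control the MGF of the maximum, then extract the mean bound via convexity of $\exp$, and finally optimize in the free parameter $s$.

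For part (i), I would start with the observation that, because $\exp$ is monotone increasing, $\exp(s \max_{i \leq n} Y_i) = \max_{i \leq n} e^{s Y_i} \leq \sum_{i=1}^n e^{s Y_i}$ for any $s > 0$. Taking expectations and invoking the assumed bound $\mathbb{E}[e^{s Y_i}] \leq e^{\lambda^2 s^2/2}$ gives $\mathbb{E}[\exp(s \max_i Y_i)] \leq n \, e^{\lambda^2 s^2/2}$. Jensen's inequality applied to the convex map $\exp$ then yields $\exp(s\, \mathbb{E}[\max_i Y_i]) \leq n\, e^{\lambda^2 s^2/2}$, so taking logarithms and dividing by $s > 0$ produces the one-parameter family of bounds $\mathbb{E}[\max_i Y_i] \leq \ln(n)/s + \lambda^2 s / 2$. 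Minimizing the right-hand side in $s$ (the optimizer is $s^\star = \sqrt{2 \ln n}/\lambda$) delivers the stated bound $\lambda \sqrt{2 \ln n}$.

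For part (ii), I would apply part (i) to the enlarged family $\{Y_1, -Y_1, \ldots, Y_n, -Y_n\}$ of $2n$ variables, using that $\max_{i \leq n} |Y_i| = \max(Y_1, -Y_1, \ldots, Y_n, -Y_n)$. This step requires the sub-Gaussian MGF bound to hold for $-Y_i$ as well; under the standard two-sided interpretation of sub-Gaussianity this is automatic. Applying (i) to these $2n$ variables gives $\lambda \sqrt{2 \ln (2n)}$, and the final weakening $\lambda \sqrt{2 \ln (2n)} \leq 2 \lambda \sqrt{\ln n}$ follows from $\ln(2n) = \ln 2 + \ln n \leq 2 \ln n$ whenever $n \geq 2$.

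There is no substantive obstacle: the result is textbook, and the only point worth checking in context is that the symmetrized processes to which this lemma is applied in the paper — e.g.\ the Rademacher-symmetrized sums in Lemma \ref{applemma:symetric} — really do satisfy the two-sided sub-Gaussian MGF bound needed to invoke (ii). This is immediate there because the Rademacher variables are symmetric, so $\nu_i$ and $-\nu_i$ have identical distributions and the MGF bound derived via Hoeffding's lemma holds for all $s \in \mathbb{R}$.
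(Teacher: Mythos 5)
Your proof is correct and is the standard Chernoff--Jensen argument for sub-Gaussian maxima. The paper itself states this lemma in its ``Results from Previous Literature'' appendix without any proof, so there is no in-paper argument to compare against; your write-up simply supplies the textbook derivation that the paper implicitly relies on. The one point of substance is the one you already flagged: as literally stated, the hypothesis $\mathbb{E}(e^{sY_i})\leq e^{\lambda^2 s^2/2}$ for $s>0$ does not control $\mathbb{E}(e^{-sY_i})$, so part (ii) genuinely requires the two-sided sub-Gaussian bound; your observation that this holds in the paper's only application (the Rademacher-symmetrized sums in Lemma \ref{applemma:symetric}, whose MGF bound from Hoeffding's lemma is valid for all $s\in\mathbb{R}$) is exactly the right resolution. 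The closing numerical step $\lambda\sqrt{2\ln(2n)}\leq 2\lambda\sqrt{\ln n}$ for $n\geq 2$ is also verified correctly.
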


\begin{lemma}[Hoeffding's inequality \citep{hoeffding1994probability}]\label{lemmahoeff}
Let $X_1,...,X_n$ be independent bounded random variables such that $X_i$ falls in the interval $[a_i,b_i]$ with probability one. Denote their sum by $S_n=\sum_{i=1}^n X_i$. Then for any $\varepsilon>0$ we have
\begin{equation}
    \Pr\{S_n-\E S_n\geq \varepsilon\}\leq e^{-2e^2/\sum_{i=1}^n (b_i-a_i)^2},
\end{equation}
and
\begin{equation}
    \Pr\{S_n-\E S_n\leq -\varepsilon\}\leq e^{-2e^2/\sum_{i=1}^n (b_i-a_i)^2}.
\end{equation}
\end{lemma}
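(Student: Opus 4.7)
The plan is to prove Hoeffding's inequality by the classical Chernoff-bound argument, leveraging Hoeffding's Lemma (Lemma \ref{applemma:hoeffding}) which has already been recorded in the appendix and supplies the crucial sub-Gaussian moment-generating-function estimate for centred bounded random variables. With that tool in hand, what remains is a tight exponential Markov step followed by an optimisation over the free parameter.

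First, for any fixed $s>0$, I would apply Markov's inequality to the non-negative random variable $e^{s(S_n-\mathbb{E}S_n)}$ to obtain
\begin{equation}
\Pr\{S_n - \mathbb{E}S_n \geq \varepsilon\} \;\leq\; e^{-s\varepsilon}\,\mathbb{E}\bigl[e^{s(S_n-\mathbb{E}S_n)}\bigr].
\end{equation}
By the independence of $X_1,\dots,X_n$, the moment generating function factorises as $\prod_{i=1}^n \mathbb{E}[e^{s(X_i-\mathbb{E}X_i)}]$. Each centred variable $Y_i := X_i-\mathbb{E}X_i$ has zero mean and lies in an interval of width $b_i-a_i$, so Hoeffding's Lemma yields $\mathbb{E}[e^{sY_i}] \leq \exp(s^2(b_i-a_i)^2/8)$. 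Combining these produces
\begin{equation}
\Pr\{S_n - \mathbb{E}S_n \geq \varepsilon\} \;\leq\; \exp\Bigl(-s\varepsilon + \tfrac{s^2}{8}\textstyle\sum_{i=1}^n (b_i-a_i)^2\Bigr).
\end{equation}
Optimising the right-hand side over $s>0$ gives $s^\ast = 4\varepsilon/\sum_{i=1}^n(b_i-a_i)^2$, which substituted back delivers the advertised upper-tail bound $\exp\bigl(-2\varepsilon^2/\sum_{i=1}^n(b_i-a_i)^2\bigr)$; I read the ``$e^2$'' in the displayed statement as a typographic rendering of ``$\varepsilon^2$''.

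For the lower tail, I would simply apply the upper-tail argument to the variables $-X_1,\dots,-X_n$, which are independent and lie in $[-b_i,-a_i]$, an interval of the same width $b_i-a_i$. Since the Hoeffding MGF bound only sees the width, the identical exponent is reproduced, yielding the matching inequality for $\Pr\{S_n - \mathbb{E}S_n \leq -\varepsilon\}$. There is essentially no substantive obstacle here: the genuinely subtle step, namely controlling the cumulant $\psi(s) := \log\mathbb{E}[e^{sY}]$ of a bounded centred random variable, has already been absorbed into Hoeffding's Lemma. If one instead had to prove that lemma from scratch, the main work would be to show $\psi''(s) \leq (b-a)^2/4$ uniformly in $s$ by interpreting $\psi''(s)$ as the variance of $Y$ under an exponentially tilted measure on $[a,b]$ and invoking Popoviciu's variance bound, and then integrating twice using $\psi(0)=\psi'(0)=0$.
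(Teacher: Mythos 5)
Your argument is correct and complete: the exponential Markov step, the factorisation of the moment generating function by independence, the application of Hoeffding's Lemma (Lemma \ref{applemma:hoeffding}) to each centred summand, the optimisation at $s^\ast = 4\varepsilon/\sum_{i=1}^n(b_i-a_i)^2$, and the reduction of the lower tail to the upper tail via $-X_i$ are all the standard and valid steps, and you are right that the ``$e^2$'' in the displayed exponent is a typo for ``$\varepsilon^2$''. Note, however, that the paper itself offers no proof to compare against: this lemma sits in the appendix of results quoted from previous literature and is simply cited to \citet{hoeffding1994probability}, so your derivation supplies an argument the paper deliberately omits rather than paralleling or diverging from one it contains.
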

\end{appendices}

\end{document}